\documentclass[12pt,english]{article}
\usepackage[markup=underlined]{changes}

\usepackage{comment}
\usepackage{fourier}
\usepackage{newcent}
\usepackage{ulem}
\DeclareMathSizes{12}{13}{10}{7}

\usepackage[multiple]{footmisc}
\usepackage{etex}
\usepackage{graphicx}
\usepackage{setspace}
\usepackage{amsmath}
\usepackage{amsfonts}
\usepackage{titlesec}
\usepackage{pictex}
\usepackage{comment}
\usepackage{amsthm}
\usepackage{graphics,epsfig,verbatim,bm,latexsym,url,amsbsy}
\usepackage{rotating}
\usepackage[authoryear,round]{natbib}
\usepackage{float}
\usepackage[position=bottom]{subfig}
\usepackage{mathrsfs}
\usepackage{multirow}
\usepackage{array}
\usepackage{bigints}
\usepackage{bbm}
\usepackage{geometry}
\usepackage{bbm}
\usepackage[ruled,vlined]{algorithm2e}
\usepackage{soul}
\usepackage{mathtools}
\usepackage{amssymb}
\usepackage{enumitem}
\usepackage{graphicx}
\usepackage{subcaption}
\usepackage{subfig}

\usepackage[status=draft,inline,index]{fixme}
\fxsetup{theme=color,margin=false,mode=multiuser}
\FXRegisterAuthor{mf}{mfa}{\color{orange}MF}
\FXRegisterAuthor{ak}{aka}{\color{cyan}AK}
\FXRegisterAuthor{bx}{bxa}{\color{blue}BX}

\DeclarePairedDelimiterX{\inp}[2]{\langle}{\rangle}{#1, #2}
\usepackage{hyperref}
\hypersetup{
	colorlinks=true,
	linkcolor=red!60!black,
	citecolor=blue!60!black,
	filecolor=magenta,      
	urlcolor=blue!60!black,
}
\usepackage[nameinlink,noabbrev,sort,capitalise]{cleveref}

\newcommand{\de}{\mathop{}\!\mathrm{d}}

\crefname{thm}{Theorem}{Theorems}

\theoremstyle{definition} \newtheorem{example}{Example}
\theoremstyle{definition} 
\theoremstyle{definition} 
\theoremstyle{definition} 
\theoremstyle{definition} \newtheorem{definition}{Definition}
\theoremstyle{definition} 
\theoremstyle{definition} 
\theoremstyle{definition} \newtheorem{lemma}{Lemma}
\theoremstyle{plain} 
\theoremstyle{definition} \newtheorem{assumption}{Assumption}
\theoremstyle{definition} 
\theoremstyle{definition} 
\theoremstyle{definition} 
\theoremstyle{definition}\newtheorem{proposition}{Proposition}
\theoremstyle{definition} 
\theoremstyle{definition} 
\theoremstyle{definition} 
\theoremstyle{definition} 
\theoremstyle{definition} 
\theoremstyle{definition}

\theoremstyle{definition} \newtheorem{remark}{Remark}
\usepackage{mathtools}
\titlespacing*{\paragraph}{0pt}{1.25ex plus 1ex minus .2ex}{0.5em}
        {}% formatting commands applied just to subsection title
\definecolor{mypurple}{rgb}{0.6, 0.4, 0.8}
\definecolor{easygreen}{RGB}{0, 176, 80}
\definecolor{lightblue}{rgb}{0.4, 0.6, 0.8}
\begin{document}

\title{\Large{\textbf{\MakeUppercase{Data-Driven Automation}}}}

\author{\makebox[.25\linewidth]{{Maryam Farboodi}\thanks{MIT Sloan School of Management; email: \protect\texttt{farboodi@mit.edu}}}\\{MIT} 
\and 
\makebox[.25\linewidth]{Andrew Koh\thanks{MIT Department of Economics; email: \protect\texttt{ajkoh@mit.edu}}} \\ {MIT} 
\and \makebox[.25\linewidth]{Anchi Xia\thanks{MIT Department of Economics; email: \protect\texttt{bryantx@mit.edu} \\~\\
\emph{First posted draft: November 2025.} We thank Daron Acemoglu, David Autor, Tom Cunningham, Eric Gao, Basil Halperin, Fredric Kong, Anton Korinek, Stephen Morris, Arjun Ramani, Pascual Restrepo, Phil Trammell, and Iv\'an Werning for helpful conversations. 
}} \\
{MIT}}

\date{
\today 
%\\ \small{\emph{latest version \href{https://www.dropbox.com/scl/fo/si0ovhb85drsr9gx1pr92/AMC3N_UUb6TCcEy-6FIxLGk?rlkey=khaihrbl7d62mm5xespe35xm9&st=qle83c70&dl=0}{\underline{here}}}}
}

\maketitle 
\setstretch{1.1}

\begin{abstract} 
We build a dynamic model of data-driven automation in which data (i)  is  {heterogeneous} and task-specific; (ii) {accumulates endogenously} as a byproduct of economic activity; and (iii)  {exhibits spillovers} such that data generated by one task can augment the productivity of another. Along the transition path of automation, data plays a dual role in simultaneously augmenting the productivity of already-automated tasks and expanding the automation frontier. We derive tight conditions for the economy to be partially versus fully automated in the long-run. In the latter case, automation exhibits rich short-run dynamics that depend on the pattern of data spillovers but is always slow in the long-run: the share of tasks produced by labor decays asymptotically as a power law in time. We show that the economy is generically inefficient and analyze how a planner optimally tilts the direction of data accumulation. With endogenous capital accumulation, data-driven automation generates explosive growth but stagnant long-run wages. 
\end{abstract}

\thispagestyle{empty}

\clearpage 
\begin{comment}
SUBSTANTIVE: 
- \textbf{Foundation for aggregating data}: 
    \begin{itemize}
        \item Emphasize that $i \in X$ is the effective unit that DATA is accumulated. But it need not be the most 'primitive' i.e., tasks can be downstream of 'skills'. 
        \item Data on task $i$ $\to$ get better at skills $\to$ improve task $j$. Cobb Douglas gets linear (?) 
        \item RL vs pretraining: look into the literature to motivate functional forms 
\end{itemize}
\end{comment}

\setcounter{page}{1}
\section{Introduction}
Rapid advancements in artificial intelligence have sparked intense debate about the future of work and the trajectory of automation across the economy. For instance, Dario Amodei, CEO of Anthropic has warned of a `white collar bloodbath' \citep{axios2025amodei}. A recent US Senate report has likewise warned that nearly $100$ million US jobs might disappear over the next decade \citep{us_senate_2025_ai_jobs}. What are we to make of such claims? While previous waves of technological change primarily displaced routine tasks \citep{autor2003skill}, modern AI systems, trained on vast corpora of human media, increasingly demonstrate capabilities that rival human performance across a growing range of cognitively demanding tasks. For instance, the ability of AI systems to complete long software tasks---those that would take $t$ units of time for human software engineers---has been growing exponentially since the release of GPT-2 in 2019 \citep{kwa2025measuring}. Such technologies raise fundamental questions about which tasks of the economy will be automated, how quickly this transition will occur, whether automation will be limited to certain domains or become pervasive across all aspects of economic life, and what this might mean for growth and wages in the short and long run. 

Central to understanding the process of automation is recognizing that modern machine learning systems are fundamentally data-driven: AI systems require extensive training data to perform tasks, and their performance continues to improve even in the `big data' regime of tens of trillions of tokens. Put differently, AI and data are strong complements. Moreover, there is a growing trend of finetuning models on task-specific data.\footnote{Examples include  \citet{capoot2025harvey} for legal tasks, \citet{jiang2023health} for medicine, and (\cite{wu2023bloomberggpt} and `BloombergGPT' for financial analysis.} All of this data does not arise in a vacuum, but was rather generated as a byproduct of economic activity, giving rise to the following \emph{data-automation feedback loop}: 

\emph{Data $\to$ Automation.} Modern AI systems rely on staggering amounts of data. Pre-trained models ingest virtually the entire corpus of human work, and are increasingly fine-tuned on task-specific data---workflows, norms, and job-specific details e.g., call transcripts, a law firm's legal cases, or a hedge fund's investment process---to improve their performance in specific domains. A striking empirical regularity---so-called `neural scaling laws'---is that more abundant training data and/or model parameters leads to predictable decreases in the loss across a wide suite of model architectures.\footnote{Some examples of losses are coding mistakes,  wrong medical diagnoses, malfunctioning of robots or autonomous vehicles, or wrong mathematical proofs.}  This shapes \emph{task-specific} capital productivity and, in turn, drives the path of equilibrium automation. 

\emph{Automation $\to$ Data.} Where does this data come from? Up to now, large-language models have been largely pre-trained on the internet. % and subsequently post-trained to perform specific economic tasks (e.g., coding, medicine, law). 
However, a growing view within the computer science community is that in order for AI models to perform economically valuable tasks presently done by humans, simply training larger models with more compute may not be enough: they need to be trained on \emph{task-specific data}\footnote{For an articulation of the view that better task-specific data is needed for automation, see \url{https://www.mechanize.work/blog/sweatshop-data-is-over/}} that might encode tacit know-how and contextual knowledge that can only be gained from learning-by-doing. For instance, this is the explicit ambition of firms that build self-driving vehicles, and have become increasingly prevalent---OpenAI reportedly expects the “entire economy" to become a “Reinforcement Learning Machine,” where AI systems might train on recordings of how professionals in all fields handle day-to-day work on their devices \citep{information2025reinforcement}. Indeed, there are increasing attempts to collect data in the workplace \citep{paul2026meta} and Silicon Valley is beginning to build virtual work environments --- a `virtual machine outfitted with an email inbox, a Slack account, some coding tools and a web browser' \citep{nyt2025mechanize} --- in which firms might train an AI system to automate the \emph{entirety} of a white collar job. Hence, the intensity and composition of production---and automation---across the economy will shape task production and, in so doing, drive differential data accumulation. 

Motivated by these developments, we build a simple macroeconomic model to incorporate this data-automation feedback loop. Our model takes three distinctive features of data seriously. First, data is a  byproduct of economic activity---different kinds of data are differentially valuable across tasks. Second,  data is a long-lived asset that accumulates endogenously as a by product of economics activity. Third, data   exhibits spillovers---data generated by one task can augment productivity of another task. 

We characterize the scale and speed of automation and capital concentration in equilibrium,  as a function of the structural parameters of the economy. Our model delivers the following key insights:

%\sout{We show that endogenous capital and data accumulation interact to generate unbounded long-run growth but stagnant long-run wages. Furthermore, we illustrate that the equilibrium is generically inefficient, and the planner accelerates the equilibrium process of data convergence or divergence depending on the degree of cross-task complementarity. Finally, we show the combination of data-driven automation and capital accumulation result in increasing returns to scale, giving rise to unbounded output and complete automation in finite time.}
\begin{enumerate}
    \item \emph{Long-run automation:} The degree of automation in the long-run depends jointly on cross-task substitutability and the extent of diminishing returns to data.

    \item \emph{Wages:}   In equilibrium labor `trains its own replacement'.  Long-run wages stagnate whenever the economy is fully-automated in the limit e.g., whenever tasks are gross complements.

            \item \emph{Speed of automation:} The path of equilibrium automation can be rapid or imbalanced, and depends richly on the network of cross-task spillovers. But automation eventually slows down: sans capital accumulation, the proportion of tasks produced by labor decays according to a power law in time. 
            
                \item \emph{Contagion}: Automation can spread contagiously across the economy in the presence of cross-task data spillovers driven by either transfer learning or overlaps in primitive `skills'. 

    \item \emph{Core-periphery automation:} The network structure of data spillovers can generate imbalanced automation in the short-to-medium run in which a small number of full-automated sectors (`core') are overproduced and generate data that feedback into those sectors while bottlenecks to spillovers slow automation in the wider economy (`periphery').
        \item \emph{(In)efficiency:} The composition of data accumulation is generically inefficient, and the direction of inefficiency i.e., whether data-rich sectors are accumulating data too slowly or quickly depends on the degree of cross-task substitution.
    \item \emph{Explosive growth:} Data-automation feedback loops drive long-run explosive growth in the presence of endogenous capital accumulation. 
\end{enumerate}
We now explain these results in a more detail. We first shut down capital accumulation and analyze an economy in  data autarky where data is purely task-specific and there are no cross-task spillovers. We describe conditions under which the economy achieves full versus partial automation in the long-run, and study the transition path of automation. We show that the direction and limit of automation are determined by two key parameters: $\sigma$, which governs the substitutability across tasks, and $\eta$, which governs the returns to scale of data on capital productivity.

When $\sigma \leq 1/\eta$, the degree to which tasks are complementary exceeds the speed at which capital productivity runs into diminishing returns from additional data. We show that the economy is then fully automated in the long run (\cref{prop:complements}). The economy asymptotes towards a \emph{balanced data path} while task-specific productivities continuously improve. Intuitively, the capital allocation is not too skewed towards data-rich tasks  and hence the ratio of task capital productivities approaches a constant. In such economies, data-driven automation ensures that the set of tasks produced by labor decline quickly enough so that our economy escapes the cost-disease trap \citep{baumol1967macroeconomics}. 

Alternatively, when $\sigma > 1/\eta$, the degree of task complementarity is dominated by the speed at which capital productivity runs into diminishing returns. In this case, we show that the economy can remain perpetually on an \emph{imbalanced data path}---equilibrium capital allocation tilts toward data-rich tasks which, in turn, makes them more data-rich. In such cases, the ratio of data in task $i$ (data-rich) on task $j$ (data-poor) approaches infinity so that automation can, in fact, fall along the equilibrium path and the economy is only partially automated in the limit, with capital concentrated in a small set of highly productive tasks (\cref{prop:substitutes}). 

What does this mean for wages? We show that in the case $\sigma <1/\eta$ regime, wages initially rise but eventually stagnate as the productivity gains from data-augmented capital is offset by endogenous displacement from a growing automation boundary (\cref{prop:wages}). By contrast, when tasks are sufficiently substitutable and automation is partial ($\sigma > 1/\eta$), long-run wages grow unboundedly. Together, this implies, somewhat surprisingly, that cross-task complementarities hurt long-run wages. Indeed, the `partial equilibirum' logic is that for a \emph{fixed} vector of capital productivities, decreasing $\sigma$ (thus making tasks more complementary) increases wages. But when data is dynamically accumulated in equilibrium, an increase in cross-task complementarity means that production of `data-poor' tasks---those that are yet to be automated---are higher in equilibrium, and this, in turn, displaces labor and drives wages down until the `productivity effect' (from automated sectors being more productive) and the `displacement effect' (from labor performing a shrinking share of tasks) exactly offset each other.

Furthermore, we develop tight non-asymptotic bounds on the speed of data-driven automation (\cref{prop:heterogeneous-speed}). We show that although equilibrium automation can be potentially rapid in the short term, it takes a long time for full equilibrium automation i.e., for (almost) all tasks to be performed by capital in equilibrium. In particular, we show---sans capital accumulation---the share of tasks produced by labor decays as a power law in time. Thus, although data-driven automation is powerful in the limit, on its own it is also \emph{slow}: there remains a `fat tail' of tasks that continue to be performed by labor---not because capital is incapable of doing so, but because, given the prevailing price of compute and the limited efficiency of capital at producing those tasks, it is more worthwhile to produce them with labor. 

%\question{the next two paragraphs go} 
We then generalize the model by introducing cross-task data spillovers into our economy. We say that the economy is \emph{connected} if, for any two tasks $i,j$, there exists a `positive measure path' of tasks through which data can spillover from $i$ to $j$. Connectedness reflects the idea that even though tasks $i$ and $j$ may not directly benefit from each other's data, we can find a chain of intermediate tasks $i := i_1 \to i_2 \to i_3 \to  \ldots \to j$ that are beneficiaries of the preceding task's data. This series of spillovers ensure a kind of contagion through quantities produced in general equilibrium---an increase in $i = i_1$'s data augments $i_2$'s factor productivity that raises its equilibrium quantity. This increases the speed at which data on task $i_2$ accumulates, which then augments $i_3$'s factor productivity, and so on.\footnote{We model data spillovers via a graphon (a dense continuous graph) so connectedness comprises of a `positive measure path'.} We show  that connectedness is sufficient to guarantee full automation in the long-run independent of the elasticity of substitution $\sigma$, return-to-scale $\eta$, and the initial condition on the data stock across tasks (\cref{prop:generalW_full_automation}). These, `local data spillovers' can, in effect, drive automation by substituting for global cross-task complementarities. 
We further show that connectedness is sufficient to guarantee a balanced data path and a balanced composition of tasks produced in the long run. 

The network structure of cross-task spillovers also matter enormously for automation dynamics in the short-run. We illustrate this by specializing our model to \emph{core-periphery automation} in which a subset of densely connected sectors (`core') are data rich, and as they are produced in equilibrium, data generated feeds back into the core, further augmenting their capital productivity. By contrast, there are only weak spillovers to remaining sectors that comprise the broader economy (`periphery'). Although our previous long-run results state that this is sufficient to guarantee full long-run automation, how long might this take? We show that patterns of spillovers can generate imbalanced short-run automation in which a subset of tasks/sectors are rapidly automated but the wider economy is still dominated by labor. We analytically characterize the time at which the periphery sectors begin to be automated: it hinges on both the (i) strength of within-core linkages that reduce equilibrium incentives to automate peripheral sectors; and (ii) the severity of the bottlenecks in how quickly data flows from the core to periphery. %\question{up to here}

%\textcolor{blue}{BX: question - do we want to give a more narrative take to these propositions? for instance, the contagion story in the real world would be nail CS --> hoping for sufficient overlap --> automate less similar tasks.}

Next, we address the efficiency of data-driven automation and show that equilibrium automation is generically inefficient (\cref{prop:generic_inefficiency}). We show that along the transition path, the planner emphasizes the equilibrium direction of data convergence or divergence.  When tasks are sufficiently complementary $(\sigma \leq 1/\eta)$, the planner wishes to initially tilt the allocation of capital further \emph{toward} data-poor tasks relative to the equilibrium allocation---this speeds up data accumulation for bottlenecked tasks at the cost of changing the equilibrium allocation of capital. Conversely, when tasks are strong substitutes $(\sigma > 1/\eta)$ the planner wishes to tilt the allocation of capital further \emph{away from} data-poor tasks relative to the equilibrium allocation. This offers a new perspective on how market forces can distort the \emph{direction} of data production because firms do not internalize the impact of their task production on the evolution of the aggregate data stock.

Finally, we introduce endogenous capital accumulation into our model.  We show that even in the case in which tasks are strongly complementary, additional data delivers steeply diminishing returns, and capital depreciates quickly, the fact that data (i) {augments productivity} of already-automated tasks; (ii) {expands the automation boundary};\footnote{See, e.g. \cite{solow1957technical,acemoglu2018race,aghion2017artificial}.} and (iii) {does not depreciate} imply a growth explosion in finite time (\cref{prop:singularity}). In contrast to the `supply-side singularity' result of \cite{nordhaus2021we}, labor and capital are not necessarily gross substitutes at any point in time, but automation will eventually give rise to a representation of the economy where the effective substitutability is governed by the perfect substitutability between these two factors at the task level. 

\textbf{Connection to the literature.} Our work relates closely to the task model \citep{zeira1998workers} and recent work analyzing the trajectory as well as impact of automation \citep{acemoglu2018race,aghion2017artificial,korinek2024scenarios,trammell2025workflows,demirer2025economic}. Different from these papers, our focus is on how task-specific data accumulation---determined both in general equilibrium, as well as via local spillovers---drives the \emph{speed}, \emph{direction}, and \emph{limit} of automation across the economy. More related is \cite{jones2024framework} who show that expanding automation and growing capital productivity can jointly produce balanced growth if intermediate goods are gross complements. \cite{jones2024framework} embody technological change via an R\&D process in the tradition of \cite{romer1986increasing}---by contrast, our model takes seriously the idea that data rather than labor (in the form of human scientists) is the key input---and bottleneck---to modern AI systems, and it is the ensuing path of equilibrium data accumulation that delivers rising capital productivity.\footnote{Our focus is on the data dimension of modern AI systems and, in so doing, abstract unemployment; see \cite{costinot2023robots,guerreiro2022should,beraja2025inefficient} for analyses of policy remedies.
} 

Data-automation feedback loops are qualitatively distinct from previous analyses of big data as being used to predict demand or improve product quality \citep{jones2020nonrivalry,farboodi2021model}.\footnote{These models build upon the pioneering work of \cite{wilson1975informational}.} This is because the data-automation feedback loop emerges when data is specifically used for  \emph{training} AI models that can be duplicated at scale both within and across firms, and does not depreciate over time.\footnote{See, \cite*{chatterji2025transformative,hadfield2025agents} for possibilities for how AI agents might be incorporated within firms.} We think this qualitatively distinguishes data-automation feedback loops from previous analyses e.g., predicting demand \citep{bajari2019impact}.\footnote{Of course, on a high level all data is used for prediction; indeed, even with large-language models or reinforcement learning algorithms, training data serves to augment next-token prediction. However, we think this is economically distinct from previous analyses of big data.}

Finally, our analysis of local spillovers across nearby products, produced by tasks that share a subset of their tasks, is informed by modern machine learning systems that exhibit a degree of local `transfer learning'.\footnote{See, e.g., \cite{dwivedi2019representation} who develop a measure of task similarity by assessing the performance of a model trained on different tasks. A recent interview by employees of Anthropic echo this ambition for large-language models \citep{patel2025scaling}.} 
%If, for instance, a reinforcement learning algorithm can perform task $x$ efficiently (because it has been trained on ample data on $x$) then we might expect it to also do well at a sufficiently nearby task $x'$.  
To capture this, we endow our task space with a measure of distance, and tasks that are `nearby'  have more similarities in their production routine and thus exhibit more spillovers. Our formalization employs the language of graphons developed in mathematics \citep{borgs2008convergent}, which has recently been drawn on to study network games \citep{parise2023graphon} and contagion \citep*{erol2023contagion,koh2022speed}. Our setting is substantially different because the intensity of spillover is jointly determined by `local' and `global' general equilibrium forces, and---as is typical in macroeconomics---in the long-run data for all tasks diverge to infinity. Thus, it is their \emph{relative speeds} that determines equilibrium automation.

\section{A Data-Augmented Task Model} \label{model section}

%This section is organized as follows. Section \ref{model} introduces our modifications to the task model to capture the role of data. Section \ref{spillovers} details how data spillovers across tasks. Section \ref{market clearing} characterizes the market clearing prices as well as the degree of automation along the equilibrium path.

%\subsection{A data-augmented task model} \label{model}
We begin by developing a simple model of production with task-specific data accumulation. 

\paragraph{Final good and intermediate tasks.} There is a task space $X = [0,1]$. A final good is competitively produced by aggregating intermediate tasks according to the usual CES technology
\begin{equation} \label{GDP}
Y= \Big(\int_{i \in X} y_i^{\frac{\sigma-1}{\sigma}}\Big)^{\frac{\sigma}{\sigma-1}} 
\end{equation}
where $y_i$ is the quantity produced of task $i \in X$. $\sigma > 1$ corresponds to gross substitutes, $\sigma < 1$ corresponds to gross complements, and the limit of $\sigma = 1$ corresponds to the neutral, Cobb-Douglas case. 

\paragraph{Factors of production.} We consider two factors of production, capital and labor. In our baseline model, capital and labor are fixed at $K > 0$ and $L> 0$ respectively.\footnote{In \cref{sec:extensions} we present an extension with capital accumulation, where we introduce a standard household block that governs the consumption-savings decision in our economy. There we analyze how endogenous capital accumulation interacts with data-driven automation.} We emphasize that the `labor block' in our model is deliberately stylized such as to focus on the impact of how data accumulation might augment capital productivity hetrogenously across tasks; see, e.g., \cite*{acemoglu2025tasks,autor2025expertise} for analyses of how automation can have differential occupational impacts.

\paragraph{Data.} Data is a state variable that governs how productive capital is across tasks. Data is task-specific: the data that is used to train an AI model to build financial models is different from that used to train a surgery robot. We use $\bm{D}_t \in \mathcal{D}$ to denote the (infinite-dimensional) vector of time-$t$ data across tasks, and write $D_{it}$ to denote the cumulative data at time $t$ from producing task $i$.\footnote{$\mathcal{D}$ is the space of (infinite-dimensional) data stock that is the space of measurable functions from $X \to \mathbb{R}_+$ and for $\pmb{D} \in \mathcal{D}$, $i \in X$ we write $\pmb{D}_i := \pmb{D}(i)$ to denote the stock of data for task $i$.}
%We assume that the initial stock of data $\bm{D}_0$ is weakly decreasing in task indices i.e., $D_{0i} \geq D_{0j}$ whenever $i \leq j$. 

\paragraph{Tasks.} The production function of task $i$ is linear in capital and labor, and is given by: 
\[
y_i = \psi^{L}l+ \psi^K_i ( \mathcal{A}_i(\bm{D}) )\cdot k 
\]
where $\mathcal{D} \mapsto \mathcal{A}_i(\bm{D}) \in \mathbb{R}_+$ is a \emph{data aggregator} tracking how the full vector of data translates into the \emph{effective data} available for task $i$. We expand on properties of $\mathcal{A}_i(\bm{D})$ in detail in Section \ref{subsec:dataProd}.

As capital and labor are perfect substitutes, each task will be performed using the input that is more productive for that specific task. When a task is produced with capital in equilibrium, we say that it is {\it automated.} We emphasize that automation is an equilibrium notion which depends jointly on technology ($\psi$) and factor prices ($r$ and $w$). To keep track of the degree of automation in our economy, it is useful to introduce the following equilibrium object. 

\begin{definition}[Automation boundary]\label{defn:boundaryAuto}
    The time-$t$ \emph{automation boundary} $\gamma_t \in \text{Int}(X)$ denotes the threshold task that is automated---tasks $i \leq \gamma_t$ are produced via capital, and tasks $j > \gamma_t$ are produced via labor.\footnote{We break ties in favor of automation without loss of generality.}
\end{definition}

Finally, to incorporate diminishing return to data, we impose the following assumptions on how data shape the marginal productivity of capital.

\begin{assumption}[Productivity of data within and across tasks] \label{ass:DRS}
The vector of {\it effective data} determines the marginal productivity of capital in each task $i \in X$ according to the following functional form,
    \[
    \psi^{K}_i\Big(\underbrace{\mathcal A_i \left(\boldsymbol{D}\right)}_{\substack{\text{effective} \\ \text{data}}} \Big)= f_i \cdot \left(\mathcal A_i\left(\boldsymbol{D}\right)\right)^{\eta}  
    \]
where $\eta \in (0,1)$, common across all tasks, captures the degree of diminishing returns of data in capital productivity and $f_i > 0$ is a task-specific scalar coefficient that weakly decreases in the task index $i$. We use $\pmb{f}$ to denote the vector of task-specific scalars, and assume that it is bounded above and below. 
\end{assumption}

An alternate interpretation of $f_i$ is that it encodes \emph{differential speeds} of data accumulation per unit of task produced. We discuss this interpretation momentarily along with introducing spillover of data across tasks to capture the degree of transfer learning as documented in computer science.

\subsection{Evolution of Data and Cross-Task Spillovers} \label{subsec:dataProd}

The initial stock of data is given by $\bm{D}_0$, which is bounded, continuous, and strictly decreasing in task index. The stock of data evolves according to the following law of motion: 
\begin{align}
    D_{it} = D_{i0} + \int_0^t y_{is} ds.\label{eq:DlawM}
\end{align}

Thus, the accumulation of data generated by task $i$ is simply the production of $i$. This reflects the key channel through which economic forces that determine equilibrium task quantities feed back into data accumulation. Indeed, in addition to text and image inputs (e.g., a legal firm's past cases), modern AI training procedures are increasingly adept at taking in video inputs of the physical tasks (e.g., surgical tasks \citep{kim2024surgical}) being performed and learning how to imitate accordingly. Note here that, different from \cite{farboodi2021model,jones2020nonrivalry}, data does not depreciate---if an AI system augments the productivity of a given task, it can always do so.\footnote{For instance, the depreciation of data in \cite{farboodi2021model} arises from a changing state so past data is less predictive of future states. However, we view a given task $i \in X$ as fixed; we discuss how our model might accommodate changing composition of tasks across the economy a la \cite{acemoglu2018race,korinek2024scenarios} in \cref{sec:extensions}.}

%begin{comment} \question{in my suggestion, things are removed from here on. We just say in this paper we restrict attention to each task onlyusing its own data and in the companion paper, we consider the spillover.} \end{comment}

Note that \cref{eq:DlawM} specifies the  \emph{task-specific data accumulation}, i.e., how $\bm{D}_t$ is determined. Next, we explain the determination  of $\mathcal{A}_i(\bm{D})$, i.e., how data is employed in production. Here, we assume that although data accumulation is task-specific, the data accumulated for a given task might be valuable for other tasks---$\mathcal{A}_i(\bm{D})$ can depend on $D_j$ for $j\neq i$. That is, our economy might exhibit \emph{cross-task spillovers}. 

%To flexibly capture heterogeneity in how data for a given task spills over to other tasks, we endow our task space $X$ with the Euclidean distance i.e., $\|i - j\| := |i - j|$. Data spillovers are \emph{local} in the sense that task $i$ benefits more from data generated by the production of `nearby tasks'. We think of an AI system's performance on each task $i \in X$ as being downstream of some set of latent and more primitive `capabilities' that might be broad (e.g., verbal, numerical, or spatial reasoning which have emerged via pre-training), or narrow (e.g., at specific coding or writing tasks that improve with finetuning). Hence, data for task $i$ can be thought of as augmenting the attendant capabilities that constitute task $i$. For another task $j$ whose requisite capabilities overlap with that of $i$'s, task $i$'s data $D_{it}$ is, in effect, valuable for the production of $j$.

To describe patterns of spillovers across tasks, we define the directed graphon as the measurable function\footnote{Graphons are sometimes defined with the condition of symmetry, but things are well-defined without it.}  $W: [0,1]^2 \to \mathbb{R}_+$. The data aggregator for task $i$ under $W$ is: 
\[
\mathcal{A}_i(\bm{D}) = \int_{j \in X} W(i,j) \cdot D_{j} dj. 
\]
$W(i,j)$ denotes how much the productivity of task $i$ is influenced by the data of task $j$ and the `effective data' that feeds into task $i$'s capital productivity  simply integrates over tasks $j \in X$. There are two ways to interpret our graphon $W$. First, it might capture the idea that two tasks $i,j \in X$ might share underlying skills such that data on how task $i$ is performed might improve the performance of AI systems on those underlying skill that, in turn, improves the productivity of task $j$. Second, the graphon $W$ might simply capture the degree of `transfer learning'---the ability of AI systems to generalize to new domains. This remains very much an empirical question fiercely debated among computer scientists.\footnote{For example, Demis Hassabis the CEO of DeepMind recently said in an interview that: ``For example, when these large models improve at coding, that can actually improve their general reasoning. There is evidence of some transfer, although we would like a lot more evidence of that.'' \citep{patel2025scaling}.} Our model of graphons keeps the shape and magnitude of data spillovers flexible, allowing us to analyze implications for the direction and limit of automation without taking a particular stance on the exact form of transfer learning.

\begin{figure}[t]
\caption{Examples of cross-task spillovers}
    \centering
    \subfloat[Geometric spillovers]{
        \includegraphics[width=0.33\textwidth]{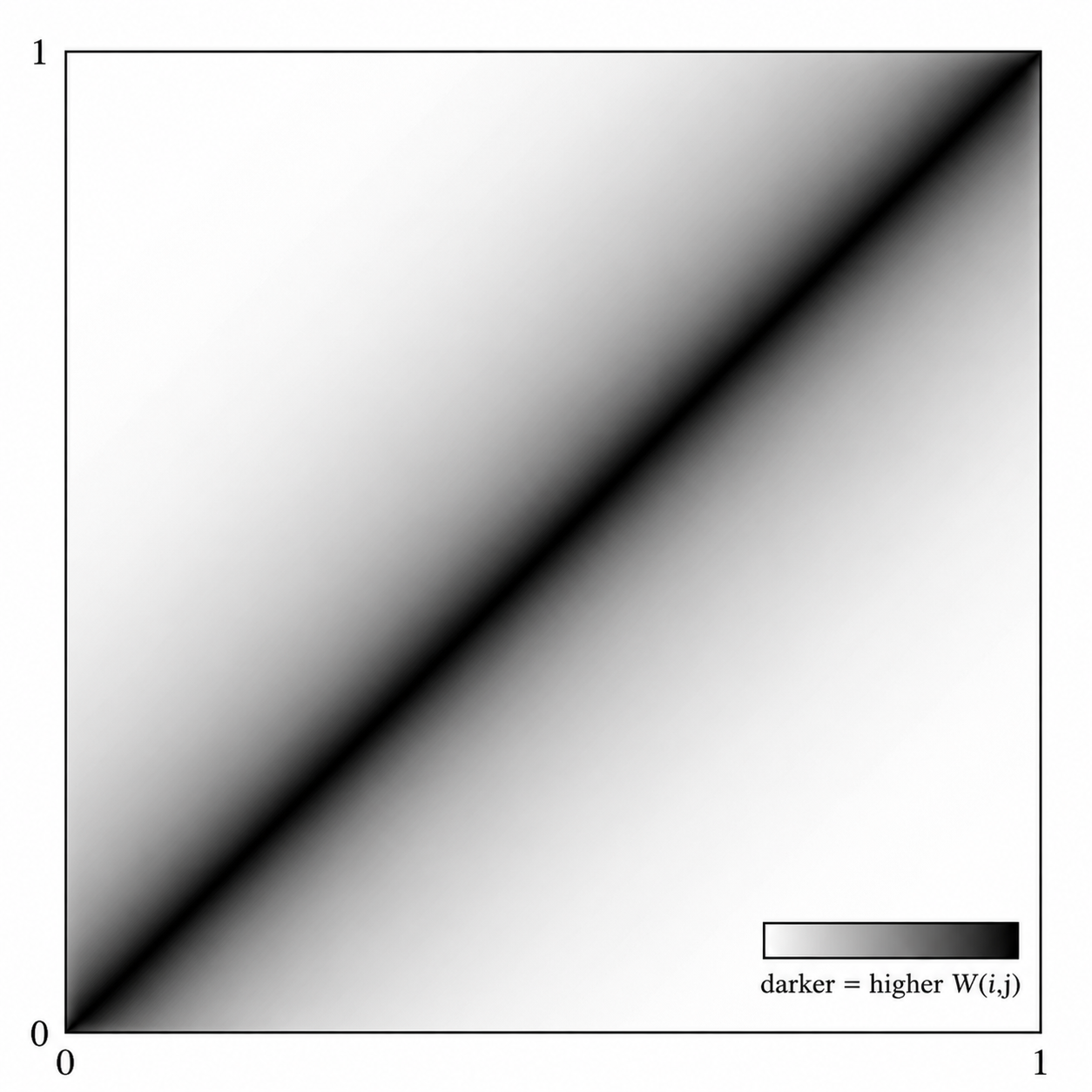}}
        \subfloat[Block structure]{
        \includegraphics[width=0.33\textwidth]{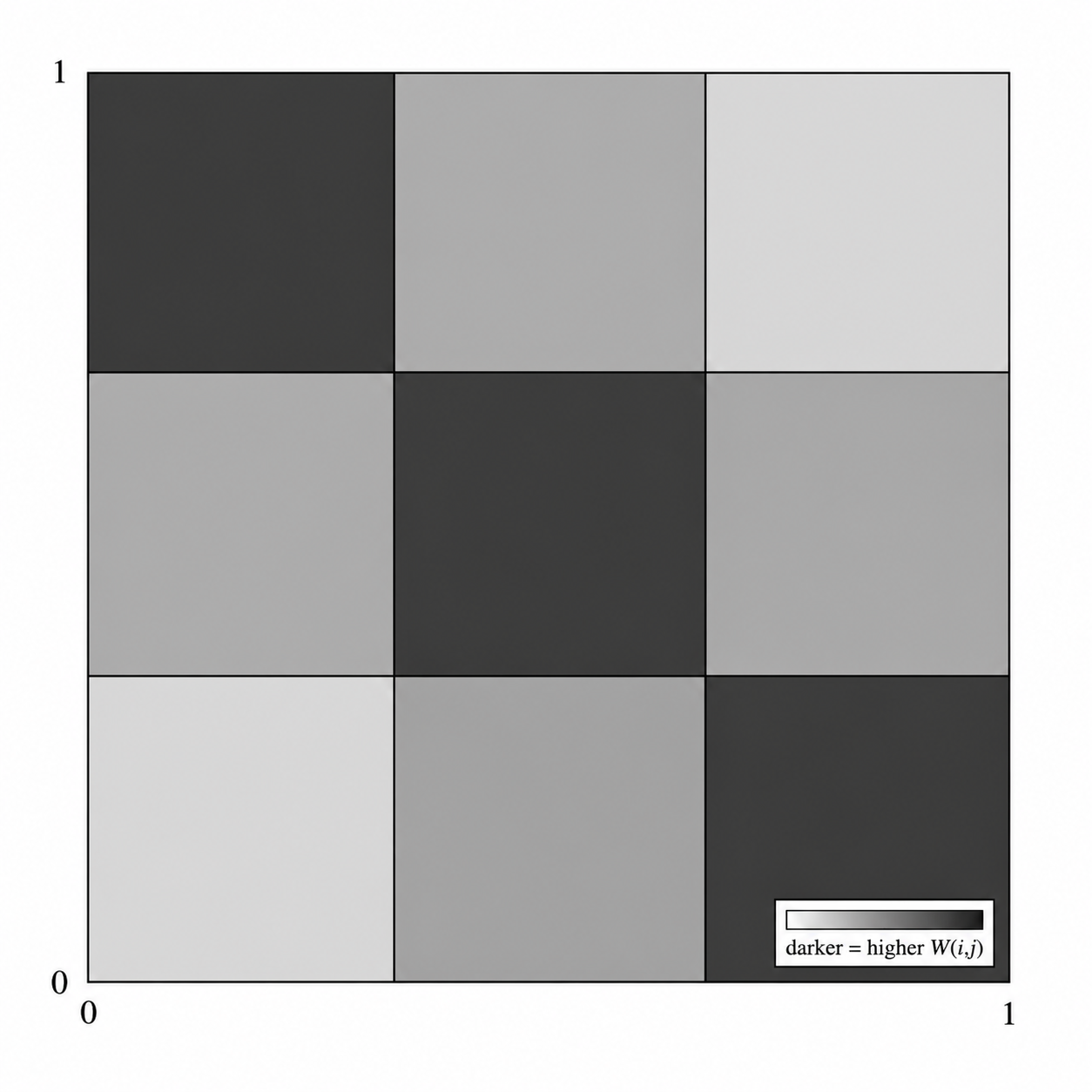}%
    }
        \subfloat[Rank 1 structure]{
        \includegraphics[width=0.33\textwidth]{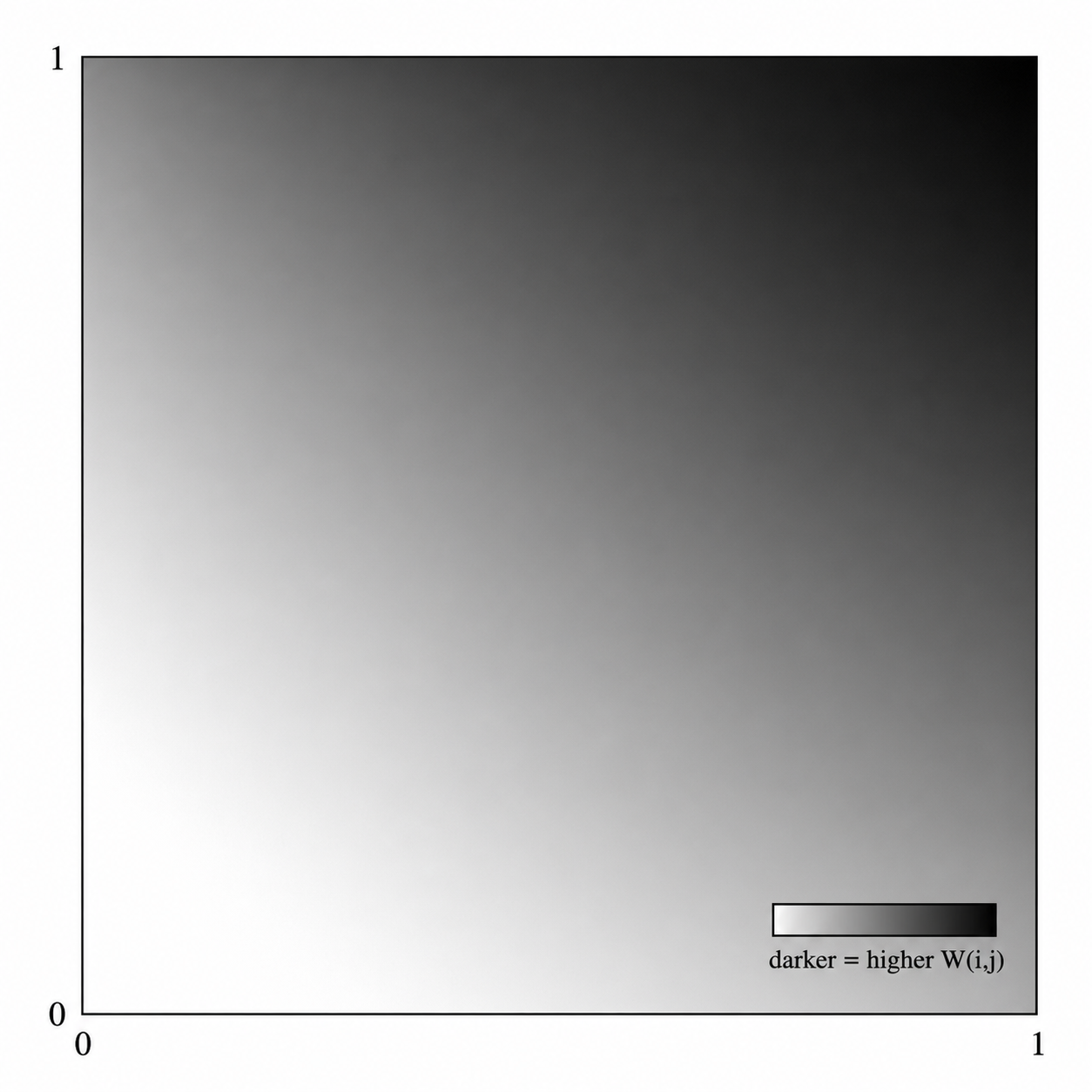}%
    }
   \label{fig:W_examples}
\end{figure}

\cref{fig:W_examples} illustrates three examples of graphons: 
\begin{itemize}
    \item[(a)] \emph{Geometric spillovers}: endow the task space $X$ with the Euclidean distance and data spillovers are \emph{local} in the sense that task $i$ benefits more from data generated by the production of nearby tasks than those further away.\footnote{See \cite{koh2022speed}, who also endow the graphon with geometry to study the speed of game-theoretic contagion, and more broadly the literature in probability theory on random geometric graphs.}
    This specification is natural when the index \(i\) orders tasks by technological or occupational similarity. Data generated in one task is then most useful for nearby tasks whose inputs, workflows, or prediction problems are similar.

    \item[(b)] \emph{Block structure}: tasks are partitioned into subsets (`blocks'). Link strength is constant across each pair of blocks, so that tasks in the same origin block have identical spillovers to tasks in the same destination block. This nests finite-sector models as a special case.
    This specification is natural when tasks can be grouped into sectors or jobs that are broadly similar. For example, data may travel easily within a sector, while cross-sector spillovers depend mainly on the pair of sectors involved.

    \item[(c)] \emph{Rank-one structure}: some tasks are systematically stronger sources of outflows, while others are systematically stronger than inflows. Then our graphon is $W(i,j) = u(i) v(j)$. For instance, `mathematical reasoning' might be upstream of many other tasks (stronger outflows, high $v$) while `writing' might be receptive to a broad source of data (stronger inflows, high $u$). 
\end{itemize}

We now have all the necessary elements to formally define the equilibrium in  this economy, provided in Definition \ref{market clearing}.

\begin{definition}[Equilibrium] \label{market clearing}
At each time $t$, given the data stock $\bm{D}_t$, an equilibrium comprises of a rental rate $r_t$, a wage $w_t$, and allocations of capital and labor across tasks such that (i) the final good producer is maximizing her time-$t$ profit; and (ii) intermediate task producers are maximizing their time-$t$ profits.
\end{definition}

\subsection{Discussion} \label{subsection:discussion} Before presenting our results, it will be helpful to discuss our main assumptions governing data-driven automation.

\paragraph{D1: Data as a source of heterogeneous capital productivity.} The performance of modern AI systems can, to a first order, be broadly attributed to (i) advances in compute; (ii) algorithmic improvements; and (iii) increases in training data \citep{ho2024algorithmic}. While cheapening compute and algorithmic improvements impact capital productivity in a homogeneous manner (via the price of capital $r$), differences in the availability of task-specific data might explain differences in performance of AI systems across economic tasks. 

Indeed, an interpretation of $f = \left(f_i\right)_{i \in X}$ (recall $f_i$ is the constant governing how well data maps to capital productivity at task $i$) is that it encodes differential rates of data accumulation.\footnote{That is, our economy is \emph{identical} to another economy in which the functional form of capital productivity is the same across tasks, but different tasks accumulate data at different rates per unit of output.} There are several reasons why tasks might differ in their rates of data accumulation. For instance, tasks for which $f_i$ is low might correspond to those that can only be done by AI systems trained on long multi-step procedures with sparse reward signals (e.g., low `bits per sample') \citep{ord2025inefficiency,patel2025inefficiency}. Tasks for which $f_i$ is high might correspond to those that can be completed by AI models trained on denser e.g., video data \citep{wiedemer2025video}. Likewise, some tasks might be more easily verifiable than others, thereby corresponding to a quicker rate of data accumulation per unit produced \citep{karpathy2025verifitable}. Our model offers a language to formalize this `generation-verification gap' in which equilibrium forces determine the cost of generating data for different tasks, as well as the quantities actually generated.

\paragraph{D2: Productivity improvements via data.} We make three remarks about \cref{ass:DRS}, illustrated in Figure \ref{fig:psi_illust}. First, our functional form is motivated by the literature in computer science (see, e.g., \cite{hoffmann2022training} and a large body of subsequent work) demonstrating that the loss---a measure of imperfection for a task---diminishes polynomially with the amount of data used for training large language models. Heuristically, they find that the loss approximately takes the functional form: $\mathsf{LOSS} \simeq \frac{1}{N^{\alpha}} + \frac{1}{D^{\beta}}$ for $\alpha, \beta > 0$ where $D$ is data and $N$ is the number of parameters which is typically linear in the amount of floating point operations (`compute').\footnote{\cite{hoffmann2022training} estimate this for the Huber loss; see also \cite{besiroglu2024chinchilla}.} A growing view from computer science is that, given compute has been growing far more quickly than data (on which models can be trained),\footnote{For instance, 
the training compute of frontier models have been growing at $4-5$ times per year \citep{sevilla2024training}.} the performance of AI systems---especially on specific economic tasks---might be increasingly bottlenecked by the scarcity of high quality data \citep{kim2025pretraininginfinitecompute,narayanan2025ai}. Moreover, building competent AI systems for the real economy (as opposed to, say, benchmarks) might not only require large and powerful pretrained models, but also data on the specific economic context within which such systems are deployed. Our model takes this seriously, and approximates the `data-constrained' regime in which $N$ is large so $\mathsf{LOSS} \approx 1 /D^{\beta}$. Under the (\emph{substantial!}) assumption that economic productivity of a given task is the reciprocal of the statistical loss (perhaps raised to some power) we exactly recover the assumed functional form.\footnote{See also \cite{farboodi2021model} where the quality of data is a strictly decreasing function of the quadratic loss from prediction.}

\begin{figure}[t]
\centering
\includegraphics[width=0.40\textwidth]{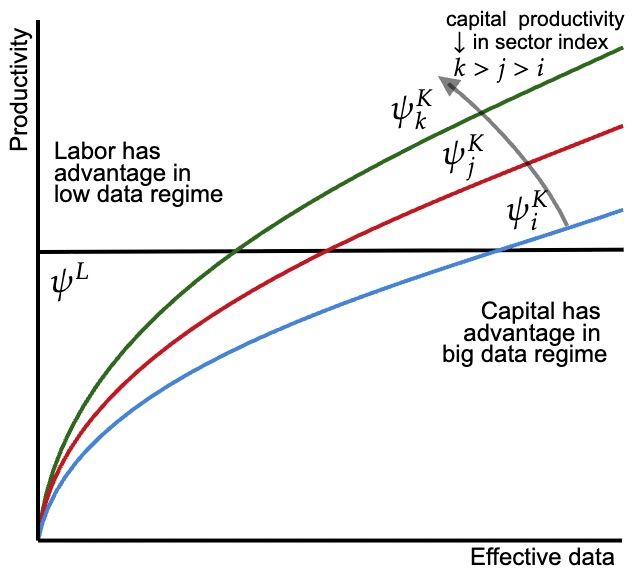}
    \vspace{-0.5em}
    \caption{Illustration of $\psi^K_i$ vs $\psi^L$}\label{fig:psi_illust}
\end{figure}

Second, the economic content of \cref{ass:DRS} reflects an extreme version of the ideas that (i) humans are more \emph{sample efficient} than modern AI systems----in the `low data' regime where training data for tasks is low, labor is more productive than capital; and that (ii) AI systems \emph{saturate more slowly}---they are able to exploit vast amounts of data to drive loss down so that in the `big data' regime, the opposite holds. 

Third, we have chosen to let capital productivity grow unboundedly with data because what ultimately matters for equilibrium automation is the task cost $r/\psi^K$, which would also go to $0$ with capital accumulation (as in \cite{acemoglu2018race,aghion2017artificial}) holding capital productivity fixed. The key difference is that our model with hetrogeneous and endogenous data accumulation allows us to speak to the \emph{ratio} of cross-task productivity, i.e., $\psi^K_i / \psi^K_j$ which drives the speed and limit of automation.  

\paragraph{D3: Raw versus labeled data} Our model is one of endogenous data accumulation as a byproduct of economic activity. This is already the case for certain industries such as self-driving cars in which data accumulates with production. We think our model of endogenous data accumulation well-approximates such tasks. For other tasks, however, current AI systems are not yet able to perform them---no matter how much computational power is thrown at the problem. For such tasks, the AI industry has resorted to hiring humans to produce data on which such systems can be trained \emph{separately} rather than as a byproduct. For instance a wave of new tech firms are hiring lawyers, doctors, bankers, and journalists to lend their expertise in labeling datasets AI systems can be trained on \citep{wsj2025mercor}. We think this might simply be a `cold start' problem that will be eventually overcome with enough human labeling and/or as workplace surveillance technologies improve and AI systems become increasingly able to learn from multimodel domains e.g., video (see \cite*{cullen2025labor}). But as soon as the task can be performed with capital, we expect equilibrium quantities to drive how task data is accumulated in equilibrium.\footnote{Our model can accommodate this `cold start' in which, in the absence of enough data capital productivity is zero by writing $\psi^K_i(\mathcal{A}) = \max\{0, f_i \cdot \mathcal{A}^{\eta} - C_i\}$ where $C_i > 0$ is some task-specific constant such that capital  requires $[C_i/f_i]^{1/\eta}$ units of effective data to be able to produce it.} 

% \question{goes from here}
\paragraph{D4: Cross-task data spillovers.}  A striking example of cross-task spillover is the impressive performance of pre-trained models on multitask language understanding. A single model is capable of answering questions from computer security to business ethics \citep{hendrycks2020measuring}. However, we might expect a large-language model to be fine-tuned on data on how to compose emails to also excel at copywriting, but not at mathematical proofs or steering an autonomous vehicle. %\citep{dwivedi2019representation}. 
This is consistent with recent evidence that shows models finetuned on a classification task exhibits positive transfer within the same `kind' of task (but in out-of-domain data), with more mixed performance across different kinds of tasks \citep{yang2024unveiling}. By varying the network of transfer learning $W$, we will be able to link the speed, direction, and limit of automation under different degrees of transfer learning. 
%\question{to here}

% Panel (b) of \cref{fig:graph_illust} illustrates three of the many possibilities for $g$. If $g$ is `broad' (\textcolor{red}{red}), this means there are ample cross-task spillovers. Hence, $g$ captures the degree to which data in a given domain generalizes to other domains---this is known as \emph{transfer learning} within computer science, and the extent of it remains an empirical question that is widely debated among computer scientists. If it is `narrow' (\textcolor{blue}{blue}), this means that cross-task data spillovers across tasks are local, though they might be quite strong e.g., we might expect a lot of spillovers between email support and chat support tasks. Finally, an extreme case is when $g$ exhibits no spillovers (\textcolor{green!30!black}{green}) which we call \emph{no spillovers}. 
%%%%%%%%%%%%%%%%%%%%%%%%%%%%%%%%%%%

% \paragraph{D5: Richness of tasks and spillovers.} \question{1 things: 1. the simplified geometry is now generalized, so I think D5 should be removed, 2. If we go ahead with breaking the paper, D5 is irrelevant for the paper without spillovers anyways.} Our main model works within a one-dimensional task space $X = [0,1]$ and spillovers modeled via the graphon $W$ induced by $g$ that imposed a sort of `geometry' so nearby tasks exhibit more spillovers. This is largely for expositional reasons that simplifies the proof and statement of our results. We expect analogous results to hold for richer spaces of (high-dimensional) tasks and richer network structures of spillovers.

%%%%%%%%%%%%%%%%%%%%%%%%%%%%%%%

\paragraph{D5: Data-driven technological change.} A distinctive feature of our model is the vector of task-specific data determine equilibrium allocation. It will, however, be helpful to connect our model to more `standard' models of technological change in the Hicksian tradition \citep{hicks1932theory} formalized by \cite{acemoglu2002directed}.

Define the following \emph{capital productivity index}: 
\begin{equation} \label{capital_prod_index}
\Psi_{K}\left(\gamma_{t},\boldsymbol{\psi^K_t}\right):= \left(\frac{1}{\gamma_{t}}\int_{0}^{\gamma_{t}}\big[\psi_{jt}^K\big]^{\sigma-1}dj\right)^{\frac{1}{\sigma-1}}
\end{equation}
 which is a function of the automation boundary $\gamma_t$ and the profile of task productivities $\boldsymbol{\psi^K_t} := (\psi_{it}^K)_i$. By substituting our expressions for capital and labor productions into (\ref{GDP}), output can be written in terms of our capital productivity index as follows: 
\begin{equation} \label{eqn:GDP_KL}
    Y_{t}=\left(\gamma_{t}^{\frac{1}{\sigma}}\left(\Psi_{K}\left(\gamma_{t},\boldsymbol{\psi}_{t}\right)\cdot K\right)^{\frac{\sigma-1}{\sigma}}+\left(1-\gamma_{t}\right)^{\frac{1}{\sigma}}\left(\psi_{L}\cdot L \right)^{\frac{\sigma-1}{\sigma}}\right)^{\frac{\sigma}{\sigma-1}}.
\end{equation}

Labor and capital augmenting technological change correspond to improvements in $\psi_L$ and $\Psi_K$, holding the automation boundary $\gamma_t$ fixed. A key feature of our model is that data accumulation simultaneously (i) increases capital productivities so that \emph{fixing} the automation boundary $\gamma_t$, it is capital-augmenting; and (ii) \emph{endogenously changes} the automation boundary $\gamma_t$ which must equilibrate labor and capital markets. This connects our model to the important work of \cite{acemoglu2018race,aghion2017artificial} who model automation only on the extensive margin (capital productivity is either $0$ or $1$), and with an exogenous movement of the capital boundary given by technology. Our model is also distinct from a recent paper of \cite{jones2024framework} who also allow both `intensive' and `extensive' margins for automation via an R\&D process a la \cite{romer1986increasing}.\footnote{We might view a special case of our model with uniform spillovers i.e., for each stock of data $\pmb{D}$, $\mathcal{A}_i(\pmb{D}) = \mathcal{A}_j(\pmb{D})$ for all $i,j \in X$ as recovering a version of `learning by doing' a la \cite{arrow1962economic,romer1986increasing} in which productivity is a one-dimensional function of past production. By taking heterogeneous and task-specific data seriously, we will move between the `data autarky' (\cref{sec:direction_limit}) with no spillovers, and economies with richer patterns of spillovers (\cref{section:contagious}).} Differently, our model takes seriously recent trends in AI progress that (i) automation is at least, in part, \emph{data-driven}; (ii) data is heterogeneous and \emph{task-specific}, and (iii) data is \emph{produced endogenously} as a by-product of production.

%\footnote{Both \cite{acemoglu2018race,aghion2017artificial} model automation as an exogenous movement of $\gamma_t$, and as capital productivity as either $0$ (cannot be automated) or $1$.}
% Data accumulation mechanically increases task productivities, thereby increasing the productivity index for a $\gamma_t$. In general equilibrium, however, $\gamma_t$ needs to further adjust to equilibrate labor and capital markets. This adjustment both directly manifest in of our output expression and indirectly through the productivity index by changing the composition of automated tasks.

%\bxnote{We are doing this under no data depreciation and no spillovers. Need to generalize. Also, Maryam mentioned it is nicer to write the weak inequality in data. For some reason, with the infimum, it is so annoying to show it with weak inequality, even though it is clearly true. Maybe I am missing something extremely basic. I am writing the strict inequality so I can move on for now...}

\section{Automation in Data Autarky} \label{sec:direction_limit}

We start by considering the case where there are no data spillovers across tasks i.e., $\mathcal{A}_i(\pmb{D}) = D_i$ for all $i \in X$.\footnote{Although the data-autarky case is technically not nested within our graphon framework, it can be viewed as the limit of local averaging kernels: for
\(\epsilon>0\), define $W^\epsilon(i,j):=
\frac{\mathbf 1\{|i-j|\le \epsilon\}}
{\mu\{k\in[0,1]: |i-k|\le \epsilon\}}$.
Then for every \(D\in L^1([0,1])\), $\mathcal A_i^\epsilon(\bm D)\to D_i$ as $\epsilon\downarrow 0$ for a.e. $i \in [0,1]$. Thus \(\mathcal A_i(\bm D)=D_i\) is the local-kernel limit of the aggregators
\(\mathcal A_i^\epsilon\).} 
\cref{lemma:static eq lemma}  characterizes the static equilibrium---the automation boundary, factor prices, and allocations---and establishes that the equilibrium path is  regular in the sense that at each time $t \geq 0$ task capital productivities are indeed decreasing in task index.

\begin{lemma}[Static equilibrium characterization and regular paths] \label{lemma:static eq lemma}
    Suppose that there are no data spillovers. If capital productivity is weakly decreasing in task index, then:
    \begin{itemize}
        \item[(i)] \textbf{Automation boundary.} There exists a unique task $\gamma_t \in (0, 1)$ satisfying Equation \eqref{gamma def} such that tasks in $[0, \gamma_t]$ produce with capital and tasks in $(\gamma_t, 1]$ produce with labor.
        \begin{equation} \label{gamma def}
\psi_{\gamma_{t}}^{K}=\psi^{L}\underbrace{\left(\frac{\int_{0}^{\gamma_{t}}\left(\psi_{j}^{K}\right)^{\sigma-1}dj}{\left(1-\gamma_{t}\right)\left(\psi^{L}\right)^{\sigma-1}}\frac{L}{K}\right)^{\frac{1}{\sigma}}}_{r/w}
\end{equation}
        \item[(ii)] \textbf{Factor prices and allocations.} Capital and labor allocations are given by Equations \eqref{eqn:capital allocation} and \eqref{eqn:labor allocation}:
            \begin{align} 
        K_{i}& =\frac{\left(\psi_{i}^{K}\right)^{\sigma-1}}{\int_{0}^{\gamma_{t}}\left(\psi_{j}^{K}\right)^{\sigma-1}dj}K \quad \text{for all $i \in [0, \gamma_t]$} \label{eqn:capital allocation}\\
        L_{i}&=\frac{L}{1-\gamma_{t}} \quad \text{for all } i \in (\gamma_t, 1]
        \label{eqn:labor allocation}
        \end{align}
         Rental rate of capital and wages are given by Equations \eqref{eqn:rental rate} and \eqref{eqn:wage}:
    \begin{align} 
        r &= Y^{\frac{1}{\sigma}} \left(\frac{\int_{0}^{\gamma_{t}}\left(\psi_{j}^{K}\right)^{\sigma-1}dj}{K}\right)^{\frac{1}{\sigma}} \label{eqn:rental rate} \\
        w &= Y^{\frac{1}{\sigma}} \left(\frac{\left(1-\gamma_{t}\right)\left(\psi^{L}\right)^{\sigma-1}}{L}\right)^{\frac{1}{\sigma}}. \label{eqn:wage}
        \end{align}
    \end{itemize}  
    If, in addition, the initial data stock $\bm{D}_0$ is continuous and strictly decreasing in task index, then there exists an equilibrium path such that for all times $t$, capital productivity is weakly decreasing in task index. 
\end{lemma}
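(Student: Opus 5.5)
For the static part, I would start from the first-order conditions of the competitive problem. Perfect competition in the final good gives the CES demand $p_i = (Y/y_i)^{1/\sigma}$, normalizing the final-good price to one. Because task production is linear with perfectly substitutable factors, zero profit gives $p_i = \min\{r/\psi^K_i,\, w/\psi^L\}$. Since $\psi^K_i$ is weakly decreasing in $i$, the set of tasks where $r/\psi^K_i \le w/\psi^L$ is an initial segment $[0,\gamma_t]$, which yields the threshold structure.

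Within the automated region, $y_i = \psi^K_i K_i$ and $p_i = r/\psi^K_i$, so the demand equation gives $K_i \propto (\psi^K_i)^{\sigma-1}$. Imposing $\int_0^{\gamma_t} K_i = K$ delivers \eqref{eqn:capital allocation}. In the labor region all tasks share the same price $w/\psi^L$, so they receive equal labor, which is \eqref{eqn:labor allocation}.

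Substituting the allocations back into the demand equations gives $r^\sigma = Y \int_0^{\gamma_t}(\psi^K_j)^{\sigma-1}dj/K$ and the analogous expression for $w$, i.e.\ \eqref{eqn:rental rate}–\eqref{eqn:wage}. Taking their ratio, the marginal condition $r/\psi^K_{\gamma_t} = w/\psi^L$ becomes \eqref{gamma def}.

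For existence and uniqueness of $\gamma_t$, I define
\[
G(\gamma) := \psi^K_\gamma - \psi^L\Big(\tfrac{\int_0^\gamma(\psi^K_j)^{\sigma-1}dj}{(1-\gamma)(\psi^L)^{\sigma-1}}\tfrac{L}{K}\Big)^{1/\sigma}.
\]
The first term is weakly decreasing in $\gamma$. The second term is strictly increasing, rising from $0$ at $\gamma=0$ to $\infty$ as $\gamma\to 1$. Hence $G$ is strictly decreasing, with $G(0^+)>0$ and $G(1^-)<0$, so there is exactly one crossing. If $\psi^K$ has jumps, I take $\gamma_t$ as the generalized crossing point, using the tie-breaking rule in favor of automation. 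I would then verify that this $\gamma_t$ is consistent with the cost comparison on both sides of the threshold.

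For the dynamic claim, I would show that the monotone ordering of data is preserved along the path. The key observation is that output per task is weakly decreasing in $i$ whenever $\psi^K$ is. In the automated region, $y_i = \psi^K_i K_i \propto (\psi^K_i)^\sigma$, which is decreasing. In the labor region, $y_i$ is constant at $\psi^L L/(1-\gamma_t)$. At the boundary, $y$ drops weakly, because the price of every automated task is at most $w/\psi^L$ and demand is decreasing in price. Since $D_{it} = D_{i0} + \int_0^t y_{is}\,ds$ and $\psi^K_i = f_i D_i^\eta$ with $f_i$ weakly decreasing, ordered $y$ keeps $D$ ordered and hence keeps $\psi^K$ weakly decreasing.

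The main obstacle is making this self-consistent. The ordering of $y$ presupposes the ordering of $\psi^K$ at each instant, so the argument is circular unless the path is constructed first. I would construct it via a fixed point or Picard-type argument on the space $\mathcal{M}$ of paths $t\mapsto\bm{D}_t$ that are continuous and weakly decreasing in $i$. On this space, the map sending a candidate path to $D_{i0}+\int_0^t y_{is}(\bm{D}_s)\,ds$ is well defined by the static part, and by the argument above it maps $\mathcal{M}$ into itself. Existence of a fixed point then follows from Schauder's theorem on bounded horizons, using that $y$ is bounded and continuous in $\bm{D}$ in $L^1$. The same conclusion could be obtained from an Euler discretization with step $\Delta$: each step preserves monotonicity, and one passes to the limit $\Delta\to0$ using Helly's selection theorem for monotone functions. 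Strict initial monotonicity and continuity of $\bm{D}_0$ ensure the static problem is well posed at $t=0$ and rule out ties.
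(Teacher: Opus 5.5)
Your static part is the paper's argument: CES demand together with $p_i=\min\{r/\psi^K_i,w/\psi^L\}$ gives $K_i\propto(\psi^K_i)^{\sigma-1}$ and equal labor, and \eqref{gamma def} is the ratio of the two market-clearing conditions. Your monotonicity argument for $G$ also supplies the existence and uniqueness of $\gamma_t$, which the paper only promises to verify.

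For the regular path you take a different route. The paper takes an equilibrium path as given and propagates monotonicity from $\dot D_{i0}\ge\dot D_{j0}$ by a bootstrap argument. You instead build monotonicity into the domain of a fixed-point map (or a monotone Euler scheme), so the static map is only ever evaluated on monotone profiles. That is the right way to break the circularity you point out, and it constructs the path rather than presupposing it.

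The step that does not go through is applying Schauder ``on bounded horizons, using that $y$ is bounded.'' On $\mathcal M$, $y(\bm D)$ is bounded only on profiles that are bounded above. On automated tasks $y_i=Kf_i^\sigma D_i^{\eta\sigma}/Z$, with $Z:=\int_0^{\gamma}(\psi^K_j)^{\sigma-1}dj$, and this grows with $\bm D$. So you get an invariant compact convex set, and hence a monotone equilibrium path, only on a short interval whose length depends on an upper bound for $\bm D$. Reaching an arbitrary horizon needs an a priori non-explosion estimate, and you do not give one.

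This is not a formality. For $\eta\sigma>1$ these dynamics are superlinear, and as the profile sharpens at $i=0$ the denominator $Z$ need not keep pace. A short computation indicates genuine finite-time blow-up. Take $f\equiv1$ and $D_{i0}=(a+bi)^{-1/(\eta\sigma-1)}$ with $a$ small. Every task automated on $[0,t]$ then satisfies $D_{it}^{1-\eta\sigma}=s_t+bi$ with $\dot s_t=-(\eta\sigma-1)K/Z_t$. While $\gamma_t$ is nonincreasing (which \cref{lemma:direction lemma} gives in this sharply peaked regime), $Z_t\le s_t^{1-\beta}/(b(\beta-1))$ with $\beta:=\eta(\sigma-1)/(\eta\sigma-1)>1$. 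Hence $s_t^{2-\beta}$ falls at a rate bounded away from zero, and $D_{0t}\to\infty$ in finite time whenever $\beta<2$, i.e.\ $\sigma>2/\eta-1$. The paper's bootstrap takes global existence for granted, so the ``for all $t$'' clause is a defect of the statement as well as of your argument.

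What your construction does deliver, once you add a continuation step, is a monotone equilibrium path on a maximal interval $[0,t^\ast)$. You get $t^\ast=\infty$ as soon as $y_{0t}$ grows at most linearly in $D_{0t}$, because $y_{it}\le y_{0t}$ then bounds every task on bounded horizons.
\begin{itemize}
\item For $\sigma\le1$ this is immediate: $Z_t\ge\gamma_t(\psi^K_{0t})^{\sigma-1}$ and $\gamma_t\ge\gamma_0$ (by \cref{lemma:direction lemma}) give $y_{0t}\le K\overline f D_{0t}^{\eta}/\gamma_0$.
\item For $1<\sigma<1/\eta$, use the bounded ratio $\psi^K_{0t}/\psi^K_{1t}$ from the proof of \cref{prop:complements}, together with the lower bound on $\gamma_t$ it implies via \eqref{gamma def}. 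This gives a bound of the same sublinear form.
\end{itemize}
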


%\begin{lemma}[Existence of a regular equilibrium path] \label{lemma:mono lemma}
%    If the initial data stock $\bm{D}_0$ is continuous and strictly decreasing in task index, then there exists an equilibrium path such that for all times $t$, capital productivity is weakly decreasing in task index. 
%\end{lemma}

Note that these equilibrium allocations and factor price characterizations apply in more general settings when monotonicity of capital productivity across tasks may not hold. In such cases, simply replace the integrating region with the set of tasks that are cheaper to produce by capital.
We next characterize how the extent of automation in this economy changes over time. 
\begin{lemma}[Direction of the automation boundary] \label{lemma:direction lemma}
    The automation boundary, $\gamma_t$, is weakly increasing in time if and only if the following inequality holds
    \begin{equation} \label{eq:gammatPrelim}
     \left(\sigma-1\right)\int_{0}^{\gamma_{t}}\left(\frac{f\left(j\right)}{f\left(\gamma_{t}\right)}\right)^{1/\eta}\left(\frac{\psi_{j}^{K}}{\psi_{\gamma_{t}}^{K}}\right)^{2\sigma-1-1/\eta}dj
     \leq
     \sigma\int_{0}^{\gamma_{t}}\left(\frac{\psi_{j}^{K}}{\psi_{\gamma_{t}}^{K}}\right)^{\sigma-1}dj    \end{equation}
\end{lemma}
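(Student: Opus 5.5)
The plan is to differentiate the boundary condition \eqref{gamma def} of \cref{lemma:static eq lemma} in time and apply the implicit function theorem. Write $I_t := \int_0^{\gamma_t}(\psi^K_{jt})^{\sigma-1}dj$ and take logs of \eqref{gamma def}:
\[
G(\gamma,t) := \log\psi^K_{\gamma t} - \log\psi^L - \tfrac{1}{\sigma}\Big[\log I(\gamma,t) - \log(1-\gamma) - (\sigma-1)\log\psi^L + \log\tfrac{L}{K}\Big] = 0 .
\]
Then $\dot\gamma_t = -G_t/G_\gamma$, and the sign of $\dot\gamma_t$ should be read off from $G_t$ alone.

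First I would show $G_\gamma<0$. By the last part of \cref{lemma:static eq lemma}, $\psi^K$ is weakly decreasing in the task index along the equilibrium path, so $\partial_\gamma\log\psi^K_{\gamma t}\le 0$. The bracketed term has $\gamma$-derivative $(\psi^K_\gamma)^{\sigma-1}/I + 1/(1-\gamma)>0$. Hence $G_\gamma<0$, and $\dot\gamma_t\ge 0$ if and only if $G_t\ge 0$. The same monotonicity also gives the ``only if'' direction without differentiating: at a fixed $\gamma$, the left side of \eqref{gamma def} minus the right side is decreasing in $\gamma$, so the boundary moves right exactly when this gap grows in time at the current $\gamma_t$.

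Second, I would compute $G_t$ using \cref{ass:DRS} with $\mathcal A_i=D_i$ and the law of motion \eqref{eq:DlawM}. These give $\partial_t\log\psi^K_{jt}=\eta y_{jt}/D_{jt}$. For automated tasks, \eqref{eqn:capital allocation} gives $y_j=\psi^K_jK_j=(\psi^K_j)^\sigma K/I$. Substituting $D_j=(\psi^K_j/f_j)^{1/\eta}$ yields
\[
\partial_t\log\psi^K_{jt} = \eta\, f_j^{1/\eta}(\psi^K_j)^{\sigma-1/\eta}\,\frac{K}{I}.
\]
Under the tie-breaking rule of \cref{defn:boundaryAuto}, the same formula applies at $j=\gamma_t$. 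Then
\[
G_t = \partial_t\log\psi^K_{\gamma} - \frac{\sigma-1}{\sigma}\,\frac{1}{I}\int_0^{\gamma}(\psi^K_j)^{\sigma-1}\,\partial_t\log\psi^K_j\,dj .
\]
Multiplying through by $\sigma I^2/(\eta K)>0$ turns $G_t\ge0$ into
\[
\sigma\, f_\gamma^{1/\eta}(\psi^K_\gamma)^{\sigma-1/\eta}\, I \;\ge\; (\sigma-1)\int_0^\gamma f_j^{1/\eta}(\psi^K_j)^{2\sigma-1-1/\eta}\,dj .
\]
Dividing both sides by $f_\gamma^{1/\eta}(\psi^K_\gamma)^{2\sigma-1-1/\eta}$ gives exactly \eqref{eq:gammatPrelim}.

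The main obstacle is regularity. $\psi^K_{\gamma}$ need not be differentiable in $\gamma$, since $\pmb D_0$ is only continuous and $\psi^K$ is only weakly decreasing; $\gamma_t$ may not be differentiable in $t$; and the output $y_{\gamma_t}$ jumps at the boundary task. I would handle this with the comparison argument above rather than the full implicit function theorem. Fix $t$ and compare $G(\gamma_t,t+h)$ with $G(\gamma_t,t)=0$ using only the time derivative, which exists because $D_{jt}$ is absolutely continuous in $t$ with right derivative $y_{jt}$. Then use strict monotonicity of $G$ in $\gamma$ to conclude that $\gamma_{t+h}\ge\gamma_t$ for small $h>0$ exactly when $G_t\ge0$. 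Because the time derivative uses the right-hand limit, the capital-side output is the relevant one at the tie, and this is why the boundary task enters with its capital-side productivity.
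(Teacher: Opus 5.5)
Your proposal is correct and follows essentially the same route as the paper: hold $\gamma$ fixed, compare the time derivatives of the two sides of the boundary condition (your $G_t$), and use monotonicity in $\gamma$ (your $G_\gamma<0$) to read off the sign of $\dot\gamma_t$; your log formulation gives that monotonicity cleanly for all $\sigma$, including $\sigma<1$, where the paper's level form is less transparent. One small correction: task output does not actually jump at the boundary, because the boundary condition is exactly $(\psi^K_{\gamma_t})^\sigma K/I_t=\psi^L L/(1-\gamma_t)$, so the capital-side and labor-side formulas for $y_{\gamma_t}$ coincide and the tie-breaking rule is immaterial.
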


Notice that Equation \eqref{eq:gammatPrelim} is more slack for small $\sigma$. In particular, it holds trivially when $\sigma \leq 1$ as the left hand side is weakly negative while the right hand side is positive. In other words, if tasks are gross complements, the automation boundary moves to the right monotonically and more tasks are automated over time. 

To gain some preliminary intuition for this result, start from a given automation boundary $\hat \gamma$. Let time progress bur hold fixed the automation boundary. Data accumulation acts as a purely capital augmenting technological change, so capital becomes relatively cheaper compared to labor due to standard Baumol forces when tasks are gross complements \citep{baumol1967macroeconomics,nordhaus2008baumol}.
Consequently, capital productivity of the cutoff task must decrease so as to equilibrate (recall $r/w=\psi^K_\gamma/\psi^L$). On the other hand, we know data accumulation improves capital productivity of all tasks over time, so the only way for this relation to hold is for $\gamma_t$ to increase, as higher indexed tasks are less productive for capital.

In Subsections \ref{subsec:subsL} and \ref{subsec:subsH} we leverage \cref{lemma:direction lemma} to precisely characterize the evolution of the automation boundary, before discussing the implications of data-driven automation for equilibrium wages in Subsection \ref{subsec:wages}.

\subsection{Low Substitutability Across Tasks} \label{subsec:subsL} 

Let us first analyze the case where the elasticity of substitution across tasks $\sigma$ is smaller than the reciprocal of the degree of diminishing returns to data, $1/\eta$. Note that $1 / \eta > 1$ so this nests both the case where tasks are gross complements and the case where tasks are gross substitutes but not too similar. 

Our first result is that in this case, full automation is inevitable---for any initial condition on data $\pmb{D_0}$ and any heterogeneity in how tasks might differentially benefit from data $\pmb{f}$, so long as \cref{ass:DRS} holds, the economy is always automated in the limit. 

\begin{proposition}[Full automation and balanced data] \label{prop:complements}
    If $\sigma \leq 1/\eta$: %then for any pattern of spillovers $W \in \mathcal{W}$:
    \begin{enumerate}
        \item[(i)] \textbf{Full automation}: The economy is fully automated in the limit, that is, $\lim_{t \to \infty} \gamma_t = 1$. If tasks are additionally gross complements (i.e., $\sigma \leq 1$), then automation increases monotonically.
        \item[(ii)] \textbf{Balanced data}: For any $i,j \in \text{Int}(X)$ where $i < j$:
        \[
        \lim_{t \to \infty} \cfrac{D_{it}}{D_{jt}} = \left(\cfrac{f_i}{f_j}\right)^{\frac{\sigma}{1- \sigma\eta}}.
        \]
    \end{enumerate}
    %for any stock of initial data stock $\bm{D}_0$ that is continuous and strictly decreasing in task index [[I NOW MAKE THIS A MAINTAINED ASSUMPTION]]
    % \begin{itemize}
    %     \item[(i)] \emph{Automation monotonically increases:} $\gamma_t$ increases with time; and 
    %     \item[(ii)] \emph{Full limit automation:} $\lim_{t \to \infty} \gamma_t = 1$.
    % \end{itemize}  
\end{proposition}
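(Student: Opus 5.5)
We will work directly with the law of motion $\dot D_{it}=y_{it}$ and the allocations of \cref{lemma:static eq lemma}, restricting to the regular equilibrium path on which $\psi^K_{it}$ is decreasing in $i$.

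\textbf{Step 1: data dynamics.} For automated tasks $i\le\gamma_t$, combining \eqref{eqn:capital allocation} with $\psi^K_{it}=f_iD_{it}^{\eta}$ gives
\[
\dot D_{it}=\psi^K_{it}K_{it}=\frac{(f_iD_{it}^{\eta})^{\sigma}}{\int_0^{\gamma_t}(\psi^K_{jt})^{\sigma-1}dj}K .
\]
Hence, for two automated tasks, $\frac{d}{dt}\log(D_{it}/D_{jt})=\Phi_t\,[f_i^{\sigma}D_{it}^{\sigma\eta-1}-f_j^{\sigma}D_{jt}^{\sigma\eta-1}]$, where $\Phi_t>0$ is common to all tasks. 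Since $\sigma\eta\le1$, the map $D\mapsto f^\sigma D^{\sigma\eta-1}$ is nonincreasing. Suppose the ratio $D_i/D_j$ exceeds the level at which $f_i^{\sigma}D_i^{\sigma\eta-1}=f_j^{\sigma}D_j^{\sigma\eta-1}$, namely $(f_i/f_j)^{\sigma/(1-\sigma\eta)}$. Then the bracket is negative and the ratio falls; symmetrically, it rises when below that level. This is the \emph{balancing} force, and it yields (ii) for pairs that are eventually automated, provided the common clock $\int\Phi_t\,dt$ diverges. I will verify this divergence using $D_{jt}\to\infty$, which holds because the total $\int_0^1 \dot D_{jt}\,dj\ge Y$-type bounds are bounded below. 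In the knife-edge case $\sigma\eta=1$ the bracket is constant in $D$, the ratio moves monotonically toward the task with larger $f$, and the exponent is infinite. I will treat this case separately, or read the limit as $\infty/0$ consistently.

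\textbf{Step 2: full automation.} I will argue by contradiction. Suppose $\limsup\gamma_t<1$, or more carefully that $\gamma_t\le\bar\gamma<1$ along a subsequence. Tasks in $(\bar\gamma,1]$ then receive labor at rate $L/(1-\gamma_t)\ge L$, so their data grows at least linearly, $D_{it}\gtrsim \psi^L L\,t$. Automated tasks, by contrast, share a fixed capital stock $K$. Then $\int_0^{\gamma_t}(\psi^K_j)^{\sigma-1}dj$ and $\psi^K_{\gamma_t}$ can be compared with the unautomated tasks' productivity $f_iD_{it}^\eta\to\infty$. Since $r/w$ in \eqref{gamma def} grows at most like $(\int_0^{\gamma}(\psi^K)^{\sigma-1})^{1/\sigma}$, I will show that $f_{i}D_{it}^{\eta}$ for some $i>\bar\gamma$ eventually exceeds $\psi^L\cdot r/w$. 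That would make task $i$ cheaper with capital, contradicting the boundary. Quantitatively, Step 1's balancing implies the automated productivities grow at comparable rates. The capital-sector data grows like $t^{1/(1-\sigma\eta)}$-type power laws, at the same rate as the labor tasks' data, so the ratio $r/w$ is of the same order as $\psi^K_{i}$ times a factor involving $((\bar\gamma)/(1-\bar\gamma))^{1/\sigma}(L/K)^{1/\sigma}$. The boundary condition must then fail for tasks with $D_i$ accumulating from labor. This order-of-growth comparison is the main obstacle: I need a uniform-in-$t$ bound showing that unautomated tasks' data cannot lag automated tasks' data by a diverging factor. I would first try to show this via a Lyapunov function such as $\sup_{i\le\gamma_t} D_{it}^{1-\sigma\eta}/f_i^{\sigma}$ divided by $t$, exploiting $\sigma\eta\le1$.

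\textbf{Step 3: monotonicity and remaining pairs.} For $\sigma\le1$, monotonicity is immediate from \cref{lemma:direction lemma}, since the left side of \eqref{eq:gammatPrelim} is nonpositive. Once Step 2 gives $\gamma_t\to1$, every interior pair $i<j$ is eventually automated, and Step 1 applies to conclude (ii).
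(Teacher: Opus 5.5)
Your Steps 1 and 3 are sound. Step 1 is in fact tighter than the paper's L'H\^opital step, which presupposes that $\lim_t D_{it}/D_{jt}$ exists. With $u_i:=D_i^{1-\sigma\eta}/f_i^{\sigma}$ one has $\dot u_i=(1-\sigma\eta)\Phi_t$ for every automated $i$, so $u_i-u_j$ is constant once both tasks are automated, and $u_i/u_j\to1$ since $D_{j,t}\to\infty$.

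The genuine gap is Step 2, as you suspect. For $\sigma>1$, full automation needs a uniform bound $\psi^K_{0,t}/\psi^K_{1,t}\le R<\infty$, and your sketch assumes it rather than proves it. The statement that capital-task data grows ``at the same rate as the labor tasks' data'' is exactly the claim at issue. The $t^{1/(1-\sigma\eta)}$ heuristic also treats $\Phi_t$ as roughly constant, which it is not.

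What is missing is a one-sided version of your Step 1 comparison for pairs in which one task is \emph{not} automated. A labor-produced task $j>\gamma_t$ produces exactly the boundary task's quantity, so
\[
y_{j,t}=\frac{\psi^L L}{1-\gamma_t}=\Phi_t\big(\psi^K_{\gamma_t,t}\big)^{\sigma}\ge\Phi_t\big(\psi^K_{j,t}\big)^{\sigma},
\]
equivalently $y_{0,t}/y_{j,t}\le(\psi^K_{0,t}/\psi^K_{j,t})^{\sigma}$. This is the paper's \cref{lemma:psi_to_y}. In your own Lyapunov variable it reads $\dot u_0=(1-\sigma\eta)\Phi_t\le\dot u_j$ for \emph{every} $j$, automated or not. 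Hence $u_{0,t}-u_{j,t}$ is nonincreasing, and $D_{0,t}/D_{1,t}$, hence $\psi^K_{0,t}/\psi^K_{1,t}$, stays bounded. The paper obtains the same bound, $\max\{\psi^K_{0,0}/\psi^K_{1,0},(f_0/f_1)^{1/(1-\sigma\eta)}\}$, by showing $\frac{d}{dt}\log(\psi^K_{0}/\psi^K_{1})\le 0$ above that threshold. Then \eqref{gamma def} gives $\gamma_t/(1-\gamma_t)\ge K\psi^K_{1,t}/(\psi^L L R^{\sigma-1})\to\infty$ directly. No subsequence, no contradiction, and no normalization by $t$ is needed.

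Two smaller points. First, for $\sigma\le1$ none of this is required. Since $\psi^K_{j,t}\ge\psi^K_{\gamma_t,t}$ on $[0,\gamma_t]$ and $\sigma-1\le0$, the integral in \eqref{gamma def} is at most $\gamma_t$, so $\psi^K_{1,t}\le\psi^K_{\gamma_t,t}\le\frac{\psi^L L}{K}\frac{\gamma_t}{1-\gamma_t}$. Moreover $\psi^K_{1,t}\to\infty$: every task's price is at most $w/\psi^L$, so every task's output is at least $\psi^L L/(1-\gamma_t)\ge\psi^L L$. That per-task bound, not an aggregate bound on $\int_0^1\dot D_{jt}\,dj$, is also what gives $D_{j,t}\to\infty$ in your Step 1.

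Second, do not try to read the knife edge $\sigma\eta=1$ ``consistently''. Your own computation shows that with nonconstant $\pmb f$ the ratio of two automated tasks diverges. So the bounded-ratio step, and with it the argument for full automation (here $\sigma=1/\eta>1$), fails there. The paper's threshold $(f_0/f_1)^{1/(1-\sigma\eta)}$ degenerates as well, as does the stated limit formula. Run the argument for $\sigma<1/\eta$.
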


%\begin{proof}
%   See \cref{sec:gross comp pf}.
%\end{proof} 

The intuition behind \cref{prop:complements} is quite different for the gross complements case $(\sigma \leq 1)$ and for the gross substitutes case with $(1 \leq \sigma \leq 1/\eta)$. We discuss each in turn.

First consider the case when tasks are gross complements $(\sigma \leq 1)$. Towards a contradiction, suppose automation is upper-bounded by some level $\bar \gamma \in \text{Int}(X)$. This implies data accumulation must become purely capital augmenting after some large time. By the usual Baumol logic, the capital share of the economy as well as the relative price of capital to labor will both collapse to zero. On the other hand, we know data for all tasks grows unboundedly so capital becomes infinitely productive which yields the contradiction 
\[
\underbrace{\psi^K_{\bar \gamma}}_{\to \infty} = \underbrace{(r_t/w_t)}_{\substack{\to 0 \\ \text{(Baumol)}}} \cdot \psi^L
\]
since the per unit cost for producing any task $i \in \text{Int}(X)$ cannot equilibrate between capital and labor. Indeed, the logic of this argument is quite general and applies more broadly: 
\begin{remark} \label{remark:porp 1}
    \textit{When tasks are gross complements $(\sigma \leq 1)$ our economy is fully automated in the limit for any functional form for capital productivity $\psi^K_i: \mathbb{R}_+ \to \mathbb{R}_+$ that grows unboundedly as (effective) data grows unboundedly.}
\end{remark}

\begin{figure}[t]
    \centering
    \subfloat[capital and labor\label{fig:shares_complements}]{
        \includegraphics[width=0.48\textwidth]{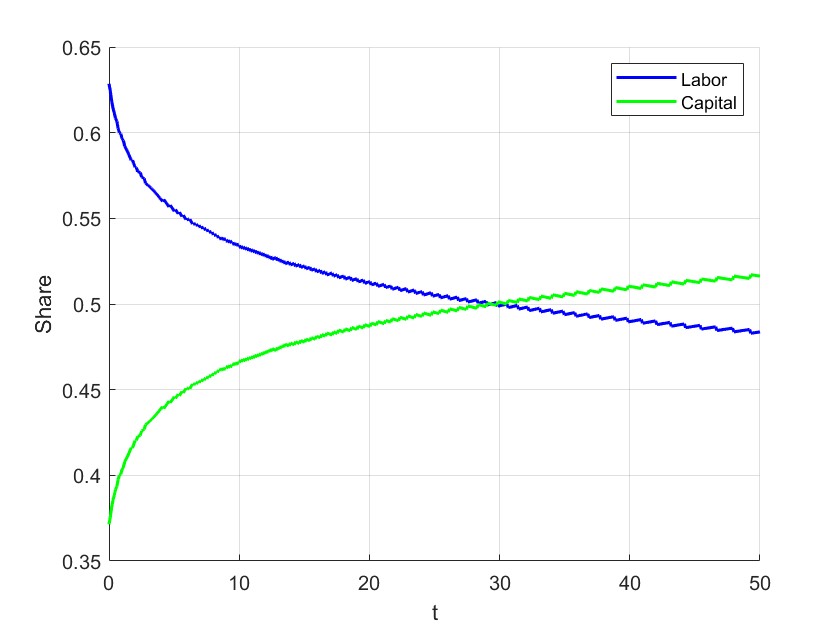}}
    \hfill  
        \subfloat[Automation boundary \label{fig:psi_vs_ratio_A}]{
        \includegraphics[width=0.45\textwidth]{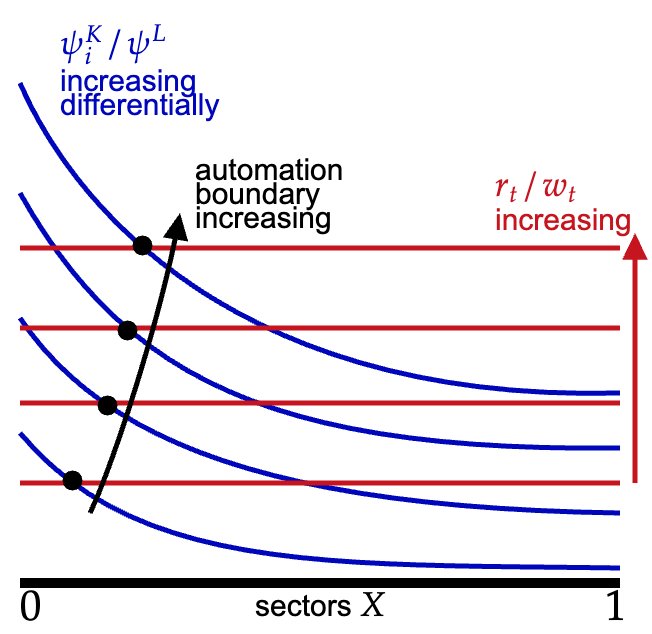}%
    }
       \caption{Dynamics of equilibrium outcomes when the degree of substitutability across tasks is low ($\sigma < 1 / \eta$). In this case, capital allocation across tasks is biased toward data-poor tasks.}
   \label{fig:capitalLaborShare}
\end{figure}

In addition to the economy being fully automated in the limit, the capital share will also converges to $1$. This is the result of a tug of war between intensive capital productivity improvements and the extensive expansion of the automation boundary. Whereas the former drives the usual Baumol cost disease, shrinking the size of the economy attributable to capital, the latter directly enlarges the capital share which is most readily seen from \cref{eqn:GDP_KL}. The previous argument implies that automation dominates the  Baumol effect in the limit; Figure \ref{fig:shares_complements} illustrates numerically that this is also the case along the transition. Intuitively, automation occurs quickly enough because capital productivity growth in the marginal task is keeping pace with growth among inframarginal tasks. At each time $t$, capital allocation is \emph{biased towards data-poor tasks} (Figure \ref{fig:complements_alloc}), partially offsetting inequities in task production due to current levels of capital productivity and resulting in a balanced data path in the long run.

\begin{figure}[t]
    \centering
    \subfloat[Capital biased towards data-poor tasks\label{fig:complements_alloc}]{%
        \includegraphics[width=0.5\textwidth]{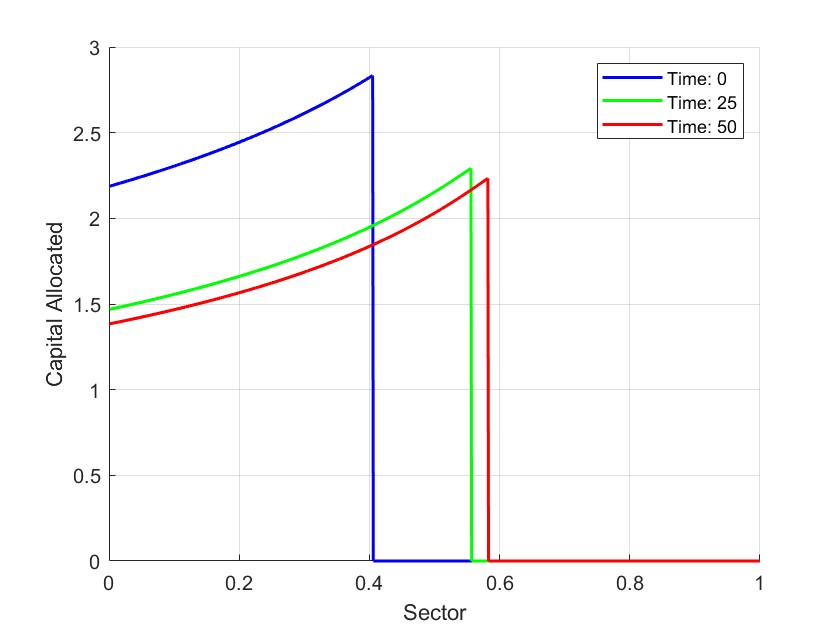}%
    }
    \hfill
    \subfloat[Data stock for different tasks \label{fig:complements_abs}]{%
        \includegraphics[width=0.5\textwidth]{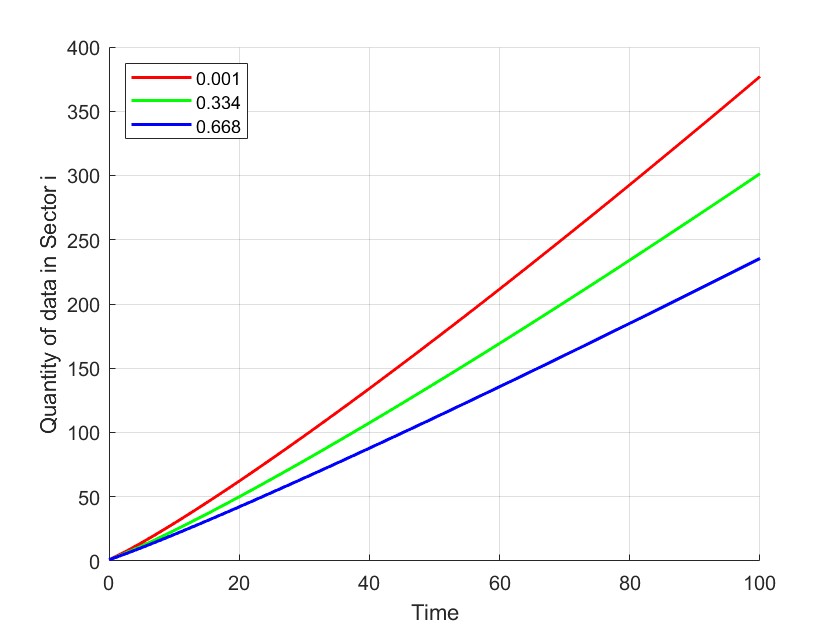}%
    }
        \caption{Balanced data growth}
                \footnotesize
        Note: We set $\psi^L=L=K=1$, $\boldsymbol{D}_0 := \bm{1}$, $f(i) = 1 - i$, $\eta = 0.2$, $g=\delta_{\{0\}}$, $\sigma = 0.5$. \\ Details of procedure in \cref{appendix:simulation_details}.     \label{fig:complements}
\end{figure}

%\begin{figure}[t]
%\begin{minipage}[t]{0.45\linewidth}
%Indeed, this argument also implies that automation must come to dominate the standard logic of cost-disease in the long run: automation is always taking place fast enough to offset Baumol effects. This is illustrated numerically in \cref{fig:shares_complements} where we find an increasing and concave pattern for the growth of capital share. 
%\end{minipage}
%\hfill%
%\begin{minipage}[t]{0.55\textwidth}\vspace{0pt}
%\vspace{-0.5em}
%\centering
%{\includegraphics[width=0.95\textwidth]{draft_figures_new/sigma0p5_shares.jpg}}
%\vspace{-1em}\caption{}\label{fig:shares_complements}
%\end{figure}
%\end{minipage}
%\end{figure}

%In simulations, we find the relative data ratio of tasks to be roughly constant over time, suggesting a constant rate of accumulation for all tasks. Such a {\emph{balanced data equilibrium path}} is a direct consequence of the complementary nature of tasks: at any point in time, capital is allocated to the least productive tasks [[cite that panel]]. Were this not to be the case, data will become imbalanced as the economy concentrates its capital resources towards a smaller set of data-rich tasks where capital is most productive already. \\ \\ 

Next, consider the case in which tasks are gross substitutes, but smaller than the reciprocal of the degree of diminishing returns of data in capital productivity $(\sigma \in [1,1/\eta))$. This case is more nuanced since the direction of capital allocation flips towards more productive tasks, so data growth might no longer be balanced---in principle, data-rich tasks can attract more capital, accumulate more data, and thereby attracting even more capital such that automation is plausibly constrained to a small set of highly capital productive tasks. We show that this cannot happen as the data feedback loop is dominated by diminishing returns from data: suppose again that the automation boundary is interior in the limit. Then, the ratio of productivity between the most and least productive task is \emph{bounded} due to diminishing returns from data accumulation since $\eta < 1$, which immediately implies that the relative price of capital will remain finite as well. A similar argument therefore yields
\[
\underbrace{\psi^K_{\gamma_t}}_{\substack{\to +\infty}} = \underbrace{(r_t/w_t)}_{\substack{\text{$\to C < +\infty$}}} \cdot \psi^L
\]
yielding the same contradiction as before. The proof in \cref{appendix:proofs} makes this heuristic argument formal. Moreover, like the gross complements case, full limit automation implies production will become entirely capital driven. In this case, however, extensive automation and intensive capital improvements both push toward a larger capital share given that tasks are gross substitutes.

\cref{fig:psi_vs_ratio_A} illustrates the evolution of automation boundary in this case. Red lines represent the relative factor price of capital and blue curves reflect the cross-sectional distribution of task productivities. The important observation is that the automation boundary $\gamma_t$ monotonically increases, until the economy is fully automated in the limit. In this limit, capital share converges to unity. This implies that when tasks are gross complements or weak substitutes, the expansion of the automation boundary dominates the Baumol effect that stems from capital augmenting productivity and drives the growth in capital share.
 
% \question{I put all the way implications in one subsection}

%Balanced data accumulation, in turn, ensures that growth in task-specific capital productivities are not excessively biased toward low index tasks (easy to automate). This, in turn, ensures that even though the ratio of rental rate to wages $r_t/w_t$ is rising, [XYZ]

%\begin{figure}[t]
 %   \centering
 %       \includegraphics[width=0.6\textwidth]{Images/psi_vs_ratio_A.png}%
  %      \caption{Capital allocation biased toward data-poor}\label{fig:psi_vs_ratio_A}
%\end{figure}

% \cref{fig:psi_vs_ratio_A} illustrates the evolution of the cross-sectional distribution of productivities $(\psi_{it})_i$ over different periods (\textcolor{blue}{blue curves}). Although tasks accumulate data at different speeds (reflected in the curvature of $(\psi_{it})_i$), the condition $\sigma \leq 1/\eta$ guarantees that the automation boundary increases toward $1$ even as the relative price of capital $r_t/w_t$ is simultaneously rising. 

\subsection{High Substitutability Across Tasks} \label{subsec:subsH}

We now show that the automation boundary is tight, i.e., when $\sigma > 1/\eta$, automation can fizzle out. In this case, the economy is not fully automated in the limit and the growth path of data remains imbalanced. Of course, this can only happen if tasks and their initial data allocation are not ex-ante symmetric. Otherwise, since the CES aggregator is symmetric, if both initial data $\pmb{D}_{i0}$ and task-specific data productivities $f_i$ are constant across tasks, all tasks produce the same quantity along this `knife edge' equilibrium path.\footnote{No matter how we break ties in order to choose which tasks to automate and which tasks to produce with labor.} 

\cref{fform substitutes2} provides a sufficient condition for data growth to be imbalanced when the tasks are not ex-ante perfectly symmetric. This sufficient condition qualifies the degree of heterogeneity in task-specific data productivity, $f_i$. 

\begin{definition}[$\sigma$-regular] \label{fform substitutes2}
    $\pmb{f}: X \to \mathbb{R}_{> 0}$ is \emph{$\sigma$-regular} if there exists 
     $L_{\sigma, f} \in \left(0, 1\right)$ s.t.
    \[
    \frac{\sigma}{\sigma-1}f(i)
    < \frac{1}{i}\int_{0}^{i}f(j)dj \quad \text{for all $i \geq L_{\sigma, f}$.} \label{f inequality}
    \]
    \begin{comment}
    \begin{enumerate}
        \item $f(i) > 0 \text{ }\forall i \in [0, 1)$ and $f(1) \geq 0$. [[I DO NOT THINK WE NEED THIS; IF IT IS NOT FULFILLED THEN WE GET THAT AUTOMATION IS UPPER-BOUNDED FOR FREE]]
        \item There exists
    \end{enumerate}
                \end{comment}
    Let $\mathcal{F}(\sigma)$ be the set of $\pmb{f}$ that are $\sigma$-regular. 
\end{definition}

$\sigma$-regularity states that there must be sufficient heterogeneity in how efficiently different tasks use data. It essentially requires the reverse hazard rate of $f_i$ to fall faster than the reciprocal of the task index $i^{-1}$: 
\[
\underbrace{\cfrac{f(i)}{\int^i_0 f(j) dj}}_{\substack{\text{`reverse} \\ \text{hazard rate'}}} < \frac{1}{i}\cdot \frac{\sigma - 1}{\sigma}
\]
This condition guarantees enough heterogeneity in how tasks use data so that the positive data feedback loop mechanism can take hold. Note that for a fixed  vector $\pmb{f}$, increasing task substitutability makes the $\sigma$-regularity condition more likely to be satisfied. Indeed, in the limit where $\sigma \to \infty$, even a near-constant $\pmb{f}$ will result in interior automation. \cref{example:sigma-reg} illustrates an instance of a $\sigma$-regular vector $\pmb{f}$.

\begin{example}[Example of $\sigma$-regularity] \label{example:sigma-reg}
    An example parameterization of $\pmb{f}$ is $f(i) = 1 + \epsilon - i$ for any $\epsilon < (\sigma - 1)/2 $. In this case, the condition for $\sigma$-regularity becomes
    \[
        \frac{\sigma}{\sigma - 1} \left(1 + \epsilon - j \right) < 1 + \epsilon - \frac{1}{2}j
    \]
    so any $L_{\sigma, f} \in \left(\frac{2 + 2\epsilon}{1 + \sigma}, 1\right)$ suffices.
\end{example}
With the definition of $\sigma$-regularity at hand, we are ready to state our second main result, pertaining to bounded automation.

\begin{proposition}[Bounded automation and imbalanced data] \label{prop:substitutes} 
If $\sigma > 1/\eta$ and $\pmb{f}$ is $\sigma$-regular, then: 
\begin{itemize}
    \item[(i)] \textbf{Partial automation:} The economy is never fully automated: $\lim_{t \to \infty} \gamma_t < 1$; 
    
    \item[(ii)] \textbf{Imbalanced data:} For any task $j \in \text{Int}(X)$ that is either always produced using capital or always produced using labor after some finite time: 
    \[
    \lim_{t \to \infty} \cfrac{D_{0t}}{D_{jt}} = \infty.
    \]
    % \textcolor{blue}{BX: Bryant check if this can be relaxed just to any task $j\neq0$.}
\end{itemize}
\end{proposition}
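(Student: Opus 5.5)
We work in data autarky with the regular equilibrium path of \cref{lemma:static eq lemma}, so $\psi^K_{it}=f_iD_{it}^{\eta}$ is weakly decreasing in $i$. By \eqref{eqn:capital allocation}, automated task $i$ produces $y_{it}=\psi^K_{it}K_{it}\propto f_i^{\sigma}D_{it}^{\eta\sigma}$. Labor tasks produce the constant $\psi^L L/(1-\gamma_t)$. The proof is organized around these two growth rates. Automated tasks follow $\dot D_{it}\propto f_i^{\sigma}D_{it}^{\eta\sigma}$ with $\eta\sigma>1$, a superlinear ODE. Labor tasks accumulate data at most linearly in time.

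\textbf{Step 1 (no full automation).} Assume, for contradiction, that $\gamma_t\to1$. We combine \eqref{gamma def} with \cref{lemma:direction lemma}. The aim is to show that once $\gamma_t\ge L_{\sigma,f}$, $\sigma$-regularity makes \eqref{eq:gammatPrelim} fail, so $\gamma_t$ cannot keep rising.

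The comparison goes as follows. Because $2\sigma-1-1/\eta>\sigma-1$ and $\psi^K_j/\psi^K_{\gamma_t}\ge1$, the integrand on the left of \eqref{eq:gammatPrelim} is at least $(f_j/f_{\gamma_t})^{1/\eta}(\psi^K_j/\psi^K_{\gamma_t})^{\sigma-1}$. On the right we use $\psi^K_j/\psi^K_{\gamma}\ge f_j/f_\gamma$. What remains is a weighted comparison of $\sigma$ against $(\sigma-1)\cdot\text{avg}(f_j/f_\gamma)$. Applying Chebyshev's inequality to the co-monotone functions $(f_j/f_\gamma)^{1/\eta}$ and $(\psi_j/\psi_\gamma)^{\sigma-1}$, together with $\frac1\gamma\int_0^\gamma f_j/f_\gamma\,dj>\sigma/(\sigma-1)$, should make the left side exceed the right. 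Hence $\gamma_t$ is strictly decreasing whenever $\gamma_t\ge L_{\sigma,f}$, which gives $\limsup\gamma_t\le \max\{\gamma_0,L_{\sigma,f}\}<1$.

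I expect this inequality manipulation to be the main obstacle. The weights carry powers $1/\eta$ and $2\sigma-1-1/\eta$ rather than exactly $1$, so I would first try Jensen/Chebyshev with $f$-ratios at least $1$. If that fails, I would instead argue directly from \eqref{gamma def}: the average $\frac1\gamma\int_0^\gamma(\psi_j/\psi_\gamma)^{\sigma-1}$ blows up relative to $1-\gamma$ as data diverges, which contradicts $\gamma\to1$.

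\textbf{Step 2 (imbalanced data).} Suppose first that task $j>0$ is always automated after some time $T$. Write $c=\eta\sigma-1>0$ and integrate $\dot D_{it}=\kappa_t f_i^{\sigma}D_{it}^{\eta\sigma}$, where $\kappa_t$ is common across tasks. This gives $D_{it}^{-c}=D_{iT}^{-c}-c f_i^{\sigma}\int_T^t\kappa_s ds$.

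Task $0$ has both the larger $f$ and the larger initial data, so its right-hand side hits zero first. Task $0$'s data therefore explodes while $D_{jt}$ stays finite at that moment. If $\int\kappa<\infty$ prevents explosion, the ratio instead converges to a finite limit; we rule this out because \eqref{eqn:capital allocation} normalizes $\kappa_t$ by $\int_0^\gamma(\psi^K)^{\sigma-1}$, and $K$ is fixed with output bounded away from zero. Equivalently, taking ratios gives $\frac{d}{dt}\log(D_0/D_j)=\kappa_t(f_0^\sigma D_0^{\eta\sigma-1}-f_j^\sigma D_j^{\eta\sigma-1})>0$ and increasing, since $f_0\ge f_j$ and $D_0>D_j$ preserve the gap. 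The ratio then diverges because the gap in growth rates grows as $D_0$ grows.

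Now suppose $j$ is always produced by labor after some time. Then $D_{jt}\le D_{j0}+Ct$ with $C=\psi^L L/(1-\bar\gamma)$ bounded, using Step 1. Meanwhile $D_{0t}$ is always automated, since $\gamma_t>0$, and grows at least as fast as superlinearly. So the ratio diverges in this case too.
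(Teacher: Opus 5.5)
Your Step 1 follows the paper's argument, and the inequality chain closes without any fallback. Lower the exponent $2\sigma-1-1/\eta$ to $\sigma-1$ (all ratios are $\ge 1$), use $(f_j/f_{\gamma_t})^{1/\eta}\ge f_j/f_{\gamma_t}$, and apply Chebyshev to the two decreasing functions $f_j$ and $(\psi^K_j/\psi^K_{\gamma_t})^{\sigma-1}$. Then invoke $\sigma$-regularity. Your remark about bounding the right-hand side below by $f_j/f_\gamma$ points the wrong way and should be dropped.

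The genuine difference is in Step 2, case (a). The paper uses that $D_{0t}/D_{jt}$ is increasing, so if bounded it converges. L'H\^opital then gives the fixed-point relation $x=(f_0/f_j)^{\sigma}x^{\eta\sigma}$, which has no solution $x>1$ when $\eta\sigma>1$. You instead integrate the Bernoulli equation on the common clock $S_t=\int_T^t\kappa_s\,ds$, obtaining
\[
\frac{D_{0t}}{D_{jt}}=\left(\frac{f_j^{\sigma}\,(a_j-S_t)}{f_0^{\sigma}\,(a_0-S_t)}\right)^{1/c},\qquad a_i:=\frac{D_{iT}^{-c}}{c\,f_i^{\sigma}},\qquad a_0<a_j .
\]
The paper's route is shorter and needs no control of $\kappa_t$; yours says more. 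On a path defined for all $t$ one has $S_t<a_0$. The bound $\dot D_{0t}\ge\psi^L L$ then forces $S_t\uparrow a_0$; this is simpler than your appeal to the normalization of $\kappa_t$. So the ratio diverges because $D_{0t}\to\infty$ while $D_{jt}$ stays \emph{bounded}, and the ``explosion'' is asymptotic, not at a finite date.

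Going one step further shows more. Bounded $D_{jt}$, together with the marginal-task condition $\kappa_t(\psi^K_{\gamma_t})^{\sigma}=\psi^L L/(1-\gamma_t)$ and $\psi^K_{\gamma_t}\le\psi^K_{jt}$, bounds $\kappa_t$ away from zero. Then $S_t\to\infty$, contradicting $S_t<a_0$. So case (a) is in fact empty for interior $j$, which the L'H\^opital argument never reveals.

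Two small repairs are needed. First, drop your ``equivalently'' argument via $\frac{d}{dt}\log(D_{0t}/D_{jt})$. The growing bracket is multiplied by $\kappa_t\to 0$, so nothing in that argument forces divergence.

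Second, in case (b), superlinear growth of $D_{0t}$ does not follow from $\dot D\propto D^{\eta\sigma}$, because the common factor decays. You need the paper's bound $K_{0t}\ge K$, i.e.\ $\dot D_{0t}\ge K\psi^K_{0t}\to\infty$, set against $\dot D_{jt}\le\psi^L L/(1-\bar\gamma)$. Also, the bound on labor-task data should read $D_{jT}+C(t-T)$, not $D_{j0}+Ct$.
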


\begin{comment}
    Old version: If $\sigma \geq 1/\eta$ and $\pmb{f}$ is regular, then for any initial stock of data $\bm{D}_0$ weakly decreasing, if there are no data spillovers i.e., $W$ is induced by $g = \delta_{\{0\}}$, \emph{automation is upper-bounded}:  
    \[
        \sup_{t \in \mathcal{T}} \gamma_t \leq \max \big\{L_f, \gamma_0\big\} < 1 \text{ }\forall t.
    \]
    Moreover, the condition on the elasticity is tight: if $\sigma < 1/\eta$ and $\pmb{f}$ is constant, then for any any initial stock of data $\bm{D}_0$ and any spillovers $W \in \mathcal{W}$, (i) \emph{automation monotonically increases} $\gamma_t$ increases; and (ii) there is \emph{full limit automation}: $\lim_{t \to \infty} \gamma_t = 1$.
\end{comment}

%\begin{proof}
 %   See \cref{sec:substitutes ub pf}.
%\end{proof}

\begin{figure}[t]
    \centering
    \subfloat[Capital biased to data-rich \label{fig:leftComp}]{%
        \includegraphics[width=0.5\textwidth]{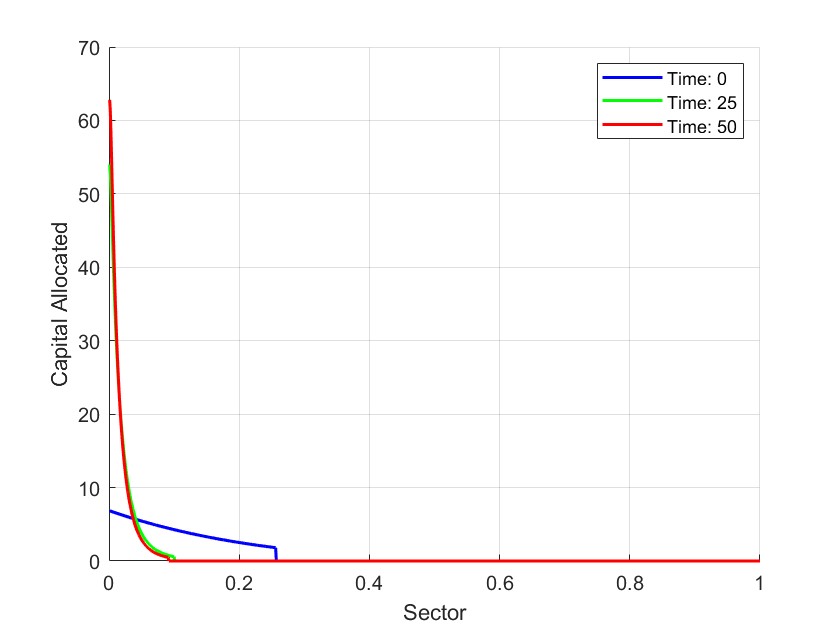}%
    }
    \hfill % Pushes the subfigures apart
    \subfloat[Data stock for different tasks \label{fig:rightComp}]{%
        \includegraphics[width=0.5\textwidth]{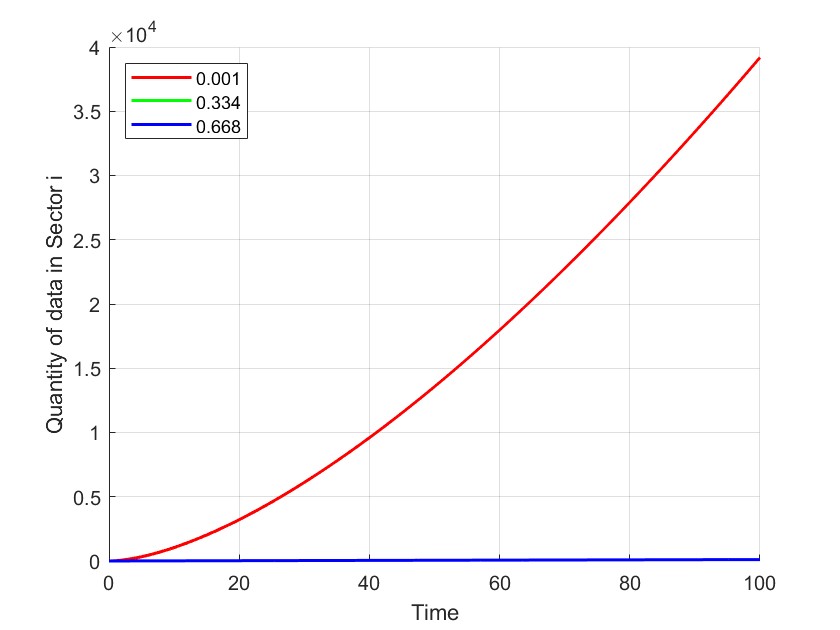}%
    }
        \caption{Imbalanced data growth}
 % Add some vertical space between caption and note
        \footnotesize
        Note: parameters same as \cref{fig:complements} but with $\sigma = 5.5$.
    \label{fig:substitutes}
\end{figure}

\cref{fig:substitutes} illustrates \cref{prop:substitutes}. Panel \ref{fig:leftComp}  shows the evolution of capital allocation over time. There is  a bias in capital allocation towards data-rich tasks, which in turn leads to a contraction of the automation frontier. Panel \ref{fig:rightComp} plots the evolution of effective data stock of various tasks and illustrates that data becomes highly concentrated in a few capital intensive tasks. 

The key intuition behind this result is that sufficient substitutability distorts the direction of capital allocation toward tasks that have high capital productivity. This, in turn, induces greater production of those tasks in general equilibrium, generating more task-specific data and amplifying the capital productivities of those tasks even further. This force pushes our economy toward \emph{imbalanced data accumulation} and \emph{partial automation}.

To see this, consider a similar thought experiment as in the previous section, where we run the economy forward in time,  holding the automation boundary fixed. The relative rental rate can be written as: 
\begin{equation} \label{eqn:r_over_w_alt}
\frac{r_t}{w_t} = \cfrac{L}{K} \cdot \cfrac{\psi^L}{1-\gamma_t} \cdot \int^{\gamma_t}_0 \left(\cfrac{\psi^K_j}{\psi^K_{\gamma_t}}\right)^{\overbrace{\sigma - 1}^{\textcolor{red}{>1}}} dj = \frac{\psi^K_{\gamma_t}}{\psi^L}.
\end{equation}
That is, the rental rate of capital relative to labor is a `weighted average' of capital productivities among the inframarginally automated tasks. Holding the level of automation fixed, in this case we show the relative productivity growth of inframarginal tasks outpace that of the marginal task, a direct consequence of capital allocation being sufficiently biased towards data-rich tasks, so much so that it overcomes diminishing returns from data accumulation.\footnote{Technically this only occurs in the regime where $\gamma_t \geq L_{\sigma, f}$. See \cref{appendix:proofs} for additional details.} As such, $\gamma_t$ would have to decrease to match the increased productivity of inframarginal tasks.

\begin{figure}[t]
    \centering
        \subfloat[Capital and labor \label{fig:shares_sub}]{
        \includegraphics[width=0.48\textwidth]{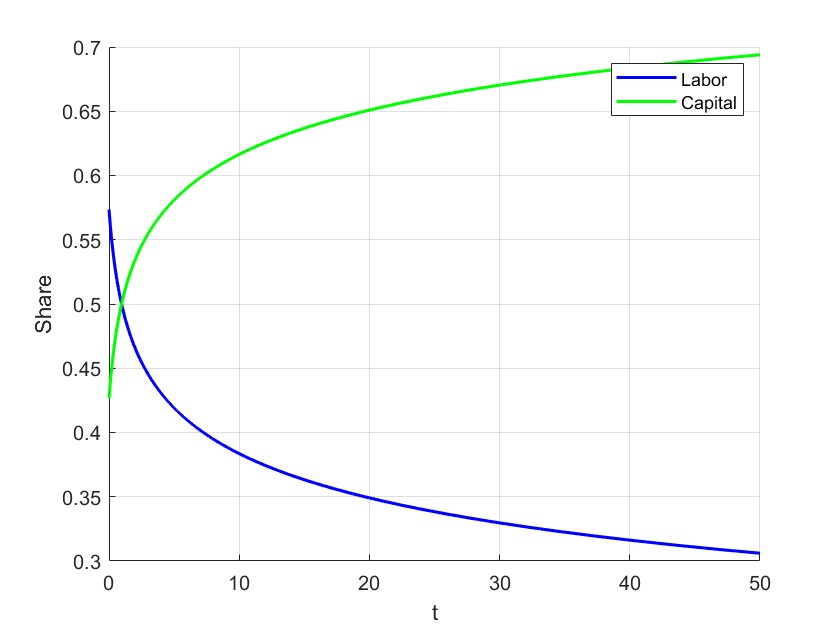}}
\hfill
    \subfloat[Automation boundary\label{fig:psi_vs_ratio_B}]{%
        \includegraphics[width=0.4\textwidth]{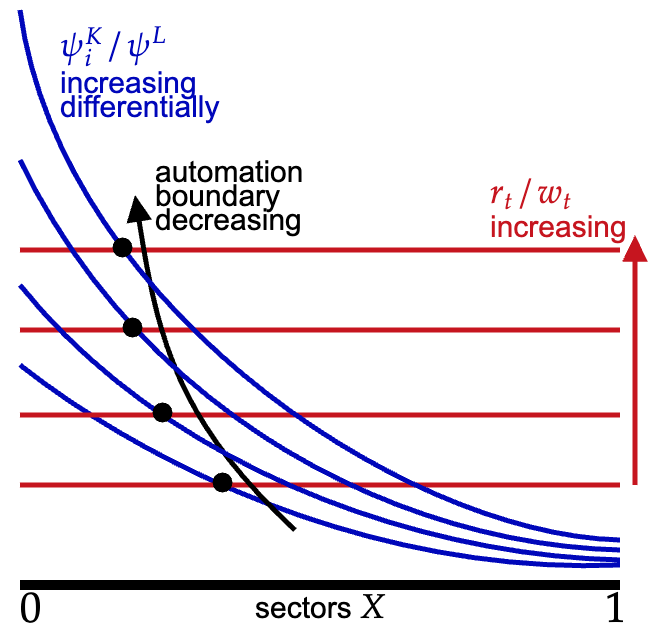}}

        \caption{Dynamics of equilibrium outcomes when tasks are highly substitutable ($\sigma >\frac{1}{\eta}$). In this case, capital allocation across tasks is biased toward data-rich tasks.}
   \label{fig:automationBoundary}
\end{figure}

\cref{fig:psi_vs_ratio_B} depicts the evolution of automation boundary when the degree of substitutability among tasks is sufficiently high, $\sigma>\frac{1}{\eta}$. Similar to the case of $\sigma< \frac{1}{\eta}$, red lines (blue curves) represent the relative factor price of capital (cross-sectional distribution of task productivities). In this case, the automation boundary $\gamma_t$ must decline, for factor markets to clear, and the upper bound of automation boundary represent the automation level at which this occurs. As in the other case (gross complements and weak gross substitutes), capital share converges to one in the limit since $\lim_{t\to \infty} \inf_{\gamma \in [0,1]} \psi^K_{\gamma,t} = \infty$. We further verify numerically that the capital share increases along the transition path in \cref{fig:shares_sub}. The driving force of the growth of capital share, however, is distinct from the low substitutability case. Since tasks are now gross substitutes, an increase in the capital index increases the capital share, whereas any resulting contraction in the automation boundary $\gamma_t$ tames the initial increase. While in the gross complements case, the expansion of the automation boundary dominates the Baumol effect, here the increase in capital share suggests the opposite must be true. 

\begin{figure}[t] 
    \centering
    \subfloat[Evolution of automation boundary $\gamma$ \label{fig:leftCompDyn}]{%
        \includegraphics[width=0.5\textwidth]{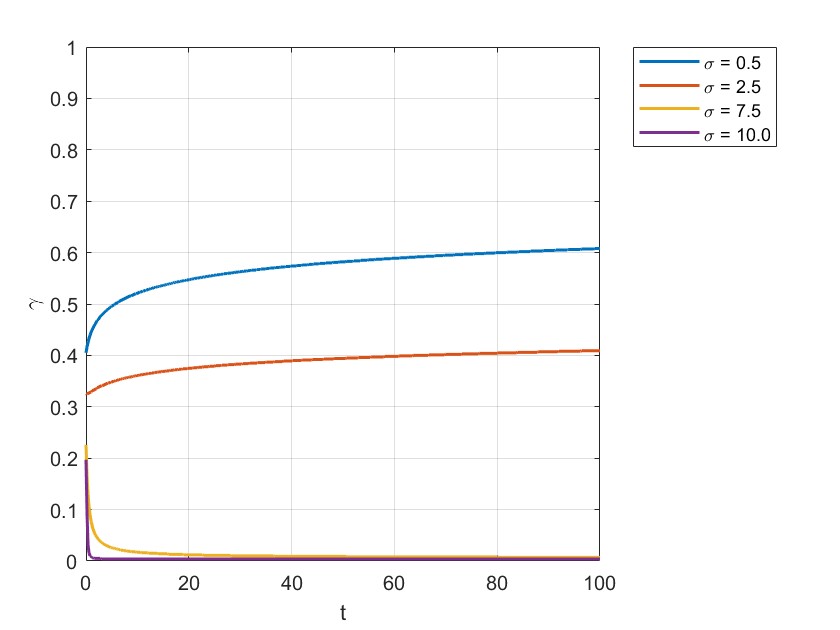}%
    }
    \hfill % Pushes the subfigures apart
    \subfloat[Data of task $0.5$ relative to task $0$\label{fig:rightCompDyn}]{%
        \includegraphics[width=0.5\textwidth]{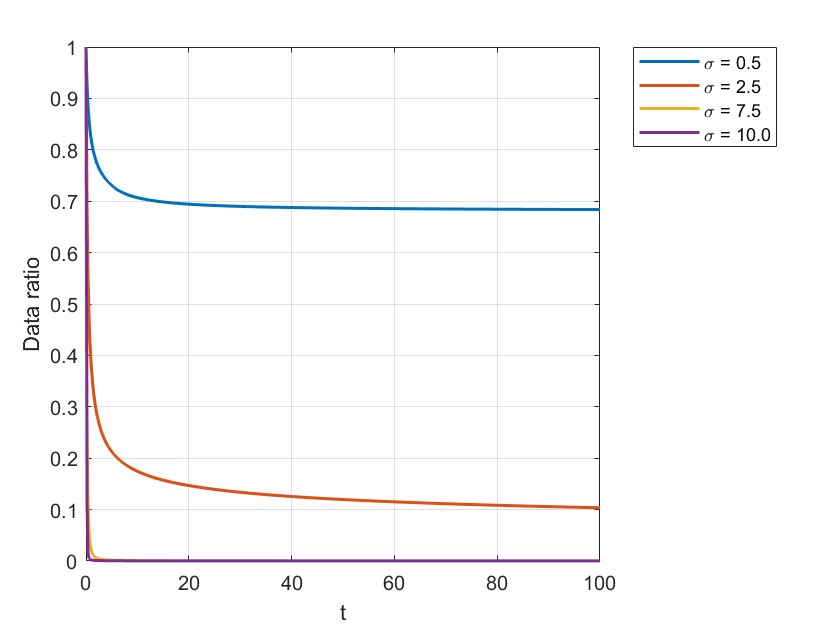}%
    }
    \caption{Comparative dynamics over time for different values of $\sigma$, when $\eta=0.2$.}
    \label{fig:comparativeDynamics}
\end{figure}

\cref{fig:comparativeDynamics} summarizes \cref{prop:complements,prop:substitutes}. Panel \ref{fig:leftCompDyn} plots the evolution of the automation boundary, while Panel \ref{fig:rightCompDyn} plots the evolution of data used in task $0.5$ (the task with median productivity) relative to task $0$ (most productive task) over time, each for four  different degrees of cross-task substitutability $\sigma$. We assume $\eta = 0.2$ such that $\sigma = 1/\eta = 5$ is the threshold that deternines whether teh economy is fully automated or not.

\subsection{Wages} \label{subsec:wages}

Taken together, our results on how task substitutability shapes the direction and limit of equilibrium automation has sharp implications for wages. When tasks are  not strong enough substitutes (equivalently, sufficiently complementary) the expression of equilibrium wage in \cref{eqn:wage} illustrates that there are two distinct effects of data-driven automation on the level of wages as in \cite{acemoglu2018race}. On the one hand, data-driven increases in capital productivity drives up the productivity of labor, thereby increasing wages. On the other hand, the contraction of labor onto a smaller set of tasks increases the effective labor supply, thereby reducing wages. The latter displacement effect is identical to \cite{acemoglu2018race} but the former productivity effect stems from the role of data in augmenting capital productivity on the intensive margin rather than via the extensive margin by replacing inefficient labor. In the long-run they cancel out \emph{exactly} such that wages eventually stagnate. %---we will formalize this momentarily in \cref{prop:wages}. 

Alternatively, when tasks are sufficiently substitutable, \cref{prop:substitutes} implies that the automation boundary can be bounded above, thereby neutering the displacement effect that had offset wage growth in the complements case. As such, long-run wage growth will now be dominated by the productivity effect from data-driven automation.  
The following proposition formalizes this discussion.

\begin{proposition}[Long-run wages] \label{prop:wages} For any initial stock of data $\pmb{D}_0$ consistent with the hypotheses of \cref{lemma:static eq lemma}: 
    \begin{itemize}
        \item[(i)] If $\sigma \leq 1/ \eta$, then wages stagnate: 
        \[
        \lim_{t \to \infty} w_t = C < +\infty. 
        \]
        \item[(ii)] If $\sigma > 1/\eta$ and $\pmb{f}$ is $\sigma$-regular, then wages grow unboundedly: 
        \[
        \lim_{t \to \infty} w_t = +\infty. 
        \]
    \end{itemize}
\end{proposition}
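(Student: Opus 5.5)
The plan is to work directly with the wage expression in \cref{eqn:wage} and rewrite it using the automation boundary condition \eqref{gamma def}. Substituting the CES output representation \cref{eqn:GDP_KL} into \cref{eqn:wage} gives
\[
w_t = \psi^L\left(\frac{Y_t}{(1-\gamma_t)\,\psi^L L}\right)^{\frac{1}{\sigma}}\cdot\frac{(1-\gamma_t)}{1}\cdot\frac{1}{(1-\gamma_t)}
\]
or, equivalently, the marginal product of labor: $w_t = \psi^L \big(Y_t/y_{L,t}\big)^{1/\sigma}$, where $y_{L,t} = \psi^L L/(1-\gamma_t)$ is the output of each labor-produced task. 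I will analyze both cases by combining this with \eqref{gamma def}. Writing $I_t := \int_0^{\gamma_t}(\psi^K_{jt})^{\sigma-1}dj$, market clearing gives $(\psi^K_{\gamma_t})^{\sigma} = (\psi^L)^{\sigma}\, I_t L /\big((1-\gamma_t)(\psi^L)^{\sigma-1}K\big)$. This ties the labor-side quantity $(1-\gamma_t)^{-1}$ to capital-side objects.

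\textbf{Case (i), $\sigma\le 1/\eta$.} I invoke \cref{prop:complements}: $\gamma_t\to 1$ and data is balanced, so $\psi^K_{jt}/\psi^K_{\gamma_t t}\to (f_j/f_{\gamma_t})^{\frac{1}{1-\sigma\eta}}$ uniformly enough to pass limits inside integrals. The key rewriting is
\[
Y_t^{\frac{\sigma-1}{\sigma}} = (K^{\frac{\sigma-1}{\sigma}} I_t^{\frac1\sigma}) + (1-\gamma_t)^{\frac1\sigma}(\psi^L L)^{\frac{\sigma-1}{\sigma}},
\]
and using \eqref{gamma def} to express $(1-\gamma_t)$ in terms of $I_t/(\psi^K_{\gamma_t})^{\sigma}$. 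Then $w_t^{\sigma} = Y_t (1-\gamma_t)(\psi^L)^{\sigma-1}/L$ and I will show $Y_t(1-\gamma_t)$ converges. Indeed, since capital's share tends to one, $Y_t \sim K\, I_t^{1/(\sigma-1)}$ (for $\sigma\neq1$; the Cobb--Douglas case handled by the analogous log form), and \eqref{gamma def} gives $1-\gamma_t = I_t L\,\psi^L/\big(K(\psi^K_{\gamma_t})^{\sigma}\big)$. Hence $w_t^\sigma \propto I_t^{\sigma/(\sigma-1)}/(\psi^K_{\gamma_t})^{\sigma} = \Big(\int_0^{\gamma_t}(\psi^K_j/\psi^K_{\gamma_t})^{\sigma-1}dj\Big)^{\sigma/(\sigma-1)}$, which by balanced data converges to the finite positive constant $\big(\int_0^1 (f_j/f_1)^{\frac{\sigma-1}{1-\sigma\eta}}dj\big)^{\sigma/(\sigma-1)}$ (boundedness of $\pmb f$ above and below guarantees finiteness). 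The displacement and productivity effects thus cancel exactly. The main obstacle is justifying the interchange of limit and integral with a moving upper limit $\gamma_t\to1$ and controlling the error term from labor's share; I would use dominated convergence, with the ratio $\psi^K_j/\psi^K_{\gamma_t}$ bounded via \cref{lemma:static eq lemma} monotonicity and the bounds on $\pmb f$, and note that labor's share vanishes so $Y_t$'s asymptotic equivalence suffices.

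\textbf{Case (ii), $\sigma>1/\eta$.} By \cref{prop:substitutes}, $\bar\gamma:=\limsup\gamma_t<1$, so $1-\gamma_t$ is bounded away from zero. Since all tasks produce positive quantities, every $D_{jt}\to\infty$; in particular $Y_t\ge \big(\int_0^{\gamma_t} y_j^{(\sigma-1)/\sigma}\big)^{\sigma/(\sigma-1)}\ge$ output from the capital bundle, which is $K\,I_t^{1/(\sigma-1)}\ge K\,\gamma_t^{1/(\sigma-1)}\psi^K_{\gamma_t t}\to\infty$ (need $\gamma_t$ bounded below, which follows since $\gamma_t$ is interior and $\psi^K_{0t}\to\infty$ forces automation of a neighborhood of $0$). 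Then $w_t^\sigma = Y_t(1-\gamma_t)(\psi^L)^{\sigma-1}/L\ge c\,Y_t\to\infty$.
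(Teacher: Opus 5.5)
The genuine gap is in case (ii). Your bound $Y_t\geq K\gamma_t^{1/(\sigma-1)}\psi^K_{\gamma_t,t}$ diverges only if $\gamma_t$ stays bounded away from zero, and your justification does not deliver that. The fact that $\psi^K_{0t}\to\infty$ shows $\gamma_t>0$ at every date, not that $\inf_t\gamma_t>0$. \cref{prop:substitutes} bounds $\gamma_t$ only from above, and in this regime automation contracts. The paper's simulations show $\gamma_t$ falling steadily, and a rough calculation with capital concentrated on a shrinking set of data-rich tasks suggests $\gamma_t$ decays like a negative power of $t$. So there is no reason for your bound to diverge.

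The repair is to use the boundary condition you already wrote down. With $\bar\gamma:=\sup_t\gamma_t<1$, it gives $I_t=(1-\gamma_t)K(\psi^K_{\gamma_t})^{\sigma}/(\psi^L L)\geq(1-\bar\gamma)K(\psi^K_{\gamma_t})^{\sigma}/(\psi^L L)$. Hence $I_t\to\infty$ and $Y_t\geq KI_t^{1/(\sigma-1)}\to\infty$, with no lower bound on $\gamma_t$ needed. You also need $\psi^K_{\gamma_t,t}\to\infty$ along the moving index, and this does not follow from each $D_{jt}\to\infty$ separately. Use instead that every task produces at least $\psi^L L/(1-\gamma_t)\geq\psi^L L$. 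Then $D_{jt}\geq D_{j0}+\psi^L L t$ uniformly in $j$, and $\psi^K_{\gamma_t,t}\geq\underline f\,(\psi^L L t)^{\eta}$.

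Case (i) is essentially the paper's route. Both arguments reduce $w_t^\sigma\propto Y_t(1-\gamma_t)$, via the boundary condition, to $M_t^{\sigma/(\sigma-1)}$ with $M_t=\int_0^{\gamma_t}(\psi^K_j/\psi^K_{\gamma_t})^{\sigma-1}dj$, and then invoke balanced data. The identity is in fact exact: $w_t^\sigma=(\psi^L)^\sigma M_t^{\sigma/(\sigma-1)}\bigl(1+\psi^L L/(K\psi^K_{\gamma_t})\bigr)^{\sigma/(\sigma-1)}$. This makes your `capital share tends to one' step immediate. It also settles case (ii) at once, since $M_t=(1-\gamma_t)K\psi^K_{\gamma_t}/(\psi^L L)$.

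Three points need fixing in case (i):
\begin{itemize}
\item In your dominated-convergence step, monotonicity and the bounds on $\pmb f$ give only $\psi^K_j/\psi^K_{\gamma_t}\geq 1$. The uniform upper bound must come from the bound on $\psi^K_0/\psi^K_1$ derived in the proof of \cref{prop:complements}.
\item \cref{prop:complements}(ii) concerns fixed pairs of interior tasks. Pointwise convergence of $\psi^K_{jt}/\psi^K_{\gamma_t,t}$ against the moving marginal task therefore needs a separate argument; the paper skips this too.
\item The limit involves $f(1^-)$ rather than $f_1$ if $\pmb f$ jumps at $1$.
\end{itemize}
Finally, your first display for $w_t$ puts $(1-\gamma_t)$ in the denominator. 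The correct form, which you use afterwards, is $w_t^\sigma=Y_t(1-\gamma_t)(\psi^L)^{\sigma-1}/L$.
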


\cref{prop:wages} is tightly connected to the literature on how automation technologies impact wages in opposing direction. On the one hand, they can increase wages--by increasing the marginal productivity of labor. On the other hand, they can decrease wages by displacing the equilibrium set of tasks performed by labor (see, e.g., Propositions 1 and 6 of \cite{acemoglu2024capital}). \cref{prop:wages} makes precise how the endogenous composition of data accumulation and the ensuing direction of data-driven automation determine which of these forces dominate and when they offset each other exactly.\footnote{This is another way in which data accumulation is distinct from capital accumulation: supposing our economy is fully automated in the limit $(\lim_{t} \gamma_t = 1)$, long-run wages are, perhaps unsurprisingly, proportional to $K^{\frac{\sigma - 1}{\sigma}}$ which is strictly decreasing in $K$ whenever tasks are strictly gross substitutes $(\sigma < 1)$, and strictly increasing in $K$.} Indeed, we emphasize that \cref{prop:heterogeneous-speed} is driven distinctly by endogenous data accumulation---and hence capital productivity: it is precisely when tasks are \emph{complementary} that long-run wages stagnate because it drives the economy to produce more 'data-poor' tasks (e.g., that produced by labor). While this might aid wages in the short-run, this eventually augments the productivity of AI to perform such tasks. It is this sense that labor is \emph{training their own replacement}. 

Finally, it is worth noting that the wage stagnation result will hold when there is capital accumulation as well, as highlighted in the proof of \cref{prop:wages}. Contemporaneous work by \cite{restrepo2025we} similarly shows that wage will stagnate in an economy featuring capital accumulation but with fixed capital productivities across tasks. Allowing capital productivities to expand across tasks should increase the relative price of labor under gross complementarity. However, as argued earlier, the resulting displacement is sufficiently strong in our setting so as to keep wages constant. In this sense our result can be viewed as a strengthening of the wage stagnation proposition in \cite{restrepo2025we}. 

%Finally, our numerical simulations reveal an interesting observation. We find that not only is the automation boundary bounded above, as \cref{prop:substitutes} shows analytically, but it also declines monotonically over time. Hence, equilibrium contraction in automation boundary boosts wage growth further; see \cref{appendix:additional_numerical_results}.

\subsection{Speed of Automation} 

We have thus far analyzed the limiting behavior of data-automation feedback loops. This raises a natural question of how \emph{quickly} data-driven automation might take place. We offer a tight characterization: 

\begin{proposition}[Speed of data-driven automation]
\label{prop:heterogeneous-speed}
Suppose that $\sigma < 1/\eta$ and normalize \(\psi^L=1\).
Then there exists some finite time $T < +\infty$ such that for all times $t \geq T$: 
\begin{align}
\label{eqn:speed_bounds}
\frac{L\underline M}{K\overline f}
    \left(
        \overline B^{1-\eta}
        +(1-\eta)\frac{K\overline f}{\underline M}(t-T)
    \right)^{-\frac{\eta}{1-\eta}}
    \leq 
    1 - \gamma_t
    \leq
    \frac{L\overline M}{K\underline f}
    \left(
        \underline B^{1-\eta}
        +(1-\eta)\frac{K\underline f}{\overline M}(t-T)
    \right)^{-\frac{\eta}{1-\eta}} \tag{S} 
\end{align}
where $\underline M, \overline M, \underline B, \overline B > 0$ are constants, $\underline f := \min \pmb{f}$, and $\overline f := \max \pmb{f}$. 
Importantly,  the fraction of tasks produced by labor decays as a power law, $1 - \gamma_t=\Theta\left(t^{-\frac{\eta}{1-\eta}}\right)$. Moreover, if $\pmb{D}_0$ is less imbalanced than the the balanced-data limit, i.e.,
\[
    \frac{D_{i0}}{D_{j0}}
    \leq
    \left(\frac{f_i}{f_j}\right)^{\frac{\sigma}{1-\sigma\eta}}
    \qquad \forall i<j,
\]
then the above bounds on speed \eqref{eqn:speed_bounds} holds for all times $t \geq 0$ (i.e., setting $T = 0$).
\end{proposition}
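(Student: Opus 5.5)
The plan is to reduce the dynamics of $1-\gamma_t$ to a scalar ODE for the data stock of the marginal task. I will then sandwich it between two explicitly solvable Bernoulli-type ODEs.

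\textbf{Step 1: express $1-\gamma_t$ through the marginal task.} Rearranging the boundary condition \eqref{gamma def}, or equivalently \eqref{eqn:r_over_w_alt}, with $\psi^L=1$ gives
\[
1-\gamma_t=\frac{L}{K}\,(\psi^K_{\gamma_t})^{-\sigma}\int_0^{\gamma_t}(\psi^K_{jt})^{\sigma-1}dj=\frac{L}{K}\,\frac{M_t}{\psi^K_{\gamma_t,t}},
\qquad M_t:=\int_0^{\gamma_t}\Big(\frac{\psi^K_{jt}}{\psi^K_{\gamma_t t}}\Big)^{\sigma-1}dj .
\]
The first task is to show that $M_t$ is bounded between constants $\underline M,\overline M>0$ for $t\ge T$. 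By \cref{prop:complements}(ii), $D_{jt}/D_{\gamma_t t}$ converges to $(f_j/f_{\gamma_t})^{\sigma/(1-\sigma\eta)}$. Hence $\psi^K_j/\psi^K_{\gamma_t}$ is eventually trapped in a compact interval of $(0,\infty)$, since $\pmb f$ is bounded above and below. Together with $\gamma_t\to 1$, this bounds $M_t$ above and below. I need uniformity in $j$ here, not just pointwise convergence. I would obtain it by a monotonicity or squeezing argument on the ratios, using that $D_0$ is decreasing and that the limit ratios are continuous in $f$. This step is where the finite time $T$ enters.

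\textbf{Step 2: the ODE for the marginal data.} Next I track the data of the marginal task. For an automated task, \eqref{eqn:capital allocation} gives
\[
\dot D_{it}=y_{it}=\psi^K_{it}K_{it}=\frac{(\psi^K_{it})^{\sigma}}{\int_0^{\gamma_t}(\psi^K_j)^{\sigma-1}}K .
\]
At $i=\gamma_t$ this becomes $\psi^K_{\gamma_t}K/(M_t\,\psi^K_{\gamma_t})\cdot\psi^K_{\gamma_t}/\psi^K_{\gamma_t}=K/M_t$. The same quantity also equals the labor output $L/(1-\gamma_t)$, which is consistent with Step 1.

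Rather than follow the moving marginal task, I will define the scalar $B_t:=$ the effective data level at the boundary, chosen so that $\psi^K_{\gamma_t}=f_{\gamma_t}B_t^{\eta}$. I will use that inframarginal data ratios are eventually stable, so that $\frac{d}{dt}B_t\in[\frac{K\underline f}{\overline M}B_t^{-\eta}\cdot c,\ \frac{K\overline f}{\underline M}B_t^{-\eta}\cdot c]$ after normalization. The cleanest route is to express the total data $\int_0^{\gamma_t}D_{jt}\,dj$, or the data of a fixed reference task, through $B_t$ using the balanced ratios. Comparison then gives $B_t^{1-\eta}$ between $\underline B^{1-\eta}+(1-\eta)\frac{K\underline f}{\overline M}(t-T)$ and the analogous upper expression. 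Substituting $\psi^K_{\gamma_t}\asymp f\,B_t^{\eta}$ into Step 1 yields exactly the two sides of \eqref{eqn:speed_bounds}, and the power law $t^{-\eta/(1-\eta)}$ follows.

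\textbf{Step 3: the case $T=0$.} When $D_0$ is less imbalanced than the balanced-data limit, I will show that this ordering is preserved forward in time. The approach is a comparison argument on $\frac{d}{dt}\log(D_i/D_j)$. When the ratio sits at the balanced value, the growth rates coincide because $\sigma\eta<1$, so the ratio cannot cross. The bounds on $M_t$ then hold from $t=0$.

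\textbf{Main obstacle.} I expect the main difficulty to be the uniform control in Steps 1–2. The marginal task changes over time, and tasks near $1$ only recently switched to capital, so their data ratios are not yet near the balanced limit. I would first try to handle this by bounding every ratio $D_j/D_{\gamma_t}$ using monotonicity of $D_t$ in $j$ (from \cref{lemma:static eq lemma}) together with a uniform rate of convergence obtained from the log-ratio ODE.
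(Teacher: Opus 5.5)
The gap is in Step 2, which is the heart of the argument. There are two slips first. At the boundary the capital allocation is $K_{\gamma_t t}=K/M_t$, so output is $y_{\gamma_t t}=\psi^K_{\gamma_t t}K/M_t=L/(1-\gamma_t)$, not $K/M_t$. The differential inequality must also read $\dot B_t\asymp B_t^{\eta}$, not $B_t^{-\eta}$; the latter integrates to $B_t^{1+\eta}\propto t$ and would give $1-\gamma_t\sim t^{-\eta/(1+\eta)}$.

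More seriously, $B_t=D_{\gamma_t,t}$ is the data of a moving task, so
\[
\tfrac{d}{dt}D_{\gamma_t,t}=y_{\gamma_t,t}+\partial_iD_{i,t}\big|_{i=\gamma_t}\,\dot\gamma_t .
\]
The second term is nonpositive when the boundary moves right over data-poorer tasks. You therefore only get $\dot B_t\le y_{\gamma_t t}$, which delivers the lower bound on $1-\gamma_t$ but not the upper one.

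The missing idea is the one the paper uses. Since $\gamma_t$ is nondecreasing, the current marginal task has been labor-produced, with output $L/(1-\gamma_s)$, at every earlier date. Hence $D_{\gamma_t,t}=D_{\gamma_t,0}+H_t$ with $H_t:=\int_0^t L/(1-\gamma_s)\,ds$. Bounding $D_{\gamma_t,0}\in[\underline D_0,\overline D_0]$ gives two envelopes $\underline D_0+H_t\le D_{\gamma_t,t}\le \overline D_0+H_t$ that share the derivative $\dot H_t=L/(1-\gamma_t)$. Your Step 1 identity sandwiches this derivative between $\frac{K\underline f}{\overline M}(\underline D_0+H_t)^\eta$ and $\frac{K\overline f}{\underline M}(\overline D_0+H_t)^\eta$. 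The Bernoulli comparison then yields \eqref{eqn:speed_bounds} with exactly those constants. Notice that in \eqref{eqn:speed_bounds} the prefactor times the slope equals $(1-\eta)L$ in both bounds, which is the footprint of $\dot H_t=L/(1-\gamma_t)$.

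Your one-line fallback, a fixed reference task, can be made to work and is a genuinely different route. For task $0$, $\dot D_{0t}=(\psi^K_{0t}/\psi^K_{\gamma_t t})^{\sigma-1}K\psi^K_{0t}/M_t$ and $1\le D_{0t}/D_{\gamma_t t}\le\rho<\infty$. Hence $\dot D_{0t}\asymp D_{0t}^{\eta}$ and $1-\gamma_t\asymp D_{0t}^{-\eta}$, giving $\Theta(t^{-\eta/(1-\eta)})$. This route never uses monotonicity of $\gamma_t$, which the paper's identity needs but \cref{prop:complements} asserts only for $\sigma\le 1$. The price is that the constants carry the ratio bounds $R,\rho$. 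You obtain \eqref{eqn:speed_bounds} only after loosening $\underline M,\overline M,\underline B,\overline B$, not ``exactly''.

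Two further remarks.
\begin{itemize}
\item Your uniformity worry in Step 1 needs no convergence rates. Monotonicity in the index gives $1\le\psi^K_{jt}/\psi^K_{\gamma_t t}\le\psi^K_{0t}/\psi^K_{1t}$ for $j\le\gamma_t$. The log-ratio ODE argument in the proof of \cref{prop:complements} bounds $\psi^K_0/\psi^K_1$ uniformly in $t$ for any $\sigma<1/\eta$. The pointwise limits in part (ii) are neither needed nor sufficient.
\item Step 3 matches the paper's invariant-region argument. You must also treat the mixed cases, where one or both tasks are labor-produced, and the lower barrier $D_{it}/D_{jt}=1$.
\end{itemize}
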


\cref{prop:heterogeneous-speed} states that even when data-driven automation leads to full limit automation, equilibrium automation can be surprisingly slow---the share of tasks produced by labor decays as a power law in time. 

Note that the first part of \cref{prop:heterogeneous-speed} is stated as an asymptotic result, so it holds for any initial data stock. Alternatively, the second part imposes an additional condition on the initial data stock, that it is not more imbalanced than the long-run composition (as given by \cref{prop:complements}), which then delivers the same bounds on equilibrium non-asymptotically i.e., for all times $t \geq 0$.

This allows us to sharpen our analysis of the speed of equilibrium automation
along the transition path. The bound on the process in \eqref{eqn:speed_bounds} has the generic form
\[
    1-\gamma_t
    \simeq
    A\left(B^{1-\eta}+Ct\right)^{-\frac{\eta}{1-\eta}}
    =
    AB^{-\eta}
    \left(1+\frac{t}{\tau}\right)^{-\frac{\eta}{1-\eta}} 
    \quad 
    \text{where 
    $\tau:=\frac{B^{1-\eta}}{C}$
    }
\]
and \(A,B,C>0\) are constants independent of time. Here \(B\) can be
interpreted as the relevant local data stock at date \(0\), while \(C\)
summarizes the effective rate at which accumulated production expands the data
stock near the automation boundary.\footnote{For the lower bound on \(1-\gamma_t\),
\(B=\overline B\) and \(C=(1-\eta)K\overline f/\underline M\). For the upper
bound on \(1-\gamma_t\), \(B=\underline B\) and
\(C=(1-\eta)K\underline f/\overline M\).}
Differentiating the generic process gives
\[
\begin{aligned}
    \dot\gamma_t
    %&\simeq
    % \frac{\eta}{1-\eta}AC
    %\left(B^{1-\eta}+C \times t\right)^{-\frac{1}{1-\eta}}
    %\\
    &=
    \frac{\eta}{1-\eta}\frac{AC}{B}
    \left(1+\frac{t}{\tau}\right)^{-\frac{1}{1-\eta}} .
\end{aligned}
\]
This expression showcases two speed regimes: When \(t\ll\tau\),
the inherited data term \(B^{1-\eta}\) dominates the accumulated-flow term
\(C \times t\):   
\[
    \dot\gamma_t
    \simeq
    \frac{\eta}{1-\eta}\frac{AC}{B}.
\]
Hence the automation boundary grows linearly in time early in the process since initial phase is governed by the inherited data stock near the automation boundary. 
\begin{comment}
In the two bounding processes, the short-run slopes reduce to
\[
\begin{aligned}
    \dot\gamma_t
    &\simeq
    \frac{\eta L}{\overline B}
    &&\text{using the lower bound on \(1-\gamma_t\),}
    \\
    \dot\gamma_t
    &\simeq
    \frac{\eta L}{\underline B}
    &&\text{using the upper bound on \(1-\gamma_t\).}
\end{aligned}
\]
\end{comment}
By contrast,
when \(t\gg\tau\), the accumulated-flow term \(C \times t\) dominates and
\[
\dot \gamma_t
    \simeq
    \frac{\eta}{1-\eta}
    A C^{-\frac{\eta}{1-\eta}}
    t^{-\frac{1}{1-\eta}}.
\]
which slows down over time because of diminishing returns to data accumulation. 

\begin{figure}[t]
    \centering
    \label{fig:speed-bounds}
    \begin{minipage}{0.48\textwidth}
        \centering
        \includegraphics[width=\linewidth]{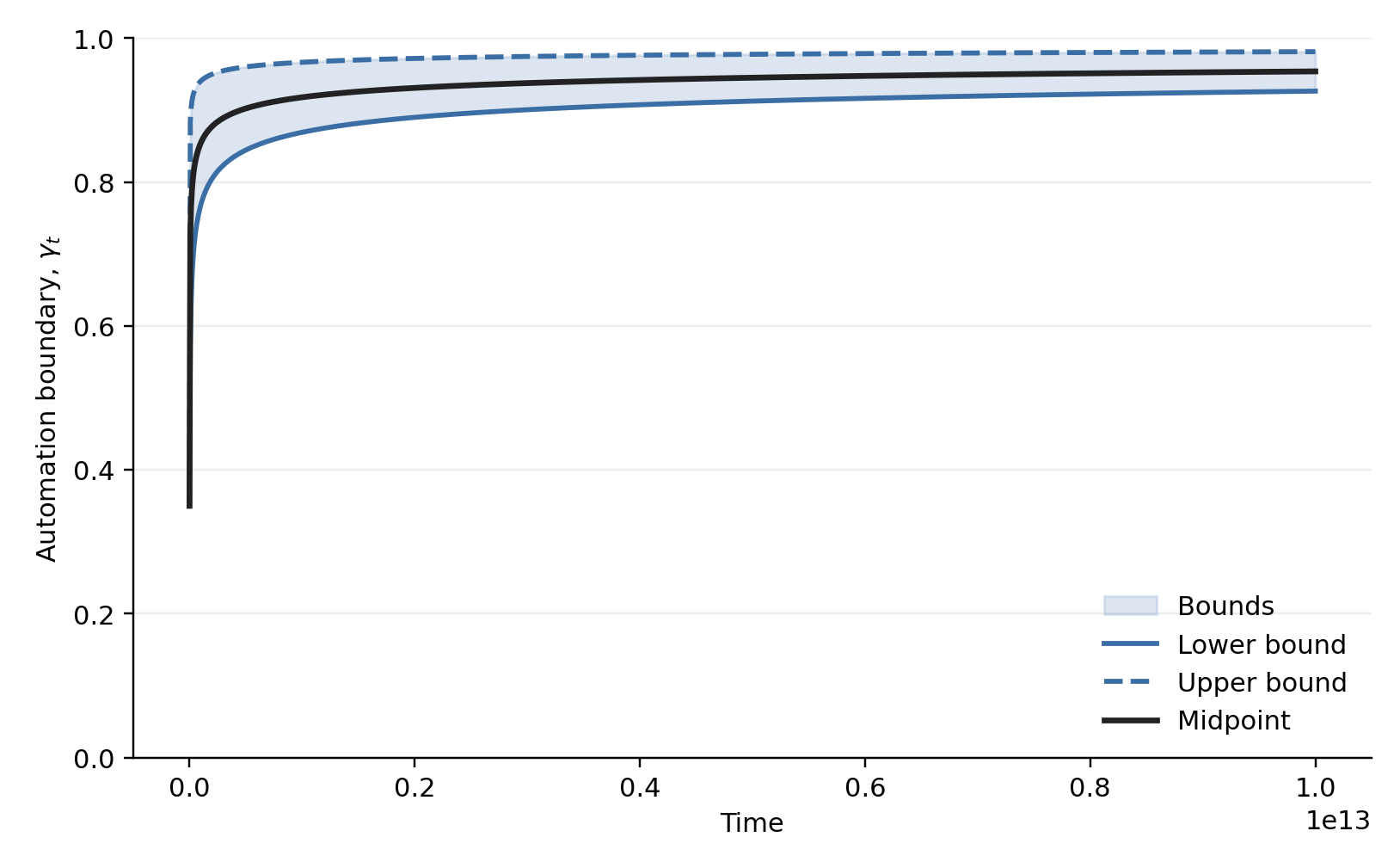}
        \vspace{-0.5em}
        \centerline{\small (a) Long-horizon path}
    \end{minipage}
    \hfill
    \begin{minipage}{0.48\textwidth}
        \centering
        \includegraphics[width=\linewidth]{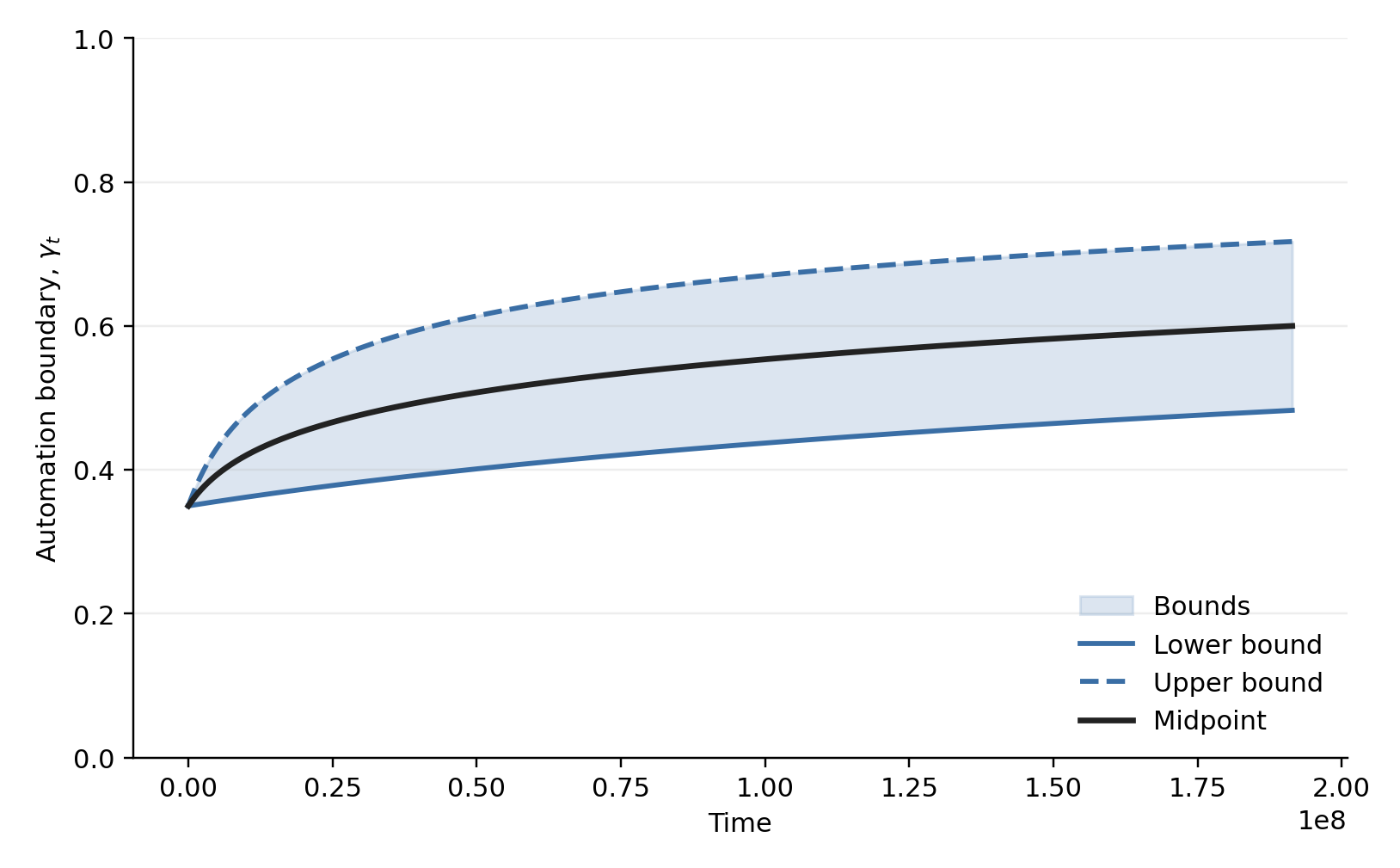}
        \vspace{-0.5em}
        \centerline{\small (b) Linear-regime}
    \end{minipage}
        \caption{Bounds on the speed of data-driven automation}
\end{figure}

\cref{fig:speed-bounds} illustrates this by plotting bounds for the automation boundary \(\gamma_t\).\footnote{We set
    \(\eta=0.2\), \(\sigma=0.5\), \(K=1\), \(L=120\),
    \(\underline f=0.95\), \(\overline f=1\), and
    \(D_{i0}=10^{10}\) for all tasks.} 
    Panel (a) plots the path of automation over a long horizon. Note here that we see a `sudden' increase in automation (as measured on that time scale) followed by a tapering off: when $\approx 95\%$ of tasks are already automated, the speed of automation progressively slows. Panel (b) zooms in on the early transition by a factor of $10^5$: here the path is approximately linear in calendar time.

\section{Contagious Automation} \label{section:contagious}

% \textcolor{blue}{BX: Bryant fix the kernel renormalization issue in the connectedness $\to$ automation proof.}

In this section, we introduce cross-task spillovers into our economy as captured by the graphon $W: [0,1]^2 \to [0,1]$ and analyze how it shapes (i) whether the economy is fully automated in the limit; (ii) the long-run composition of data and production; and (iii) the speed of automation.
%[[I DELETE THIS AND INCORPORATE SOME OF IT WITH THE WRITING I THINK IT IS VERY REPEITATIVE]]

\subsection{Limit Automation with Spillovers} 

\paragraph{Connectedness implies full automation.} With cross-task spillovers, automation will not, in general, take a threshold form since a high index task might, through spillovers, accumulate a higher effective data stock than a lower index task. To this end, we will define:  
\[
    \mathcal{K}_t
    :=
    \left\{
    i\in X:
    \frac{r_t}{\psi^K_{i,t}}\le \frac{w_t}{\psi^L}
    \right\}
    \quad \text{and} 
    \quad 
    \mathcal{L}_t:=X\setminus \mathcal{K}_t.
\]
as the set automated tasks and tasks produced by labor at time $t$.
We let $\mu$ denote the Lebesgue measure so full automation means $ \mu(\mathcal{L}_t)\to0$.

\begin{definition}[Uniform strong connectedness]\label{def:uniform_strong_connectedness}
A directed graphon \(W\) is uniformly strongly connected if there exist
\(n\in\mathbb N\) and \(\varepsilon>0\) such that
\[
    W^{(n)}(i,j)\ge \varepsilon
    \qquad\text{for a.e. }(i,j)\in X^2.
\]
where $W^{(n)}$ is defined as follows: $W^{(1)}=W$ and 
\[
W^{(m+1)}(i,j)
    :=
    \int_X W(i,k)W^{(m)}(k,j)\de k \quad \text{for $m \geq 1$.}
\]
Thus \(W^{(m)}(i,j)\) is the total intensity of length-\(m\) directed spillover paths from source \(j\) to beneficiary \(i\).
\end{definition}

We call this uniform strong connectedness because every task can receive data,
directly or indirectly, from every other task within a uniformly bounded number
of spillover steps and with uniformly positive intensity.

\begin{proposition}[Contagious automation]
\label{prop:generalW_full_automation}
Suppose \(W\) satisfies uniform strong connectedness. For any \(\sigma>0\),
any \(\eta\in(0,1)\), and any bounded initial data stock \(\bm{D_0}\in L^1_+(X)\),
every equilibrium path satisfies full automation: 
\[
    \mu(\mathcal{L}_t)\to0.
\]
\end{proposition}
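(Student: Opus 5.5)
Argue by contradiction: suppose there exist $\delta>0$ and times $t_n\to\infty$ with $\mu(\mathcal{L}_{t_n})\ge\delta$. The proof has three ingredients. First, uniform strong connectedness forces effective data to diverge \emph{uniformly} across tasks. Second, spillovers keep the ratio of effective data across tasks bounded. Third, in the static equilibrium the relative factor price $r_t/w_t$ then stays proportional to a typical capital productivity, so a positive measure of labor tasks cannot persist. Throughout, we use the general-$\mathcal{K}_t$ version of the static characterization in \cref{lemma:static eq lemma}: replace $[0,\gamma_t]$ by $\mathcal{K}_t$, so that $r/w=\big(\int_{\mathcal{K}_t}(\psi^K_j)^{\sigma-1}dj\,L/(\mu(\mathcal{L}_t)K)\big)^{1/\sigma}$ with $\psi^L$ normalized.

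\textbf{Step 1 (data diverges and is comparable across tasks).} Total production is bounded below: either labor tasks produce $L/\mu(\mathcal{L}_t)\ge L$ in aggregate, or capital tasks produce $\int_{\mathcal{K}_t}\psi^K_i K_i\,di\ge K\inf_i\psi^K_{i,t}$. Since $\mathcal{A}_i(\bm D)\ge \int W(i,j)D_{j0}\,dj$, which we need bounded below, $\inf_i \psi^K_{i,t}$ is nondecreasing. Hence $\|\bm D_t\|_{1}\ge c\,t$ for some $c>0$, so $\|\bm D_t\|_1\to\infty$. To convert this into a uniform lower bound, iterate: $\mathcal{A}^{(n)}_i:=\int W^{(n)}(i,j)D_j\,dj\ge\varepsilon\|\bm D\|_1$. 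The difficulty is that $\mathcal{A}_i$ uses only one step of $W$. I therefore plan to pass through quantities: a task with effective data $\mathcal{A}_k$ produces $y_k\gtrsim \min\{\psi^L L_k,\ \psi^K_k K_k\}$, which grows with $\mathcal{A}_k$. Its data then feeds $\mathcal{A}_i$ through $W(i,k)$. Following the contagion chain $n$ times through the data law of motion \eqref{eq:DlawM} should give $\inf_i\mathcal{A}_{i,t}\to\infty$.

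I expect Step 1 to be the main obstacle. Allocations to individual tasks can be tiny: with $\sigma>1$, capital concentrates on high-productivity tasks, and labor is spread over $\mathcal{L}_t$. The per-task quantity lower bound must therefore come from the CES demand side rather than from factor shares. Relative task demand gives $y_i/y_j=(p_i/p_j)^{-\sigma}$, and prices are $p_i=\min\{r/\psi^K_i,\,w\}$. The cheapest and dearest prices differ by at most the ratio $\max_i\psi^K_i/\min\{\min_i\psi^K_i,\psi^L\cdot r/w\}$. So the key is to control the ratio $\sup_i\mathcal{A}_i/\inf_i\mathcal{A}_i$.

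\textbf{Step 2 (bounded dispersion of effective data).} I will show that $\sup_i \mathcal{A}_{i,t}/\inf_i\mathcal{A}_{i,t}$ stays bounded for large $t$. The argument uses $W^{(n)}\ge\varepsilon$ and $W\le 1$: every $\mathcal{A}_i$ is at most $\|\bm D\|_1$. Conversely, flow data accumulated over a long window passes through $n$ steps of spillovers and, because of the per-task demand lower bound, is smeared with density at least proportional to $\varepsilon$. If the direct one-step version fails, I would first prove the bound for a time-averaged aggregator and then use monotonicity of $\bm D_t$ in $t$ to transfer it.

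\textbf{Step 3 (contradiction).} With productivities $\psi^K_{i,t}\in[\underline c\,\Psi_t,\bar c\,\Psi_t]$ and $\Psi_t\to\infty$, the static formula gives $(r/w)^\sigma\le \bar c^{\sigma-1}\Psi_t^{\sigma-1}L/(\mu(\mathcal{L}_t)K)$ along $t_n$ (with the inequality reversed using $\underline c$ when $\sigma<1$). Any labor task $i\in\mathcal{L}_{t_n}$ requires $\psi^K_{i}<r/w$, so $\underline c\Psi_{t_n}<C\,\Psi_{t_n}^{(\sigma-1)/\sigma}\delta^{-1/\sigma}$, that is, $\Psi_{t_n}^{1/\sigma}<C'\delta^{-1/\sigma}$. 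This contradicts $\Psi_{t_n}\to\infty$. Hence $\mu(\mathcal{L}_t)\to0$ for every $\sigma,\eta$ and every equilibrium path.
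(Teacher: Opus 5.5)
Your Step 3 is correct and is essentially the paper's final step. Steps 1--2, however, are the heart of the proof, and they are not carried out; the route you sketch for them is circular. You propose to get a per-task output lower bound from CES demand by controlling the global price dispersion $\max_i\psi^K_{i,t}/\min_i\psi^K_{i,t}$. That presupposes the bounded dispersion of $\mathcal{A}_i(\bm D_t)$ that Step 2 is supposed to deliver. Step 2, in turn, ``smears'' flow data across tasks using that very per-task lower bound.

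Two further problems remain. First, the claim that $y_k$ grows with $\mathcal{A}_k$ does not hold in absolute terms: as you note yourself, capital can desert task $k$ when other tasks are relatively more productive. Second, ``following the contagion chain $n$ times'' does not explain how a one-step aggregator inherits the $n$-step bound $W^{(n)}\ge\varepsilon$. It also targets only $\inf_i\mathcal{A}_{i,t}\to\infty$, not the two-sided comparability that Step 3 needs when $\sigma\ge1$.

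The missing idea is a \emph{one-sided} bound normalized by aggregate raw data $M_t:=\int_X D_{j,t}\,dj$ rather than by $\inf_j\mathcal{A}_j$. Since $W\le1$, you automatically have $\mathcal{A}_j(\bm D_t)\le M_t$ for a.e.\ $j$. So a lower bound on task $i$ alone, $\mathcal{A}_i(\bm D_t)\ge bM_t$, gives $\psi^K_{i,t}\ge\lambda\psi^K_{j,t}$ for a.e.\ $j$, with $\lambda=(\underline f/\overline f)b^\eta\le1$. Hence the unit cost of $i$ is at most $\lambda^{-1}$ times that of every other task, and CES demand yields $y_{i,t}\ge\lambda^\sigma y_{j,t}$ for a.e.\ $j$. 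Integrating over $j$ gives $y_{i,t}\ge\lambda^\sigma\dot M_t$. No dispersion control is needed, which breaks the circularity.

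The paper then chains this bound through the network with two devices.
(a) It works over doubling times $\tau_m$ of $M_t$. Because $M_{\tau_{m+1}}-M_{\tau_m}=M_{\tau_m}$, a task producing a fixed fraction of the aggregate flow over $[\tau_m,\tau_{m+1}]$ ends the interval with raw data at least a fixed fraction of $M_{\tau_{m+1}}$.
(b) It uses layered exposure sets. Set $h_r(i):=M_{\tau_m}^{-1}\int_X W^{(r)}(i,j)D_{j,\tau_m}\,dj$ and $S_r:=\{i:h_r(i)\ge a_r\}$, with $a_n=\min\{\varepsilon,1\}$ and $a_r=a_{r+1}/2$. The recursion $h_{r+1}(i)=\int_X W(i,k)h_r(k)\,dk$, together with $W\le1$ and $h_r\le1$, forces $\int_{S_r}W(i,k)\,dk\ge a_r$ for every $i\in S_{r+1}$.

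The chain then runs as follows. Tasks in $S_1$ have $\mathcal{A}_i\ge a_1M$, so over the next doubling interval they accumulate raw data of order $M$. Tasks in $S_2$ therefore have effective data of order $M$, and so on until $S_n=X$. This gives $\tfrac{b_n}{2}M_t\le\mathcal{A}_i(\bm D_t)\le M_t$ for a.e.\ $i$ and all large $t$, which is exactly the comparability your Step 3 consumes.

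A minor point: your Step 1 appeals to a uniform lower bound on $\int_X W(i,j)D_{j0}\,dj$, which is not among the hypotheses. It is also unnecessary, since $\dot M_t=\int_X y_{i,t}\,di\ge Y_t\ge\psi^LL$.
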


\cref{prop:generalW_full_automation} states that connectedness \emph{per se} is sufficient for full limit automation, even if cross-task production is very substitutable $(\sigma >> 1)$ and there are enormous differences in returns to data across tasks ($f_i$ is steeply decreasing in $i$). The basic intuition is that when tasks are highly substitutable ($\sigma > 1/\eta$) \emph{local} spillovers that generate a connected graphon can stand-in for \emph{global equilibrium} forces from the gross-complements case ($\sigma \leq 1$) that guides capital allocation and thereby data accumulation toward data-poor tasks.\footnote{This overturns \cref{prop:substitutes}, which establishes partial automation in a data autarky when tasks are sufficiently substitutable.} 

Put differently, the increasing return to scale that follows any degree of data spillover across tasks is so strong that it overcomes any force towards capital concentration, caused by high degree of task substitutability. As such, the degree of task complementarity that is required to distribute capital across tasks when the economy is in data autarky is not needed when there is data spillover across tasks.

That local spillovers \emph{per se} are sufficient is perhaps surprising, and goes against the broad intuition that connectedness is often insufficient for well-behaved limit behavior in networks with nonlinear dynamics. For example, in models of opinion dynamics, it is well-known that connectedness \emph{per se} is not enough for convergence under non-linear aggregation of opinions \citep*{cerreia2024dynamic} or in binary-action coordination games \citep{morris2000contagion}. Our setting is distinct from these models because (i) equilibrium factor prices (a `global' variable) shape the path of data accumulation $(\pmb{D}_t)_t$; (ii) our state space is unbounded; and (iii) in the long-run the whole vector of data and productivities diverge to infinity ($\pmb{D}_t \to +\infty$ and $\pmb{\psi}^K_t \to +\infty$) whether or not the graphon is connected. 

A key feature governing whether automation is contagious or not is the \emph{limit ratio of effective task-specific data}, which---for any two tasks $i,j \in \text{Int}(X)$---is given by  
\[
\lim_{t \to \infty}\cfrac{\mathcal{A}_i(\pmb{D}_t)}{\mathcal{A}_j(\pmb{D}_t)} =: \mathcal{R}_{ij}. 
\]
A sufficient condition for contagious automation is that this ratio does not go to either $0$ or infinity i.e., $0 < \mathcal{R}_{ij} < +\infty$. If it did, then in the case where tasks are arbitrarily substitutable $(\sigma >> 1)$, one of those tasks need not be produced by capital in equilibrium. 

What governs the ratio of effective task-specific data $\mathcal{R}_{ij}$? A loose intuition is that connectedness ensures for some task $k \in X$ close to $i$, their neighbors overlap nontrivially, as illustrated by \cref{fig:overlapping} that illustrates a graphon $W$ in which tasks that are closer to each other (in index) have stronger spillovers. 
\begin{figure}[t] 
    \centering
        \includegraphics[width=0.5\textwidth]{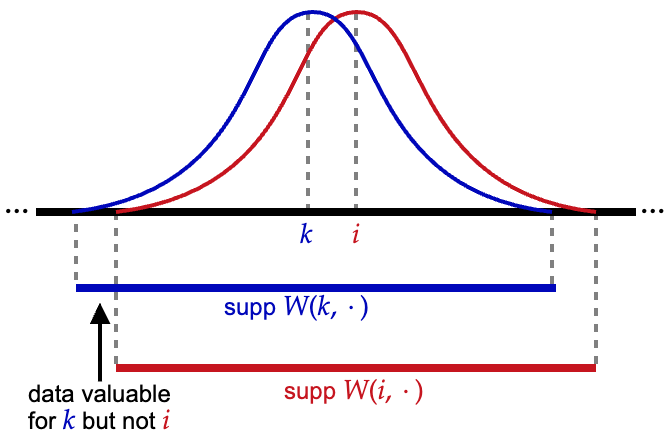}%
    \caption{Overlapping neighbors}\label{fig:overlapping}
\end{figure}
But this, \emph{per se}, does not imply that the limit ratio of effective data $\mathcal{R}_{ik}$ is bounded away from $0$ because their supports are non-overlapping---task $k$ but not $i$ could value the data of some other task $\ell \in X$. 
Then, if $D_{\ell t}$ grows quickly, the limit ratio of effective data $\mathcal{R}_{ik}$ could, in principle, be $0$. Nonetheless in \cref{app:contagion} we show via a chaining argument that, because the space of tasks $X$ is bounded, this cannot happen \emph{in equilibrium}---connectedness implies that along the equilibrium path full automation is inevitable.\footnote{In \cref{network_structure_long_run} we also characterize the long-run composition of task production and data as a function of the leading eigenfunction of the graphon.}

\subsection{How Do Spillovers Shape Speed?} We next numerically investigate the speed of automation in the presence of spillovers. \cref{fig:spillover_gamma} illustrates the impact of varying the breadth and depth of cross-task spillovers on the evolution of the automation boundary, separately for the gross complements case and the gross substitutes case.

In the case of complements, automation increases roughly linearly in log time with or without spillovers. Intensifying the \emph{breadth} of spillovers while keeping the intensity of spillovers constant has little impact on the pace of automation. This is because data across tasks are, in equilibrium, already balanced absent spillovers per \cref{prop:complements}, so the benefit from being able to draw on other tasks' data is muted. If, however, we intensify the \emph{depth} of spillovers i.e., increasing the intensity of spillovers, this generates a `one-time' boost to automation since doing so is equivalent to increasing the degree to which each task values data. This is illustrated in Panel \ref{fig:leftSpill}.

\begin{figure}[t] 
    \centering
    \subfloat[$\sigma = 0.5$ \label{fig:leftSpill}]{%
        \includegraphics[width=0.5\textwidth]{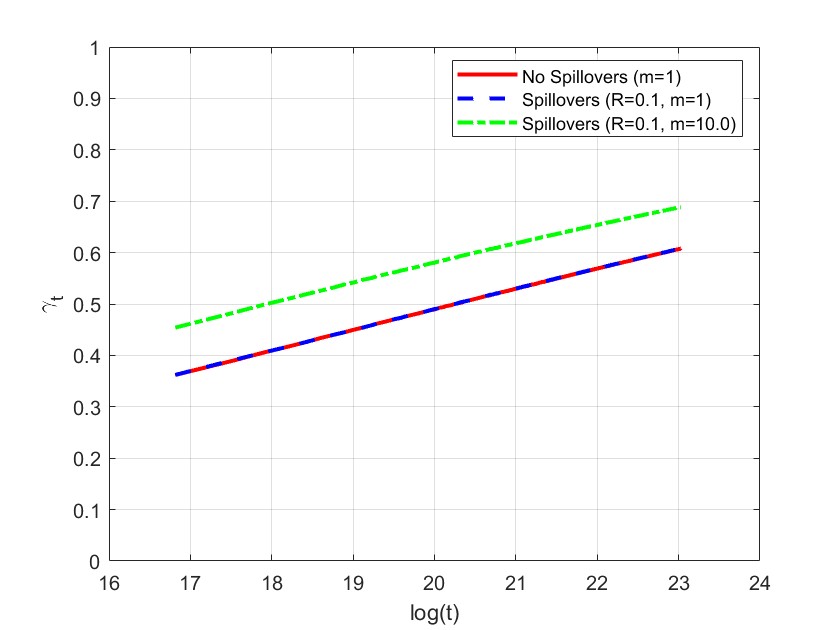}%
    }
    \hfill
    \subfloat[$\sigma=5.5$ \label{fig:rightSpill}]{%
        \includegraphics[width=0.5\textwidth]{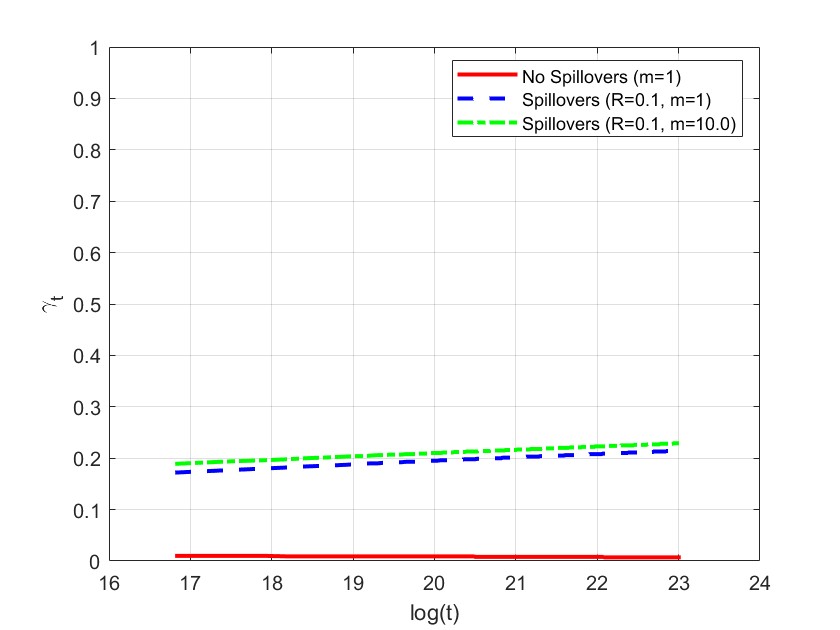}%
    }
    \caption{Evolution of $\gamma$ under alternative spillover geometries}
    \footnotesize
    Note: $m$ denotes the measure of the $g(\cdot)$ kernel that induces $W(\cdot,\cdot)$, which captures the depth of spillovers. The remaining parameters are as in  \cref{fig:complements} except with $L=100$.
    \label{fig:spillover_gamma}
\end{figure}

In the case of substitutes, whereas automation declines linearly in log time absent spillovers, the trajectory flips when spillovers are introduced, consistent with \cref{prop:generalW_full_automation}. Indeed, we know data will be highly skewed towards data-rich tasks in the case where $\sigma > 1/\eta$, so the ability to draw on their data has a dramatic impact on furthering automation. Scaling the depth of spillovers similarly increases automation, but the impact is less pronounced compared to scaling breadth in the case of substitutes in the presence of imbalanced data. This is illustrated in Panel \ref{fig:rightSpill}. 

The broad economic takeaway from this exercise is that equilibrium automation with and without spillovers is surprisingly slow---approximately linear growth in \emph{log-time}. This is consistent with \cref{prop:heterogeneous-speed} that shows that in the absence of spillovers, the automation boundary exhibits `power law' behavior in time with a `fat tail' of tasks performed by labor. Here, data spillovers lead to a \emph{level change} in the automation boundary, but does not change its long run speed. Of course, this analysis 

Throughout the analysis above, we have kept the capital stock fixed. This might approximate the shorter run in which supply chain constraints prevent rapid scaling up of computational resources. The takeaway is that while data-driven automation is powerful on its own, it is unlikely to speed up automation substantially in the long run. But in the short-run, we will see that the network structure plays an important role in shaping automation dynamics.

 %\cref{sec:efficiency} illustrates that there is a data externality in the composition of data accumulation which makes the automation  inefficiently slow.

\subsection{Core-Periphery Automation} 

\begin{figure}[t]
    \centering
    \includegraphics[width=0.4\linewidth]{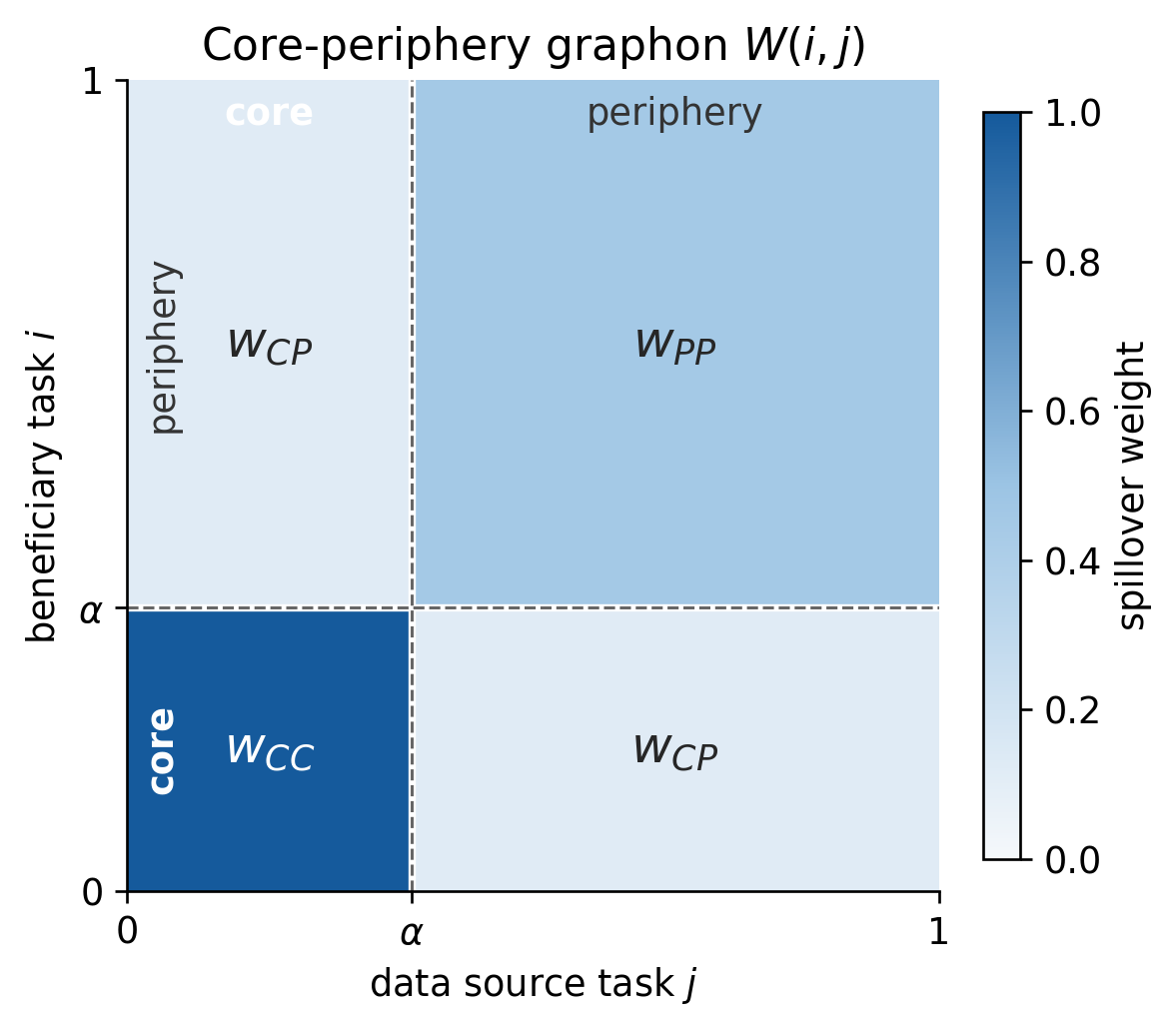}
    \caption{A two-block core-periphery graphon. Rows index the beneficiary
    task \(i\), columns index the data source task \(j\), and darker shading
    indicates stronger spillovers.} 
    \label{fig:core-periphery-graphon}
    %The graphon is directed: \(W_{CP}\) is the bridge from core data to peripheral effective data, while \(W_{PC}\) is the reverse bridge. The pure weak-cut case studied below sets \(w_{PP}=w_{PC}=0\), so the peripheral block becomes automatable only through \(w_{CP}\).
\end{figure}

We now illustrate the role of network sturcture in short run dynamics via the following stylized example. Partition tasks \(X=[0,1]\) into a core \(C\) of measure \(\alpha\) and a
periphery \(P\) of measure \(1-\alpha\). Tasks are homogeneous within
each block: \(f_C>f_P\) and \(D_{C0}>D_{P0}\). Effective data at the block level is
\[
    A_C= \underbrace{w_{CC}D_C}_{\text{core$\leftrightarrow$core}}+ \underbrace{w_{PC}D_P}_{\text{periphery$\to$core}}
    \quad \text{and}
    \quad 
    A_P=
    \underbrace{w_{CP}D_C}_{\text{core$\to$periphery}}+ 
    \underbrace{w_{PP}D_P}_{\text{periphery$\leftrightarrow$periphery}}.
\]
Capital productivities are \(\psi_C=f_CA_C^\eta\) for core tasks and \(\psi_P=f_PA_P^\eta\) for periphery tasks. Our economy evolves across two phases. In the first phase, only core sectors are automated. In the second phase, automation begins to take hold in the wider economy. There are two crucial forces at play: 
\begin{itemize}
    \item \emph{Core data-feedback loops:} the overproduction of core sectors (relative to peripheral sectors) generate data that exhibit strong within-block spillovers (high $w_{CC}$). This drives up the requisite productivity required for non-peripheral sectors to be automated: the opportunity cost of deploying capital in peripheral sectors goes up when core sectors are productive. 
    \item \emph{Bottlenecked spillovers:} If there are only weak spillovers from core sectors e.g., software and coding related tasks to peripheral sectors (small $w_{CP}$). This means that the capital productivity of peripheral sectors lag behind, and it takes longer for automation to reach the wider economy.
\end{itemize}

Both of these forces can be gleaned from the following equation that pins down the requisite degree of capital productivity in peripheral sectors for automation to begin in the wider economy: 
\[
    \left(\frac{L\alpha}{K (1-\alpha)}\right)^{1/\sigma}
    \psi_C^{\frac{\sigma-1}{\sigma}}.
\]
and the time $\tau$ at which the wider economy begins to be automated can be analytically characterized via a pair of coupled ODEs. We emphasize  here that the automation can have a non-monotonic relationship with link strength: by intensifying the degree of within-core spillovers $w_{CC}$, this can delay the time at which peripheral tasks begin to be automated. This is because $\psi_C$ now grows more quickly which, in turn, sucks up capital. The corresponding automation paths are illustrated in \cref{fig:core_periphery_shortrun_wcc_two_panel} where the dotted green lines show the total share of sectors automated in the economy, while the blue line illustrates the total share of the periphery automated. The left panel shows the case with low intensify spillovers within the core, and the right panel shows the case with high intensity spillovers.\footnote{A fuller set of simualtions for different parameter values is in \cref{appendix:peripheral automation sim}.}

\begin{figure}[t]
    \centering
    \includegraphics[width=0.75\linewidth]{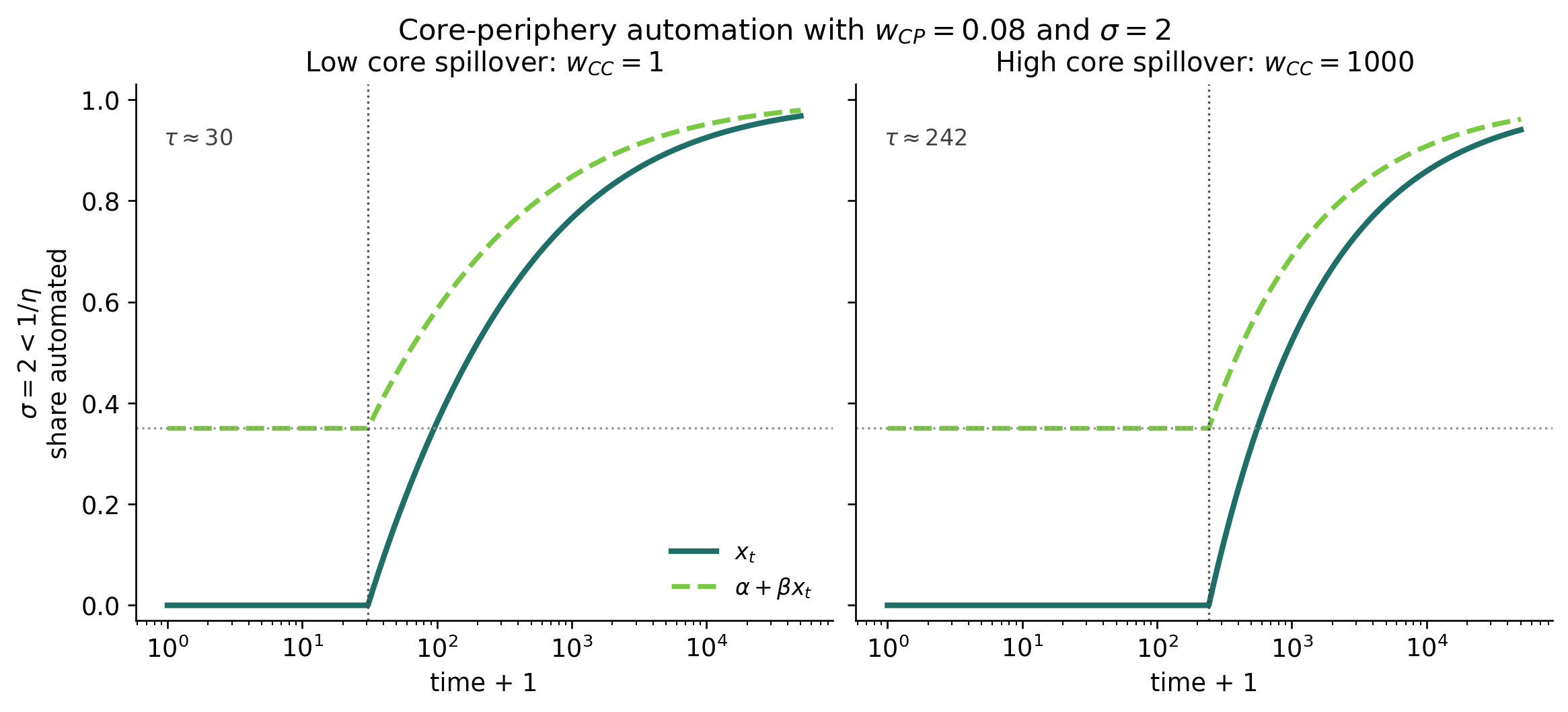}
    \caption{Automation paths} 
    \label{fig:core_periphery_shortrun_wcc_two_panel}
\end{figure}

Taken together, our results suggest that although the long-run speed of the economy exhibits power law behavior, the short-run dynamics can depend richly on the network structure of data spillovers. 

%\cref{fig:core-periphery-grid} highlights two points. First, the weak-cutdelay is present both below and above the threshold \(1/\eta\): smaller\(w_{CP}\) shifts the onset of peripheral automation to the right. Second, thedelay is much more severe in the high-substitutability row. This is the regimein which the core's productivity advantage strongly raises the automationcutoff faced by the periphery.

% \textcolor{blue}{TWO GRAPHS HERE. ON THE LHS COMPLEMENTS CASE, AUTOMATION BOUNDARY IN LOG-TIME, WITH 3 LINES. NO SPILLOVERS, ADD SPILLOVERS BUT KEEP $\int g dx = 1$, AND SPILLOVERS BUT MAKING $\int g dx$ larger. ON THE RHS, AUTOMATION BDD IN LONG-TIME FOR SUBSTITUES WITH THE SAME 3 GRAPHS BUT CAN YOU SIMULATE IT FOR LONGER?}

%%%WE CAN PROBABLY SAY SOMETHING ABOUT BALANCED PATH GIVEN g/W HERE! 
%\begin{definition}[More connected]
%    For two graphons $W, W' \in \mathcal{W}$, say that $W$ is \emph{more connected} than $W'$ (denoted $W \geq W'$) if $W(i,j) \geq W'(i,j)$ for each $i,j \in X$. 
%\end{definition} 

\section{Is Data-Driven Automation Efficient?} \label{sec:efficiency} 

We have thus far offered an analysis of when and why our economy exhibits full limit automation, with and without data spillovers. A natural question is whether this degree of automation is efficient. 

A basic observation is that because all producers in our economy are atomless and data generated from economic activity of each producer is shared for free among all intermediate good producers, they do not internalize the value of producing more data today on future production---both for their own task or for other tasks that might benefit from their data. Indeed, modern machine learning systems rely on vast amounts of data that no single firm produces on their own so they might not internalize this externality in equilibrium. 

In this section, we again restrict attention to the economy in data autarky and compare the equilibrium path of data accumulation with the first-best benchmark.  To preview our results---under two simplifications to be discussed---we show that the myopic equilibrium path is generically suboptimal. Specifically, the planner would amplify the market's bias: in the gross complements and weak substitutes case ($\sigma \leq 1/\eta$), the planner would allocate even more capital to tasks with less data to speed up the data convergence exhibited in the decentralized equilibrium; in the strong substitutes case ($\sigma > 1/\eta$), the planner would allocate even more capital to tasks with more data, thereby accelerating the data divergence observed in the decentralized equilibrium.

\paragraph{Simplifying assumptions: The coarsened economy.} To focus on planner's incentives to `tilt' the direction of capital allocation, we study a simpler, coarsened economy that is more amenable to analysis. We make two simplifications of our main model (in data autarky) to this end. First, we suppose production is entirely capital-driven i.e., $L = 0$ and we are in an `$AK$-style' economy, albeit with no capital accumulation. This simplification allows us to focus on the direction of capital tilting; we expect a similar mechanism would be at play for the allocation of labor. Second, we collapse our tasks into two blocks for tractability. We call the task-block $[0,1/2]$ `$h$' (high data productivity), and the task-block $(1/2,1]$ `$\ell$' (low). Qualitative insights from this simple two task-block case should generalize to our baseline economy featuring a continuum of tasks.

\paragraph{Planner's problem.} Consider a planner who chooses the allocation of capital to maximize output dynamically for some discount rate $\rho > 0$. The planner solves: 
\begin{align} \label{prob:planner}
    \max_{k_{1t},k_{2t}} &\int_{0}^{T}e^{-\rho t}\underbrace{\left(\sum_{i\in\{h,l\}}\frac{1}{2}\left(\psi^{K}\left(D_{it}\right)k_{it}\right)^{\frac{\sigma-1}{\sigma}}\right)^{\frac{\sigma}{\sigma-1}}}_{\equiv Y_{t}}dt
    \\
    \text{s.t.}
    &\quad \sum_{i\in\{h,\ell\}}\frac{1}{2} \cdot k_{it}=K \nonumber \quad \text{and} \quad \pmb{D}_t = \pmb{D}_0 + \int^t_0 \pmb{y}_s ds \nonumber
\end{align}
where the constraints are the capital market clearing condition and the usual data stock law of motion.\footnote{That is, $D_{it} = D_{i0} + \int^t_0 y_{is}ds$ for each $i \in X$.} 
We say that the equilibrium  is efficient if it solves the planner's problem \eqref{prob:planner}. \cref{prop:generic_inefficiency} summarizes our  main  efficiency result.

\begin{proposition}[Inefficiency and planner's problem] \label{prop:generic_inefficiency}
    If $f_h=f_l$ and $D_{h0} > D_{\ell0}$, then the equilibrium path is efficient iff $\sigma = 1/\eta$. Otherwise:
    \begin{itemize}
        \item[(i)] Compared to the myopic allocation given planner's current data stock, the planner allocates---for all $t \in [0,T]$---weakly more capital to $l$ if $\sigma < 1/\eta$ and weakly more capital to $h$ otherwise.
        
        \item[(ii)] Compared to the \emph{decentralized equilibrium}, the planner allocates
        \begin{enumerate}
            \item  weakly more capital to $l$ initially, but may reverse if $\sigma < 1$,
            \item weakly more capital to $l$ if $1\leq \sigma < 1/\eta$,  for all $t \in [0, T]$,
            \item  weakly more capital to $h$ if $\sigma > 1/\eta$,for all $t \in [0, T]$.
        \end{enumerate}
    \end{itemize}
\end{proposition}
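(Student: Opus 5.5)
Since $L=0$ and the economy has two equally weighted blocks with $f_h=f_\ell=f$, the planner's problem \eqref{prob:planner} reduces to a deterministic optimal-control problem. Writing $k_{ht}=2Kx_t$ and $k_{\ell t}=2K(1-x_t)$, the control is $x_t\in[0,1]$. The states are $D_{ht}$ and $D_{\ell t}$, and they evolve as
\[
\dot D_{ht}=f D_{ht}^{\eta}\,2Kx_t, \qquad \dot D_{\ell t}=fD_{\ell t}^{\eta}\,2K(1-x_t).
\]
I would first record the decentralized (myopic) allocation. It coincides with the static planner's choice given current data and, by \cref{lemma:static eq lemma} with $\gamma=1$, satisfies $k_h/k_\ell=(D_h/D_\ell)^{\eta(\sigma-1)}$. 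I would then form the current-value Hamiltonian
\[
H=Y(x,D)+\lambda_h fD_h^\eta 2Kx+\lambda_\ell fD_\ell^\eta 2K(1-x),
\]
with costates $\lambda_i$ and transversality conditions $\lambda_i(T)=0$. The first-order condition in $x$ becomes
\[
\partial_x Y=2Kf\left(\lambda_\ell D_\ell^\eta-\lambda_h D_h^\eta\right).
\]
The planner therefore tilts towards $\ell$ exactly when the shadow value of a unit of output in $\ell$, namely $\lambda_\ell$, exceeds that in $h$, namely $\lambda_h$, after weighting by productivity. Because $Y$ is concave in $x$ for fixed $D$, the sign of $\lambda_\ell D_\ell^\eta-\lambda_h D_h^\eta$ determines the direction of deviation from the myopic $x$. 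This gives part (i) once that sign is pinned down.

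The key step is a homogeneity and ratio argument. Since $Y$ is homogeneous of degree $\eta$ in $D$ along the myopic path, I would introduce the state $z=D_h/D_\ell$ and compute the marginal value of data in each block. Under the myopic rule, $\partial Y/\partial D_i=\eta Y_i/D_i$, where $Y_i$ is block $i$'s value share times $Y$. The myopic value shares satisfy
\[
s_h/s_\ell=(\psi_h k_h/\psi_\ell k_\ell)^{(\sigma-1)/\sigma}=z^{\eta(\sigma-1)},
\]
so $\partial Y/\partial D_i$ weighted by $D_i^\eta$ compares as
\[
\frac{D_h^\eta\,\partial_{D_h}Y}{D_\ell^\eta\,\partial_{D_\ell}Y}=z^{\eta(\sigma-1)}\,z^{-1}\,z^{\eta}=z^{\sigma\eta-1}.
\]
Hence the instantaneous flow benefit of an extra unit of output is larger in $\ell$ iff $z^{\sigma\eta-1}<1$. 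Given $z>1$, this holds iff $\sigma\eta<1$, and the two are exactly equal when $\sigma=1/\eta$. Integrating the costate equations backward from $\lambda_i(T)=0$ via a comparison argument (the costates are discounted integrals of these flow terms) should show that $\operatorname{sign}(\lambda_\ell D_\ell^\eta-\lambda_h D_h^\eta)=\operatorname{sign}(1-\sigma\eta)$ for all $t<T$. The argument needs $z_t>1$ to persist along the planner's path, which I would check by noting that the planner's tilt only shifts $z$ toward 1 without crossing it, since at $z=1$ symmetry forces $x=1/2$. This yields part (i). At $\sigma=1/\eta$ the costate terms balance for all $t$, so the myopic path is optimal and the equilibrium is efficient. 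Conversely, at $t=0$ with $z_0>1$ and $\sigma\ne1/\eta$ the sign is strict, so the equilibrium is inefficient.

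For part (ii) I would compare paths rather than allocations at a common state. At $t=0$ the states coincide, so (i) gives more capital to $\ell$ when $\sigma<1/\eta$ and more to $h$ otherwise. For $1\le\sigma<1/\eta$ the myopic share $x=\phi(z)$ is increasing in $z$. The planner drives $z$ down faster than equilibrium does, so $z^{P}_t\le z^{E}_t$, and therefore $x^P_t\le\phi(z^P_t)\le\phi(z^E_t)=x^E_t$ for all $t$. For $\sigma>1/\eta$ the same monotone-comparison argument with reversed signs (the planner keeps $z$ higher, $\phi$ increasing) gives more capital to $h$ throughout. For $\sigma<1$, $\phi$ is decreasing in $z$, so the planner's lower $z$ raises the myopic share of $h$. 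This can overturn the tilt later, which is consistent with the ``may reverse'' clause, and only the initial comparison is claimed. The main obstacle I expect is the costate sign argument: turning the flow comparison $z^{\sigma\eta-1}$ into a sign for $\lambda_\ell D_\ell^\eta-\lambda_h D_h^\eta$ when the costate dynamics are coupled and the path itself depends on the controls. I would attempt this first by working with the single state $z$ and the scaled costate $\mu=\lambda_h D_h^\eta/(\lambda_\ell D_\ell^\eta)$, and showing $\mu-1$ cannot cross zero before $T$.
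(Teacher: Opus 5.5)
Your argument for (ii) follows the paper's: compare with the myopic allocation at the planner's own data, then use that the myopic share is monotone in $z$ for $\sigma\ge 1$. For (i) you globalize differently. The paper signs $\lambda_h\psi^K_h-\lambda_\ell\psi^K_\ell$ only near $T$, by differentiating at $\lambda_{iT}=0$. It then extends the sign to $[0,T]$ with a variational no-crossing lemma: an antisymmetric capital perturbation around a putative crossing of $k_{ht}$ and $\bar k_{ht}$ that leaves the data at $\tau+\epsilon$ unchanged.

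Your scaled-costate route works and is tighter. With $\Lambda_i:=\lambda_iD_i^\eta$, the term $\lambda_i\,\partial_{D_i}y_i$ cancels exactly, so $\dot\Lambda_i=\rho\Lambda_i-D_i^\eta\partial_{D_i}Y$. Your remark that the costates are discounted integrals of the flow terms is therefore literally true for $\Lambda_i$, and the only coupling is through the control. Integrating your comparison $z^{\sigma\eta-1}$ directly would be circular, because the flow terms are evaluated at the planner's allocation, not the myopic one.

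The missing line is this. At any zero of $\Lambda_h-\Lambda_\ell$, your first-order condition gives $\partial_xY=0$, so the planner's allocation is myopic at that instant and
\[ \tfrac{d}{dt}(\Lambda_h-\Lambda_\ell)=D_\ell^\eta\,\partial_{D_\ell}Y\,\bigl(1-z^{\sigma\eta-1}\bigr). \]
Hence, while $z>1$, zeros can be crossed only upward if $\sigma\eta<1$ and only downward if $\sigma\eta>1$. The difference vanishes at $T$ with this same derivative, so it has no zero on $[0,T)$. This is the paper's no-crossing lemma, obtained without perturbations.

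Two smaller repairs are needed.

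First, at $z=1$ the allocation is $x=1/2$ only if also $\Lambda_h=\Lambda_\ell$, so "symmetry forces $x=1/2$" is not immediate. Persistence of $z_t>1$ should be proved jointly with the costate sign: show that $(z_t-1)(\Lambda_{h,t}-\Lambda_{\ell,t})$ keeps a strict sign on $[0,T)$.
\begin{itemize}
\item At a zero with $z\neq 1$, use the display above.
\item At a zero with $z=1$, note that $\dot z$ has the sign of $\Lambda_h-\Lambda_\ell$.
\item A simultaneous zero is a symmetric point of the state--costate system. By uniqueness the whole path would then be symmetric, contradicting $D_{h0}>D_{\ell 0}$.
\end{itemize}

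Second, at $\sigma=1/\eta$, exhibiting the myopic path as a Pontryagin extremal does not prove optimality in this nonconcave problem; the paper makes the same leap. A direct argument closes it. Since $\frac{d}{dt}D_i^{1-\eta}=(1-\eta)fk_i$, the quantity $\frac12\sum_iD_i^{1-\eta}$ grows at rate $(1-\eta)fK$ under every allocation. At $\sigma=1/\eta$ the static maximum of output given data is $Kf\bigl(\frac12\sum_iD_{it}^{1-\eta}\bigr)^{\eta/(1-\eta)}$. So every feasible path satisfies a bound on $Y_t$ that does not depend on the allocation, and the equilibrium attains it at every $t$.

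The same affine dependence of $D_i^{1-\eta}$ on cumulative capital also makes your step $z^P_t\le z^E_t$ in (ii) rigorous. It reduces to a scalar comparison, because $z_t$ is increasing in $\int_0^t k_{hs}\,ds$.
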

    
%\begin{proof}
 %   See \cref{sec:pf_generic_inefficiency}.
%\end{proof}
% \paragraph{How does the planner wish to tilt capital?}

% \begin{proposition} \label{prop:coarsened_capitaltilt}
%     Suppose that tasks are gross complements ($\sigma \leq 1$). Then for each time $t \geq 0$, the planner allocates more capital toward the low productivity block $\ell$ vis-a-vis the equilibrium allocation i.e., $k^*_{ht} \leq k_{ht}$. 
% \end{proposition}

\begin{figure}[t]
    \centering
    % This command forces the subfigure captions to center, even with line breaks
    \captionsetup[subfigure]{justification=centering}

    \subfloat[{\small Capital allocated to task-block 1 \\ when $\sigma = 0.5$}]{%
        \includegraphics[width=0.45\textwidth]{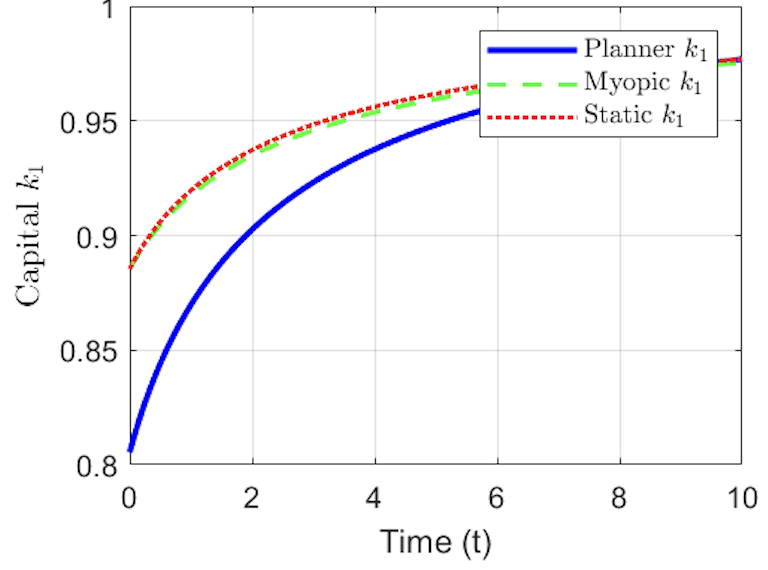}%
    }
    \hfill % Pushes the subfigures to the edges
    \subfloat[{\small Capital allocated to task-block 1 \\ when $\sigma = 5.5$}]{\includegraphics[width=0.45\textwidth]{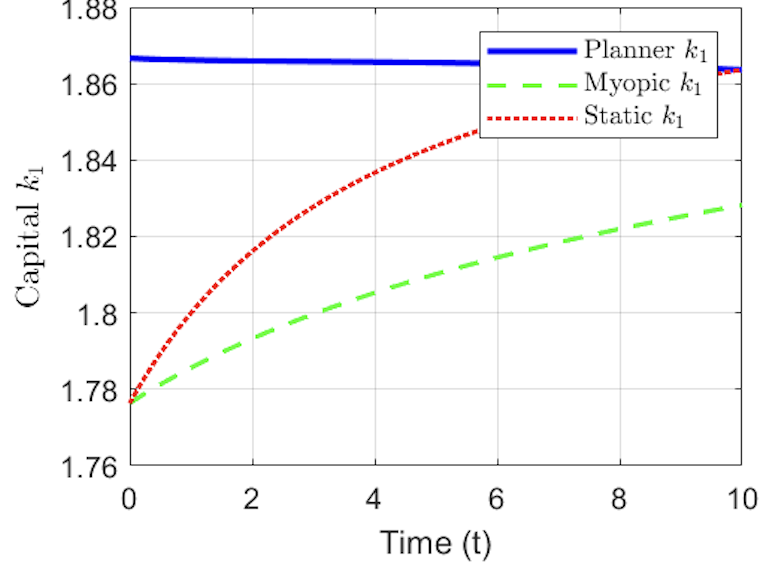}%
    }
    \caption{Numerical illustration of the planner's solution}
    \label{fig:planner}
\end{figure}

 The difference between planner's decision how to allocate capital (unequally) compared to the market's decision is governed by the following inequality:
\begin{equation} \label{eqn:planner_comparison}
    \frac{\lambda_{lt}}{\lambda_{ht}} \lessgtr \frac{\psi^K_{ht}}{\psi^K_{lt}},
\end{equation}
where $\lambda_{it}, \text { } i\in \{l,h\}$ is the co-state variable capturing the value of block $i$'s data. Away from the knife edge case where $\sigma = 1/\eta$ where the above relation holds with equality at all times, the equilibrium path of capital allocation is generically inefficient. 

The important observation is that the direction of inefficiency in equilibrium capital allocation depends  on the degree of cross-task substitutability. If $\sigma < 1/\eta$, we show the left hand side of Equation \eqref{eqn:planner_comparison} will be greater, or that the relative value of data for low data tasks is sufficiently high so as to justify distorting production presently. In this case, the market inefficiently allocates too much capital to tasks that are currently data-rich---even if the equilibrium already biases capital allocation toward the data-poor block $\ell$, the planner prefers to tilt this allocation more aggressively. In so doing, the planner speeds up the rate at which block $\ell$'s data is accumulated---at the cost of static misallocation---in order to loosen the bottleneck more quickly. 

Conversely, when tasks are highly substitutable $\sigma > 1/\eta$, the right hand side of \cref{eqn:planner_comparison} will be larger, so the planner is incentivized to lean into tasks that are \emph{already} data-rich---even though the equilibrium already biases capital allocation toward the data-rich block $h$, the planner prefers to tilt this allocation more aggressively. In doing so, the planner harnesses the data-feedback loop to her advantage. The proof of \cref{prop:generic_inefficiency} formalizes the above argument. 

\begin{figure}[t] 
    \centering
        \includegraphics[width=0.55\textwidth]{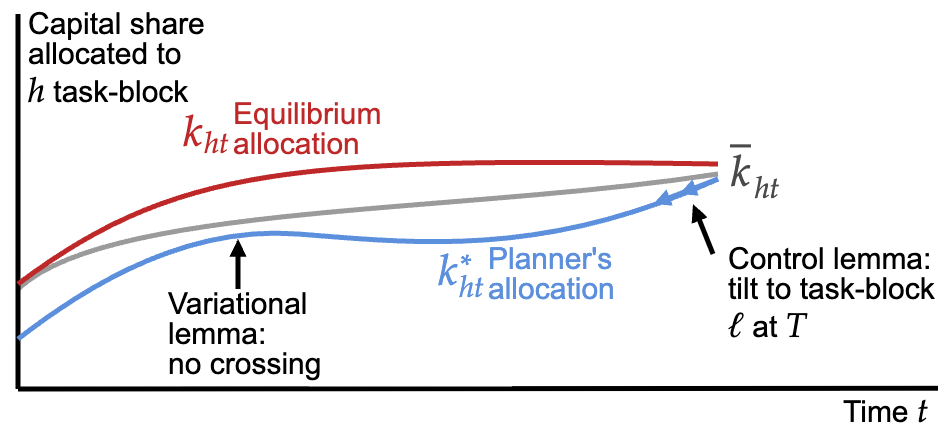}%
    \caption{Illustration of proof ideas for $1 < \sigma < 1/\eta$}\label{fig:tilt}
\end{figure}
%The proof of  \cref{prop:generic_inefficiency} is surprisingly involved and is deferred to \cref{appendix:proofs}.
 \cref{fig:tilt} illustrates  the intuition for the case when $\sigma < 1/\eta$. 
Write $\boldsymbol{k}^*_h$ to denote the planner's optimal path of capital allocation to block $h$, and $\boldsymbol{k}_h$ to denote the equilibrium path. For each task-block $h$, $\boldsymbol{k}^*_h$ and $\boldsymbol{k}_h$ are two infinite-dimensional vectors, where the dimension is time. To show that $k^*_{ht} \leq k_{ht}$ for all $t$, as claimed by the statement of \cref{prop:generic_inefficiency}, we combine a control argument to sign the direction of tilting in the long-run by running time backwards, and a variational argument to argue that the planner's optimal capital allocation cannot cross an auxillary path of capital $\boldsymbol{\bar k}$ that maximizes static time-$t$ production given the same data stock as the planner's.

%Write $(k^*_{ht})_t$ to denote the planner's optimal path of capital allocation to block $h$, and $(k_{ht})_t$ to denote the equilibrium path.
%$(k_{ht})_t$

\paragraph{Reintroducing labor to the coarsened economy.} \cref{prop:generic_inefficiency} applies to an economy with no labor, which does not speak to the \emph{speed} of automation under the planner's problem. In \cref{appendix:planner_numerics}, we reintroduce labor into the economy when $\sigma \leq 1/\eta$ and show numerically that the planner, as before, continues to tilt capital towards the data-poor task. This, in turn, \emph{accelerates} automation relative to the decentralized equilibrium. We expect the opposite to be true in the $\sigma > 1/\eta$ case, as the planner's amplified bias towards data-rich tasks concentrates the use of capital and thereby shrinks the frontier of automation.

\section{Capital Accumulation} \label{section:capitalaccumualation}

We have thus far analyzed an economy populated by hand-to-mouth households, without any capital accumulation. This was a deliberate modeling choice to focus on the role of heterogeneous data accumulation. Of course, capital accumulation---in the form of robotics, data centers, computational power, and so on---plays a central role in driving modern AI systems. We will accordingly augment our baseline model with capital accumulation but adopt the simplifying assumption that tasks are symmetric, under which the economy can be well approximated by an `AK' representation, and rule out cross-task spillovers. This is, in some sense, the `most pessimistic' case and will serve as a lowerbound on growth for an economy with higher substitutability and/or spillovers. We emphasize simplification is largely cosmetic and does not affect the core result that output and consumption become unbounded in finite time. That said, the symmetry assumption does have bite for the share of automated tasks; see the discussion of \cref{prop:substitutes} for details.  

We now introduce the standard household's consumption-savings problem. The representative household solves,
\[
\max_{c_{t},a_{t}}\int_{0}^{\infty}e^{-\rho t}\frac{c_{t}^{1-\theta}}{1-\theta}dt\quad s.t.\quad\dot{a}_{t}=w_t + r_{t}a_{t}-c_{t}
\]
where as usual $\theta$ is the CRRA parameter, $\rho>0$ is the discount rate, and $a_{t}$ is the wealth at time-t which earns a risk-free rate of $r_{t}$. $L$ is fixed and normalized to one. Our main result in this section is that data-driven automation combined with capital accumulation can deliver runaway growth (output becomes unbounded in finite time), which is reminiscent of the `supply-side singularity' of \cite{nordhaus2021we}. We shall return to discuss two major departures at the end of the section.

In light of the possibility of runaway growth, we have to be more careful with what it means for households to be trading off consumptions and savings decisions optimally. We introduce the following definition:

\begin{definition}[Feasibility and optimality of paths] \label{defn:hh_optimality}
    A path $(c_{t},a_{t})_t$ is feasible if it satisfies the household budget constraint. A feasible path is optimal if either: 
    \begin{itemize}[nosep]
        \item[(i)]  It delivers infinite present discounted utility.
        \item[(ii)] It satisfies both A and B:
        \begin{itemize}
            \item[A.] The Euler condition: ${\dot c_t}/{c_t} = \left(r_t - \rho \right)/{\theta} \quad \forall t \geq 0$.
            \item[B.] The transversality condition: 
            $\lim_{t \to \infty} e^{-\rho T}{a_t}/{c_t^\theta} = 0$.
        \end{itemize}
    \end{itemize}
\end{definition}

Absent capital accumulation, it is straightforward to verify that our economy features no growth in the limit akin to an `$AK$' economy with a fixed level of capital productivity, with growth being positive but finite. By combining (i) data accumulation that runs into diminishing returns but does not depreciate and (ii) endogenous capital accumulation that does depreciate, we show our economy exhibits a singularity: output becomes unbounded in finite time.

\begin{proposition}[The data-driven singularity]\label{prop:singularity}
    There exists a feasible and optimal path for the household's problem such that by some finite time $T$, consumption and output are unbounded and the economy is fully automated. That is,
    \[
    \lim_{t \to T}\gamma_t = 1, \quad
    \lim_{t\to T}\frac{d\ln Y_{t}}{dt}= +\infty,  \quad 
    \lim_{t\to T}\frac{d\ln c_{t}}{dt}= +\infty.
    \]
\end{proposition}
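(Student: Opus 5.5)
The plan is to exploit the symmetry assumption to reduce the economy to a two-dimensional system in $(D_t,K_t)$. Then I will construct an explicit feasible path along which $D_t$ satisfies a superlinear differential inequality of the form $\dot D\gtrsim D^{1+\eta}$, which forces blow-up in finite time.

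\textbf{Step 1 (reduction).} With $f_i\equiv f$, $D_{i0}\equiv D_0$ and no spillovers, every task has the same capital productivity $\psi^K_t=fD_t^\eta$ along a symmetric path. Each task is then produced with whichever factor is cheaper, ties broken toward capital. Whenever $r_t/\psi^K_t\le w_t/\psi^L$, all tasks are automated ($\gamma_t=1$). Otherwise the same ratio pins down an interior $\gamma_t$, and the capital-produced tasks accumulate data faster. I will show that once $\psi^K_t$ exceeds a threshold determined by $K_t/L$, the economy is fully automated, and that this threshold is crossed in finite time because $D_t$ grows at least linearly (every task is produced in positive quantity). From then on \eqref{eqn:GDP_KL} collapses to an $AK$ form, $Y_t=fD_t^\eta K_t$, with $r_t=fD_t^\eta$, $\dot D_t=Y_t$, and $\dot K_t=Y_t-c_t-\delta K_t$. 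During the initial interior-automation phase I will only use that $D_t$ and $K_t$ stay bounded below.

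\textbf{Step 2 (blow-up for a simple feasible path).} I will first consider the constant-saving path $c_t=(1-s)Y_t$, which yields $\dot K=s\,fD^\eta K-\delta K$ and $\dot D=fD^\eta K$. Hence
\[
\frac{d}{dt}\left(K-sD\right)=-\delta K ,
\]
so $K$ tracks $sD$ up to a depreciation drift. Once $sfD^\eta>2\delta$, the capital growth rate exceeds the depreciation wedge, and a comparison argument will give $K_t\ge \tfrac{s}{2}D_t$ for all later $t$. Substituting into the data equation yields $\dot D\ge \tfrac{s}{2}fD^{1+\eta}$, whose solutions explode before a finite $T$ since $1+\eta>1$. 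Because $\dot D=Y$, this gives $d\ln Y/dt\ge d\ln D^\eta/dt\to\infty$. The same holds for $c=(1-s)Y$, and $\gamma_t\to1$ follows from Step 1.

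\textbf{Step 3 (optimality).} This is the main obstacle, since a constant saving rate need not satisfy the Euler equation. I will try two routes, in this order.
\begin{itemize}
\item[(a)] Route via \cref{defn:hh_optimality}(i). When $\theta<1$, the constant-saving path already gives infinite discounted utility, because $c_t\to\infty$ before $T$ while $\int^T c^{1-\theta}\,dt$ diverges; divergence needs checking against the explicit blow-up rate $D\sim (T-t)^{-1/\eta}$.
\item[(b)] Route via \cref{defn:hh_optimality}(ii), for general $\theta$. I will impose the Euler path $\dot c/c=(fD^\eta-\rho)/\theta$ and choose $c_0$ small. The key estimate to verify is that $\int_0^t(fD^\eta-\rho)/\theta\,ds$ stays below $\ln Y_t$ up to a constant: along the path $\ln Y$ grows at rate at least $fD^\eta K/D\gtrsim fD^\eta s$. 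This keeps $c_t\le (1-s)Y_t$, so Step 2's comparison still applies and $K$ remains positive.
\end{itemize}
Transversality is immediate, since the horizon ends at $T$.
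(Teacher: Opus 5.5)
Your Steps 1--2 are essentially the paper's Solow lemma: a saving rate bounded below forces $K_t\gtrsim D_t$ and then $\dot D\gtrsim D^{1+\eta}$. Your identity $\frac{d}{dt}(K-sD)=-\delta K$ is a clean way to get there. The genuine gap is in Step 3, where you dismiss transversality as immediate.

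\textbf{Transversality is what selects $c_0$, and ``$c_0$ small'' fails it.} The household's horizon is $[0,\infty)$, and $T$ is the endogenous blow-up date, so the TVC has to be checked as $t\uparrow T$. Take any Euler path, with net return $fD_t^\eta-\delta$ (your Euler equation omits $\delta$). Using $\dot K=Y-c-\delta K$ and $Y=rK+wL$ with $r=fD^\eta$ and $w=\psi^L$, one has exactly
\[
\frac{d}{dt}\ln\left(e^{-\rho t}c_t^{-\theta}K_t\right)=\frac{\psi^L L-c_t}{K_t},
\]
so the TVC holds iff $\int^T c_t/K_t\,dt=\infty$.

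Now take $\theta>1/(1+\eta)$ and $c_0$ below the saddle value. Then $c_t/Y_t\to0$, hence $K_t/D_t\to1$ and $fD_t^\eta\approx1/(\eta(T-t))$, and $c_t/Y_t$ decays like $(T-t)^{(\theta(1+\eta)-1)/(\eta\theta)}$. So $c_t/K_t\approx(c_t/Y_t)fD_t^\eta$ is integrable, and $e^{-\rho t}c_t^{-\theta}K_t$ converges to a strictly positive limit. These are over-accumulation paths, and they violate the TVC.

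Your key estimate also fails for small $\theta$. The correct rate is $g_Y=fD^\eta(\eta K/D+s_t)-O(\delta)\le(1+\eta+o(1))fD^\eta$, to be compared with $g_c\approx fD^\eta/\theta$, where $\int^T fD_t^\eta\,dt=\infty$. So for $\theta<1/(1+\eta)$, no Euler path can keep $c_t\le(1-s)Y_t$ through a blow-up.

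What is missing is the paper's guess-and-verify step. Pick the path with $s_t\to s^*=1/((1+\eta)\theta)$, the value that equates $g_Y\approx(1+\eta)s\,fD^\eta$ with $g_c$. This path is the stable manifold of the saddle at $(c/Y,K/D)=(1-s^*,s^*)$ on the time scale $d\tau=fD_t^\eta\,dt$. Along it $s_t$ is bounded below, so your Step 2 applies. Moreover $c_t/K_t\approx(1-s^*)fD_t^\eta$ is not integrable, i.e.\ $c_t^{-\theta}K_t\sim(T-t)^{(\theta(1+\eta)-1)/\eta}\to0$. This requires $s^*<1$, i.e.\ $\theta>1/(1+\eta)$.

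Your route (a) covers the complementary range, but only that range. On the constant-saving path $c_t\propto(T-t)^{-(1+\eta)/\eta}$, so $\int^T c_t^{1-\theta}\,dt=\infty$ iff $\theta\le1/(1+\eta)$, not for all $\theta<1$. Route (a) on $\theta\le 1/(1+\eta)$ plus the saddle path on $\theta>1/(1+\eta)$ gives the full result.

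\textbf{Step 1 misdescribes the symmetric equilibrium, though this is repairable.} With $L>0$ fixed, $r_t/\psi^K_t<w_t/\psi^L$ never occurs in equilibrium, since labor would be idle. So unit costs are always equal, and every task has price one and the same quantity. That is why $\bm D_t$ stays symmetric. Faster accumulation on capital-produced tasks would in fact break your symmetric reduction.

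In closed form, $Y_t=fD_t^\eta K_t+\psi^L L$, $r_t=fD_t^\eta$, $w_t=\psi^L$, and $\gamma_t=fD_t^\eta K_t/(fD_t^\eta K_t+\psi^L L)<1$. There is therefore no finite threshold after which $\gamma_t=1$ and $Y_t=fD_t^\eta K_t$. Step 2 survives anyway, because $\dot K-s\dot D=-\delta K$ is exact and $\dot D=Y\ge fD^\eta K$. The limit $\lim_{t\to T}\gamma_t=1$ then follows from the closed form once $D_tK_t\to\infty$.
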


To develop intuition for \cref{prop:singularity}, consider an economy satisfying $Y_t = A_t K_t$ with an exogenous savings rate $s \in (0, 1)$ \`a la Solow. Then changes in output can be bounded below as follows:
\begin{align*}
    \dot{Y}_{t}	&=\dot{A}_{t}K_{t}+A_{t}\dot{K}_{t} \geq A_{t}\dot{K}_{t} =\left(s-\delta/A_{t}\right)A_{t}Y_{t}
\end{align*}
where the first inequality assumes $\dot A_t \geq 0$. If $A_t$ is of the form $Y_t^\omega$ for some $\omega > 0$, then output growth is explosive. We establish our economy indeed features this constant elasticity relation between productivity and output in the proof.

\cref{prop:singularity} echoes the `supply-side singularity' of \cite{nordhaus2021we} though there are two substantial differences. First, \cite{nordhaus2021we} assumes the capital productivity term grows exogenously at some exponential rate; by contrast, data is (i) accumulated endogenously in our economy; and (ii) exhibits potentially steeply diminishing returns. The key is that data-accumulation and capital-accumulation are \emph{strong complements}: either, on their own, can only generate finite growth, but they feed into each other to generate unbounded growth. Second, \cite{nordhaus2021we} assumes that labor and capital are gross substitutes; \cite{trammell2025economic} also note capital-driven growth typically requires gross substitutability across capital and labor. On the other hand, our result does not restrict $\sigma$ to be greater than one. In our economy, for a \emph{fixed} level of automation, we see from \eqref{eqn:GDP_KL} that capital and labor may well be complementary, but as data and/or capital accumulates, the automation boundary endogenoously expands so capital dominates production and (almost) obviates the role of labor. Put differently, whereas the complementarity between capital and labor at any moment in time is defined by task complementarity $\sigma$, data-driven productivity gains for capital will eventually push the task-level perfect substitutability to dominate.\footnote{Also related to this result is contemporaneous work by \cite{jones2025past} who find a singularity through the lens of semi-endogenous growth model provided that capital benefits sufficiently from the idea stock.}

\section{Conclusion} \label{sec:extensions}

In this paper, we develop a model of data-driven automation featuring (i) task-specific data creation; (ii)  endogenous data accumulation; and (iii) cross-task data spillovers, in general equilibrium. These features generate rich equilibrium dynamics in which data plays a dual role in \emph{augmenting} the productivity of already-automated tasks and \emph{expanding} the automation frontier. We provided tight conditions on the degree of cross-task substitutability for the economy to be fully automated in the limit (\cref{prop:complements,prop:substitutes}), and established how cross-task complementarity can lead long-run wages to stagnate (\cref{prop:wages}). We further developed tight non-asymptotic bounds on the speed of equilibrium data-driven automation (\cref{prop:heterogeneous-speed}), showing that it can be surprisingly slow with a `fat tail' of tasks performed by humans even in the long run. We also proved that \emph{any} degree of connectedness via cross-task data spillovers is sufficient to generate full limit automation (\cref{prop:generalW_full_automation}). Even so, equilibrium automation can be inefficiently slow with data externalities distorting the equilibrium composition of data accumulation (\cref{prop:generic_inefficiency}). Finally, we showed that the ability for data to simultaneous augment capital and expand the set of automated tasks can---even if capital runs into steeply diminishing returns from data---amplify capital accumulation to generate unbounded long-run growth (\cref{prop:singularity}). 

Our baseline model can be extended to include multiple sectors, feature non-separable data aggregation, and accommodate a time-varying set of tasks. We think these represent exciting avenues for future work.\footnote{A more detailed discussion can be found in \cref{appendix:discussion_extensions}.} Let us briefly elaborate on the latter. We have repeatedly emphasized how the sample efficiency of humans will, at least for time being, afford them a comparative advantage in the `low data' regime. This offers the possibility that when the set of tasks changes through time, humans retain a temporary advantage on new tasks. But this advantage does not persist: over time, the economy will accumulate more data on these new tasks until they become automated in equilibrium. In this economy, \emph{human labor facilitates automation} by producing frontier tasks---and attendant training data---precisely when data is scarce. 

Another possibility is that even though such tasks are new and the economy has no data on it specifically, there is sufficient overlap with existing tasks (as given by our graphon) that AI are already competent at such tasks by training on the existing stock of data for similar tasks. In this world, strong data spillovers allow automation to \emph{bootstrap itself} by continually using the endogenously generated data on existing, similar tasks to automate new tasks soon after they emerge. 

Which of these economies emerge will ultimately depend on the \emph{speed} at which new tasks emerge relative to the \emph{strength} of spillovers from present tasks to new tasks. When new tasks emerge quickly, tasks near the frontier have, themselves, not been around for long---and so data on them are also poor. Then labor could play an important dual role in both producing new tasks as well as generating data on it. But if new tasks emerge slowly, similar tasks have been around for long enough that spillovers suffice for such new tasks to be rapidly automated even in the absence of human labor.

\setlength{\bibsep}{0pt}
\bibliography{references}

\newpage
\appendix 

\counterwithin{lemma}{section} 
\counterwithin{proposition}{section} 

\begin{center}
    \LARGE{\textbf{APPENDIX TO DATA-DRIVEN AUTOMATION}}
\end{center}
\paragraph{Organization.} The appendix is organized as follows. \cref{appendix:proofs} collects proofs of results in the main text. \cref{appendix:discussion_extensions} provides a more detailed discussion of theoretical extensions alluded in the conclusion. \cref{appendix:simulation_details} presents the details of the solution method behind numerical simulations. \cref{appendix:additional_numerical_results} collects additional numerical results.

\section{Proofs} \label{appendix:proofs}

\subsection{\cref{sec:direction_limit}: Data Autarky}\label{appendix:dataAutarky}

\begin{proof}[ {\bf Proof of \cref{lemma:static eq lemma}}]\label{app:static eq lemma}

The final producer's first-order condition is $Y^{1/\sigma}\cdot y_i^{-1/\sigma} = p_i$, as usual. 

To derive the equilibrium objects in each period $t$, notice that from the final producer's FOC, labor must be equally distributed across the non-automated tasks $(\gamma_t, 1]$, so labor allocation is determined by  Equations \eqref{eqn:labor allocation} in the lemma.

The market clearing condition for capital can be written as
\[K_{i}\int_{0}^{\gamma_{t}}\frac{K_{j}}{K_{i}} dj =K\]

For any two automated $i$ and $j$ the final good producer's FOC implies
\[\frac{\psi_{i}^{K}K_{i}}{\psi_{j}^{K}K_{j}} = \left(\frac{\frac{r}{\psi_{i}^{K}}}{\frac{r}{\psi_{j}^{K}}}\right)^{-\sigma}\implies \frac{K_{i}}{K_{j}}=\left(\frac{\psi_{i}^{K}}{\psi_{j}^{K}}\right)^{\sigma-1}
\]
so capital allocation is therefore given by  Equations \eqref{eqn:capital allocation} in the lemma.

Next, plugging labor and capital allocations into the final producer's FOC and using the fact that intermediate tasks are perfectly competitive, we have that factor prices are thus given by Equations \eqref{eqn:rental rate} and \eqref{eqn:wage} in the lemma.

The automation boundary $\gamma_t$ is then pinned down by a task that is indifferent between using labor and capital to produce i.e., $\psi^K_{\gamma_t} / r = \psi^L / L$ which can be implicitly rewritten as \cref{gamma def} in the lemma.
We will later verify the above expression admits a unique solution. 

It remains to show regularity. From our data LOM,
\begin{equation} \label{eqn:dataLOM_noDep}
    D_{it} = D_{i0} + \int_0^t Y_{is} ds \text{ }\forall i
\end{equation}
Since data is initially continuous and strictly decreasing in task index, by \cref{lemma:static eq lemma}, we have $\dot D_{it} = Y_{i0} \geq Y_{j0} = \dot D_{jt} \quad \forall i \leq j$. This then implies that in a small neighborhood around $t=0$ the initial hypothesis that task data is weakly decreasing in task index survives, allowing us to conclude via the "Abstract Bootstrap Principle" \citep{tao2006nonlinear}.
\end{proof}

\begin{proof}[{\bf Proof of \cref{lemma:direction lemma}}] \label{sec:direction lemma pf}

    We will derive a generalized expression capturing arbitrary spillovers and show it reduces to the expression in the main text when data is task-specific. Rearranging (\ref{gamma def}), we have the following expression for capital productivity of the threshold task on the automation boundary: 
    \begin{equation} \label{gamma def alt}
    \psi_{\gamma_{t}}^{K}=\frac{\psi^{L}L}{K}\frac{1}{1-\gamma_{t}}\int_{0}^{\gamma_{t}}\left(\frac{\psi_{j}^{K}}{\psi_{\gamma_{t}}^{K}}\right)^{\sigma-1}dj
    \end{equation}

    Suppose we run this system forward with $\gamma_{t}=\gamma$ held fixed. If the change in RHS weakly exceeds the change LHS, $\gamma_{t}$ must be weakly decreasing in time given \cref{lemma:static eq lemma} since a larger $\gamma_{t}$ would make the RHS even larger and the LHS even smaller. Similarly, if the change in LHS strictly exceeds the change in RHS, $\gamma_t$ must be strictly increasing in time to restore the equality.
    The partial of LHS w.r.t. time is
    \begin{equation}
    \partial_{\mathcal{A}_{\gamma}}\psi_{\gamma}^{K}\cdot\partial_{t}\mathcal{A}_{\gamma}\left(\boldsymbol{D}_{t}\right)    
    \end{equation}
    Where we abuse notation by using $\partial_{\mathcal{A}_\gamma}$ as the partial derivative w.r.t. effective data. Note that we write $\gamma$ instead of $\gamma_t$ as we are holding it fixed.
    The partial of RHS w.r.t. time is
    \begin{equation}
    \frac{\psi^{L}L}{K}\frac{\sigma-1}{1-\gamma}\int_{0}^{\gamma}\left(\frac{\psi_{j}^{K}}{\psi_{\gamma}^{K}}\right)^{\sigma-2}\frac{\psi_{\gamma}^{K}\left(\partial_{\mathcal{A}_{j}}\psi_{j}^{K}\right)\left(\partial_{t}\mathcal{A}_{j}\left(\boldsymbol{D}_{t}\right)\right)-\psi_{j}^{K}\left(\partial_{\mathcal{A}_{\gamma}}\psi_{\gamma}^{K}\right)\left(\partial_{t}\mathcal{A}_{\gamma}\left(\boldsymbol{D}_{t}\right)\right)}{\left(\psi_{\gamma}^{K}\right)^{2}}dj    \end{equation}
    Rearranging and substituting in (\ref{gamma def alt}), $\gamma_{t}$ is weakly decreasing in time if and only if: 
    \begin{equation}
        \left(\sigma-1\right)\int_{0}^{\gamma}\left(\frac{\psi_{j}^{K}}{\psi_{\gamma}^{K}}\right)^{\sigma-1}\frac{\partial_{t}\log\mathcal{A}_{j}\left(\boldsymbol{D}_{t}\right)}{\partial_{t}\log\mathcal{A}_{\gamma}\left(\boldsymbol{D}_{t}\right)}dj \geq \sigma\int_{0}^{\gamma_{t}}\left(\frac{\psi_{j}^{K}}{\psi_{\gamma}^{K}}\right)^{\sigma-1}dj    
    \end{equation}
    Observe that in the case of no data spillovers, the expression reduces to
    \begin{equation}
    \left(\sigma-1\right)\int_{0}^{\gamma_{t}}\left(\frac{f\left(j\right)}{f\left(\gamma_{t}\right)}\right)^{1/\eta}\left(\frac{\psi_{j}^{K}}{\psi_{\gamma_{t}}^{K}}\right)^{2\sigma-1-1/\eta}dj\geq\sigma\int_{0}^{\gamma_{t}}\left(\frac{\psi_{j}^{K}}{\psi_{\gamma_{t}}^{K}}\right)^{\sigma-1}dj    \end{equation}
    % \question{\begin{itemize}
    %     \item In the big ratio on the lhs, there is $\varepsilon_{j}^{K}$ in both numerator and denominator: 1) this seems a typo or they actually cancel out? 2) $\varepsilon_{j}^{K}$ should be replaced by $\eta$
    %     \item makes ure you always use $\gamma_t$ and NOT $\gamma$ alone. bc $\gamma_t$ is an INDEX, and it makes sense for it to be a superscript.
    % \end{itemize}}
    
    % When $\sigma \leq 1$, this cannot hold, so $\gamma_t$ is strictly increasing in time. We remind the reader that $\varepsilon_{j}^{K}:=\partial\log\psi_{j}^{K}/\partial\log D_{j}$ is the elasticity of productivity w.r.t. data.
\end{proof}

\begin{proof}[{\bf Proof of \cref{prop:complements}}] \label{sec:gross comp pf}

    First, all conditions for  \cref{lemma:static eq lemma}, \cref{lemma:static eq lemma}, and \cref{lemma:direction lemma} are met, so equilibrium characterizations from those results apply. By \cref{lemma:static eq lemma}, we know $\gamma_{t}$ is monotonically increasing since $\sigma\leq1$. We will first show show that $\lim_{t\to\infty}\gamma_{t}=1$ when $\sigma<1/\eta$ for any $W\in\mathcal{\mathcal{W}}$ before establishing that the ratio of data is a positive constant for any two tasks in the limit. 
    \paragraph{Full limit automation (gross complements).}
    We can rewrite (\ref{gamma def}) as
    \begin{align}
    \psi_{\gamma_{t}}^{K} & =\frac{\psi^{L}L}{K}\frac{1}{1-\gamma_{t}}\int_{0}^{\gamma_{t}}\left(\frac{\psi_{j}^{K}}{\psi_{\gamma_{t}}^{K}}\right)^{\sigma-1}dj \leq \frac{\psi^{L}L}{K}\frac{\gamma_{t}}{1-\gamma_{t}}
    \end{align}
    Here the inequality follows from the fact that $\psi_{j}^{K}\geq\psi_{\gamma_{t}}^{K}$ and $\sigma\leq1$. Since $\psi_{\gamma_{t}}^{K}\geq\psi_{1}^{K}$ by \cref{lemma:static eq lemma}, it suffices to show that
    \begin{equation}
    \lim_{t\to\infty}\psi_{1}^{K}\left(\mathcal{A}_{1}\left(\boldsymbol{D}_{t}\right)\right)=\infty
    \end{equation}
    This is indeed the case since all tasks have the outside option of producing with labor, which yields a task quantity of 
    \begin{equation}
    \psi_{L}\frac{L}{1-\gamma_{t}}\geq\psi^{L}\frac{L}{1-\gamma_{0}}>0
    \end{equation}
    From our data LOM it is immediate that effective data for task 1--and therefore all tasks--go to infinity, so we must approach full automation in the limit.

    \paragraph*{Full limit automation (gross substitutes with $\sigma\protect\leq1/\eta$).}
    Again, we know $\psi_{\gamma_{t}}^{K}$ can be written equivalently as
    \begin{equation}
    \psi_{\gamma_{t}}^{K}=\frac{\psi^{L}L}{K}\frac{1}{1-\gamma_{t}}\int_{0}^{\gamma_{t}}\left(\frac{\psi_{j}^{K}}{\psi_{\gamma_{t}}^{K}}\right)^{\sigma-1}dj
    \end{equation}
    The same argument from the gross complements case shows that the LHS goes to infinity as $t\to\infty$. Towards a contradiction, suppose $\gamma_{t}$ is bounded above by some $U<1$,
    \begin{align}
    \frac{\psi^{L}L}{K}\frac{1}{1-\gamma_{t}}\int_{0}^{\gamma_{t}}\left(\frac{\psi_{j}^{K}}{\psi_{\gamma_{t}}^{K}}\right)^{\sigma-1}dj 
    % \leq\frac{\psi^{L}L}{K}\frac{1}{1-U}\int_{0}^{\gamma_{t}}\left(\frac{\psi_{j}^{K}}{\psi_{\gamma_{t}}^{K}}\right)^{\sigma-1}dj 
    \leq\frac{\psi^{L}L}{K}\frac{1}{1-U}\left(\frac{\psi_{0}^{K}}{\psi_{1}^{K}}\right)^{\sigma-1}
    \end{align}
    Here the inequality uses $\sigma\geq1$ and the monotonicity lemma. It suffices to now show $\psi_{0}^{K}/\psi_{1}^{K}$ is bounded for a contradiction. Differentiating the ratio of productivities w.r.t. time,
    {\allowdisplaybreaks \begin{align}
    \frac{d\ln\left(\psi_{0}^{K}/\psi_{1}^{K}\right)}{dt} & =\frac{\left(\psi_{0}^{K}\right)^{\prime}}{\psi_{0}^{K}}\int_{0}^{1}W\left(0,j\right)Y_{j}dj-\frac{\left(\psi_{1}^{K}\right)^{\prime}}{\psi_{1}^{K}}\int_{0}^{1}W\left(1,j\right)Y_{j}dj \nonumber \\
     & =\eta\left(\frac{f\left(0\right)}{\psi_{0}^{K}}\right)^{\frac{1}{\eta}}\int_{0}^{1}W\left(0,j\right)Y_{j}dj-\eta\left(\frac{f\left(1\right)}{\psi_{1}^{K}}\right)^{\frac{1}{\eta}}\int_{0}^{1}W\left(1,j\right)Y_{j}dj \nonumber \\
     & =\eta\left(\frac{f\left(1\right)}{\psi_{1}^{K}}\right)^{\frac{1}{\eta}}\left(\int_{0}^{1}W\left(1,j\right)Y_{j}dj\right)\left(\left(\frac{\psi_{1}^{K}}{\psi_{0}^{K}}\frac{f\left(0\right)}{f\left(1\right)}\right)^{\frac{1}{\eta}}\frac{\int_{0}^{1}W\left(0,j\right)Y_{j}dj}{\int_{0}^{1}W\left(1,j\right)Y_{j}dj}-1\right)
    \end{align}}
    Here we write $\psi_{i}^{K}$ for brevity, suppressing the time subscript. Note the ratio decreases wherever
    \[
    \left(\frac{\psi_{1,t}^{K}}{\psi_{0,t}^{K}}\frac{f\left(0\right)}{f\left(1\right)}\right)^{\frac{1}{\eta}}\frac{\int_{0}^{1}W\left(0,k\right)Y_{k,t}dk}{\int_{0}^{1}W\left(1,k\right)Y_{k,t}dk}
    \leq 1
    \]
    Note that
    \begin{align}
        \frac{\int_{0}^{1}W\left(0,k\right)Y_{k,t}dk}{\int_{0}^{1}W\left(1,k\right)Y_{k,t}dk}	&\leq\frac{Y_{0,t}}{Y_{1,t}}\frac{\int_{0}^{1}W\left(0,k\right)dk} {\int_{0}^{1}W\left(1,k\right)dk} \\
	    &\leq\left(\frac{\psi_{0}^{K}}{\psi_{1}^{K}}\right)^{\sigma}\frac{\int_{0}^{1}W\left(0,k\right)dk}{\int_{0}^{1}W\left(1,k\right)dk}\quad\text{by \cref{lemma:psi_to_y}}
    \end{align}
    All supporting lemmas are collected at the end of proofs to propositions for which they are first introduced. Applying the upperbound from above, the ratio will also decrease when the following condition holds 
    \begin{equation}
        \left(\frac{f\left(0\right)}{f\left(1\right)}\right)^{\frac{1}{\eta}}\left(\frac{\psi_{0,t}^{K}}{\psi_{1,t}^{K}}\right)^{\frac{\sigma\eta-1}{\eta}}\frac{\int_{0}^{1}W\left(0,k\right)dk}{\int_{0}^{1}W\left(1,k\right)dk}\leq1\iff\frac{\psi_{0,t}^{K}}{\psi_{1,t}^{K}}\geq\left(\frac{\int_{0}^{1}W\left(0,k\right)dk}{\int_{0}^{1}W\left(1,k\right)dk}\left(\frac{f\left(0\right)}{f\left(1\right)}\right)^{\frac{1}{\eta}}\right)^{\frac{\eta}{1-\sigma\eta}}
    \end{equation}
    The RHS is bounded by assumption, so we conclude that
    \begin{equation}
        \frac{\psi_{0}^{K}\left(\boldsymbol{D}_{0,t}\right)}{\psi_{1}^{K}\left(\boldsymbol{D}_{1,t}\right)}\leq\max\left\{ \frac{\psi_{0}^{K}\left(\boldsymbol{D}_{0,0}\right)}{\psi_{1}^{K}\left(\boldsymbol{D}_{1,0}\right)},\left(\frac{\int_{0}^{1}W\left(0,k\right)dk}{\int_{0}^{1}W\left(1,k\right)dk}\left(\frac{f\left(0\right)}{f\left(1\right)}\right)^{\frac{1}{\eta}}\right)^{\frac{\eta}{1-\sigma\eta}}\right\} <\infty\text{ }\forall t\geq 0
    \end{equation}
    This concludes the proof for the full limit automation result. We now turn to establishing that data will be balanced in the limit. 

    \paragraph*{Balanced data.} We now establish that the relative data ratio between any two tasks $i \neq j$ will be balanced. 
    {\allowdisplaybreaks \begin{align*}
        \lim_{t \to \infty}\cfrac{D_{it}}{D_{jt}}  
        &= \lim_{t \to \infty}\cfrac{y_{it}}{y_{jt}}  \tag{L'Hopital's} \\
        &= \lim_{t \to \infty}\cfrac{\psi^K_{it}K_{it}}{\psi^K_{jt}K_{jt}} \tag{tasks will be produced by capital asymptotically} \\
        &= \lim_{t \to \infty} \left(\cfrac{\psi^K_{it}}{\psi^K_{jt}}\right)^{\sigma}
        \tag{$K_{it}=\frac{\left(\psi_{it}^{K}\right)^{\sigma-1}}{\int_{0}^{\gamma_{t}}\left(\psi_{jt}^{K}\right)^{\sigma-1}dj}K$}
        \\
        &= \lim_{t \to \infty} \left(\cfrac{f_i}{f_j}\right)^{\sigma} \left(\cfrac{D_{it}}{D_{jt}}\right)^{\eta \sigma}
    \end{align*}}
    Rearranging gives the expression for $\lim_{t \to \infty} D_{it} / D_{jt}$. Observe that with or without capital accumulation, we will have $r/w \to \infty$. Note that this argument goes through with capital accumulation, since we have full automation so long as capital is bounded away from 0 (See \cref{gamma def}). 

    \begin{lemma}[An upperbound on the ratio of task production] \label{lemma:psi_to_y}
        \begin{equation}
        \frac{Y_{0}}{Y_{j}}\leq\left(\frac{\psi_{0}^{K}}{\psi_{j}^{K}}\right)^{\sigma}
        \end{equation}
    \end{lemma}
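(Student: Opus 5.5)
The plan is to prove \cref{lemma:psi_to_y} by splitting on whether task $j$ is automated, using only the static characterization in \cref{lemma:static eq lemma} together with the final-good producer's first-order condition $p_i = Y^{1/\sigma} y_i^{-1/\sigma}$. Throughout I take task $0$ to be automated, which holds because $\gamma_t \in (0,1)$ and capital productivity is weakly decreasing in the task index along the regular path.

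\emph{Case 1: $j \le \gamma_t$.} Both tasks use capital. Then $Y_i = \psi^K_i K_i$, and \eqref{eqn:capital allocation} gives $K_0/K_j = (\psi^K_0/\psi^K_j)^{\sigma-1}$. Hence
\[
\frac{Y_0}{Y_j}=\Big(\frac{\psi^K_0}{\psi^K_j}\Big)^{\sigma},
\]
so the claimed bound holds with equality.

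\emph{Case 2: $j > \gamma_t$.} Task $j$ uses labor and task $0$ uses capital, so the unit costs are $p_0 = r/\psi^K_0$ and $p_j = w/\psi^L$. The FOC gives
\[
\frac{Y_0}{Y_j}=\Big(\frac{p_j}{p_0}\Big)^{\sigma}=\Big(\frac{w\,\psi^K_0}{r\,\psi^L}\Big)^{\sigma}.
\]
Because $j$ is produced with labor, the labor cost is no higher than the capital cost: $w/\psi^L \le r/\psi^K_j$. This is equivalent to $w/(r\psi^L) \le 1/\psi^K_j$. Substituting yields $Y_0/Y_j \le (\psi^K_0/\psi^K_j)^\sigma$, using $\sigma>0$ so that the inequality is preserved under the $\sigma$-th power.

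The main subtlety is the one step needing care: in Case 2, the correct cost comparison must come from the equilibrium assignment rule, namely the definition of $\gamma_t$ via \eqref{gamma def} and the tie-breaking convention. Here $\psi^K_j$ should be read as the hypothetical productivity of capital in task $j$, i.e.\ $f_j \mathcal{A}_j(\bm D)^\eta$. With spillovers the automated set need not be an interval. The same argument still goes through by replacing "$j \le \gamma_t$" with $j\in\mathcal K_t$, since only the cost-minimization inequality is used.
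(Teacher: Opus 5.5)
Your proof is correct and follows the paper's: you get equality when $j$ is automated via $K_i\propto(\psi^K_i)^{\sigma-1}$, and when $j$ is labor-produced you use the cost comparison $w/\psi^L\le r/\psi^K_j$ directly, where the paper writes $Y_j=Y_{\gamma_t}$ and invokes $\psi^K_{\gamma_t}\ge\psi^K_j$ (the same comparison, routed through the marginal task). One caveat on your closing remark: the extension to a non-interval $\mathcal{K}_t$ also needs $0\in\mathcal{K}_t$, which no longer follows from monotonicity once there are spillovers, so more than the cost-minimization inequality at $j$ is being used there.
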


    \begin{proof}[{\bf Proof of \cref{lemma:psi_to_y}}]
        We know the most productive task will always be produced with capital, so
        \begin{equation}
        Y_{0}=\frac{\left(\psi_{0}^{K}\right)^{\sigma}}{\int_{0}^{\gamma}\left(\psi_{j}^{K}\right)^{\sigma-1}dj}K
        \end{equation}
        In addition, if task $j$ is produced with capital, the above expression holds with $0$ replaced by $j$. Otherwise, we know
        \begin{equation}
        Y_{j}=Y_{\gamma}=\frac{\left(\psi_{\gamma}^{K}\right)^{\sigma}}{\int_{0}^{\gamma}\left(\psi_{j}^{K}\right)^{\sigma-1}dj}K\geq\frac{\left(\psi_{j}^{K}\right)^{\sigma}}{\int_{0}^{\gamma}\left(\psi_{j}^{K}\right)^{\sigma-1}dj}K
        \end{equation}
        Dividing the two expressions, we have 
        \begin{equation}
        \frac{Y_{0}}{Y_{j}}\leq\left(\frac{\psi_{0}^{K}}{\psi_{j}^{K}}\right)^{\sigma}
        \end{equation}
        as desired.
    \end{proof}
\end{proof}

\begin{proof}[{\bf Proof of \cref{prop:substitutes}}] \label{sec:substitutes ub pf}

   First, all conditions for  \cref{lemma:static eq lemma}, \cref{lemma:static eq lemma}, and \cref{lemma:direction lemma} are met, so equilibrium characterizations from those results apply. We will first show the upperbound result before establishing that data will be imbalanced in the limit.
   
   \paragraph{Bounded automation.} We will prove
   \begin{equation} \gamma_{t}\leq\max\left\{ \gamma_{0},L_{f,\sigma}\right\} <1 \text{ }\forall t \geq 0 
   \end{equation} 
   We do so by establishing that $\gamma_{t}\geq L_{f,\sigma}$ implies $\gamma_{t}$ is weakly decreasing in time. We remind the reader $L_{f,\sigma}$ comes from the $\sigma$-regularity condition, which stipulates that there exists some constant $L_{\sigma, f} \in (0, 1)$ such that
    \begin{equation} 
        \frac{\sigma}{\sigma-1}f(i)
        < \frac{1}{i}\int_{0}^{i}f(j)dj \quad \text{for all $i \geq L_{\sigma, f}$.} 
    \end{equation}
    From \cref{lemma:direction lemma}, we know $\gamma_{t}$ is weakly decreasing in time if
    {\allowdisplaybreaks \begin{equation}
     \left(\sigma-1\right)\int_{0}^{\gamma_{t}}\left(\frac{f\left(j\right)}{f\left(\gamma_{t}\right)}\right)^{1/\eta}\left(\frac{\psi_{j}^{K}}{\psi_{\gamma_{t}}^{K}}\right)^{2\sigma-1-1/\eta}dj\geq\sigma\int_{0}^{\gamma_{t}}\left(\frac{\psi_{j}^{K}}{\psi_{\gamma_{t}}^{K}}\right)^{\sigma-1}dj 
    \end{equation}}
    % Applying our functional form assumption and specializing to the case with no data spillovers (i.e., $g=\delta_{\{0\}}$), we have
    % \begin{equation}
    % \int_{0}^{\gamma_{t}}\left(\frac{\psi_{j}^{K}}{\psi_{\gamma_{t}}^{K}}\right)^{\sigma-1}\frac{f\left(j\right)}{f\left(\gamma_{t}\right)}\underbrace{\left(\frac{\psi_{j}^{K}}{\psi_{\gamma_{t}}^{K}}\right)^{\sigma-\frac{1}{\eta}}}_{\geq1\text{ by \cref{lemma:static eq lemma}}}dj\geq\frac{\sigma}{\sigma-1}\int_{0}^{\gamma_{t}}\left(\frac{\psi_{j}^{K}}{\psi_{\gamma_{t}}^{K}}\right)^{\sigma-1}dj
    % \end{equation}
    Observe that 
    \begin{align}
    \int_{0}^{\gamma_{t}}f\left(j\right)\left(\frac{\psi_{j}^{K}}{\psi_{\gamma_{t}}^{K}}\right)^{\sigma-1}dj & \geq\cfrac{1}{\gamma_{t}}\left(\int_{0}^{\gamma_{t}}f(j)dj\right)\left(\int_{0}^{\gamma_{t}}\left(\frac{\psi_{j}^{K}}{\psi_{\gamma_{t}}^{K}}\right)^{\sigma-1}dj\right)\\
     & \geq\frac{\sigma}{\sigma-1}f\left(\gamma_{t}\right)\int_{0}^{\gamma_{t}}\left(\frac{\psi_{j}^{K}}{\psi_{\gamma_{t}}^{K}}\right)^{\sigma-1}dj
    \end{align}
    where the first inequality follows from Chebyshev/FKG (see, e.g., Theorem 1 in \cite{armstrong1993chebyshev}); the second is due to $\pmb{f}$ being $\sigma$-regular and $\gamma_{t}\geq L_{f,\sigma}$ by assumption. Dividing both sides of the expression above by $f(\gamma_t)$ concludes the proof since $f(j) \geq f(\gamma_t) > 0 \text{ } \forall j \leq \gamma_t$ and $1/\eta > 1$.
    
    \paragraph{Imbalanced data.}
    It is easy to verify $D_{it}/D_{jt}$ is monotonically increasing over time, so if $D_{it}/D_{jt}$ does not diverge, the limit exists by monotone convergence. First, if $j$ is always produced using capital after some time $T$, then following the same steps as before
    \[
    \lim_{t\to\infty}\frac{D_{0t}}{D_{jt}}=\left(\frac{f_{0}}{f_{j}}\right)^{\sigma}\lim_{t\to\infty}\left(\frac{D_{0t}}{D_{jt}}\right)^{\eta\sigma}
    \]
    Suppose the ratio is bounded, then the limit exists, but the RHS is strictly larger than the LHS. Moreover, we know $D_{0t}/D_{jt}\geq1$, so the inequality yields a contradiction. Therefore the limit must diverge to infinity in this case. On the other hand, if $j$ is produced only using labor only after some time $T$, then
    \begin{align*}
    \lim_{t\to\infty}\frac{D_{0,t}}{D_{jt}} & =\lim_{t\to\infty}\frac{Y_{0,t}}{Y_{jt}}\\
     & \geq\lim_{t\to\infty}\frac{Y_{0,t}}{\psi^{L}L/\left(1-j\right)}\tag{\text{all tasks above $j$ must be produced using labor}}\\
     & \geq\lim_{t\to\infty}\frac{1}{\psi^{L}L/\left(1-j\right)}\underbrace{\left(\frac{\left(\psi_{0,t}^{K}\right)^{\sigma-1}}{\int_{0}^{\gamma_{t}}\left(\psi_{k,t}^{K}\right)^{\sigma-1}dk}\right)}_{\geq1}\psi_{0,t}^{K}\\
     & \geq\lim_{t\to\infty}\frac{1}{\psi^{L}L/\left(1-j\right)}\psi_{0,t}^{K}=\infty
    \end{align*}
    This concludes the proof.
\end{proof}

\begin{proof}[{\bf Proof of \cref{prop:wages}}] \label{proof:wages}
Recall wage is given by
\[
w_{t}=Y_{t}^{\frac{1}{\sigma}}\left(\frac{\left(1-\gamma_{t}\right)\left(\psi^{L}\right)^{\sigma-1}}{L}\right)^{\frac{1}{\sigma}}
\]
so the long-run wage is governed by $Y_{t}\left(1-\gamma_{t}\right)$ as the rest are constants. Using our expression for output, we have
\[
Y_{t}\left(1-\gamma_{t}\right)=\left(\gamma_{t}^{\frac{1}{\sigma}}\left(\left(1-\gamma_{t}\right)\Psi_{K}\left(\gamma_{t},\boldsymbol{\psi}_{t}\right)K_{t}\right)^{\frac{\sigma-1}{\sigma}}+\left(1-\gamma_{t}\right)\left(\psi_{L}L\right)^{\frac{\sigma-1}{\sigma}}\right)^{\frac{\sigma}{\sigma-1}}.
\]
First consider the case where $\sigma \leq 1/\eta$. Note that the second term is going to zero in the limit. As for the first term, we need to evaluate $\left(1-\gamma_{t}\right)\Psi_{K}\left(\gamma_{t},\boldsymbol{\psi}_{t}\right)K_{t}$. Using our implicit definition of $\gamma_{t}$, we can write
\[
1-\gamma_{t}=\frac{1}{\psi_{\gamma_{t}}^{K}}\frac{\psi^{L}L}{K_{t}}\int_{0}^{\gamma_{t}}\left(\frac{\psi_{j}^{K}}{\psi_{\gamma_{t}}^{K}}\right)^{\sigma-1}dj
\]
So
\[
\left(1-\gamma_{t}\right)\Psi_{K}\left(\gamma_{t},\boldsymbol{\psi}_{t}\right)K_{t}=\psi^{L}L\frac{\Psi_{K}\left(\gamma_{t},\boldsymbol{\psi}_{t}\right)}{\psi_{\gamma_{t}}^{K}}\int_{0}^{\gamma_{t}}\left(\frac{\psi_{j}^{K}}{\psi_{\gamma_{t}}^{K}}\right)^{\sigma-1}dj
\]
When $\sigma \leq 1/\eta$, we have balanced data, so
\[
\lim_{t\to\infty}\frac{\Psi_{K}\left(\gamma_{t},\boldsymbol{\psi}_{t}\right)}{\psi_{\gamma_{t}}^{K}}\int_{0}^{\gamma_{t}}\left(\frac{\psi_{j}^{K}}{\psi_{\gamma_{t}}^{K}}\right)^{\sigma-1}dj=C<\infty
\]
On the other hand, when $\sigma > 1/\eta$, we know $\lim_{t\to\infty}Y_t = \infty$ and $\gamma_t$ will be upperbounded per \cref{prop:substitutes}, so the long-run wage therefore diverges to infinity. Note that this argument only requires the balanced data result from \cref{prop:complements}, which holds with or without capital accumulation as noted in the proof.

\end{proof}

%\subsubsection{Proof of \cref{prop:heterogeneous-speed}}

\begin{proof}[{\bf Proof of \cref{prop:heterogeneous-speed}}]
Under our maintained assumpitons on the boundedness of $\pmb{f}$ and $\pmb{D}_0$, there are
constants \(\underline f,\overline f,\underline D_0,\overline D_0>0\) such that
\[
    \underline f\leq f_i\leq \overline f,
    \qquad
    \underline D_0\leq D_{i0}\leq \overline D_0
    \qquad \forall i\in X.
\]

We first argue that this delivers a
whole-path bound. Fix \(i<j\) and write \(x_{ij,t}:=D_{it}/D_{jt}\). The
balanced-data ratio is
\[
    x^*_{ij}:=
    \left(\frac{f_i}{f_j}\right)^{\frac{\sigma}{1-\sigma\eta}} .
\]
Suppose \(1\leq x_{ij,0}\leq x^*_{ij}\). We claim that
\[
    1\leq x_{ij,t}\leq x^*_{ij}
    \qquad \forall t\geq0.
\]
To see this, observe that \(x_{ij,t}\) can cross the upper boundary only if
\(\dot x_{ij,t}>0\) when \(x_{ij,t}=x^*_{ij}\). If both tasks are automated,
then
\[
    \frac{y_{it}}{y_{jt}}
    =
    \left(\frac{\psi_i^K}{\psi_j^K}\right)^\sigma
    =
    \left(\frac{f_i}{f_j}\right)^\sigma x_{ij,t}^{\eta\sigma}.
\]
Hence
\[
    \frac{\dot x_{ij,t}}{x_{ij,t}}
    =
    \frac{y_{it}}{D_{it}}-\frac{y_{jt}}{D_{jt}}
    =
    \frac{y_{jt}}{D_{jt}}
    \left[
    \left(\frac{f_i}{f_j}\right)^\sigma
    x_{ij,t}^{\eta\sigma-1}
    -1
    \right],
\]
which is zero at \(x_{ij,t}=x^*_{ij}\) and negative above it because
\(\eta\sigma<1\). If both tasks are produced by labor, then \(y_{it}=y_{jt}\),
so the ratio moves toward one. Finally, if \(i\) is automated and \(j\) is
produced by labor, then \(y_{it}/y_{jt}\leq(\psi_i^K/\psi_j^K)^\sigma\), so the
same upper-bound argument applies. Thus \(x_{ij,t}\) cannot cross above
\(x^*_{ij}\). A symmetric argument at \(x_{ij,t}=1\), using \(f_i\geq f_j\) and
the monotonicity of capital productivity, shows it cannot cross below one.

It follows that, for every \(i<j\) and every \(t\geq0\),
\[
    \frac{\psi_i^K}{\psi_j^K}
    =
    \frac{f_i}{f_j}x_{ij,t}^{\eta}
    \leq
    \left(\frac{f_i}{f_j}\right)^{\frac{1}{1-\sigma\eta}}.
\]
Since \(f_i\) is bounded above and away from zero, the right-hand side is
uniformly bounded. In particular, there exists \(R<\infty\) such that
\[
    1\leq
    \frac{\psi^K_j}{\psi^K_{\gamma_t}}
    \leq R
    \qquad \forall j\leq\gamma_t,\; t\geq0.
\]
Under this stronger initial-condition restriction, we can take \(T_R=0\) in the
argument below. Since the automation boundary is weakly increasing along the
path, we can also take the lower bound on \(\gamma_t\) to be \(\gamma_0\), and
therefore take \(T=0\).
The rest of the proof is written for a generic date \(T\). In the
non-asymptotic case just described, the same argument applies with
\(T=0\), \(\gamma_\star=\gamma_0\),
\(\underline B=\underline D_0\), and \(\overline B=\overline D_0\).

Since \(\sigma<1/\eta\), \cref{prop:complements} implies full limit automation
and balanced data. Since \(f_i\) is bounded above and away from zero, the
bounded relative-productivity conclusion in \cref{prop:complements} gives an
\(R<\infty\) and a finite time \(T_R\) such that for every \(t\geq T_R\) and
every automated task \(j\leq\gamma_t\),
\[
    1\leq
    \frac{\psi^K_j}{\psi^K_{\gamma_t}}
    \leq R.
\]
The lower bound uses monotonicity of capital productivity in the task index.
Fix any \(\gamma_\star\in(0,1)\).
Full limit automation gives a finite time \(T_\gamma(\gamma_\star)\) such that
\(\gamma_t\geq \gamma_\star\) for all \(t\geq T_\gamma(\gamma_\star)\). Take
\(T=\max\{T_R,T_\gamma(\gamma_\star)\}\). The bounds below are non-asymptotic
for every \(t\geq T\), but the date \(T\) is determined by the eventual
convergence results in \cref{prop:complements}. Thus \(\gamma_\star\) is
arbitrary, but fixing a larger \(\gamma_\star\) may require a later
\(T_\gamma(\gamma_\star)\).

Define
\[
    M_t:=
    \int_0^{\gamma_t}
    \left(\frac{\psi^K_j}{\psi^K_{\gamma_t}}\right)^{\sigma-1}dj .
\]
The bounded-ratio condition implies the stated bounds on \(M_t\). If
\(\sigma\geq 1\), the integrand in \(M_t\) lies between \(1\) and
\(R^{\sigma-1}\), so
\[
    \gamma_\star
    \leq \gamma_t
    \leq M_t
    \leq \gamma_t R^{\sigma-1}
    \leq R^{\sigma-1}.
\]
If \(\sigma<1\), the integrand lies between \(R^{\sigma-1}\) and \(1\), so
\[
    \gamma_\star R^{\sigma-1}
    \leq \gamma_tR^{\sigma-1}
    \leq M_t
    \leq \gamma_t
    \leq 1.
\]
Thus \(0<\underline M\leq M_t\leq\overline M<\infty\).

By the static equilibrium characterization, the automation boundary satisfies
\[
    \psi_{\gamma_t}^{K}
    =
    \left(
        \frac{L}{K(1-\gamma_t)}
        \int_0^{\gamma_t}(\psi_j^K)^{\sigma-1}dj
    \right)^{1/\sigma}.
\]
Equivalently,
\[
    1-\gamma_t
    =
    \frac{L}{K}
    \frac{M_t}{\psi_{\gamma_t}^{K}}.
\]
Since the automation boundary is weakly increasing, the marginal task at time
\(t\) has accumulated data from labor production until it reaches the automation
boundary. Define cumulative labor-produced data for not-yet-automated tasks by
\[
    H_t:=\int_0^t\frac{L}{1-\gamma_s}ds.
\]
Then
\[
    D_{\gamma_t,t}=D_{\gamma_t,0}+H_t,
\]
and the boundedness of the initial data stock implies
\[
    \underline B_t
    :=
    \underline D_0+H_t
    \leq
    D_{\gamma_t,t}
    \leq
    \overline D_0+H_t
    =:
    \overline B_t.
\]
Let
\[
    \underline B:=\underline B_T,
    \qquad
    \overline B:=\overline B_T.
\]
Therefore
\[
    \psi_{\gamma_t}^{K}=f_{\gamma_t}D_{\gamma_t,t}^\eta.
\]
It follows that
\[
    1-\gamma_t
    =
    \frac{L}{K}\frac{M_t}{f_{\gamma_t}}D_{\gamma_t,t}^{-\eta}.
\]
Using \(f_{\gamma_t}\in[\underline f,\overline f]\), the bounds on \(M_t\), and
the envelope for \(D_{\gamma_t,t}\),
\[
    \frac{L\underline M}{K\overline f}\overline B_t^{-\eta}
    \leq
    1-\gamma_t
    \leq
    \frac{L\overline M}{K\underline f}\underline B_t^{-\eta}.
\]
Since \(\dot H_t=L/(1-\gamma_t)\),
\(\dot{\underline B}_t=\dot{\overline B}_t=\dot H_t\).
The previous inequalities imply
\[
    \frac{K\underline f}{\overline M}\underline B_t^\eta
    \leq
    \dot{\underline B}_t
    =
    \dot{\overline B}_t
    \leq
    \frac{K\overline f}{\underline M}\overline B_t^\eta .
\]
Compare \(\underline B_t\) and \(\overline B_t\) to the two ODEs
\[
    \dot B^-_t=\frac{K\underline f}{\overline M}(B^-_t)^\eta,
    \qquad
    \dot B^+_t=\frac{K\overline f}{\underline M}(B^+_t)^\eta,
    \qquad
    B^-_T=\underline B,\quad B^+_T=\overline B .
\]
Because \(0<\eta<1\), these comparison equations have closed-form solutions
\[
    B^-_t
    =
    \left(
        \underline B^{1-\eta}
        +(1-\eta)\frac{K\underline f}{\overline M}(t-T)
    \right)^{\frac{1}{1-\eta}},
\]
and
\[
    B^+_t
    =
    \left(
        \overline B^{1-\eta}
        +(1-\eta)\frac{K\overline f}{\underline M}(t-T)
    \right)^{\frac{1}{1-\eta}}.
\]
The comparison theorem gives \(B^-_t\leq \underline B_t\) and
\(\overline B_t\leq B^+_t\). Combining this with the previous bounds on
\(1-\gamma_t\) yields
\[
    1-\gamma_t
    \geq
    \frac{L\underline M}{K\overline f}(B^+_t)^{-\eta},
    \qquad
    1-\gamma_t
    \leq
    \frac{L\overline M}{K\underline f}(B^-_t)^{-\eta},
\]
which is the desired pair of power-law bounds.
Finally, consider the stronger initial-condition restriction in the statement.
The invariant-region argument at the beginning of the proof gives a constant
\(R<\infty\) such that
\[
    1\leq
    \frac{\psi^K_j}{\psi^K_{\gamma_t}}
    \leq R
    \qquad \forall j\leq\gamma_t,\; t\geq0.
\]
Thus no asymptotic appeal to \(\cref{prop:complements}\) is needed to obtain the
relative-productivity bound. Since \(\gamma_t\) is weakly increasing,
\(\gamma_t\geq\gamma_0\) for all \(t\geq0\), and therefore
\[
    0<\underline M\leq M_t\leq \overline M<\infty
    \qquad \forall t\geq0,
\]
where one may take
\[
    (\underline M,\overline M)
    =
    \begin{cases}
    (\gamma_0,R^{\sigma-1}), & \sigma\geq1,\\
    (\gamma_0R^{\sigma-1},1), & \sigma<1.
    \end{cases}
\]
Moreover \(H_0=0\), so the initial envelopes are simply
\(\underline B_0=\underline D_0\) and
\(\overline B_0=\overline D_0\). Repeating the same comparison argument above
with \(T=0\), \(\underline B=\underline D_0\), and
\(\overline B=\overline D_0\) gives the displayed bounds for every \(t\geq0\).
\end{proof}
%
%\begin{comment}
%\subsubsection{Proof of \cref{prop:speed_no_acc}}
%
%\textbf{Simplifying Assumptions}
%The follow simplifying assumptions are maintained through out. All sectors are symmetric, so $\boldsymbol{f}\equiv f$ and $\boldsymbol{D}_{0}=D_{0}$. $\psi^{L}$ is normalized to $1$ for simplicity. Given the symmetry, the amount of data will be identical across tasks, and we use $D_{t}$ to denote the prevailing data level at time $t$.
%
%\textbf{No capital accumulation}
%Set $K_{t}=K$ fixed for simplicity as well. Immediately from equilibrium clearing, we have a relation between data and the fraction of automated sectors
%\[
%\gamma_{t}=1-\frac{L}{K}D_{t}^{-\eta}
%\]
%Since all tasks will be produced in equal quantities, the speed of data production is equal to the amount of data produced by labor
%\[
%\dot{D}_{t}=Y_{t}=\frac{L}{1-\gamma_{t}}
%\]
%Let $\chi_{t}\equiv1-\gamma_{t}$ be the share of tasks performed by labor. We have
%\begin{align*}
%\chi_{t} & =\frac{L}{K}D_{t}^{-\eta}\\
%\dot{D}_{t} & =\frac{L}{\chi_{t}}
%\end{align*}
%The ODE admits a closed form solution
%\[
%D_{t}=\left(D_{0}^{1-\eta}+\left(1-\eta\right)t\right)^{\frac{1}{1-\eta}}
%\]
%Immediately
%\[
%\chi_{t}=\frac{L}{K}\left(D_{0}^{1-\eta}+\left(1-\eta\right)t\right)^{-\frac{\eta}{1-\eta}}
%\]
%We conclude that the fraction of automated sectors follow an approximately power-law relationship with respect to time. 
%\end{comment}

\subsection{\cref{section:contagious}: Contagious  Automation}\label{app:contagion}

%\question{\small this subsection goes}} \label{app:contagion}

%%%%%%%%%%%%%%%%%%%%%%%%%%%%%%

\begin{proof}[{\bf Proof of \cref{prop:generalW_full_automation}}]
The argument is to show that connected spillovers eventually prevent any task
from remaining permanently data-poor. We first show that aggregate raw data
grows without bound. We then use uniform strong connectedness to propagate a
lower bound on effective data through the graphon in finitely many doubling
intervals. This eventually makes every task's effective data comparable to
aggregate raw data. Hence all capital productivities diverge and remain
uniformly comparable. The final step rules out a positive-measure set of
labor-produced tasks, because capital eventually becomes strictly cheaper than
labor on almost every task.

For \(S\subseteq X\) and \(q>0\), define
\[
    \Gamma_q(S)
    :=
    \left\{
    i\in X:
    \int_S W(i,j)\,dj\ge q
    \right\}.
\]

\textbf{\underline{Step 1: Showing aggregate raw data diverges.}}
Let
\[
    M_t:=\int_XD_{i,t}\,di
\]
denote aggregate raw data. A labor-only allocation with \(L_i=L\) for all \(i\)
is feasible and delivers final output \(\psi^L L\). Thus equilibrium output is
bounded below by \(\psi^L L\). Since the CES aggregate is no larger than
\(\int_X y_{i,t}\,di\) on the unit-measure task space,
\[
    \dot M_t=\int_X y_{i,t}\,di\ge \psi^L L.
\]
Therefore \(M_t\to\infty\).

\textbf{\underline{Step 2: Propagating effective data through the network.}}
We first prove a simple implication: if task \(i\) has effective data at least a
fixed fraction of aggregate data, then task \(i\)'s output is at least a fixed
fraction of aggregate raw-data growth.

Fix \(b\in(0,1]\), and define
\[
    \lambda(b):=\frac{\underline f}{\overline f}b^\eta .
\]
Since \(b\le1\), we have \(\lambda(b)\le1\). Suppose
\[
    \mathcal A_i(\bm D_t)\ge bM_t.
\]
Because \(W\le1\),
\[
    \mathcal A_j(\bm D_t)\le M_t
    \qquad\text{for a.e. }j.
\]
Thus, using \(\psi^K_{j,t}=f_j\mathcal A_j(\bm D_t)^\eta\),
\[
    \psi^K_{i,t}
    \ge
    \underline f b^\eta M_t^\eta
    =
    \lambda(b)\overline f M_t^\eta
    \ge
    \lambda(b)\psi^K_{j,t}
    \qquad\text{for a.e. }j.
\]
This is a comparison of capital productivities as technologies; it does not
assume that either task is actually produced by capital. Let
\[
    \chi_{j,t}:=
    \min\left\{\frac{r_t}{\psi^K_{j,t}},\frac{w_t}{\psi^L}\right\}
\]
be the unit cost of task \(j\). The productivity comparison implies
\[
    \frac{r_t}{\psi^K_{i,t}}
    \le
    \lambda(b)^{-1}\frac{r_t}{\psi^K_{j,t}}.
\]
Since \(\lambda(b)\le1\),
\[
    \chi_{i,t}
    \le
    \lambda(b)^{-1}\chi_{j,t}
    \qquad\text{for a.e. }j.
\]
CES demand therefore gives
\begin{equation}\label{eq:contagion-output-lower}
    y_{i,t}
    \ge
    \lambda(b)^\sigma y_{j,t}
    \qquad\text{for a.e. }j.
\end{equation}
Integrating \cref{eq:contagion-output-lower} over \(j\), we obtain
\begin{equation}\label{eq:contagion-output-aggregate-lower}
    y_{i,t}
    \ge
    \lambda(b)^\sigma\dot M_t.
\end{equation}

We now repeatedly use this one-step implication. It will be convenient to do a time change by using aggregate-data doubling time: over each interval on which
\(M_t\) doubles, any task with effective data comparable to \(M_t\) produces a
fixed fraction of the new aggregate raw data. This allows the lower
bound to move from one spillover layer to the next. Fix \(T_0\) with \(M_{T_0}>0\). Define the following doubling times
\[
    \tau_m:=\inf\{t\ge T_0:M_t\ge 2^mM_{T_0}\}
    \quad \text{for $m=0,1,2,\ldots$.}
\]
Then
\[
    M_{\tau_{m+1}}=2M_{\tau_m}.
\]

For each \(m\), define
\[
    h_r^m(i):=
    \frac{1}{M_{\tau_m}}
    \int_X W^{(r)}(i,j)D_{j,\tau_m}\,dj,
    \qquad r=1,\ldots,n.
\]
Since \(0\le W\le1\) and \(\mu(X)=1\), induction gives
\(0\le W^{(r)}\le1\) for every \(r\). Hence
\[
    0\le h_r^m(i)\le1
    \qquad\text{for a.e. }i.
\]
By the recursive definition of \(W^{(r)}\) and Tonelli's theorem,
\begin{equation}\label{eq:contagion-h-recursion}
    h_{r+1}^m(i)
    =
    \int_X W(i,k)h_r^m(k)\,dk.
\end{equation}

Uniform strong connectedness gives \(h_n^m(i)\ge\varepsilon\) for a.e. \(i\).
Set \(a_n:=\min\{\varepsilon,1\}\), and recursively set
\[
    a_r:=\frac{a_{r+1}}{2},
    \qquad r=n-1,\ldots,1.
\]
Let
\[
    S_r^m:=\{i\in X:h_r^m(i)\ge a_r\}.
\]
Then \(S_n^m=X\) up to a null set, and
\[
    \mathcal A_i(\bm D_{\tau_m})
    \ge
    a_1M_{\tau_m}
    \qquad\text{for a.e. }i\in S_1^m.
\]

We next show that if a task has high \((r+1)\)-step exposure, then a nontrivial amount of its direct spillover weight must fall on tasks with high \(r\)-step
exposure:
\begin{equation}\label{eq:contagion-layer-inclusion}
    S_{r+1}^m\subseteq \Gamma_{a_r}(S_r^m)
    \qquad\text{up to null sets, for each } r<n.
\end{equation}
Indeed, if \(i\in S_{r+1}^m\), then by \cref{eq:contagion-h-recursion},
\[
    a_{r+1}
    \le
    h_{r+1}^m(i)
    =
    \int_X W(i,k)h_r^m(k)\,dk.
\]
On \(X\setminus S_r^m\), \(h_r^m<a_r\), while on \(S_r^m\), \(h_r^m\le1\).
Therefore
\[
    h_{r+1}^m(i)
    \le
    a_r+\int_{S_r^m}W(i,k)\,dk.
\]
Since \(a_{r+1}=2a_r\), it follows that
\[
    \int_{S_r^m}W(i,k)\,dk\ge a_r,
\]
which proves \cref{eq:contagion-layer-inclusion}. The sets \(S_r^m\) identify where the spillover exposure is large. We now choose constants \(b_r\) so that membership in \(S_r^m\) implies an actual lower bound on effective data after \(r-1\) doubling intervals. Define constants \(b_1,\ldots,b_n\) by
\[
    b_1:=a_1,
    \qquad
    b_{r+1}:=
    \frac{a_r}{2}
    \left[
    \lambda\left(\frac{b_r}{2}\right)
    \right]^\sigma,
    \qquad r=1,\ldots,n-1.
\]
By construction, \(0<b_r\le1\) for every \(r\). We claim that for each \(r=1,\ldots,n\),
\begin{equation}\label{eq:contagion-induction-claim}
    \mathcal A_i(\bm D_{\tau_{m+r-1}})
    \ge
    b_rM_{\tau_{m+r-1}}
    \qquad\text{for a.e. }i\in S_r^m.
\end{equation}
The case \(r=1\) follows from the definition of \(S_1^m\) and \(b_1=a_1\). Suppose \cref{eq:contagion-induction-claim} holds for some \(r<n\). For
\(u\in[\tau_{m+r-1},\tau_{m+r}]\), data is nondecreasing and
\(M_u\le2M_{\tau_{m+r-1}}\). Hence, for a.e. \(i\in S_r^m\),
\[
    \mathcal A_i(\bm D_u)
    \ge
    \frac{b_r}{2}M_u.
\]
Applying \cref{eq:contagion-output-aggregate-lower} with \(b=b_r/2\), we get
\[
    y_{i,u}
    \ge
    \left[
    \lambda\left(\frac{b_r}{2}\right)
    \right]^\sigma
    \dot M_u
    \qquad\text{for a.e. }i\in S_r^m.
\]
Integrating over \(u\in[\tau_{m+r-1},\tau_{m+r}]\) and dropping the
nonnegative initial data term,
\[
    D_{i,\tau_{m+r}}
    \ge
    \left[
    \lambda\left(\frac{b_r}{2}\right)
    \right]^\sigma
    \left(M_{\tau_{m+r}}-M_{\tau_{m+r-1}}\right).
\]
Since \(M_{\tau_{m+r}}=2M_{\tau_{m+r-1}}\),
\begin{equation}\label{eq:contagion-raw-data-lower}
    D_{i,\tau_{m+r}}
    \ge
    \frac12
    \left[
    \lambda\left(\frac{b_r}{2}\right)
    \right]^\sigma
    M_{\tau_{m+r}}
    \qquad\text{for a.e. }i\in S_r^m.
\end{equation}

Now take \(i\in S_{r+1}^m\). By \cref{eq:contagion-layer-inclusion},
\[
    \int_{S_r^m}W(i,k)\,dk\ge a_r.
\]
Using \cref{eq:contagion-raw-data-lower},
\[
\begin{aligned}
    \mathcal A_i(\bm D_{\tau_{m+r}})
    &=
    \int_X W(i,k)D_{k,\tau_{m+r}}\,dk \\
    &\ge
    \int_{S_r^m} W(i,k)D_{k,\tau_{m+r}}\,dk \\
    &\ge
    \frac12
    \left[
    \lambda\left(\frac{b_r}{2}\right)
    \right]^\sigma
    M_{\tau_{m+r}}
    \int_{S_r^m}W(i,k)\,dk \\
    &\ge
    \frac{a_r}{2}
    \left[
    \lambda\left(\frac{b_r}{2}\right)
    \right]^\sigma
    M_{\tau_{m+r}} \\
    &=
    b_{r+1}M_{\tau_{m+r}}.
\end{aligned}
\]
This gives \cref{eq:contagion-induction-claim}.

Since \(S_n^m=X\) up to a null set, \cref{eq:contagion-induction-claim} with
\(r=n\) gives, for every \(m\),
\begin{equation}\label{eq:contagion-doubling-bound}
    \mathcal A_i(\bm D_{\tau_{m+n-1}})
    \ge
    b_nM_{\tau_{m+n-1}}
    \qquad\text{for a.e. }i\in X.
\end{equation}

We now extend the bound from doubling times to all sufficiently large times.
Take any \(t\ge\tau_{n-1}\), and let \(s\ge n-1\) be such that
\(t\in[\tau_s,\tau_{s+1}]\). Set
\[
    m=s-n+1.
\]
Then \(m+n-1=s\), so \cref{eq:contagion-doubling-bound}, applied with this
starting \(m\), gives
\[
    \mathcal A_i(\bm D_{\tau_s})
    \ge
    b_nM_{\tau_s}
    \qquad\text{for a.e. }i\in X.
\]
Since data is nondecreasing,
\[
    \mathcal A_i(\bm D_t)
    \ge
    \mathcal A_i(\bm D_{\tau_s}).
\]
And since \(t\in[\tau_s,\tau_{s+1}]\),
\[
    M_t\le M_{\tau_{s+1}}=2M_{\tau_s}.
\]
Therefore
\[
    \mathcal A_i(\bm D_t)
    \ge
    \frac{b_n}{2}M_t
    \qquad\text{for a.e. }i\in X.
\]
Since \(W\le1\), also
\[
    \mathcal A_i(\bm D_t)
    \le
    M_t
    \qquad\text{for a.e. }i\in X.
\]
Thus, for all \(t\ge\tau_{n-1}\),
\begin{equation}\label{eq:contagion-effective-data-comparable}
    \frac{b_n}{2}M_t
    \le
    \mathcal A_i(\bm D_t)
    \le
    M_t
    \qquad\text{for a.e. }i\in X.
\end{equation}

\textbf{\underline{Step 3: Comparing productivities.}}
By \cref{eq:contagion-effective-data-comparable}, every task's effective data is
eventually bounded above and below by fixed multiples of \(M_t\). Since
\(M_t\to\infty\), every task's capital productivity diverges, and cross-task
capital productivities remain uniformly comparable.

Let
\[
    \underline\psi_t:=\operatorname*{ess\,inf}_{i\in X}\psi^K_{i,t}.
\]
Then
\[
    \underline\psi_t
    \ge
    \underline f
    \left(\frac{b_n}{2}M_t\right)^\eta
    \to\infty.
\]
Moreover,
\[
    \operatorname*{ess\,sup}_{i\in X}\psi^K_{i,t}
    \le
    \overline f M_t^\eta
    \le
    C_\psi\underline\psi_t
\]
for some finite constant \(C_\psi\).

\textbf{\underline{Step 4: Ruling out a positive labor share.}}
Suppose, toward a contradiction, that full automation fails. Then there exist
\(\delta>0\) and dates \(t_k\to\infty\) such that
\[
    \mu(\mathcal L_{t_k})\ge\delta.
\]
At such dates, CES demand gives \(y_{i,t_k}=C_{t_k}c_{i,t_k}^{-\sigma}\), where
\(c_{i,t_k}\) is the unit cost of task \(i\). On \(\mathcal K_{t_k}\),
\(c_{i,t_k}=r_{t_k}/\psi^K_{i,t_k}\), so capital market clearing gives
\[
    K
    =
    C_{t_k}r_{t_k}^{-\sigma}
    \int_{\mathcal K_{t_k}}(\psi^K_{i,t_k})^{\sigma-1}\,di.
\]
On \(\mathcal L_{t_k}\), \(c_{i,t_k}=w_{t_k}/\psi^L\), so labor market clearing
gives
\[
    L
    =
    C_{t_k}w_{t_k}^{-\sigma}
    \mu(\mathcal L_{t_k})(\psi^L)^{\sigma-1}.
\]
Dividing the two market-clearing equations,
\[
    \frac{r_{t_k}}{w_{t_k}}
    =
    \left[
    \frac{L}{K}
    \frac{
    \int_{\mathcal K_{t_k}}(\psi^K_{i,t_k})^{\sigma-1}\,di
    }{
    \mu(\mathcal L_{t_k})(\psi^L)^{\sigma-1}
    }
    \right]^{1/\sigma}.
\]
By Step 3, for some finite \(C_\sigma\),
\[
    \int_{\mathcal K_{t_k}}(\psi^K_{i,t_k})^{\sigma-1}\,di
    \le
    C_\sigma\underline\psi_{t_k}^{\sigma-1}.
\]
If \(\sigma\ge1\), this follows from
\(\psi^K_{i,t_k}\le C_\psi\underline\psi_{t_k}\). If \(\sigma<1\), it follows
from \(\psi^K_{i,t_k}\ge\underline\psi_{t_k}\). Since
\(\mu(\mathcal L_{t_k})\ge\delta\),
\[
    \frac{r_{t_k}}{w_{t_k}}
    \le
    C_r\underline\psi_{t_k}^{(\sigma-1)/\sigma}
\]
for some finite \(C_r\). Therefore, for a.e. \(i\),
\[
    \frac{r_{t_k}/\psi^K_{i,t_k}}{w_{t_k}/\psi^L}
    =
    \frac{r_{t_k}}{w_{t_k}}
    \frac{\psi^L}{\psi^K_{i,t_k}}
    \le
    C_r\psi^L\underline\psi_{t_k}^{-1/\sigma}
    \to0.
\]
For all sufficiently large \(k\), capital is strictly cheaper than labor for
almost every task. Hence \(\mu(\mathcal L_{t_k})=0\), contradicting
\(\mu(\mathcal L_{t_k})\ge\delta\). Thus \(\mu(\mathcal L_t)\to0\).
\end{proof}

%%%%%%%%%%%%%%%%%%%%%%%%%%%%%%%%%%%%%%5

\subsubsection{Structure and Long-Run Behavior of Automation Network} \label{network_structure_long_run}

 We next examine how the structure of spillovers across tasks shapes the long-run composition of task production and data. In the case where $\sigma = 1/\eta$ we have a complete characterization:

%Finally, we investigate the speed of automation. We find automation to be slow in the absence of spillovers: it roughly linearly in log time across a wide range of parameter choices. We also find that the degree of spillovers can dramatically affect the speed of automation in the case where tasks are sufficiently substitutable.

%\begin{comment}
%    The long-run data stock of task $i$ (and hence also the long-run quantity of task $i$ production) is proportional to a nonlinear centrality measure for the  graphon $W$ given the data stock $\pmb{D}^*$ that solves the following functional equation: 
%\[
%\underbrace{\pmb{D}_{it}}_{\propto \pmb{y}_{it}} \propto f_i^{\sigma} \cdot \Big(\int W(i,k) \pmb{D}_{kt} dk \Big)^{\eta \sigma} 
%\]
%
%The long-run accumulation of data for task $i$ hinges on the (i) degree to which task $i$ benefits from data as captured by $f_i$; and the (ii) amenability of task $i$ in utilizing the data from other tasks as captured by $\int W(i,j) \pmb{D}_{kt} dk$, noting that latter is an endogenous object since it features the data stock of other tasks. 
%\end{comment}

\begin{proposition}\label{prop:limit_eigenfunction}
If $\sigma = 1/\eta$ then for any initial stock of data $\pmb{D}_0$ and any vector $\pmb{f}$, the long-run data stock and output is as follows: 
\[
\lim_{t \to \infty} \pmb{D}_t \simeq A_t \cdot \mathcal{F}_1 \quad \text{for some $(A_t)_t$ s.t. $\lim_{t \to \infty} A_t = \infty$.}
\]
where $\mathcal{F}_1$ is the right principal eigenfunction of the linear operator $\mathcal{L}$:
    \[
    \mathcal{L} \mathcal{F}_1 = \lambda_1 \mathcal{F}_1 
    \quad \text{where} 
    \quad 
    \mathcal{L}_i \pmb{D}  = f_i^{\sigma} \cdot \int W(i,k) \pmb{D}_k dk.
    \]
\end{proposition}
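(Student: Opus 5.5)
The plan is to show that when $\sigma=1/\eta$ the data law of motion becomes, asymptotically, a linear ODE in $\pmb{D}$ driven by $\mathcal{L}$, up to a scalar time change. A Krein--Rutman/Perron--Frobenius argument for the positive operator $\mathcal{L}$ then delivers convergence of the direction $\pmb{D}_t/\|\pmb{D}_t\|$ to $\mathcal{F}_1$.

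\textbf{Step 1: asymptotic full automation.} First argue that eventually every task is produced with capital. With spillovers I would invoke \cref{prop:generalW_full_automation} when $W$ is uniformly strongly connected; this is also the natural hypothesis for $\mathcal{F}_1$ to be a strictly positive simple principal eigenfunction. Otherwise, in the autarky-type case, use the argument of \cref{prop:complements}, which covers $\sigma\le 1/\eta$. On the automated set, the capital allocation of \cref{lemma:static eq lemma} gives
\[
y_{it}=\frac{(\psi^K_{it})^{\sigma}}{\int_{\mathcal K_t}(\psi^K_{jt})^{\sigma-1}dj}\,K
= \kappa_t\, f_i^{\sigma}\,\mathcal{A}_i(\pmb{D}_t)^{\eta\sigma},
\]
where $\kappa_t$ is a scalar common to all tasks. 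The key observation is that $\eta\sigma=1$, so $y_{it}=\kappa_t(\mathcal{L}\pmb{D}_t)_i$ exactly on automated tasks. Hence $\dot{\pmb{D}}_t=\kappa_t\mathcal{L}\pmb{D}_t+\pmb{e}_t$, where the error $\pmb{e}_t$ is supported on $\mathcal{L}_t$ (labor tasks). I would bound $\pmb{e}_t$ by $\frac{L\psi^L}{1-\gamma_t}\mu(\mathcal{L}_t)\le L\psi^L$, which is negligible relative to $\|\pmb{D}_t\|\to\infty$.

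\textbf{Step 2: time change and spectral convergence.} Define $s(t)=\int_0^t\kappa_u du$. Then $\pmb{D}$ solves $d\pmb{D}/ds=\mathcal{L}\pmb{D}+\tilde{\pmb{e}}$. Since $\mathcal{L}$ is a positive integral operator with bounded kernel $f_i^\sigma W(i,k)$, it is compact on $L^2$ or $C(X)$. Under connectedness some power $\mathcal{L}^n$ has kernel bounded below by $\underline f^{\,n\sigma}\varepsilon$, so $\mathcal{L}$ is irreducible. Krein--Rutman then gives a simple dominant eigenvalue $\lambda_1>0$, a strictly positive $\mathcal{F}_1$, a strictly positive left eigenfunction $\mathcal{G}_1$, and a spectral gap $\operatorname{Re}\lambda<\lambda_1$ for all other spectrum. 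Writing
\[
\pmb{D}(s)=e^{s\mathcal{L}}\pmb{D}_0+\int_0^s e^{(s-u)\mathcal{L}}\tilde{\pmb{e}}_u\,du,
\]
the semigroup satisfies $e^{s\mathcal{L}}\pmb{x}=e^{\lambda_1 s}\langle \mathcal{G}_1,\pmb{x}\rangle\mathcal{F}_1+O(e^{(\lambda_1-\delta)s})$. Because $\langle\mathcal{G}_1,\pmb{D}_0\rangle>0$ for any nonnegative nonzero $\pmb{D}_0$, this yields $\pmb{D}_t\approx A_t\mathcal{F}_1$ with $A_t=e^{\lambda_1 s(t)}\langle\mathcal{G}_1,\cdot\rangle\to\infty$.

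\textbf{Main obstacle.} The hard part is controlling the time change together with the error. If $s(t)\to\infty$, the gap argument applies. Note, however, that $\kappa_t=K/\int(\psi^K)^{\sigma-1}$ decays like $\|\pmb D_t\|^{-\eta(\sigma-1)}$, so $s(t)$ might converge. I would rule this out using the identity $\int_X \dot D_{it}\,di\ge \psi^LL$ (Step 1 of the proof of \cref{prop:generalW_full_automation}), which shows that $\|\pmb{D}\|$ grows without bound. If $s(t)$ stayed bounded, then $\pmb{D}$ would remain bounded in $s$-time, up to the additive error, which grows at most linearly in $t$. To make this comparison precise I would track the projection $\langle\mathcal{G}_1,\pmb{D}\rangle$ and the ratio $\|\pmb{D}-\langle\mathcal{G}_1,\pmb{D}\rangle\mathcal{F}_1\|/\langle\mathcal{G}_1,\pmb{D}\rangle$. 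I would then show that this ratio decays along $s$, while the error contribution, being nonnegative and of relative size $O(1/\|\pmb D\|)$, is absorbed. As a backup, I would work directly with the projective (Hilbert-metric) contraction of the positive operator $\mathcal{L}^n$ (Birkhoff's theorem), which avoids needing $s\to\infty$ quantitatively beyond $\|\pmb D_t\|\to\infty$.
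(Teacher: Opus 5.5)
Your proposal is correct and follows essentially the paper's route. With $\sigma\eta=1$, automated-task output is a common scalar times $(\mathcal{L}\pmb{D}_t)_i$; a time change by $\int\kappa_t\,dt$ (the paper's $\int\alpha$) reduces the dynamics to $d\pmb{D}/ds=\mathcal{L}\pmb{D}$; and Krein--Rutman plus the spectral gap give convergence of the shape to $\mathcal{F}_1$.

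Your version is more careful than the paper's proof, which ignores the labor-task error and never checks that the time change diverges. Your contradiction argument for $\int\kappa_t\,dt=\infty$ is not yet closed, and it closes as follows: if $\int\kappa_t\,dt$ stayed bounded, then $\|\pmb{D}_t\|=O(t)$, hence $\kappa_t\gtrsim t^{-(1-\eta)}$, which is not integrable. Also drop the autarky fallback. There $\mathcal{L}$ is a multiplication operator with no principal eigenfunction, and the relative-productivity bound in \cref{prop:complements} degenerates at $\sigma\eta=1$.
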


\cref{prop:limit_eigenfunction} fully characterizes the \emph{composition} of data and long-run output as a function of (i) degree to which task $i$ benefits from data as captured by $f_i$; and the (ii) eigenfunction centrality of task $i$ as captured by the leading right eigenfunction $\mathcal{F}_1$. %The proof  relies on the Krein-Rutman theorem (a generalization of Perron-Fronbenius to Banach spaces) to guarantee the existence of a leading (right) eigenfunction $\mathcal{F}_1$ and a strictly positive spectral gap. 

The characterization in \cref{prop:limit_eigenfunction} allows us to show that in the long run, the composition of data must converge in proportion to a task's leading eigenfunction centrality.\footnote{We believe the result continues to hold in the case where $\sigma \neq 1/\eta$. Here, the operator $\mathcal{L}$ is no longer linear, and convergence requires pairing more recent results on positivity of eigenfunctions of nonlinear (convex when $\sigma > 1/\eta$; concave when $\sigma < 1/\eta$) operators with those of monotone dynamical systems.} Indeed, in the special case that $\pmb{f}$ is constant, the long-run composition of data and production is exactly proportional to a continuous analog of the eigenvector centrality measure that is used widely across economics and network science (see \cite{golub2025eigenvalues} for a survey). When $\sigma > 1/\eta$, the amenability of task in utilizing the data from others tasks is \emph{amplified} in equilibrium. Thus, long-run differences in the data stock should be more dispersed than is predicted by eigenfunction centrality of $W$. By contrast, when $\sigma < 1/\eta$, this is \emph{dampened} and long-run differences in the data stock should be less dispersed than is predicted by the eigenfunction centrality. \label{appendix:contagion_limit}

\begin{proof}[{\bf Proof of \cref{prop:limit_eigenfunction}}]

Notice 
{\allowdisplaybreaks \begin{align*}
        \lim_{t \to \infty}\cfrac{D_{it}}{D_{jt}}  
        &= \lim_{t \to \infty}\cfrac{y_{it}}{y_{jt}}  \tag{L'Hopital's} \\
        &= \lim_{t \to \infty}\cfrac{\psi^K_{it}K_{it}}{\psi^K_{jt}K_{jt}} \tag{tasks produced by capital asymptotically} \\
        &= \lim_{t \to \infty} \left(\cfrac{\psi^K_{it}}{\psi^K_{jt}}\right)^{\sigma}
        \tag{$K_{it}=\frac{\left(\psi_{it}^{K}\right)^{\sigma-1}}{\int_{0}^{\gamma_{t}}\left(\psi_{jt}^{K}\right)^{\sigma-1}dj}K$}
        \\
        &= \lim_{t \to \infty} \left(\cfrac{f_i}{f_j}\right)^{\sigma} \left(\cfrac{\mathcal{A}_i(\pmb{D}_t)}{\mathcal{A}_j(\pmb{D}_t)}\right)^{\eta \sigma} \\
        &= \lim_{t \to \infty} \left(\cfrac{f_i}{f_j}\right)^{\sigma} \left(\cfrac{\int W(i,k) D_{kt} dk}{\int W(j,k) D_{kt} dk}
        \right)^{\eta \sigma} 
\end{align*}}
Hence we are looking for a data vector $\pmb{D}$ that solves: 
\[
\pmb{D}_i = C_t \cdot f_i^{\sigma} \Big( \cdot \int_X W(i,k) \pmb{D}_k dk\Big)^{\eta \sigma}
\]
where $C$ is a constant independent of the task index $i \in X$. We can define the operator 
\[
\mathcal{L} : \mathcal{D} \to \mathcal{D} \quad \text{as} \quad 
\mathcal{L} \pmb{D}_i = f_i^{\sigma} \Big(\int^1_0  W(i,k) \pmb{D}_k dk\Big)^{\eta \sigma}
\]
so we have the Fredholm integral equation of the second kind: 
\[
\pmb{D} = C \cdot \mathcal{L} \pmb{D}
\]
and in the special case in which $\sigma = 1/\eta$, we have: 
\[
\pmb{D}_i = C \cdot f_i^{\sigma} \cdot \Big(  \int_X W(i,k) \pmb{D}_k dk\Big)
\]
and $\mathcal{L}$ is linear, compact, but not necessarily self-adjoint since spillovers are not necessarily symmetric. The spectral theorem for compact operators guarantees the existence of solutions. Substituting, we have the equation 
\[
\pmb{D} = C \cdot \mathcal{L}\pmb{D} \implies \mathcal{L} \pmb{D} = \underbrace{\lambda}_{=: 1/C} \pmb{D}.
\]
What is $C$? Equivalently, what is the corresponding eigenvalue of the graphon $W$? We will show that it converges to the \emph{principal eigenfunction} of the operator $\mathcal{L}$. We start with the following lemma that invokes a basic fact from spectral theory: 

\begin{lemma}\label{lem:krein_rutman}
    The operator $\mathcal{L}$ admits a unique principal eigenvalue $\lambda_1$ and associated eigenfunction $\mathcal{F}_1$ with strictly postiive spectral gap i.e., $\mathsf{Re}(\lambda_1 - \lambda_k) > 0$ for any $k > 1$. 
\end{lemma}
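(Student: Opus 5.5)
I plan to obtain \cref{lem:krein_rutman} from the Krein--Rutman theorem, which extends Perron--Frobenius to compact positive operators on ordered Banach spaces. I will work on $E := L^2(X)$, or $C(X)$ if continuity is available, with cone $P := \{g \geq 0\}$. Write $\mathcal{L} = M_{f^\sigma} T_W$, where $M_{f^\sigma}$ is multiplication by $f^\sigma$ and $(T_W g)(i) = \int_X W(i,k) g(k)\,dk$.

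\textbf{Compactness and positivity.} Since $W \in L^2(X^2)$ (it is bounded by $1$), $T_W$ is Hilbert--Schmidt and hence compact. Since $\pmb{f}$ is bounded above and below, $M_{f^\sigma}$ is bounded, so $\mathcal{L}$ is compact. Since $W \geq 0$ and $f > 0$, $\mathcal{L}$ maps $P$ into $P$.

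\textbf{Strict positivity of a power.} Here I use uniform strong connectedness (\cref{def:uniform_strong_connectedness}). The kernel of $\mathcal{L}^n$ is
\[
f_i^\sigma \int W(i,k_1) f_{k_1}^\sigma W(k_1,k_2) \cdots \, dk.
\]
It is bounded below by $\underline{f}^{n\sigma}\, W^{(n)}(i,j) \geq \underline{f}^{n\sigma}\varepsilon > 0$ a.e. Hence for every nonzero $g \in P$,
\[
\mathcal{L}^n g \;\geq\; \underline{f}^{n\sigma}\varepsilon \int g \;>\; 0 \quad \text{a.e.}
\]
So $\mathcal{L}^n$ maps nonzero elements of the cone to functions bounded below by a positive constant multiple of $\mathbf 1$. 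On $C(X)$ these lie in the interior of the cone, which makes $\mathcal{L}$ strongly positive in the Krein--Rutman sense. On $L^2$ the same property is the irreducibility/uniform-positivity condition required by the Jentzsch--de Pagter version of the theorem.

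\textbf{Applying Krein--Rutman.} The spectral radius satisfies
\[
r(\mathcal{L})^n = r(\mathcal{L}^n) \;\geq\; \underline{f}^{n\sigma}\varepsilon \;>\; 0,
\]
using the test function $\mathbf{1}$. The strong Krein--Rutman theorem then yields that $\lambda_1 := r(\mathcal{L})$ is an algebraically simple eigenvalue with a strictly positive eigenfunction $\mathcal{F}_1$. It also gives that $\lambda_1$ is the unique eigenvalue with a positive eigenfunction and that $|\lambda| < \lambda_1$ for every other eigenvalue $\lambda$.

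\textbf{Spectral gap.} Because $\mathcal{L}$ is compact, its nonzero spectrum consists of eigenvalues accumulating only at $0$. Hence $\sup_{k>1}|\lambda_k| < \lambda_1$ is attained, and therefore $\mathsf{Re}(\lambda_1-\lambda_k) \geq \lambda_1 - |\lambda_k| > 0$ uniformly in $k$.

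\textbf{Main obstacle.} The key difficulty is that the lemma needs the strict inequality $|\lambda| < \lambda_1$ for all other eigenvalues, not just $\leq$. Weak positivity alone allows peripheral eigenvalues $\lambda_1 e^{2\pi i/p}$, as in cyclic graphs. What excludes them is the uniform lower bound on $\mathcal{L}^n$, which says the operator is primitive rather than merely irreducible.

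My first line of attack works on $C(X)$, or on $L^\infty$ with the interior cone, where strong positivity applies directly. Failing that, I would argue by hand: suppose $\mathcal{L}h = \lambda h$ with $|\lambda| = \lambda_1$. Then
\[
|h| \;\leq\; \lambda_1^{-n}\mathcal{L}^n|h|,
\]
which forces $|h| = c\,\mathcal{F}_1$. Equality in the triangle inequality under the strictly positive kernel of $\mathcal{L}^n$ then forces the phase of $h$ to be constant a.e., and hence $\lambda = \lambda_1$.

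A second point is that $W$ is only measurable, so $\mathcal{L}$ need not map into $C(X)$. I would therefore state the result in $L^2$ and use the lower bound $\mathcal{L}^n g \geq c\,\|g\|_1\mathbf{1}$ as the substitute for interior-of-cone positivity.
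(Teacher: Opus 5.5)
Your proposal is correct and follows the same route as the paper. The paper's proof simply asserts that $\mathcal{L}$ is positive, compact, and irreducible (from connectedness of $W$) and invokes Krein--Rutman; you supply the details it omits, namely Hilbert--Schmidt compactness, the empty interior of the $L^2$ cone (handled via Jentzsch/de Pagter), the positivity of $\mathcal{L}^n$ coming from uniform strong connectedness, and compactness to make the gap uniform.

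One small remark: the lemma only asks for a gap in real parts, so irreducibility alone would already suffice. The cyclic peripheral eigenvalues $\lambda_1 e^{2\pi i k/p}$, $k \not\equiv 0 \pmod p$, have real part strictly below $\lambda_1$. The primitivity you flag as the main obstacle therefore buys the stronger modulus gap $|\lambda_k|<\lambda_1$, but is not strictly necessary.
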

This follows by observing that $\mathcal{L}$ is a positive ($f_i$, $W$ are both positive), compact, and irreducible operator (since $W$ is connected by assumption) and applying the Krein-Rutman theorem which is an analog of the Perron-Frobenius theorem for infinite-dimensional Banach spaces.

Now consider our law of motion for the data \emph{ratio} we derived above:  
\begin{equation}
    \cfrac{\partial \pmb{D}_{it} / \partial t}{\partial \pmb{D}_{jt} / \partial t} = \cfrac{\mathcal{L}_i \pmb{D}_t}{\mathcal{L}_j \pmb{D}_t}.
\end{equation}

This can be rearranged to separate terms by the task index:
\[
\frac{1}{\mathcal{L}_i\pmb{D}_t} \dot{\pmb{D}_{it}} = \frac{1}{\mathcal{L}_j\pmb{D}_t} \dot{\pmb{D}_{jt}}
\]
Since this equality must hold for all pairs $i, j \in X$, the expression must be equal to a quantity that is independent of the task index $i$. Call this scalar function $\alpha(t)$ (that might depend on time): 
\[
\frac{1}{\mathcal{L}_i\pmb{D}_t} \dot{\pmb{D}}_{it} = \alpha(t)
\]
This gives the following infinite-dimensional ODE for our system: 
\begin{equation}
    \dot{\pmb{D}} = \alpha(t) \cdot \mathcal{L}\pmb{D}
    \label{eq:underlying_dynamic}
\end{equation}

The scalar function $\alpha(t)$ in \eqref{eq:underlying_dynamic} affects the \emph{speed} of evolution along the solution trajectory but does not alter the \emph{shape} of the trajectory i.e., the composition of the data. We thus perform a time change by writing $\tau(t) = \int^t \alpha(s) ds$, which simplifies the equation to:
\begin{equation}
    \frac{d \pmb{D}_{\tau}}{d\tau} = \mathcal{L}\pmb{D}_{\tau}
    \label{eq:rescaled_dynamic}
\end{equation}
We solve this equation by expanding the solution in the basis of the eigenfunctions $\{\mathcal{F}_k\}_{k=1}^\infty$ of $\mathcal{L}$. Fix some large time $T$ for which our above equation holds (taking nested limits if you prefer). Then given the data stock $\pmb{D}_{T}$, since $\mathcal{L}$ is a linear operator, we we have the Riesz projection onto the dominant subspace and the `residual': %\footnote{Recall that $\mathcal{L}$ is not self-adjoint because our graphon $W$ need not be symmetric. This will not in general be orthogonal. See \cite{fabian2011banach} for a textbook treatment.} 
\[\pmb{D}_T =  \underbrace{c_1 \mathcal{F}_1}_{\substack{\text{Riesz} \\ \text{projection} \\ \text{$P\pmb{D}_T$}}} + w
\]
where $w$ is the residual $(I - P)D_T$ and $P$ is the Riesz projection associated with the principal eigenvalue. Next, since $\mathcal{L}$ is time-invariant, the solution to our ODE \eqref{eq:rescaled_dynamic} is given by the operator exponential: 
\begin{align*}
\pmb{D}_{\tau - \tau(T)} &= e^{\tau\mathcal{L}} \pmb{D}_T       \\
&= c_1 e^{\tau \mathcal{L}} \mathcal{F}_1 + e^{\tau \mathcal{L}} w \\
&= c_1 e^{\lambda_1 \tau} \mathcal{F}_1 + e^{\tau \mathcal{L}} w \tag{$\mathcal{F}_1$ is an eigenfunction}
\end{align*}

Next notice that we can bind the second term: define $\gamma:= \sup \{ \mathsf{Re}(\lambda): \lambda \in \sigma(\mathcal{L}), \lambda \neq \lambda_1\}$ so for any $\epsilon < \lambda_1 - \gamma$ we have: 
\[
\|e^{\tau \mathcal{L}} w  \| \leq M e^{(\epsilon + \gamma)\tau} \|w\|.
\]
Now applying \cref{lem:krein_rutman}, we have 
\[
\pmb{D}_{\tau - \tau(T)} = e^{\lambda_1 \tau} \Big[c_1 \mathcal{F}_1 + \underbrace{\cfrac{e^{\tau \mathcal{L}}w}{e^{\lambda_1 \tau}}}_{\to 0}]
\]
so the shape of the data distribution $\pmb{D}_t$ converges to that of the principal eigenfunction $\mathcal{F}_1$ of the operator $\mathcal{L}$:
\[
\cfrac{D_{\tau - \tau(T)}}{\|D_{\tau - \tau(T)}\|} \to \cfrac{\mathcal{F}_1}{\|\mathcal{F}_1\|}
\]
and we indeed have 
\[
\pmb{D}_t \simeq A_t \cdot \mathcal{F}_1
\]
where $A_t$ is a time-dependent scalar tracking the aggregate data level.
\end{proof}

%%%%%%%%%%%%%%%%%%%%%%%%%%%%%%%%%%%%%%%

\subsection{\cref{sec:efficiency}: Planner's Problem}
\begin{proof}[{\bf Proof of \cref{prop:generic_inefficiency}}] \label{sec:pf_generic_inefficiency}

The proof proceeds as follows. Throughout the proof, we will compare the planner's optimized allocation $k_{it}$ with an auxiliary allocation that is myopically optimal given the same data stock as the planner's, $\bar{k}_{it}$. We will then compare $\bar{k}_{it}$ to the fully myopic allocation starting at $t=0$, which we denote by $k_{it}^{m}$. We will show that these paths exhibit a \emph{weak ordering that holds for all $t\in\left[0,T\right]$} for $\sigma\geq1$, but not generically when $\sigma<1$. The reason for comparing to the static myopic allocation is that the fully myopic allocation will in general feature a different data stock as the planner's for any $t>0$.

We will first show the planner's equilibrium path coincides with the market's if and only if $\sigma=1/\eta$. Then, we will sign the direction of bias to show that the planner leans into the market's allocation for capital. In other words, if the market allocates more capital towards unproductive tasks than productive ones, as in the complements case, the planner would do so even more aggressively.

\paragraph{Generic inefficiency.}

The planner's current value Hamiltonian is 
\[
\mathcal{H}\left(\boldsymbol{k}_{t},\boldsymbol{D}_{t},\boldsymbol{\lambda}_{t},\mu_{t}\right)=Y_{t}+\sum_{i=1}^{2}\lambda_{it}y_{it}+\mu_{t}\left(K-\sum_{i\in\{h,l\}}\frac{1}{2}k_{it}\right)
\]
We will take $i$ to mean the $h$ task and $j$ to mean the $l$ task. The FOCs are 
\begin{align*}
\frac{1}{2}Y_{t}^{\frac{1}{\sigma}}y_{it}^{-\frac{1}{\sigma}}\psi^{K}\left(D_{it}\right)+\lambda_{it}\psi^{K}\left(D_{it}\right)-\frac{1}{2}\mu_{t} & =0\\
\frac{1}{2}Y_{t}^{\frac{1}{\sigma}}y_{it}^{-\frac{1}{\sigma}}\left(\psi^{K}\left(D_{it}\right)\right)^{\prime}k_{it}+\lambda_{it}\left(\psi^{K}\left(D_{it}\right)\right)^{\prime}k_{it} & =\rho\lambda_{it}-\dot{\lambda}_{it}
\end{align*}
The planner's relative task production is thus 
\[
\frac{y_{it}}{y_{jt}}=\left(\frac{\frac{1}{2}\frac{\mu_{t}}{\psi^{K}\left(D_{jt}\right)}-\lambda_{jt}}{\frac{1}{2}\frac{\mu_{t}}{\psi^{K}\left(D_{it}\right)}-\lambda_{it}}\right)^{\sigma}
\]
We know in the myopic case, given the same data stock, the ratio would be 
\[
\frac{\bar{y}_{it}}{\bar{y}_{jt}}=\left(\frac{\psi^{K}\left(D_{it}\right)}{\psi^{K}\left(D_{jt}\right)}\right)^{\sigma}
\]
So the planner allocates more capital to task $i$ compared to the myopic allocation with the same data stock if and only if 
\[
\lambda_{jt}\psi^{K}\left(D_{jt}\right)\leq\lambda_{it}\psi^{K}\left(D_{it}\right)
\]
We know the relationship holds with equality when $t=T$ since $\lambda_{iT}=\lambda_{jT}=0$. We will evolve the system backwards in time. Differentiating the RHS, we have 
\begin{align*}
\left(\psi_{it}^{K}\right)^{\prime}y_{it}\lambda_{it}+\psi_{it}^{K}\dot{\lambda}_{it} & =\left(\psi_{it}^{K}\right)^{\prime}y_{it}\lambda_{it}+\psi_{it}^{K}\left(\rho\lambda_{it}-\frac{1}{2}Y_{t}^{\frac{1}{\sigma}}y_{it}^{-\frac{1}{\sigma}}\left(\psi_{it}^{K}\right)^{\prime}k_{it}-\lambda_{it}\left(\psi_{it}^{K}\right)^{\prime}k_{it}\right)\\
 & =\left(\psi_{it}^{K}\right)^{\prime}y_{it}\lambda_{it}+\psi_{it}^{K}\left(\rho\lambda_{it}-\left(\frac{1}{2}Y_{t}^{\frac{1}{\sigma}}y_{it}^{-\frac{1}{\sigma}}+\lambda_{it}\right)\left(\psi_{it}^{K}\right)^{\prime}k_{it}\right)\\
 & =\left(\psi_{it}^{K}\right)^{\prime}y_{it}\lambda_{it}+\psi_{it}^{K}\rho\lambda_{it}-\frac{1}{2}\mu_{t}\left(\psi_{it}^{K}\right)^{\prime}k_{it}
\end{align*}
The LHS is symmetric. The two sides are equal at $t=T$ if and only if 
\begin{align*}
\left(\psi_{jt}^{K}\right)^{\prime}y_{jt}\lambda_{jt}+\psi_{jt}^{K}\rho\lambda_{jt}-\frac{1}{2}\mu_{t}\left(\psi_{jt}^{K}\right)^{\prime}k_{jt} & =\left(\psi_{it}^{K}\right)^{\prime}y_{it}\lambda_{it}+\psi_{it}^{K}\rho\lambda_{it}-\frac{1}{2}\mu_{t}\left(\psi_{it}^{K}\right)^{\prime}k_{it}\\
\left(\psi_{jt}^{K}\right)^{\prime}k_{jt}\left(\psi_{jt}^{K}\lambda_{jt}-\frac{1}{2}\mu_{t}\right) & =\left(\psi_{it}^{K}\right)^{\prime}k_{it}\left(\psi_{it}^{K}\lambda_{it}-\frac{1}{2}\mu_{t}\right)\tag{\ensuremath{\iff}}\\
\frac{\left(\psi_{jt}^{K}\right)^{\prime}}{\left(\psi_{it}^{K}\right)^{\prime}} & =\frac{k_{it}}{k_{jt}}\tag{\ensuremath{\iff}}\\
\left(\frac{\psi_{it}^{K}}{\psi_{jt}^{K}}\right)^{1/\eta} & =\left(\frac{\psi_{it}^{K}}{\psi_{jt}^{K}}\right)^{\sigma}\tag{\ensuremath{\iff}}
\end{align*}
When $\sigma=1/\eta$, the guess holds, and the planner indeed allocates capital according to her myopic optimal allocation given her current data stock. Since this holds at all times, we conclude $k_{it}=\bar{k}_{it}=k_{it}^{m}$ for $i=l,h$ and $t$. When $\sigma\neq1/\eta$, however, this can only hold if $D_{it}=D_{jt}$, but if we evolve the system back to $t=0$, this contradicts the initial condition that one task begins with strictly higher data. As such, we conclude the paths will be \emph{inefficient} when $\sigma\neq1/\eta$. Moreover, from the derivation above, we note that the RHS is greater than the LHS for some $t=T-\epsilon$ with $\epsilon>0$ sufficiently small if and only if 
\[
\left(\frac{\psi_{it}^{K}}{\psi_{jt}^{K}}\right)^{1/\eta}\leq\left(\frac{\psi_{it}^{K}}{\psi_{jt}^{K}}\right)^{\sigma}
\]
where we use the fact that $\psi_{jt}^{K}\lambda_{jt}-\frac{1}{2}\mu_{t}<0$ and that we are running the system \emph{backward} in time. That is, provided $\psi_{it}^{K}>\psi_{jt}^{K}$, the planner produces more of $i$ relative to the myopic optimal allocation given her data stock if and only if $\sigma>1/\eta$. In the next part of this argument, we extend this local result near $T$ to a global result regarding the entire path, and we will compare with the fully myopic path instead of the myopic path given the planner's data stock.

\begin{figure}
    \centering
    \includegraphics[width=0.5\linewidth]{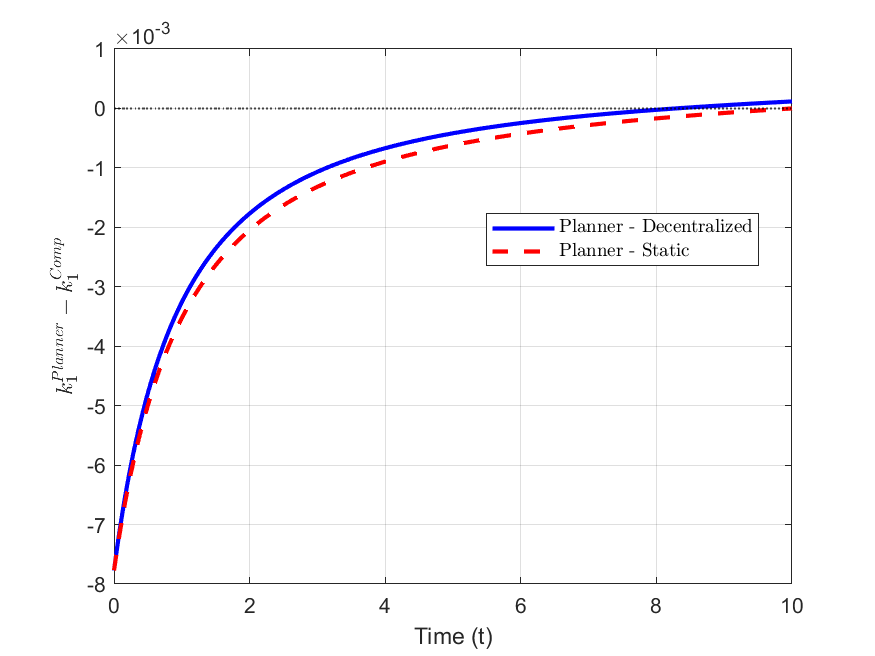}
    \caption{Comparison of capital allocated to the data-rich sector.}
    \label{fig:example_reversal}
\end{figure}

\paragraph*{Global comparison.}
Towards making a global comparison, we will first show \cref{lemma:no crossing}, which states that either $k_{ht}\geq\bar{k}_{ht}\text{ }\forall t$ or $k_{ht}\leq\bar{k}_{ht}\text{ }\forall t\in\left[0,T\right]$. Put differently, the planner's allocation and her myopically optimal allocation cannot cross. Using this result, we can now conclude the proof.
\begin{enumerate}
\item $\sigma<1$. In this case, we must have $k_{ht}\leq\bar{k}_{ht}\text{ }\forall t$ and strictly so for at least some interval by combining our earlier argument with \cref{lemma:no crossing}.\footnote{We cannot have the relation always hold with equality. Otherwise, we would contradict that one task has strictly higher data to begin with as argued previously.} Additionally, note that $\bar{k}_{h0}=k_{h0}^{m}$ since data stocks are $t=0$ are equivalent, so $k_{h0}^{m}\leq k_{h0}$. However, we do not necessarily have $k_{ht}\leq k_{ht}^{m}$ for all times as the productivity of the initially data poor task is growing faster under the planner's allocation. Given the planner's initial tilt towards the low data task, we may reach a stage where the planner balances her allocation of capital across tasks sooner compared to the decentralized equilibrium, so we may very well have $k_{ht}>k_{ht}^{m}$ at some point. \cref{fig:example_reversal} illustrates this reversal numerically: whereas the planner allocates less capital to the data-poor sector compared to the decentralized equilibrium to begin with, the allocation reverses around $t \approx 8$.
\item $1\leq\sigma<1/\eta$. As before, we have $k_{ht}\leq\bar{k}_{ht}\text{ }\forall t$. We now compare $\bar{k}_{ht}$ with $k_{ht}^{m}$. The two are initially equal. Then the planner allocates weakly less capital compared to the decentralized equilibrium, thereby accumulating less data, and so on and so forth. It is thus straightforward to verify that a lower ratio of data (of $h$ to $l$) implies less capital allocated to $h$ when $\sigma\geq1$ at all times. Therefore we have the following ordering: $k_{ht}\leq\bar{k}_{ht}\leq k_{ht}^{m}$.
\item $\sigma=1/\eta$. This is covered by the efficiency result from before. 
\item $\sigma>1/\eta$. In this case, we must have $k_{ht}\geq\bar{k}_{ht}\geq k_{ht}^{m}\text{ }\forall t$ and the first inequality holding strictly at least for some interval. The argument is analogous to the second case. 
\end{enumerate}
\end{proof}

\begin{lemma}[No crossing] \label{lemma:no crossing} Either $k_{ht}\geq\bar{k}_{ht}\text{ }\forall t$ or $k_{ht}\leq\bar{k}_{ht}\text{ }\forall t\in\left[0,T\right]$. 
\end{lemma}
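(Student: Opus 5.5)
The plan is to work with the single scalar $\Delta_t := \lambda_{ht}\psi^K_{ht}-\lambda_{lt}\psi^K_{lt}$. By the planner's first-order conditions, $\operatorname{sign}(k_{ht}-\bar k_{ht})=\operatorname{sign}(\Delta_t)$. This holds because the planner's relative production ratio $y_{ht}/y_{lt}$ exceeds the myopic ratio $(\psi^K_{ht}/\psi^K_{lt})^\sigma$ exactly when $\lambda_{lt}\psi^K_{lt}\le\lambda_{ht}\psi^K_{ht}$, and under the capital constraint $k_{ht}$ is monotone in that ratio. Proving the lemma then reduces to showing that $\Delta_t$ cannot change sign on $[0,T]$.

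The first step is to reuse the computation in the proof of \cref{prop:generic_inefficiency}. There, differentiating $\lambda_{it}\psi^K_{it}$ and substituting the costate equation gave
\[
\frac{d}{dt}\big(\lambda_{it}\psi^K_{it}\big)=(\psi^K_{it})'y_{it}\lambda_{it}+\rho\lambda_{it}\psi^K_{it}-\tfrac12\mu_t(\psi^K_{it})'k_{it}
=\rho\lambda_{it}\psi^K_{it}+(\psi^K_{it})'k_{it}\big(\lambda_{it}\psi^K_{it}-\tfrac12\mu_t\big).
\]
Set $z_{it}:=\lambda_{it}\psi^K_{it}$ and $\phi_{it}:=(\psi^K_{it})'k_{it}>0$. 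Then
\[
\dot\Delta_t=\rho\Delta_t+\phi_{ht}(z_{ht}-\tfrac12\mu_t)-\phi_{lt}(z_{lt}-\tfrac12\mu_t)
=\big(\rho+\phi_{ht}\big)\Delta_t+(\phi_{ht}-\phi_{lt})\big(z_{lt}-\tfrac12\mu_t\big).
\]
This is a linear ODE in $\Delta_t$ with a forcing term $g_t:=(\phi_{ht}-\phi_{lt})(z_{lt}-\frac12\mu_t)$. The first FOC gives $z_{lt}-\frac12\mu_t=-\frac12Y^{1/\sigma}y_{lt}^{-1/\sigma}\psi^K_{lt}<0$, so the sign of $g_t$ is minus the sign of $\phi_{ht}-\phi_{lt}$. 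Using $(\psi^K)'=\eta f^{1/\eta}(\psi^K)^{1-1/\eta}$ and $k_h/k_l$ as a function of the allocation, $\phi_{ht}-\phi_{lt}$ has the sign of $k_{ht}/k_{lt}-(\psi^K_{ht}/\psi^K_{lt})^{1/\eta-1}$.

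The second step is to close a sign-preserving loop. Variation of constants with terminal condition $\Delta_T=0$ gives
\[
\Delta_t=-\int_t^T e^{-\int_t^s(\rho+\phi_{hu})du}\,g_s\,ds .
\]
So $\Delta_t$ takes the sign of $\phi_h-\phi_l$ weighted over $[t,T]$. Now consider the case $\sigma<1/\eta$ with $\psi_h\ge\psi_l$, which is maintained because the data ordering persists: $\bar k_h/\bar k_l=(\psi_h/\psi_l)^{\sigma-1}<(\psi_h/\psi_l)^{1/\eta-1}$. At any time where $\Delta\le0$ we have $k_h\le\bar k_h$, so $\phi_h<\phi_l$ and $g>0$. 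The argument runs as follows. Suppose a crossing exists and let $t^\ast$ be the last time in $[0,T)$ where $\Delta$ changes sign. On $(t^\ast,T]$, $\Delta$ has one fixed sign, and the representation forces that sign to be nonpositive for $\sigma<1/\eta$ and nonnegative for $\sigma>1/\eta$, since the forcing has a definite sign whenever $\Delta$ has that sign. This is a self-consistent invariant region for the backward flow. Then I would show that at $t^\ast$ the derivative $\dot\Delta_{t^\ast}=g_{t^\ast}$ carries the sign that pushes $\Delta$ back into the same region when going backward, so $\Delta$ cannot leave it. The conclusion is that $\Delta$ keeps one sign on $[0,T]$. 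The case $\sigma>1/\eta$ is symmetric with the inequalities reversed.

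The main obstacle is that the sign of $\phi_h-\phi_l$ is only controlled on the set where $\Delta$ already has the "right" sign. The sign claim requires comparing the planner's $k_h/k_l$, which differs from $(\psi_h/\psi_l)^{\sigma-1}$ precisely by $\Delta$, against $(\psi_h/\psi_l)^{1/\eta-1}$. On the opposite-sign region the planner's tilt could in principle overshoot. I would handle this with a first-exit/contradiction argument on the backward flow: at the first backward time where $\Delta$ hits zero, $k=\bar k$ exactly, so $g$ has a strict sign determined by $\sigma$ versus $1/\eta$ and by $\psi_h>\psi_l$. That sign makes zero a barrier, which is the same mechanism as the invariant-region argument used in the proof of \cref{prop:heterogeneous-speed}. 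This gives no crossing.
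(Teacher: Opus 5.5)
Your argument is correct in its core, and it takes a genuinely different route from the paper's. The paper argues by contradiction with a perturbation. At a hypothesized crossing time $\tau$ it shifts capital antisymmetrically on $[\tau-\epsilon,\tau+\epsilon]$ toward $\bar k$. It uses that $D_i^{1-\eta}$ is linear in cumulative capital, so data after the interval are unchanged, and asserts that flow output strictly rises.

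You instead work directly with the necessary conditions: the exact linear equation $\dot\Delta_t=(\rho+\phi_{ht})\Delta_t+g_t$, the terminal condition $\Delta_T=0$, and the fact that $k=\bar k$ at any zero of $\Delta$. So at such a zero, $g$ has a strict sign fixed by $\sigma$ versus $1/\eta$ (given $\psi^K_h>\psi^K_l$). Every zero is then a transversal crossing in one fixed direction. That is incompatible with the sign of $\Delta$ just before $T$, which $g_T$ dictates.

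This gives more than the lemma. It identifies the side ($\Delta<0$ on $[0,T)$ if $\sigma<1/\eta$, $\Delta>0$ if $\sigma>1/\eta$), which is part (i) of \cref{prop:generic_inefficiency} in one step. It also avoids a step the perturbation route leaves open. The antisymmetric perturbation is a compactly supported variation of cumulative capital $\int k_h$, so by the planner's own first-order conditions its first-order effect on the objective is zero. The flow gain from moving toward $\bar k$ at fixed data is exactly offset, to first order, by the induced change in data inside the interval, which the paper does not track. Any strict improvement would therefore have to come from second-order terms. What the perturbation route would buy, if completed, is that it never uses the data ordering.

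That ordering is the one step you assert rather than prove, and you need it strictly on all of $[0,T]$. At a zero of $\Delta$ with $\psi^K_h=\psi^K_l$ the forcing vanishes. If the order flipped, the sign of $g$ at zeros would flip, and a crossing would no longer be excluded.

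The ordering is not automatic in either case:
\begin{itemize}
\item For $\sigma<1/\eta$ the planner tilts toward $l$ beyond $\bar k$. So $D_h/D_l=1$ is not a barrier for the planner's path, as it is for the decentralized path in the proof of \cref{prop:heterogeneous-speed}.
\item For $\sigma>1/\eta$ the ordering does follow forward in time from $\Delta\ge 0$. But you derive the sign of $\Delta$ backward from $T$ assuming ordering on $[t,T]$, which is circular as stated.
\end{itemize}

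You can close this by symmetry. Let $t_c$ be the first time with $D_{ht_c}=D_{lt_c}$. Swapping $h$ and $l$ after $t_c$ is legitimate since $f_h=f_l$, and it yields another optimal path. Continuity of the unique Hamiltonian maximizer along both paths forces $k_{ht_c}=k_{lt_c}$. The first-order conditions then give $\lambda_{ht_c}=\lambda_{lt_c}$; if $t_c=T$ this is immediate from $\lambda_T=0$. Backward uniqueness for the smooth, swap-symmetric state--costate system then gives $D_{h0}=D_{l0}$, a contradiction. This is the same backward-uniqueness step the paper uses in proving \cref{prop:generic_inefficiency}.
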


\begin{proof}[{\bf Proof of \cref{lemma:no crossing}}]
Towards a contradiction, suppose there exists some $\tau\in\left(0,T\right)$ where $k_{ht}$ crosses from below to above of $\bar{k}_{ht}$. The other direction proceeds analogously. Intuitively, the planer first over invests in the low data task and then undoes her investment, which cannot be optimal. To establish this rigorously, we will construct a perturbation by shifting capital between $l$ and $h$ over a small interval around $\tau$. In doing so, we do not affect output prior to the perturbation and strictly improve the objective over this short interval. Our construction will also have the property that the perturbed data vector after the interval coincides with the data vector from the planner's original equilibrium, so the outputs are equal after the interval as well. Clearly, such a perturbation strictly improves the objective, contradicting optimality.

It remains to show that such a construction exists. Let $D_{i}^{p}\left(s\right)$ denote the planner's data of $i$ at time $s$; let $\tilde{D}_{i}\left(s\right)$ be the data of $i$ at time $s$ following the perturbed capital path. We make the time dependence more explicit for clarity. Fix some $\epsilon>0$. Formally, our feasibility requirements are 
\begin{align*}
D_{i}^{p}\left(\tau-\epsilon\right) & =\tilde{D}_{i}\left(\tau-\epsilon\right)\\
D_{i}^{p}\left(\tau+\epsilon\right) & =\tilde{D}_{i}\left(\tau+\epsilon\right)\\
\dot{\tilde{D}}_{1}\left(s\right) & =f\left(\tilde{D}_{1}\left(s\right)\right)^{\eta}\left(k_{1}\left(s\right)+\Delta_{1}^{k}\left(s\right)\right)\quad s\in\left[\tau-\epsilon,\tau+\epsilon\right]\\
\dot{\tilde{D}}_{2}\left(s\right) & =f\left(\tilde{D}_{2}\left(s\right)\right)^{\eta}\left(k_{2}\left(s\right)-\Delta_{1}^{k}\left(s\right)\right)\quad s\in\left[\tau-\epsilon,\tau+\epsilon\right]
\end{align*}
Here $f\equiv f_{1}=f_{2}$. From ODE theory, we recognize that $\tilde{D}_{i}\left(s\right)$ solve a Bernoulli differential equation which admits closed form solutions. Our requirement of reaching the same data vector at $\tau+\epsilon$ boils down to 
\begin{align*}
D_{1}^{p}\left(\tau+\epsilon\right) & =\left(\left(\tilde{D}_{1}\left(\tau-\epsilon\right)\right)^{1-\eta}+\left(1-\eta\right)f\int_{\tau-\epsilon}^{\tau+\epsilon}\left(k_{1}\left(s\right)+\Delta_{1}^{k}\left(s\right)\right)ds\right)^{\frac{1}{1-\eta}}\\
D_{2}^{p}\left(\tau+\epsilon\right) & =\left(\left(\tilde{D}_{2}\left(\tau-\epsilon\right)\right)^{1-\eta}+\left(1-\eta\right)f\int_{\tau-\epsilon}^{\tau+\epsilon}\left(k_{2}\left(s\right)-\Delta_{1}^{k}\left(s\right)\right)ds\right)^{\frac{1}{1-\eta}}
\end{align*}
Observe that for any $\Delta_{1}^{k}$ that is anti-symmetric around $\tau$, the above relations must hold. As such, simply choose a $\Delta_{1}^{K}:\left[\tau-\epsilon,\tau+\epsilon\right]\to\mathbb{R}_{>0}$ satisfying 
\[
\begin{cases}
k_{1}\left(\tau-s\right)<k_{1}\left(\tau-s\right)+\Delta_{1}^{k}\left(\tau-s\right)<\bar{k}_{1}\left(\tau-s\right) & \text{ }\forall s\in\left[0,\epsilon\right]\\
\bar{k}_{1}\left(\tau+s\right)<k_{1}\left(\tau+s\right)+\Delta_{1}^{k}\left(\tau+s\right)<k_{1}\left(\tau+s\right) & \text{ }\forall s\in\left[0,\epsilon\right]
\end{cases}
\]
For example, we can choose a linear segment anchored at $k_{1}\left(\tau\right)=\bar{k}_{1}\left(\tau\right)$ that is above the planner's allocation but sufficiently close to it between $\left[\tau-\epsilon,\tau\right]$. Call the difference $\Delta_{1}^{K}\left(\tau-s\right)$. We would simply take the negative of the difference for $\left[\tau,\tau+\epsilon\right]$, which would lie in the interval $\left(k_{1}^{M}\left(\tau+s\right),k_{1}^{M}\left(\tau+s\right)\right)$ provided the initial segment is close enough to the planner's allocation. 
\end{proof}

%%%%%%%%%%%%%%%%%%%%%%%%%

\subsection{\cref{section:capitalaccumualation}: Capital Accumulation} \label{appendix:capital_accum} 
In this section, we first derive the singularity result under a Solow capital accumulation rule; we then bound the savings appropriately in a neoclassical growth model and apply our Solow result to conclude.

With Solow capital accumulation, again, we have
\begin{align*}
1-\gamma_{t}\equiv\chi_{t} & =\frac{L}{K}D_{t}^{-\eta}\\
\dot{D}_{t} & =\frac{L}{\chi_{t}}
\end{align*}
The evolution of capital is given by
\begin{align*}
\dot{K}_{t} & =-\delta K_{t}+s_{t}Y_{t}\\
 & =-\delta K_{t}+s_{t}\left(\gamma_{t}^{\frac{1}{\sigma}}\left(D_{t}^{\eta}K_{t}\right)^{\frac{\sigma-1}{\sigma}}+\left(1-\gamma_{t}\right)^{\frac{1}{\sigma}}L^{\frac{\sigma-1}{\sigma}}\right)^{\frac{\sigma}{\sigma-1}}
\end{align*} 

%%%%%%%%%%%%%%%%%%%%%%%%%%%%%%%

\begin{lemma} \label{lemma:solow_singularity}
When sectors are symmetric, and the savings rate satisfies
\[
\inf_{t\geq T}s_{t}\geq s>0
\]
for some $T$ finite, then there is a finite time explosion.
\end{lemma}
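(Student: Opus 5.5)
We work in the symmetric Solow system written just above the statement. With $\psi^L$ normalized and all tasks identical, every task carries the common data stock $D_t$. The equilibrium conditions are
\[
\chi_t:=1-\gamma_t=\frac{L}{K_t}D_t^{-\eta}, \qquad \dot D_t=\frac{L}{\chi_t}=K_tD_t^{\eta}, \qquad \dot K_t=-\delta K_t+s_tY_t .
\]
The first step is a lower bound on output that removes the CES aggregator. Since $Y_t$ is the CES aggregate of the automated and labor blocks, and each block contributes nonnegatively, we have
\[
Y_t\geq \gamma_t^{\frac{1}{\sigma-1}}D_t^{\eta}K_t \quad\text{when } \sigma>1 .
\]
In the general case the cleaner route is direct: every automated task produces $y=D_t^{\eta}K_t/\gamma_t$, every labor task produces $L/\chi_t=D_t^{\eta}K_t$, and all tasks have the common level $D_t^{\eta}K_t$ by indifference. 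Hence
\[
Y_t=D_t^{\eta}K_t \quad\text{exactly once } \gamma_t\in(0,1),
\]
because in the symmetric case all task quantities coincide. So the system reduces to the planar ODE
\[
\dot D=KD^{\eta}, \qquad \dot K\geq (s D^{\eta}-\delta)K \quad\text{for } t\geq T.
\]

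The second step is to show that both variables diverge and that $D^{\eta}$ eventually dominates $\delta$. $D$ is increasing, since $\dot D=L/\chi\geq L>0$, so $D_t\geq L(t-T)\to\infty$. Pick $T_1\geq T$ with $sD_{T_1}^{\eta}\geq 2\delta$. After $T_1$ this gives
\[
\dot K\geq \tfrac{s}{2}D^{\eta}K,
\]
so $K$ is increasing.

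The third step is to produce a scalar superlinear inequality. Dividing the two equations,
\[
\frac{dK}{dD}\geq \frac{s}{2} ,
\]
so $K_t\geq K_{T_1}+\frac{s}{2}(D_t-D_{T_1})\geq c\,D_t$ for large $t$. Substituting into the data equation gives
\[
\dot D\geq c\,D^{1+\eta}.
\]
Since $1+\eta>1$, comparison with the explicit solution of $\dot B=cB^{1+\eta}$ shows that $D$ blows up before
\[
T_1+\frac{D_{T_1}^{-\eta}}{c\eta}<\infty .
\]
Then $K\geq cD$ and $Y=D^{\eta}K$ blow up at the same finite time. Moreover $\chi_t=L/(K_tD_t^{\eta})\to 0$, which gives full automation.

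The step I expect to need the most care is the identity $Y_t=D_t^{\eta}K_t$, that is, checking that the symmetric equilibrium equalizes task quantities across capital and labor tasks. The lower bound on $\dot K$ depends on this identity. If it fails because of boundary cases such as $\gamma_t=0$ early on, I would instead use the bound $Y\geq \gamma^{1/(\sigma-1)}D^{\eta}K$ or the labor-only lower bound. These are needed only after $\gamma_t$ is bounded away from $0$, which holds once $D^{\eta}K>L$.
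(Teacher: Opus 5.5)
Your skeleton is the same as the paper's: $D_t\to\infty$, then eventually $\dot K\gtrsim\dot D$, hence $K\gtrsim D$, hence $\dot D\gtrsim D^{1+\eta}$ and Bernoulli blow-up. However, the identity $Y_t=D_t^{\eta}K_t$, on which your division step rests, is false.

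Your own sentence shows the inconsistency. Automated tasks produce $D_t^\eta K_t/\gamma_t$, and indifference equalizes all task quantities, so the common level is $D_t^\eta K_t/\gamma_t$, not $D_t^\eta K_t$. The relation $\chi_t=\frac{L}{K_t}D_t^{-\eta}$ displayed above the lemma, which you adopted, drops a factor $\gamma_t$. With symmetric tasks, \eqref{gamma def} gives $\frac{1-\gamma_t}{\gamma_t}=\frac{L}{K_tD_t^{\eta}}$, so
\[
Y_t=\dot D_t=\frac{D_t^\eta K_t}{\gamma_t}=\frac{L}{1-\gamma_t}=D_t^\eta K_t+L .
\]
The inequality $Y_t\ge D_t^\eta K_t$ survives. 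Your bound $\dot K\ge(sD^\eta-\delta)K$ and your last step $\dot D\ge cD^{1+\eta}$ are therefore fine. But ``dividing the two equations'' treats $\dot D=KD^\eta$ as an equality, and the true relation $\dot D\ge KD^\eta$ points the wrong way there. From $\dot K\ge\frac s2D^\eta K$ and $\dot D=D^\eta K/\gamma_t$ you only get $dK/dD\ge\frac s2\gamma_t$.

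The repair is to use $Y_t=\dot D_t$ exactly, so that $\dot K=s_t\dot D-\delta K$. Once $sD^\eta\ge 2\delta$,
\[
\delta K\le\tfrac s2 D^\eta K\le\tfrac s2\dot D ,
\]
so $\dot K\ge\frac s2\dot D$. Alternatively, keep $\frac s2\gamma_t$ and note that $\gamma_t\ge\gamma_{T_1}>0$, since $K$ and $D$ both increase after $T_1$. With this, the rest of your argument goes through. Take $c=\min\{s/2,K_{T_1}/D_{T_1}\}$ so that $K_t\ge cD_t$ holds from $T_1$ on and your blow-up date is valid as stated. Your fallback bound $Y\ge\gamma^{1/(\sigma-1)}D^\eta K$ is unnecessary, and it fails for $\sigma<1$, where dropping a CES block raises rather than lowers the aggregate.

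The paper handles the $1/\gamma_t$ factor differently. It drops it from both equations to obtain the labor-free cooperative system $\dot{\tilde K}=-\delta\tilde K+s\tilde D^\eta\tilde K$, $\dot{\tilde D}=\tilde D^\eta\tilde K$. It then bounds the true path from below by Kamke's comparison theorem and runs your integrate-then-Bernoulli argument in that system, where $\dot{\tilde K}=s\dot{\tilde D}-\delta\tilde K$ holds exactly. Once repaired, your direct argument on the true system avoids the comparison theorem altogether, which is a modest simplification.
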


\begin{proof}[{\bf Proof of \cref{lemma:solow_singularity}}] The proof proceeds as follows. First, we will run the system long enough so that $Y_{t}\geq D^{\eta}K_{t}\geq\delta K_{t}$. This ensures capital will be increasing at all times moving forward. Next, we will lowerbound capital accumulation, output, and automation by studying an economy where we ignore labor altogether. We finally establish a finite time explosion in this economy and thereby concluding the argument.

\paragraph{1. Finding a time such that capital always increases thereafter}

We will first find a $t_{0}$ such that capital is guaranteed to increase thereafter. We do this in two steps. First, notice that 
\begin{align*}
Y_{t} & =D_{t}^{\eta}\frac{K_{t}}{\gamma_{t}}\\
 & \geq D_{t}^{\eta}K_{t}
\end{align*}
The first equality follows from $\gamma_{t}$ solving the following relationship
\[
Y_{t}=Y_{i,t}=D_{t}^{\eta}\frac{K_{t}}{\gamma_{t}}=\frac{\psi^{L}L}{1-\gamma_{t}}\quad\forall i\in X
\]
Next, notice that 
\[
\dot{D}_{t}=Y_{t}=\frac{\psi^{L}L}{1-\gamma_{t}}=\psi^{L}L>0
\]
This implies $D_{t}$ tends to infinity, so we can find a $t_{0}$ such that $D_{t_{0}}^{\eta}K_{t_{0}}>\delta K_{t_{0}}\implies\dot{K}_{t_{0}}>0$. Then, by virtue of $D_{t}$ being strictly increasing in time, $\dot{K}_{t}>0$ for all $t\geq t_{0}$.

\paragraph{2. A lower-bounded economy}

We next consider $t\geq\max\left\{ T,t_{0}\right\} \equiv\uline{t}$ by studying a simplified counterfactual economy. The actual economy evolves according to
\[
\begin{cases}
\dot{K}_{t} & =-\delta K_{t}+s_{t}D_{t}^{\eta}\frac{K_{t}}{\gamma_{t}}\\
\dot{D}_{t} & =D_{t}^{\eta}\frac{K_{t}}{\gamma_{t}}
\end{cases}
\]
If we produce with capital only and a lower bounded savings rate $s$, the dynamics are
\[
\begin{cases}
\dot{\tilde{K}}_{t} & =-\delta\tilde{K}_{t}+s\tilde{D}_{t}^{\eta}\tilde{K}_{t}\\
\dot{\tilde{D}}_{t} & =\tilde{D}_{t}^{\eta}\tilde{K}_{t}
\end{cases}
\]
By a version of Kamke's Comparison Theorem \citep{smith1995monotone}, we can show $K_{t}\geq\tilde{K}_{t}$ and $D_{t}\geq\tilde{D}_{t}$ for all $t\geq t_{0}$ if we specialize the two economies to begin with the same initial conditions at $\uline{t}$ as the system is cooperative.

\paragraph{3. Dynamics of the lower-bounded economy}

Now we turn to the dynamics of the lower-bounded economy at $t\geq\uline{t}$. The capital evolution is given by
\[
\dot{\tilde{K}}_{t}=-\delta\tilde{K}_{t}+s\tilde{D}_{t}^{\eta}\tilde{K}_{t}
\]
Since $D_{t}$ is going to infinity in the limit, we can choose $t_{1}$ so that for some $0<\epsilon<s$
\[
\dot{\tilde{K}}_{t_{0}}=-\delta\tilde{K}_{t_{1}}+s\tilde{D}_{t_{1}}^{\eta}\tilde{K}_{t_{1}}\geq\left(s-\epsilon\right)\tilde{D}_{t_{1}}^{\eta}\tilde{K}_{t_{1}}
\]
Again, by virtue of $D_{t}$ being increasing in $t$, the above relation holds for all $t\geq t_{1}$. From the data LOM in this economy, we have
\[
\dot{\tilde{K}}_{t}\geq\left(s-\epsilon\right)\dot{\tilde{D}}_{t}
\]
Integrating from $t_{1}$ to $t$, the relationship is
\[
\tilde{K}_{t}\geq\left(s-\epsilon\right)\tilde{D}_{t}-\tilde{D}_{t_{1}}+\tilde{K}_{t_{1}}
\]
We can then find some $t_{2}$ large enough so that for some $\omega>0$
\[
\tilde{K}_{t_{2}}\geq\left(s-\epsilon\right)\tilde{D}_{t_{2}}-\tilde{D}_{t_{1}}+\tilde{K}_{t_{1}}\geq\omega\tilde{D}_{t_{2}}
\]
This relationship similarly holds for all $t\geq t_{2}$ as $\tilde{D}_{t}$ is monotonically increasing. Substituting this relation back into the dynamics for data accumulation, we have for all $t\geq t_{2}$
\[
\dot{\tilde{D}}_{t}\geq\omega\tilde{D}_{t}^{1+\eta}
\]
This super-linear growth implies a finite time explosion. In particular, we know that
\[
\tilde{D}_{t}\geq\left(\tilde{D}_{t_{2}}^{-\eta}-\omega\eta\left(t-t_{2}\right)\right)^{-\frac{1}{\eta}}
\]
which means full automation as well as the finite time explosion in data and output takes place weakly before $t_{2}+1/\left(\omega\eta\tilde{D}_{t_{2}}^{\eta}\right)$ in the original economy. \end{proof}

\begin{proof}[{\bf Proof of Proposition \ref{prop:singularity}}]

We now move to the NGM set-up. The representative household solves
\[
\max_{\{c_{t}\}}\int_{0}^{\infty}e^{-\rho t}\frac{c_{t}^{1-\theta}}{1-\theta}dt\quad\text{s.t.}\quad\dot{a}_{t}=r_{t}a_{t}+w_{t}-c_{t}
\]
yielding the Euler equation 
\[
\frac{\dot{c}_{t}}{c_{t}}=\frac{1}{\theta}\left(r_{t}-\delta-\rho\right)
\]
 and the aggregate resource constraint $\dot{K}_{t}=s_{t}Y_{t}-\delta K_{t}$, where $s_{t}=\left(Y_{t}-c_{t}\right)/Y_{t}$ is the equilibrium savings rate. To apply the previous result, we want to show that the savings rate $s_{t}\to s>0$, so the savings rate ought to be lower bounded for some $t=T$ sufficiently large. 

We proceed by guess-and-verify and let $t^{*}$ denote the time of singularity.\footnote{If helpful, this can be recast as a proof by contradiction, so $t^{*}=\infty$.} Assume $s_{t}\to s>0$, then for $t$ sufficiently large
\[
\dot{K}_{t}=s_{t}Y_{t}-\delta K_{t}\geq\omega_{t}Y_{t}
\]
Where $\omega_{t}$ is chosen as in the earlier proof, and $\lim_{t\to t^{*}}\omega_{t}=s$. We also have $\dot{D}_{t}=Y_{t}$. As such, $\lim_{t\to t^{*}}K_{t}/D_{t}=s_{t}$ as we approach the singularity, which implies the growth rates of these variables, $g_{K,t}$ and $g_{D,t}$, must also converge. From the resource constraint,
\[
g_{Y,t}=\eta g_{D,t}+g_{K,t}\approx\left(1+\eta\right)sD_{t}^{\eta}
\]
From the Euler equation,
\[
g_{c,t}\approx\frac{1}{\theta}D_{t}^{\eta}
\]
By virtue of the guess, consumption and output growth rate must equalize, so
\[
s=\frac{1}{\left(1+\eta\right)\theta}
\]
This completes the guess-and-verify step. Applying \cref{lemma:solow_singularity} establishes that output and consumption will become unbounded in finite time, and the entire economy will be automated in finite time as well. Finally, when $\theta \leq 1$, the candidate path is feasible and delivers infinite expected utility; when $\theta > 1$, the transversality condition is satisfied. This shows the candidate path is both feasible and optimal, concluding our proof.
\end{proof}

\section{Extensions} \label{appendix:discussion_extensions}

\paragraph{A multisector economy.} In our economy  tasks are aggregated directly into a final good. We might, instead, consider an economy comprised of  a set of intermediate goods ($M$) and a spectrum of tasks ($X$), in which a final good is produced with intermediate goods, and each intermediate good is produced with a bundle of tasks. Different intermediate goods value different tasks---the degree to which intermediate good $m \in M$ values task $i \in X$ is $G(m,i) \geq 0$. Intermediate goods overlap in the set of tasks that they value, so data of task $i$ is generated by  and used  in production of all the intermediate goods that use task $i$.\footnote{The idea that different sectors might be connected by sharing `overlapping tasks' that precipitates data spillovers is related in spirit to \cite*{chen2023capability} in which different `markets' share `overlapping capabilities' that precipitates merger waves.} 

\cref{appendix:multisector} outlines such an economy and sketches how relative elasticities for the final good vis-a-vis intermediate goods, as well as how the pattern of intermediate good-task dependance and how they overlap, shapes differential data accumulation. 

\paragraph{Non-substitutable spillovers.} Our model of data spillovers imposed the implicit assumption that although the importance of task $j$ on $i$'s productivity is weighted by $W(i,j)$, they still enter into $i$'s production function in a \emph{substitutable} fashion. While this is standard in economic analyses of networks, we might imagine richer data aggregation in which we require both data from task $i$ \emph{and} task $j$ to augment capital productivity for $i$: 
\[
\left(\int \Big(W(i,j) D_{jt}\Big)^{\frac{\phi-1}{\phi}} dj \right)^{\frac{\phi}{\phi-1}} \quad \text{where $\phi \in [0,+\infty)$.} 
\]
Our model of cross-task spillovers corresponds to $\phi \to +\infty$; in the other extreme $\phi \to 0$ data-driven automation is then held up by the data-poorest task.\footnote{For instance if $W(i,\cdot)$ is full-support on $X$ this slows down the transition path of automation substantially since the rate at which effective data grows is held-up by, e.g., the most data-poor sector. This has sharp implications for the planner's problem of how to allocate production across sectors.} Of course, this `CES-form' imposes a kind of symmetry in how task data aggregates; we think a richer model capturing higher-order interactions among task-specific data is an exciting line of future work. 

\paragraph{Expanding the set of tasks.} Although our main analysis holds the task space fixed, our model offers a novel perspective on why human labor might have a persistent advantage over capital at new tasks. Let $X_t = [0, x_t]$ denote the set of tasks used in the final good production where the \emph{task frontier} $x_t$ is continuous and increasing in $t$. For a new task $x_t$ that is introduced at time-$t$, it is, by definition, novel and was not previously produced so at the time $t$ at which it was introduced we have no task-specific data. How might we expect automation and data accumulation in this economy to unfold? 

The main trade-off stems from the degree of spillover among the new task and other exiting tasks. If the spillover is weak, labor will have a comparative advantage when the new task is introduced, although the advantage might not persist. In this economy, \emph{human labor facilitates automation} by producing frontier tasks---and attendant training data---precisely when data is scarce.\footnote{For instance, in \cite{korinek2024scenarios} it is assumed that humans can simply perform high complexity tasks.} Alternatively, if there is sufficient overlap with existing tasks (as given by our graphon $W$ that captures cross-task spillovers), task $x_t$  can be automated by training AI systems on nearby tasks. In this world, strong data spillovers allow automation to \emph{bootstrap itself} by continually using the endogenously generated data on tasks at the frontier to automate new tasks as soon as they emerge.

Which of these economies emerge will ultimately depend on the \emph{speed} at which new tasks emerge  relative to the \emph{strength} of spillovers from existing tasks to new tasks. When new tasks emerge quickly, tasks near the frontier, have, themselves, not been around for long---and so they are ``data poor". Then, in the absence of strong spillovers from the core tasks, labor will play an important dual role in both producing new tasks as well as generating data on them. Alternatively, When new tasks emerge slowly, tasks near the frontier  have been around for long enough that they are ``data rich", and  even local spillovers may well suffice for the new task to be rapidly automated. 

\subsection{A Multi-sector Economy} \label{appendix:multisector}

In the main text we focused on a single-sector economy that aggregated tasks into a final good. Here we extend our main model to one with multiple sectors that is aggregated via another CES into a final good. All sectors draw from the same continuum of tasks, but different sectors may use the same task at different intensities, or combine tasks differently via CES aggregators with different task substitutabilities.%\footnote{The idea that sectors might be connected by sharing `overlapping tasks', and that this might precipitate spillovers is related in spirit to \cite*{chen2023capability} in which different `markets' share `overlapping capabilities' that precipitates merger waves.} 

\paragraph{Tasks and the final good.} We now index intermediate products by $m \in M := [0,1]$. A final good is now produced competitively
\[Q = \Big(\int_M q_m^{\frac{\nu-1}{\nu}}\Big)^{\frac{\nu}{\nu - 1}}\]
where $q_m$ is the quantity of sector $m$ produced competitively. 

\paragraph{Intermediate products.} Intermediate products are produced by tasks: 
\[
q_m = \left(\int_{X} \left(G(m,i) \cdot y^m_i\right)^{\frac{\sigma_m - 1}{\sigma_m}} di\right)^{\frac{\sigma_m}{\sigma_m - 1}}.
\]
where, as before, $X = [0,1]$ is our task space and $y^m_i$ is the quantity of task $ i \in X$ used to produce sector $m \in M$. The function 
\[
G(m,\cdot): X \to \mathbb{R}_+
\]
governs the degree to which task $i \in X$ is used in the production of sector $m \in M$. 
\begin{remark}
The analysis in the main text is thus a special case where the economy was comprised of a single sector $m$, and $G(m,i) = 1$ for each $i \in X$. 
\end{remark}

\paragraph{Tasks.} Task $i$ can be performed with capital or labor: 
\[
y_i^m = \psi^L \cdot L^m_i + \psi^k({D}_i) \cdot K^m_i
\]
where $L^m_i, K^m_i$ are the quantities of labor and capital employed in sector $m$ to produce task $i$. We assume, as in the main text, that 
\[
\psi^K\left(\mathcal{A}_i\left(\pmb{D}\right)\right) = \left(\mathcal{A}_i\left(\pmb{D}\right)\right)^{\eta} \quad \text{for $\eta \in (0,1)$.}
\]
and to simplify our exposition we assume $\mathcal{A}_i(\pmb{D}) = \pmb{D}_i$ i.e., no cross-task spillovers. 

% \bxnote{This mapping doesn't make sense... do you mean $D^m_i$ instead of $\boldsymbol{D}_i$?}
% That is, \emph{task-specific} data drives the productivity of capital in producing task $i$.
% \bxnote{We should clarify that the D vector is two dimensional now (over $i$ and $m$); also wouldn't it make sense to have $\sigma_m$? That is, task substitutability can be different for different sectors. Right now the only heterogeneity is in $G$.}

\paragraph{Law of motion aggregated across sectors.} 
We have the law of motion 
\[\dot{{D}_i} = \int_{M} y^m_i dm \quad \text{for each task $i \in X$}
\]
i.e., the evolution of the data for task $i$ depends on the production of \emph{all} sectors. 

\paragraph{Task cost.} The resulting unit cost to produce using task $i$ given data stock $\pmb{D}$ is:
\begin{equation}
    c_i({D}_i) = \min \left( \frac{w}{\psi^L}, \frac{r}{\psi^K\left({D}_i\right)} \right) = \min \left(\frac{w}{\psi^L}, \frac{r}{{D}_i^\eta} \right)
\end{equation}

\paragraph{Intermediate product cost.} The price index for intermediate product $m$ is its unit cost function:
\begin{equation}
    p_m(\pmb{D}) = \left( \int_{j \in X} \left( \frac{c_j({D}_j)}{G(m,j)} \right)^{1-\sigma_m} dj \right)^{\frac{1}{1-\sigma_m}}.
\end{equation}
where $\sigma_m > 0$ is the CES parameter for sector $m \in M$. 

\paragraph{Final good production and demand for intermediate products.} The demand for sector $m$'s output is given by the usual FOC condition for the final good producer:  
\begin{equation}
    q_m(D) = Q \left( \frac{p_m(\pmb{D})}{P} \right)^{-\nu}
\end{equation}
where $P$ is the aggregate price index of the final good $Q$. We will set the final good as the numeraire in this economy so $P=1$.

\paragraph{Task demand.} Task demand for intermediate product $m$ can be written: 
\[
y_i^m(\pmb{D}) = q_m \cdot (G(m,i))^{{\sigma_m}-1} \cdot \left(\frac{c_i}{p_m}\right)^{-{\sigma_m}}
\]
and so aggregate task demand can be written:
\begin{align*}
y_i(\pmb{D}) &= \int_{m \in M} y_i^m(\pmb{D}) dm \\
&= \int_{m \in M} \Big[ q_m \cdot (G(m,i))^{{\sigma_m}-1} \cdot \left(\frac{c_i}{p_m}\right)^{-{\sigma_m}} \Big] dm\\
&= \int_{m \in M} \left[ Q \left(p_m(\pmb{D})\right)^{-\nu} (G(m,i))^{{\sigma_m}-1} \cdot \left(\frac{c_i}{p_m}\right)^{-{\sigma_m}}  \right] dm \\
&= Q \int_{m \in M} \Big[\left(p_m(\pmb{D})\right)^{\sigma_m-\nu} (G(m,i))^{{\sigma_m}-1} \cdot \left(c_i\right)^{-{\sigma_m}}  \Big] dm \\
&= Q \int_{m \in M} \underbrace{\left( \int_{j \in X} \left( \frac{c_j(\pmb{D}_j)}{G(m,j)} \right)^{1-\sigma_m} dj \right)^{\frac{\sigma_m-\nu}{1-\sigma_m}}}_{= p_m^{\sigma_m - \nu}} (G(m,i))^{{\sigma_m}-1} \cdot \left(c_i\right)^{-{\sigma_m}}  dm 
\end{align*}

We make three key observations about how the quantity of task $i$ is determined. This is the key variable that will, in turn, drive the corresponding task-specific data accumulation of $\pmb{D}_i$. 

First, and most straightforwardly, $y_i$ is decreasing in $c_i$, albeit weighted by sectoral elasticities $(\sigma_m)_m$ inside the integral. If $\sigma_m = \sigma$ then we have a constant elasticity of the task exactly given by $\sigma$. 

Second, how sectoral prices $(p_m)_m$ propagate to task quantities hinge on the sizes of $\sigma_m \gtrless \nu$. If $\sigma_m > \nu$ i.e., sectoral production via tasks is more substitutable than the final good production---this perhaps the economically relevant case---then an (exogenous) increase in $p_m$ that holds $c_i$ fixed e.g., perturbing the cost of other tasks leads to \emph{more} of task $i$ being produced. This is driven by two effects. First, the final goods producer demands less of sector $m$ since 
\[
q_m = Q p_m^{-\nu}.
\]
But as the sector price increases, there is also a task reallocation effect within sector $m$: producers reallocate production toward task $i$ as long as the sector's production function is sufficiently substitutable, and this is exactly the condition $\sigma_m > \nu$. 

Third, $G(m,i)$ measures the `importance' of task $i$ for sector $m$. If production in sector $m$ are gross substitutes i.e., $\sigma_m \geq 1$ then when demand for sector $m$ increases, more important tasks---those with high $G(m,i)$---are \emph{overproduced}, and the extent of this overproduction is given by $\sigma_m \geq 1$. Conversely, if $\sigma < 1$ then the opposite obtains: when demand for sector $m$ increases, the less important tasks are overproduced. 

%%%%%%%%%%%%%%%%%%%%%%%%%%%%%%%%
%%%%%%%%%%%%%%%%%%%%%%%%%%%%%%%%

\section{Simulation Method}

\label{appendix:simulation_details}
We solve a discretized version of our model with a continuum of tasks. \cref{discretized-model} introduces the discretized model and \cref{discretized-evolution} outlines the numerical procedure. Time subscripts are omitted for brevity. Replication codes are available upon request.\footnote{Please email bryantxia2435@gmail.com}

\subsection{Discretized Model}
\label{discretized-model}

We approximate the continuum of tasks by $N$ large. Define $\epsilon:=1/N$, output is given by\footnote{We set $N=1000$ in our codes.} 
\[
Y:=\left(\sum_{i=1}^{N}\left(y_{i}\right)^{\frac{\sigma-1}{\sigma}}\epsilon\right)^{\frac{\sigma}{\sigma-1}}
\]
Costs are now $\sum_{i=1}^{N}p_{i}y_{i}\epsilon$,
so the FOC for the final good producer is 
\[
Y^{\frac{1}{\sigma}}y_{i}^{-\frac{1}{\sigma}}=p_{i}=\min\left\{ w/\psi_{L},r/\psi_{i}^{K}\right\} 
\]
Intermediate producers still solve 
\[\max_{L_{i},K_{i}}p_{i}\left(\psi_{i}^{K}K_{i}+\psi^{L}L_{i}\right)-r_{i}K_{i}-wL_{i}\] 
To determine the equilibrium allocation, we define an iterative procedure. Let $\tilde{\gamma}$ be the largest sector index where producing with capital is \textit{strictly} cheaper than labor: 
\[
\tilde{\gamma}:=\max\left\{ i\in\left\{ 0,\dots,N\right\} :\frac{r}{\psi_{i}^{K}}<\frac{w}{\psi^{L}}\right\} 
\]
Consequently, tasks $\leq\tilde{\gamma}$ are produced entirely by capital. We then examine the \textit{next} sector, $\tilde{\gamma}+1$. If production costs equalize at this sector, it becomes the ``split'' sector where the intermediate good producer is indifferent between producing using labor versus capital. Let $\alpha\in[0,1]$ be the fraction of task $\tilde{\gamma}+1$ produced using capital; below we detail the procedure for solving this ratio.

\textbf{Factor prices and allocations given $\tilde{\gamma}$ and $\alpha$:} For labor, tasks $\tilde{\gamma}+2,\dots,N$ and the labor-portion of task $\tilde{\gamma}+1$ must clear: 
\[
\epsilon\left(\sum_{i=\tilde{\gamma}+2}^{N}L_{i}+(1-\alpha)L_{\tilde{\gamma}+1}\right)=L
\]
From the final good producer's FOC and the fact that labor productivity is equalized across all tasks, $L_{i}$ must be constant:
\[
L_{i}=\frac{L}{\epsilon\left(N-(\tilde{\gamma}+1)+1-\alpha\right)}=\frac{L}{\epsilon\left(N-\tilde{\gamma}-\alpha\right)}
\]
For capital, tasks $1,\dots,\tilde{\gamma}$ and the capital-portion of task $\tilde{\gamma}+1$ must clear: 
\[
\epsilon\left(\sum_{i=1}^{\tilde{\gamma}}K_{i}+\alpha K_{\tilde{\gamma}+1}\right)=K
\]
Using the optimal allocation rule $K_{i}\propto(\psi_{i}^{K})^{\sigma-1}$: 
\[
K_{i}=\frac{\left(\psi_{i}^{K}\right)^{\sigma-1}}{\sum_{m=1}^{\tilde{\gamma}}\left(\psi_{m}^{K}\right)^{\sigma-1}+\alpha\left(\psi_{\tilde{\gamma}+1}^{K}\right)^{\sigma-1}}\frac{K}{\epsilon}
\]
Plugging these allocations into the final good producer's FOC, we have that factor prices conditional on $\tilde{\gamma}$ and $\alpha$ are given by: 
\begin{align*}
r\left(\tilde{\gamma},\alpha\right) & =Y^{\frac{1}{\sigma}}\left(\frac{\epsilon\left(\sum_{m=1}^{\tilde{\gamma}}\left(\psi_{m}^{K}\right)^{\sigma-1}+\alpha\left(\psi_{\tilde{\gamma}+1}^{K}\right)^{\sigma-1}\right)}{K}\right)^{\frac{1}{\sigma}}\\
w\left(\tilde{\gamma},\alpha\right) & =Y^{\frac{1}{\sigma}}\left(\frac{\epsilon\left(N-\tilde{\gamma}-\alpha\right)\left(\psi^{L}\right)^{\sigma-1}}{L}\right)^{\frac{1}{\sigma}}
\end{align*}
\textbf{Solving Procedure:} First, consider the set
\[
\mathcal{C}:=\left\{ j:\frac{\psi_{j}^{K}}{\psi_{L}}>\frac{r\left(j,0\right)}{w\left(j,0\right)}\right\} 
\]
If this set is empty, set $\tilde{\gamma}=0$. Else, set $\tilde{\gamma}:=\max C$. Next, solve numerically for $\alpha\in\left[0,1\right]$ such that
\[
\frac{\psi_{\tilde{\gamma}+1}^{K}}{\psi^{L}}=\frac{r\left(\tilde{\gamma},\alpha\right)}{w\left(\tilde{\gamma},\alpha\right)}
\]

\subsection{Evolution Equations}

\label{discretized-evolution}

At every $t$, recall the state variable is the vector of data, $\boldsymbol{D}$. We first construct $\psi$ from data,  
\[
\psi_{i}^{K}\left(D\right)=f\left(\frac{i}{N}\right)\left(\sum_{m=1}^{N}g\left(\big|m-i\big|\epsilon\right)D_{m}\right)^{\eta}
\]
Note that setting $g\left(0\right)=1$ and 0 everywhere else corresponds to the no spillover case. Next, we determine $\tilde{\gamma}$ and $\alpha$ using the procedure above. The evolution of data depends on task output $y_{i}$. 
\begin{itemize}
\item For $i\le\tilde{\gamma}$ (Strictly Capital): $y_{i}=\psi_{i}^{K}K_{i}$. 
\item For $i>\tilde{\gamma}+1$ (Strictly Labor): $y_{i}=\psi^{L}L_{i}$. 
\item For the split sector $i=\tilde{\gamma}+1$: 
\begin{itemize}
\item If $\alpha=0$: $y_{i}=\psi^{L}L_{i}$ (Transitions to labor). 
\item If $\alpha>0$: Costs equalize, so $p_{i}=w/\psi^{L}$. Demand implies $y_{i}=\psi^{L}L_{labor}$, where $L_{labor}$ is the intensity $L_{i}$ calculated for fully labor sectors.\footnote{Since labor productivity for all sectors are the same, the combined capital and labor output for this sector must be equal to the output in labor only sectors.} 
\end{itemize}
\end{itemize}
Combining these, the data evolution is: 
\[
\frac{dD_{i}}{dt}=y_{i}-\rho D_{i}
\]
Finally, for Solow-style capital accumulation devoid of household optimization, we can augment the simulation by including the usual capital accumulation rule and keeping track of $K(t)$ 
\[
\dot{K}=sY-\delta K
\]
We simulate this system in Matlab. 

% \subsection{Discussion} In the long run, the discretized system is not a faithful representation of our model consisting of a continuum of tasks. This is because at most $N-1$ tasks can be automated in the discretized economy given labor market clearing. This feature has implications for wages, rental rate, and output in very long horizons. As such, we mainly rely on our simulations to highlight patterns that arise during the transition as opposed to investigating the limiting behavior of the discretized system. 

%%%%%%%%%%%%%%%%%%%%%%%%%%%%%%%%
%%%%%%%%%%%%%%%%%%%%%%%%%%%%%%%%

%\newpage
\section{Additional Numerical Results} \label{appendix:additional_numerical_results}
We collect additional numerical results in this appendix.
\subsection{Time Paths Wages and Outputs without Capital Accumulation}

\begin{figure}[h]
    \centering
    % --- Top Row (Wages) ---
    \subfloat[Evolution of $w$ for $\sigma = 0.5$ \label{fig:comp_wage}]{%
        \includegraphics[width=0.48\textwidth]{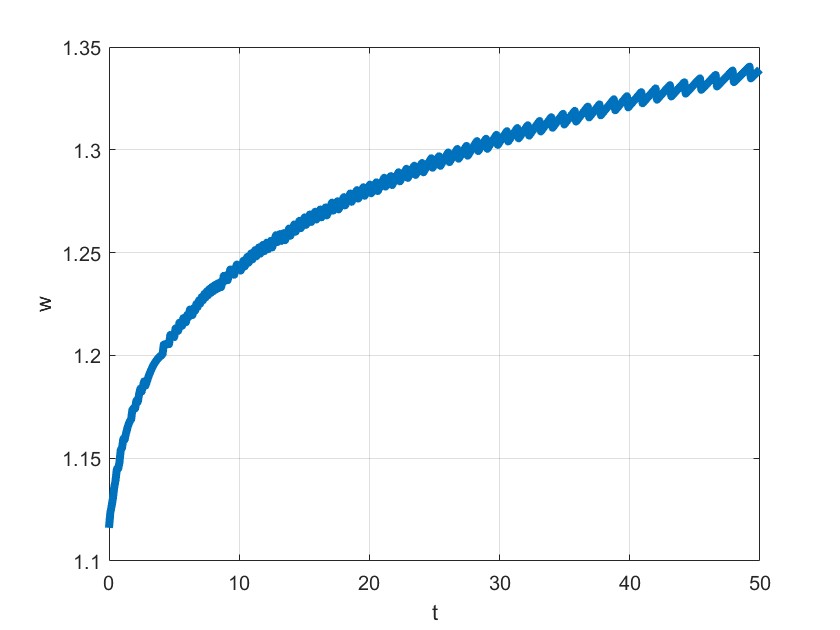}
    }
    \hfill % Adds horizontal space
    \subfloat[Evolution of $w$ for $\sigma = 5.5$ \label{fig:sub_wage}]{%
        \includegraphics[width=0.48\textwidth]{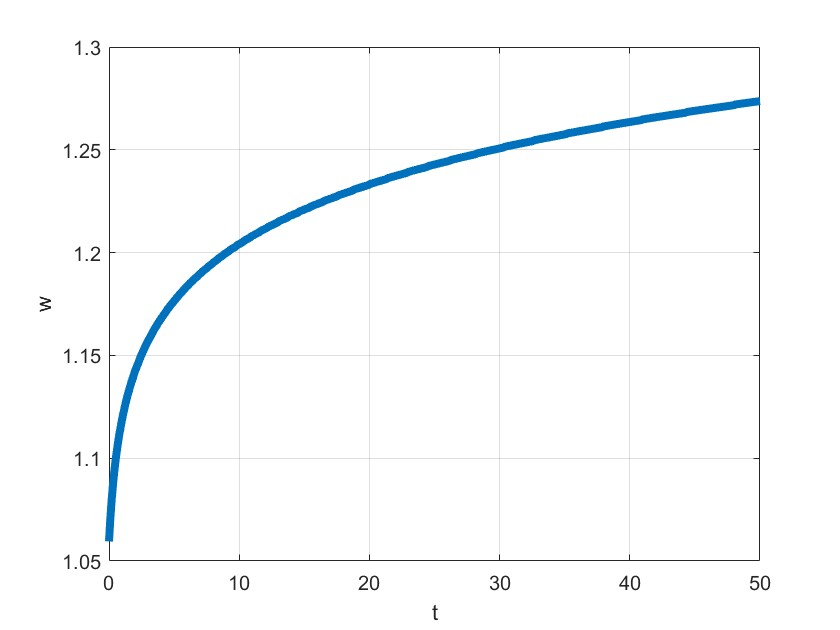}
    }
    \par\medskip % Adds vertical space between rows
    % --- Bottom Row (Output) ---
    \subfloat[Evolution of $Y$ for $\sigma = 0.5$ \label{fig:comp_output}]{%
        \includegraphics[width=0.48\textwidth]{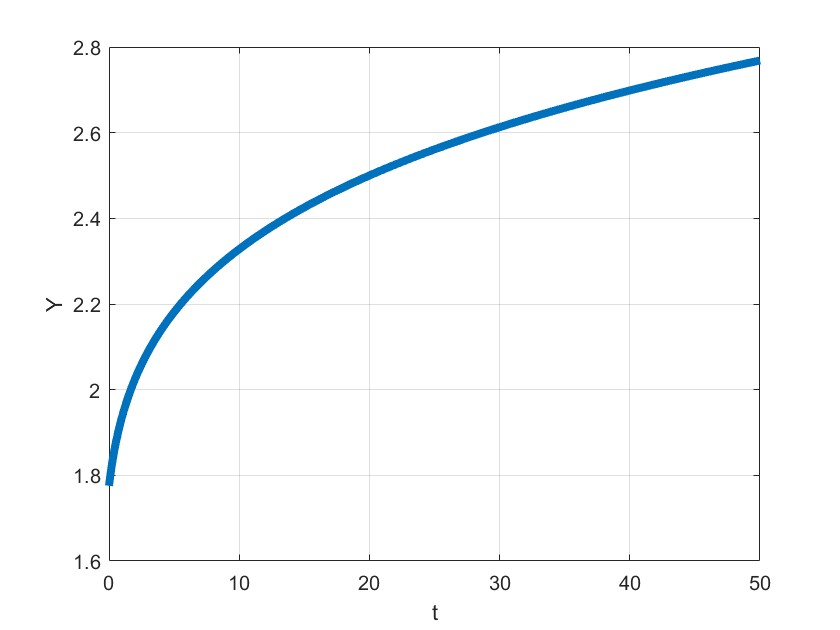}
    }
    \hfill % Adds horizontal space
    \subfloat[Evolution of $Y$ for $\sigma = 5.5$ \label{fig:sub_output}]{%
        \includegraphics[width=0.48\textwidth]{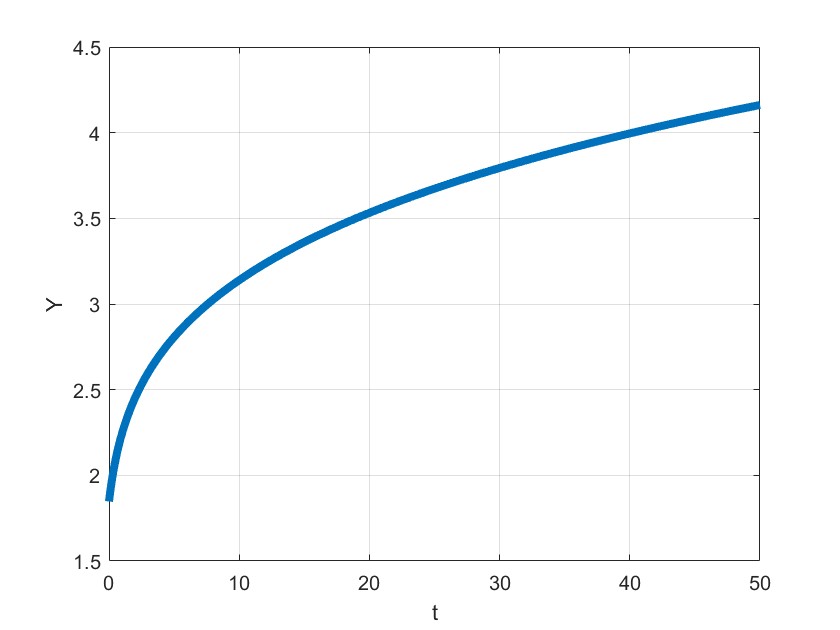}
    }
    % --- Combined Caption ---
    \caption{The time-path of wages and output. In the complements case, wages rise due to increased capital productivity, but this effect will eventually be fully offset by the displacement effect as labor is restricted endogenously to a shrinking set of tasks.}
    \label{fig:output_and_wages_no_acc}
\end{figure}

\clearpage
\newpage

% \newpage
% \subsection{Spillovers and the speed of automaton} \label{appendix:speed_spillovers}
% Below we explore how spillovers affect the speed of automation. All parameters are the same as in \cref{fig:complements} except for task substitutability. We use a uniform kernel to aggregate data in the case with spillovers. The kernel for any given task $i$ is supported on $[\max\{i-0.1,0\}, \min\{i+0.1, 1\}]$.

% \begin{figure}[h]
%     \centering
%     \caption{Automation with and without spillovers}
%     \subfloat[$\sigma = 0.5$ \label{fig:left}]{%
%         \includegraphics[width=0.5\textwidth]{speed_spillovers/sigma0p5_spd_spillovers.jpg}%
%     }
%     \hfill 
%     \subfloat[$\sigma = 5.5$ \label{fig:right}]{%
%         \includegraphics[width=0.5\textwidth]{speed_spillovers/sigma5p5_spd_spillovers.jpg}%
%     }
%     \label{fig:speed_spillovers}
% \end{figure}

\subsection{Peripheral Automation Across Substitutability and Bridge Strength} \label{appendix:peripheral automation sim}

\begin{figure}[H]
    \centering
    \includegraphics[width=\linewidth]{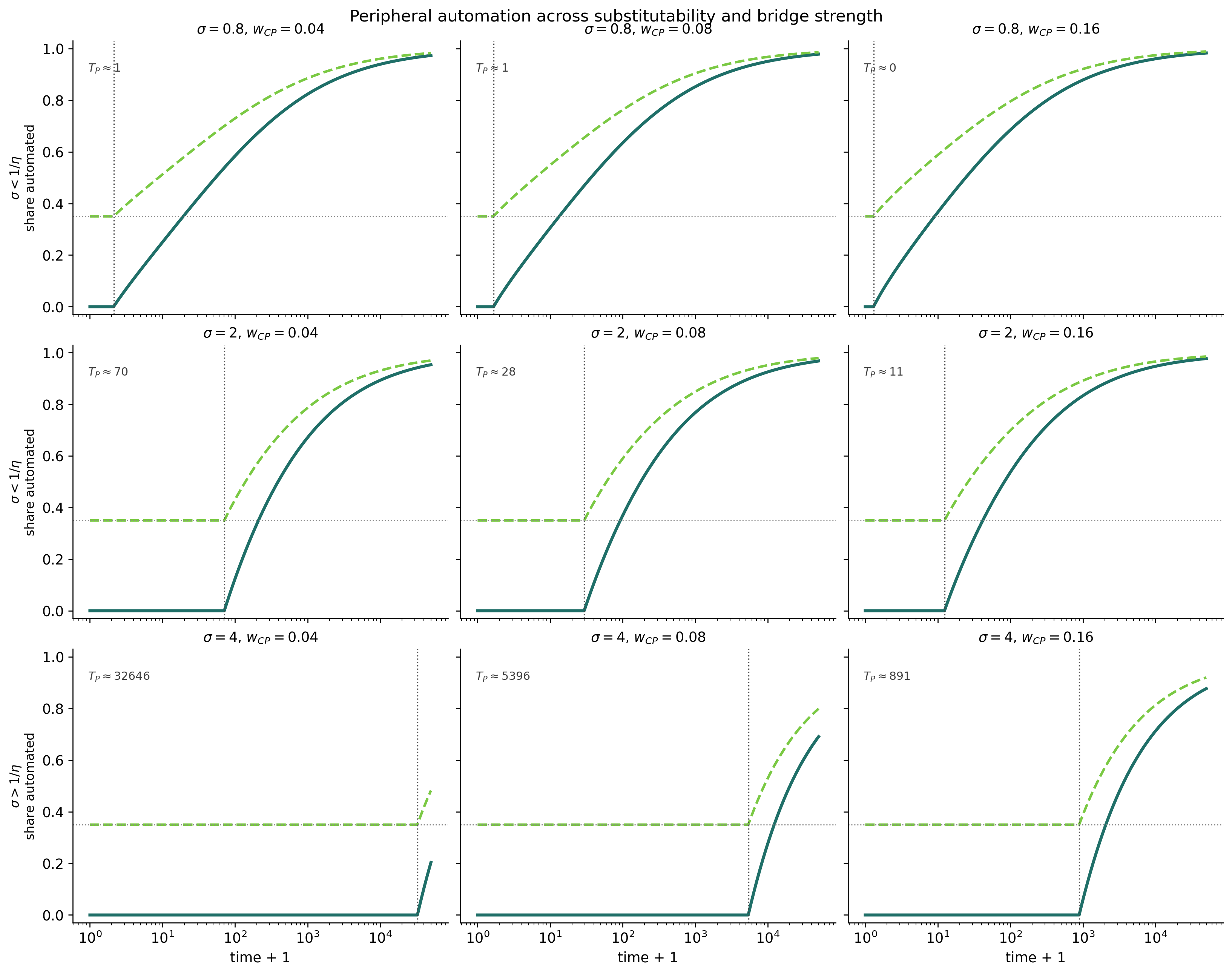}
    \caption{}
    \label{fig:core-periphery-grid}
\end{figure}

\subsection{Capital Accumulation \`a la Solow}\label{sec:solow_numerics}
Below we collect additional simulation results with capital accumulation. All parameters are the same as in the baseline, except we assume an exogenous savings rate of $s=0.02$ and $\delta=0.01$.

\begin{figure}[ht!]
    \centering % Center the entire figure
    % The main caption for the entire figure
    \caption{$\sigma=0.5$ with Solow-style capital accumulation.}
    % --- Top Row ---
    \subfloat[\label{fig:sub1}]{\includegraphics[width=0.33\textwidth]{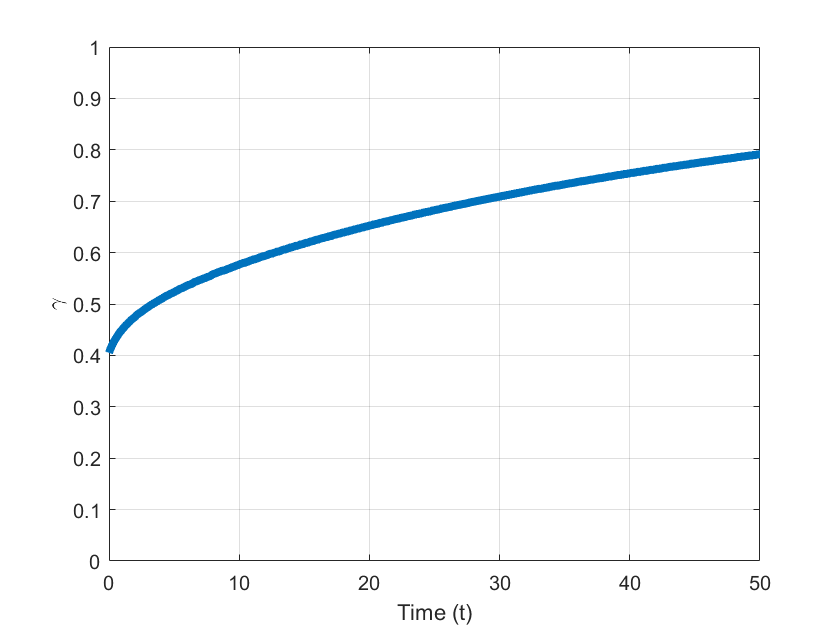}}
    \hfill % Adds horizontal space between the subfigures
    \subfloat[\label{fig:sub2}]{\includegraphics[width=0.33\textwidth]{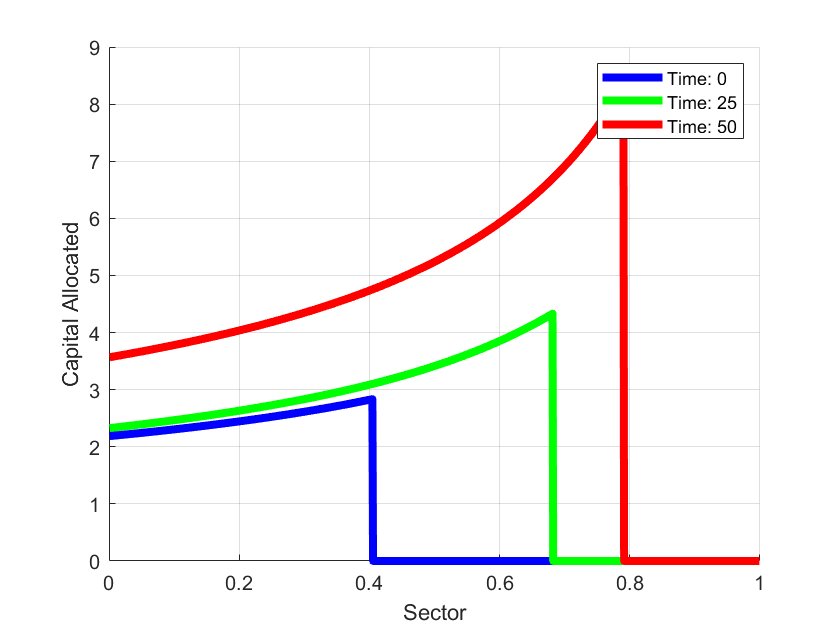}}
    \hfill
    \subfloat[\label{fig:sub3}]{\includegraphics[width=0.33\textwidth]{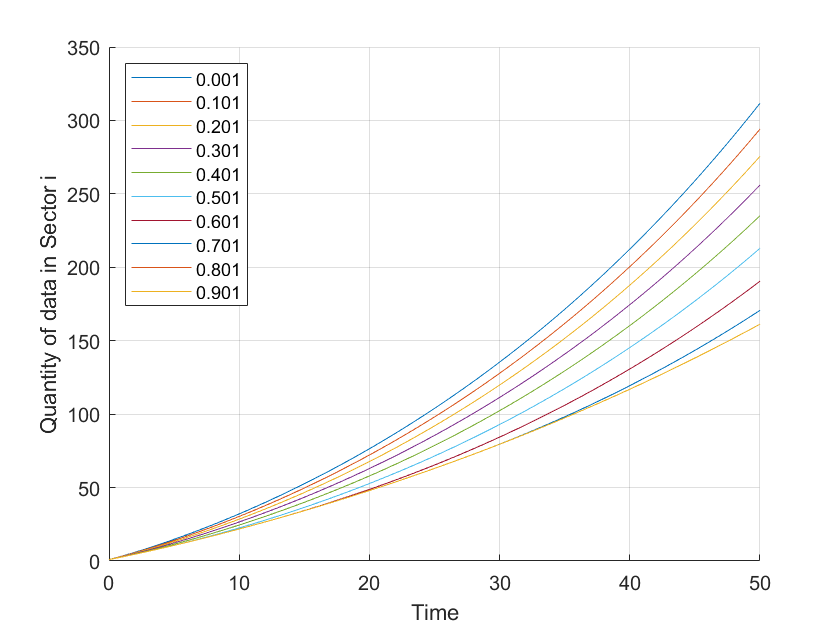}}

    \vspace{0.5cm} % Adds vertical space between rows

    % --- Middle Row ---
    \subfloat[\label{fig:sub4}]{\includegraphics[width=0.33\textwidth]{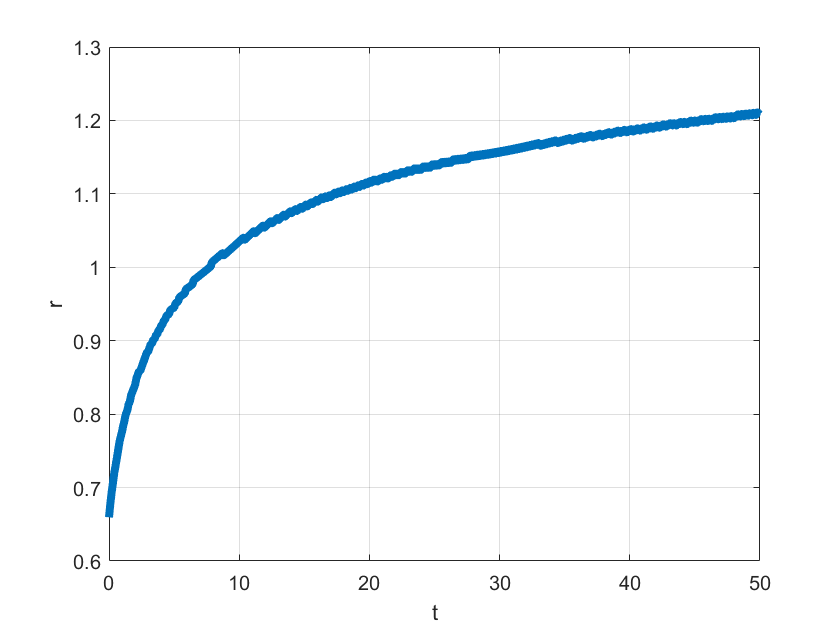}}
    \hfill
    \subfloat[\label{fig:sub5}]{\includegraphics[width=0.33\textwidth]{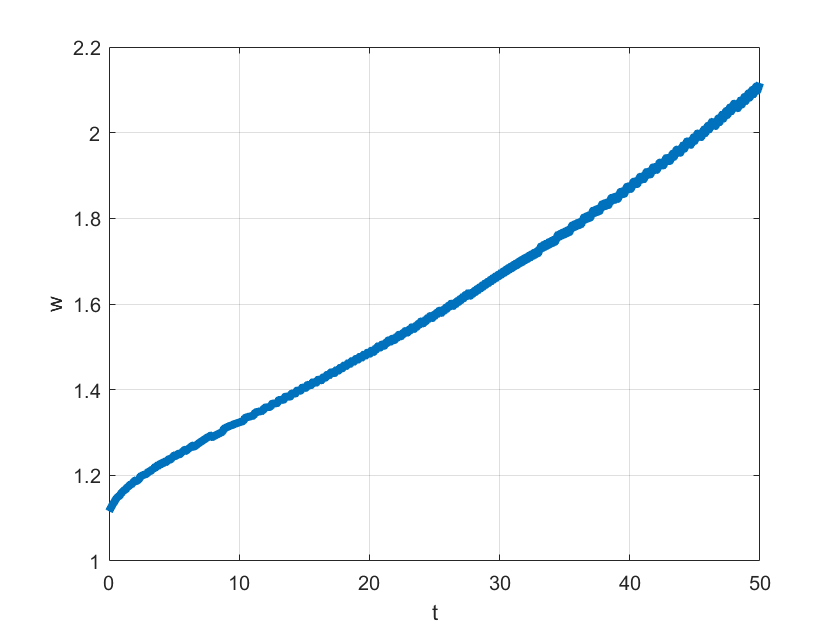}}
    \hfill
    \subfloat[\label{fig:sub6}]{\includegraphics[width=0.33\textwidth]{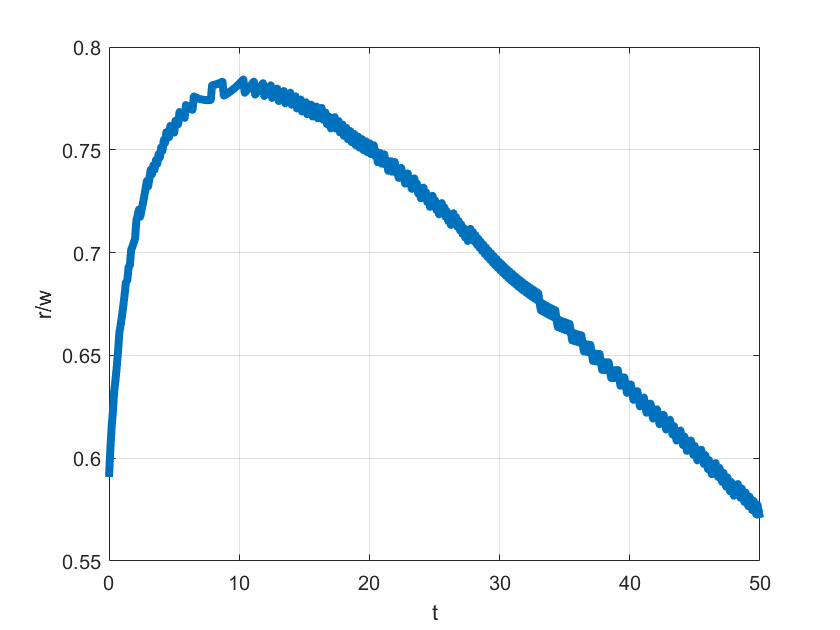}}
    
    \vspace{0.5cm} % Adds vertical space between rows

    % --- Bottom Row ---
    \subfloat[\label{fig:sub7}]{\includegraphics[width=0.33\textwidth]{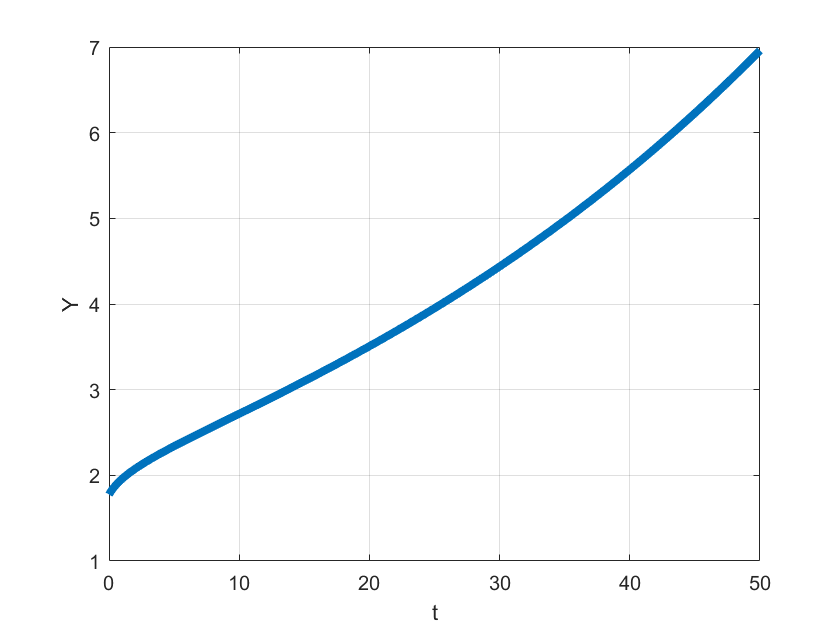}}
    \hfill
    \subfloat[\label{fig:sub8}]{\includegraphics[width=0.33\textwidth]{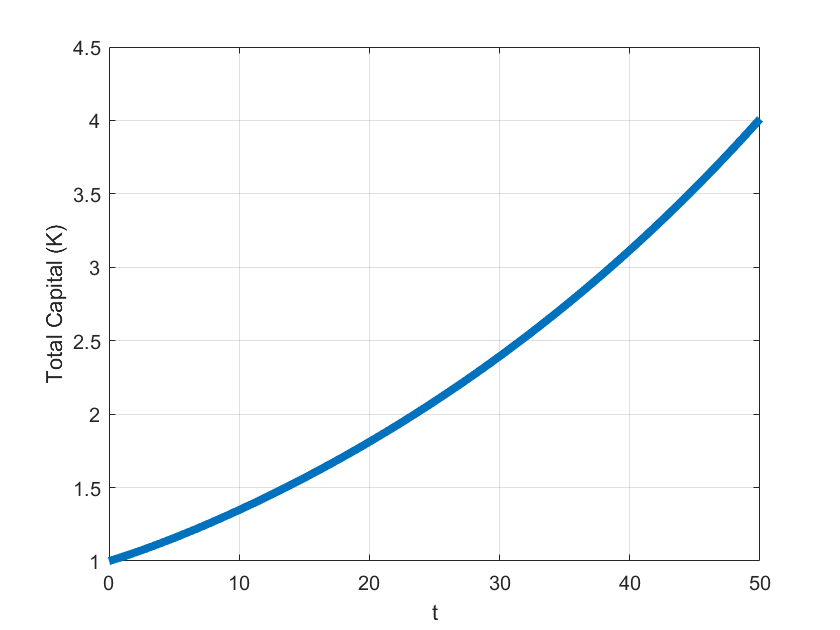}}
    \hfill
    \subfloat[\label{fig:sub9}]{\includegraphics[width=0.33\textwidth]{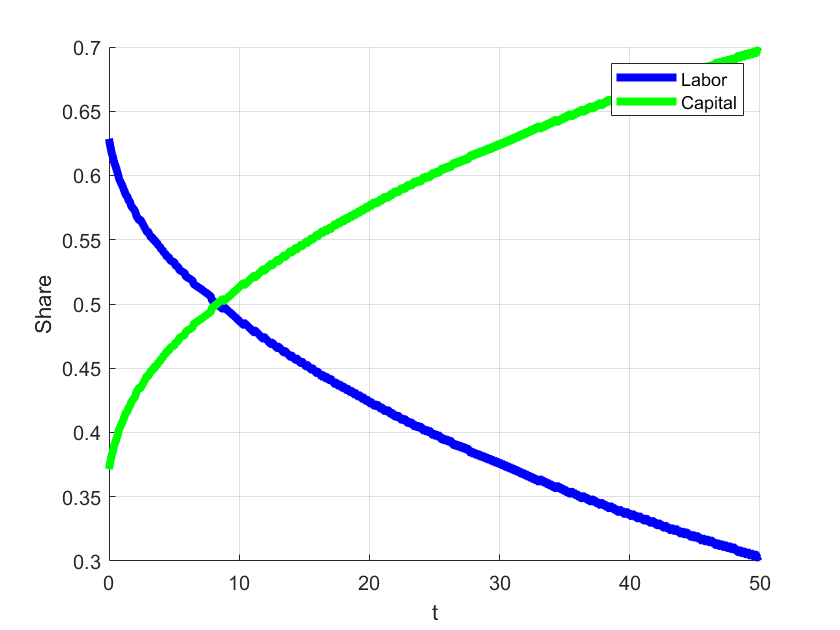}}
    
    \label{fig:complements_with_acc}
\end{figure}

\begin{figure}[ht!]
    \centering % Center the entire figure
    % The main caption for the entire figure
    \caption{$\sigma=5.5$ with Solow-style capital accumulation.}
    % --- Top Row ---
    \subfloat[\label{fig:sub1}]{\includegraphics[width=0.33\textwidth]{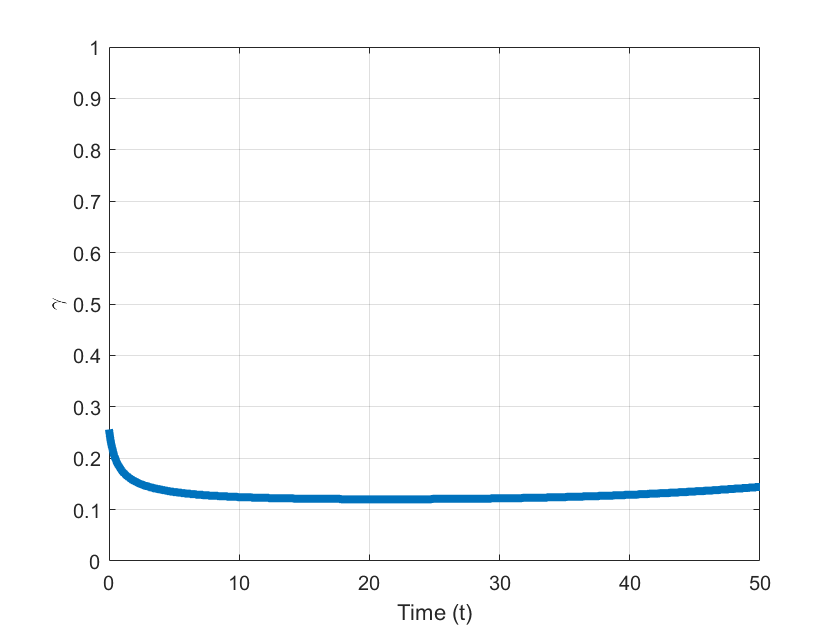}}
    \hfill % Adds horizontal space between the subfigures
    \subfloat[\label{fig:sub2}]{\includegraphics[width=0.33\textwidth]{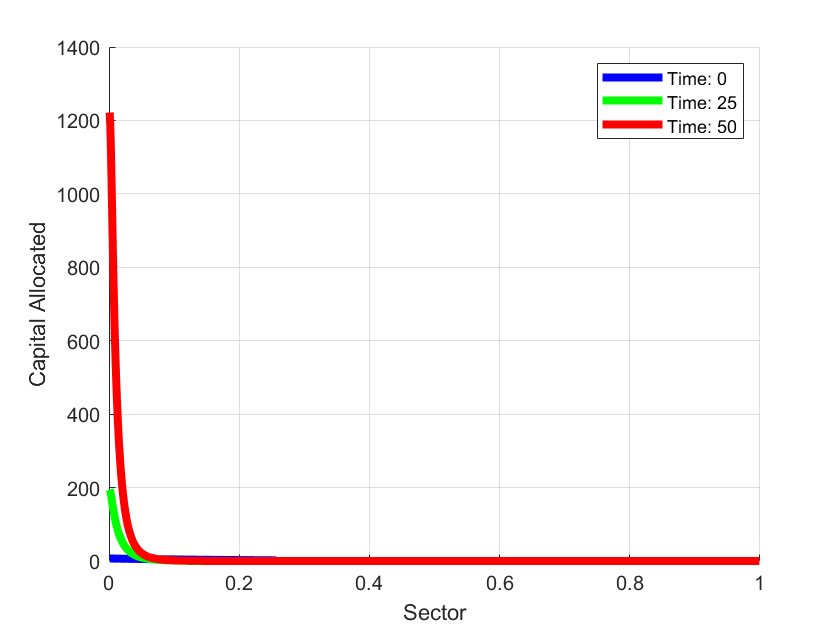}}
    \hfill
    \subfloat[\label{fig:sub3}]{\includegraphics[width=0.33\textwidth]{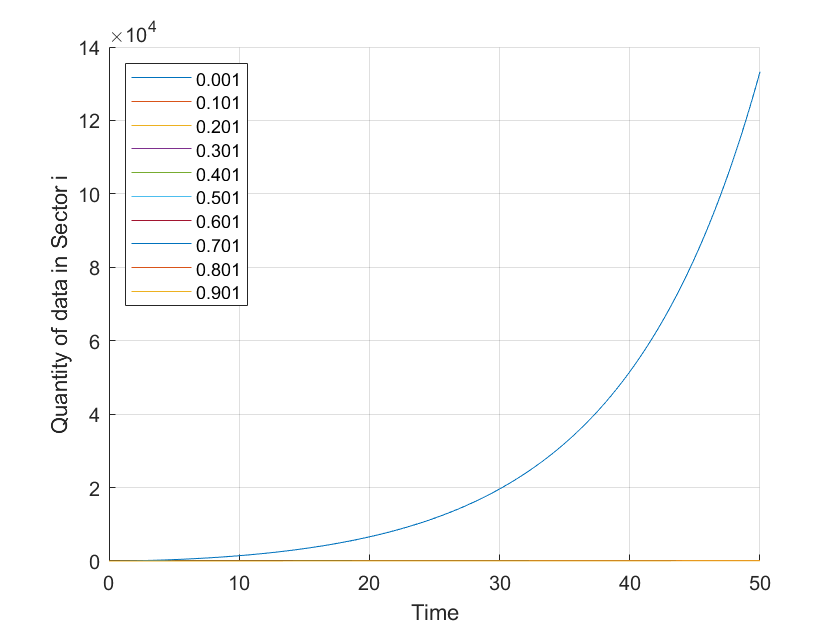}}

    \vspace{0.5cm} % Adds vertical space between rows

    % --- Middle Row ---
    \subfloat[\label{fig:sub4}]{\includegraphics[width=0.33\textwidth]{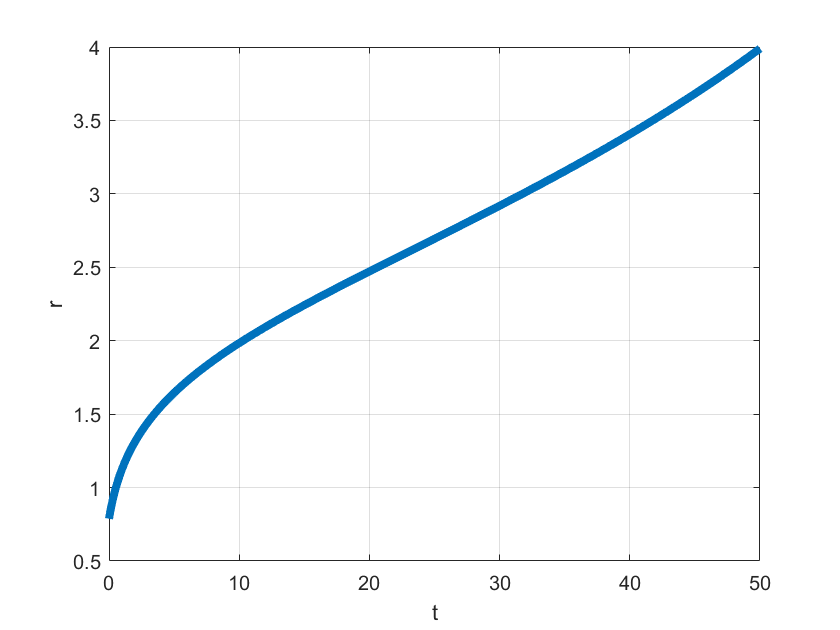}}
    \hfill
    \subfloat[\label{fig:sub5}]{\includegraphics[width=0.33\textwidth]{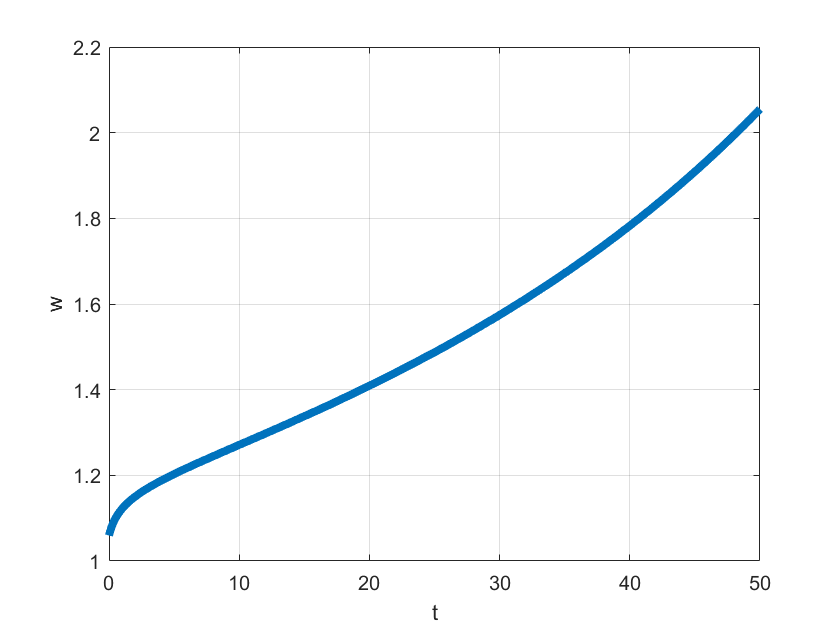}}
    \hfill
    \subfloat[\label{fig:sub6}]{\includegraphics[width=0.33\textwidth]{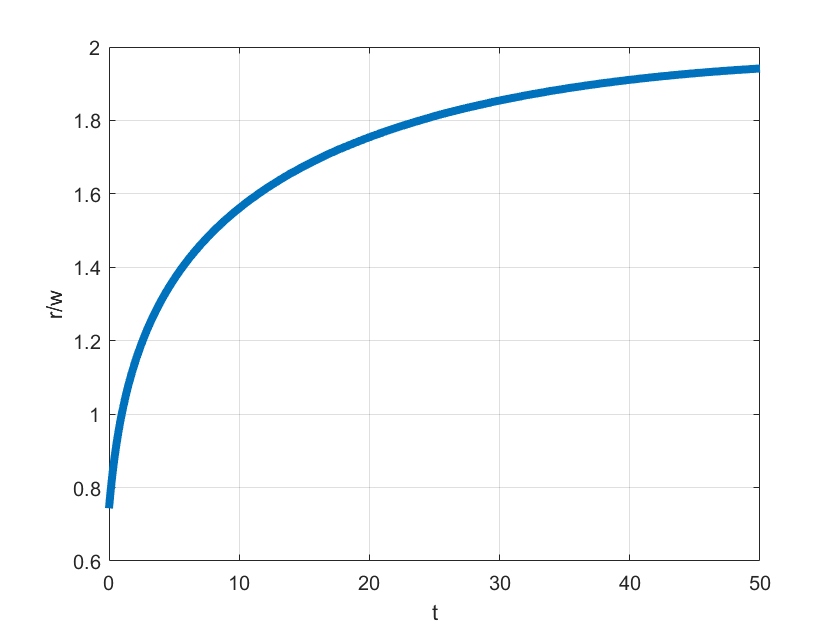}}
    
    \vspace{0.5cm} % Adds vertical space between rows

    % --- Bottom Row ---
    \subfloat[\label{fig:sub7}]{\includegraphics[width=0.33\textwidth]{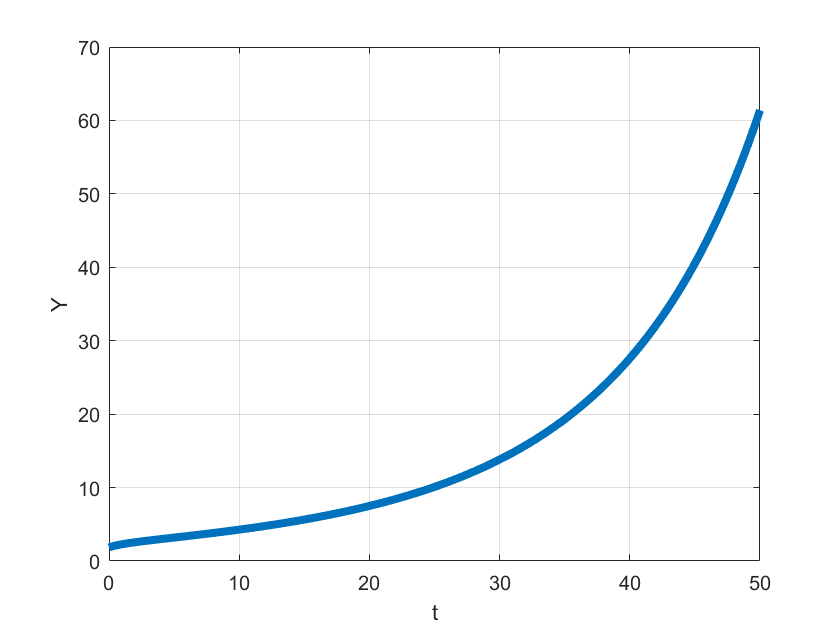}}
    \hfill
    \subfloat[\label{fig:sub8}]{\includegraphics[width=0.33\textwidth]{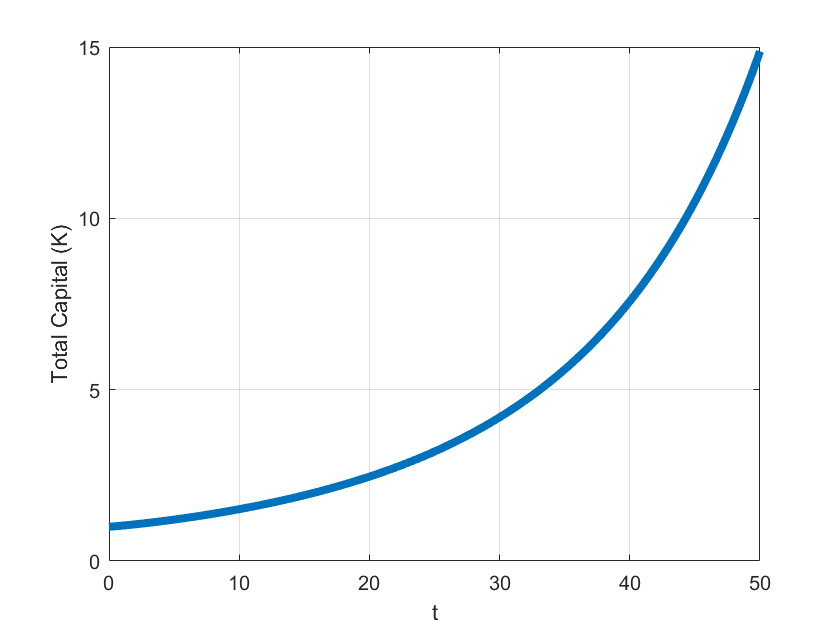}}
    \hfill
    \subfloat[\label{fig:sub9}]{\includegraphics[width=0.33\textwidth]{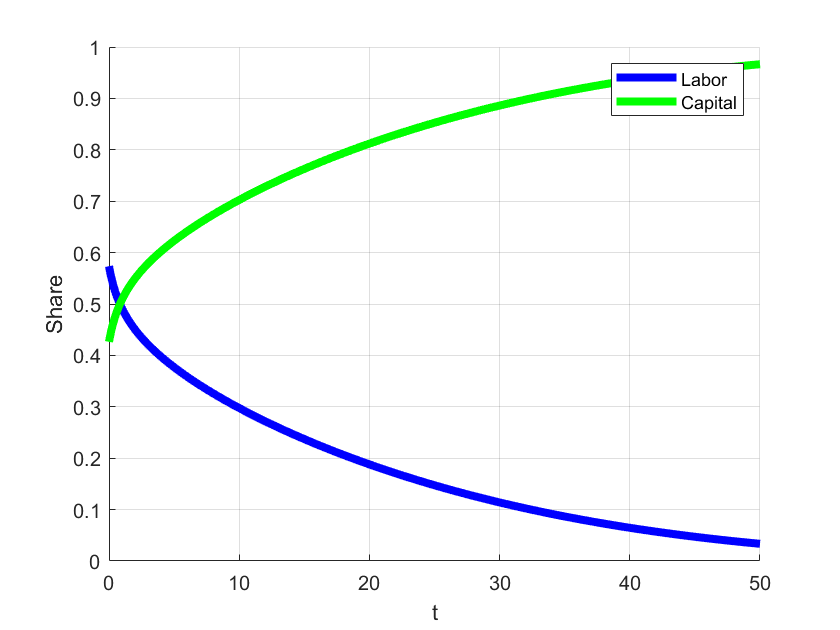}}
    \label{fig:substitutes_with_acc}
\end{figure}
\clearpage
\newpage

\subsection{Planner's Optimal Allocation} \label{appendix:planner_numerics}
Below we collect simulation results from the two-task economy. The first two figures are optimal paths chosen by the planner under the `$AK$' simplification. We set $f_h = f_l = 1$, $D_{h0} = 10$, $D_{l0}=1$, $\eta = 0.2$, $K = 1$, and $\rho = 0.05$.
\begin{figure}[ht!]
    \centering
    \caption{Planner's allocation at $\sigma = 0.5$}
    \includegraphics[width=\linewidth]{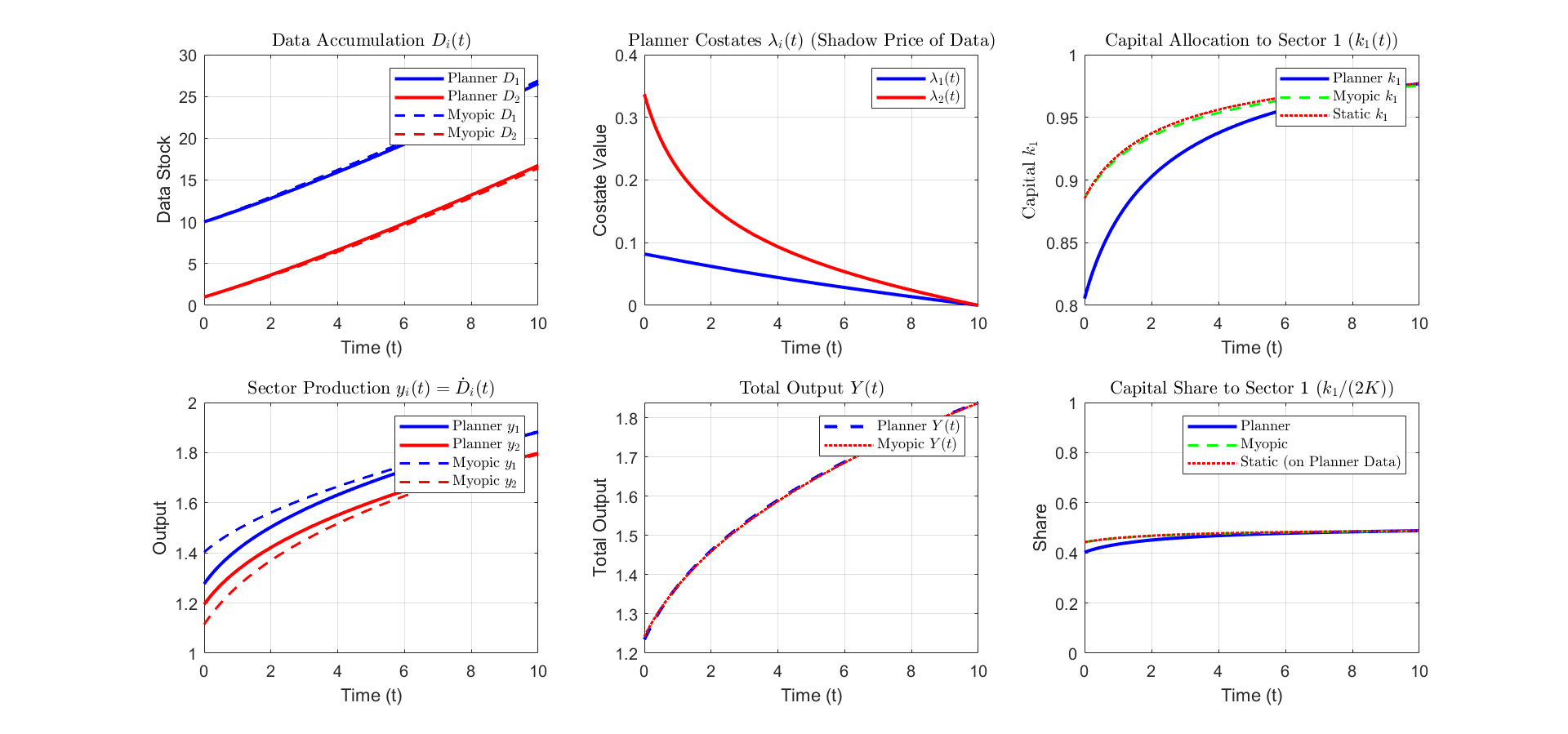}
    \label{fig:two_task_comp}
\end{figure}
\begin{figure}[h!]
    \caption{Planner's allocation at $\sigma = 5.5$}
    \centering
    \includegraphics[width=\linewidth]{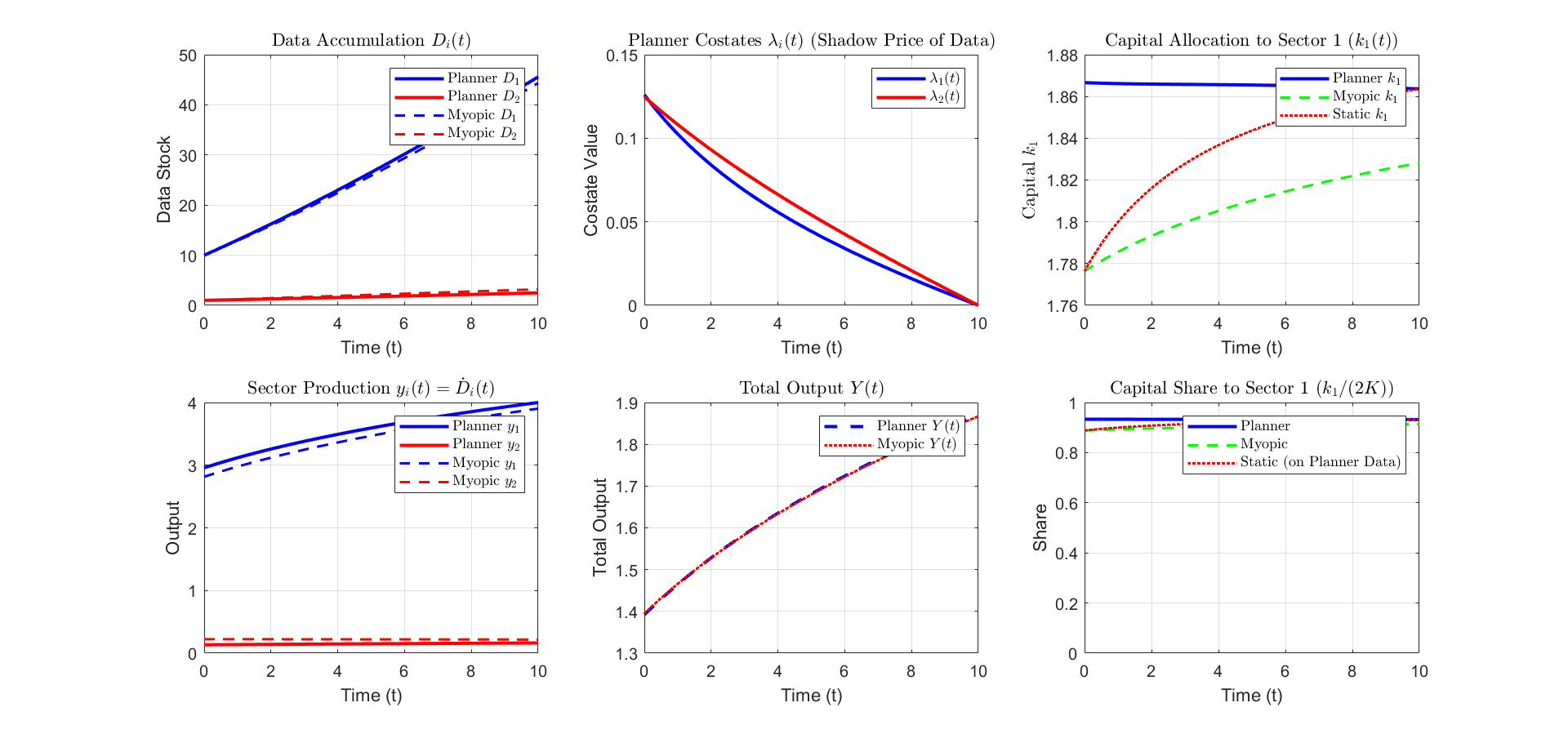}
    \label{fig:two_task_sub}
\end{figure}
\clearpage
Next, we simulate the complement case with labor included. We set $\sigma = 0.5$, $f_h = f_l = 1$, $D_{h0} = 100$, $D_{l0}=1$, $\eta = 0.2$, $K=L=1$, and $\rho = 0.05$.
\begin{figure}[h!]
    \caption{Planner's allocation at $\sigma = 0.5$ with labor included}
    \centering
    \includegraphics[width=\linewidth]{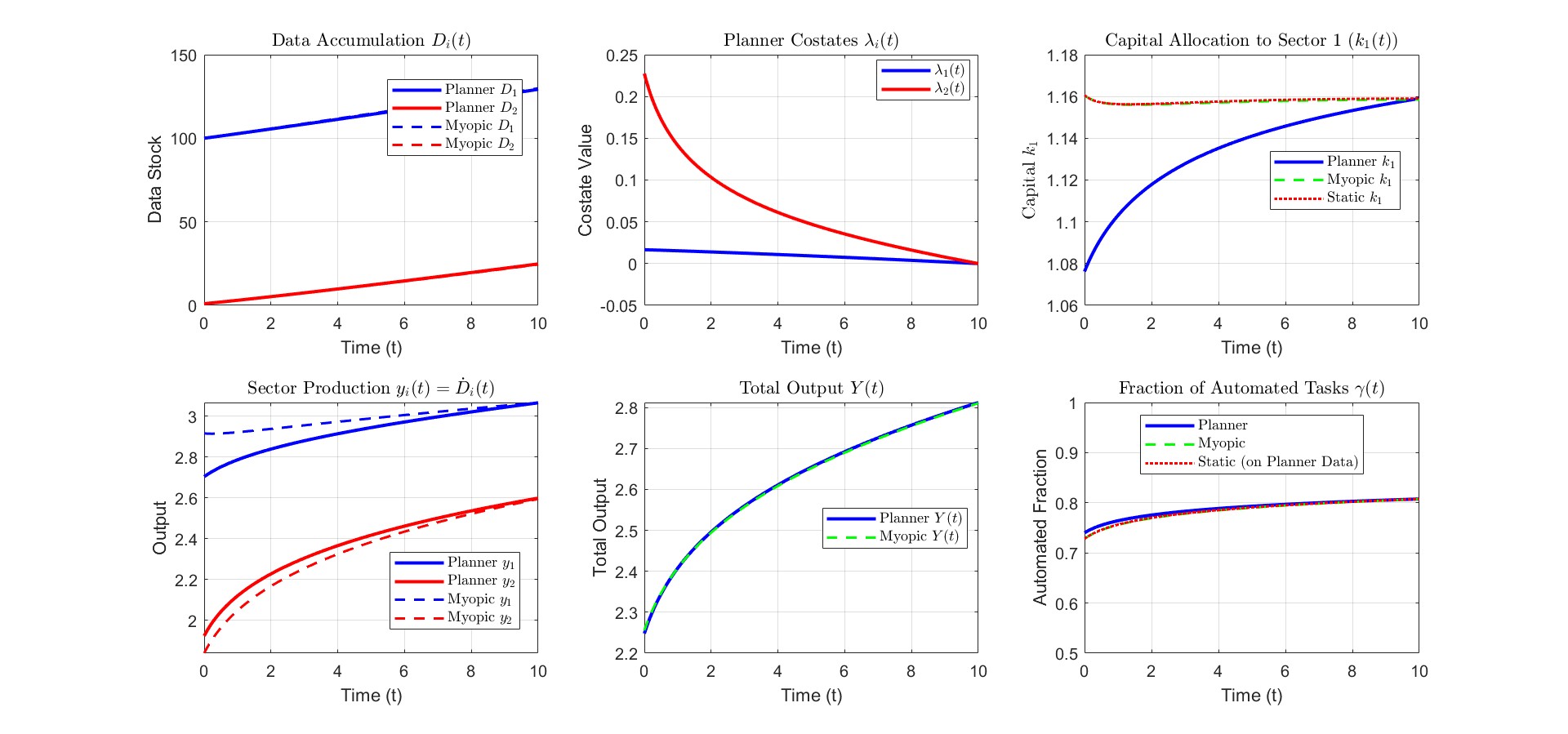}
    \label{fig:two_task_comp_labor}
\end{figure}

\end{document}